\DeclareSymbolFontAlphabet{\mathbbvar}{bbold}
\DeclareSymbolFontAlphabet{\mathbb}{AMSb}
\newcommand{\M}{\mathcal{M}}
\newcommand{\BU}{\mathbf{B}U(1)_\mathrm{conn}}
\newcommand{\Coo}{\mathcal{C}^{\infty}}
\newcommand{\Aut}{\mathbf{Aut}}
\newcommand{\Diff}{\mathrm{Diff}}
\newcommand{\Lim}[1]{\raisebox{0.5ex}{\scalebox{0.8}{$\displaystyle \lim_{#1}\;$}}}
\theoremstyle{definition}
\newtheorem{theorem}{Lemma}[section]
\theoremstyle{definition}
\newtheorem{definition}[theorem]{Definition}
\theoremstyle{definition}
\newtheorem{post}[theorem]{Postulate}
\theoremstyle{definition}
\newtheorem{remark}[theorem]{Remark}
\theoremstyle{definition}
\newtheorem{example}[theorem]{Example}
\theoremstyle{definition}
\newtheorem{digression}[theorem]{Digression}
\numberwithin{equation}{subsection}
\newcommand*{\tarrow}[2][]{\arrow[Rrightarrow, #1]{#2}\arrow[dash, shorten >= 0.5pt, #1]{#2}}
\title{
\begin{flushright}
\normalsize{QMUL-PH-19-35}
\end{flushright}
\vspace{1cm}
\bigskip
\bf
Global Double Field Theory is\\Higher Kaluza-Klein Theory}
\author{\sc Luigi Alfonsi}
\affil{\em\normalsize Centre for Research in String Theory,\\\em School of Physics and Astronomy,\\\em Queen Mary University of London,\\\em 327 Mile End Road, London E1 4NS, UK\\\vspace{4mm}\tt\href{mailto:l.alfonsi@qmul.ac.uk}{l.alfonsi@qmul.ac.uk}}
\date{}
\begin{document}
\maketitle

\vspace{0.5cm}
\abstract{
\noindent Kaluza-Klein Theory states that a metric on the total space of a principal bundle $P\rightarrow M$, if it is invariant under the principal action of $P$, naturally reduces to a metric together with a gauge field on the base manifold $M$. We propose a generalization of this Kaluza-Klein principle to higher principal bundles and higher gauge fields. For the particular case of the abelian gerbe of Kalb-Ramond field, this Higher Kaluza-Klein geometry provides a natural global formulation for Double Field Theory (DFT). In this framework the doubled space is the total space of a higher principal bundle and the invariance under its higher principal action is exactly a global formulation of the familiar strong constraint. The patching problem of DFT is naturally solved by gluing the doubled space with a higher group of symmetries in a higher category. Locally we recover the familiar picture of an ordinary para-Hermitian manifold equipped with Born geometry. Infinitesimally we recover the familiar picture of a higher Courant algebroid twisted by a gerbe (also known as Extended Riemannian Geometry). As first application we show that on a torus-compactified spacetime the Higher Kaluza-Klein reduction gives automatically rise to abelian T-duality, while on a general principal bundle it gives rise to non-abelian T-duality. As final application we define a natural notion of Higher Kaluza-Klein monopole by directly generalizing the ordinary Gross-Perry one. Then we show that under Higher Kaluza-Klein reduction, this monopole is exactly the NS5-brane on a $10d$ spacetime. If, instead, we smear it along a compactified direction we recover the usual DFT monopole on a $9d$ spacetime.}

\newpage
\tableofcontents
\newpage
\section{Introduction}

\subsection{Double Field Theory and its geometry}
Double Field Theory (DFT), officially born by \cite{HulZwi09}, is a T-duality covariant formulation of the bosonic sector of Type II supergravity. Seminal work in this area predating what is now called DFT was done in \cite{Siegel1} and \cite{Siegel2}. See \cite{BerTho14} for a broad review of the subject. In DFT the T-duality covariant fields live on a double-dimensional coordinate patch which can be seen as the fiber product of a spacetime patch and its T-dual. However they are constrained to depend only on half of the coordinates to avoid unphysical degrees of freedom. This condition is usually known as \textit{strong constraint}. Remarkably, by doubling the dimension of spacetime, DFT is able to geometrize at once two different and apparently unrelated features of String Theory:
\begin{itemize}
    \item differential T-dualities,
    \item local gauge transformations of the Kalb-Ramond field.
\end{itemize}
The fact that introducing these extra coordinates allows to describe both gauge transformations of the Kalb-Ramond field and T-duality is a strong hint of some bigger unification principle underlying. In this sense it has been supposed that a DFT geometry should be thought as a new stringy geometry, which should be for strings what Riemannian geometry is for usual point-particles. For instance see the s by \cite{Park11}, \cite{Hohm13space} and \cite{BerBla14}. However DFT is still plagued by a severe limit: it is only provided with a coordinate patch-wise formulation, with still no agreement on a definitive global formalization. Since T-duality is also a global topological transformation, this poses a serious problem.

\subsection{Papadopoulos' puzzle}
Let us consider the most conservative proposal of patching conditions for DFT which is compatible with strong constraint. For example the finite geometric transformations by \cite[p.11]{HohZwi12} on two-fold overlaps of patches $U_\alpha\cap U_\beta$ are given by
\begin{equation}\label{eq:papa}
    x_\beta = \varphi_{\alpha\beta}(x_\alpha), \quad \tilde{x}_{\beta} = \tilde{x}_{\alpha} + \Lambda_{\alpha\beta}(x_\alpha)
\end{equation}
where $\varphi_{\alpha\beta}$ and $\Lambda_{\alpha\beta}$ are respectively a diffeomorphism and a covector on $U_\alpha\cap U_\beta$. The consistency of \eqref{eq:papa} implies the cocycle condition $\Lambda_{\alpha\beta}+\Lambda_{\beta\gamma}+\Lambda_{\gamma\alpha}=0$ on three-fold overlaps of patches, which implies that $\Lambda_{\alpha\beta}$ is a \v{C}ech coboundary $\Lambda_{\alpha\beta}= \eta_\alpha - \eta_\beta$. Now, since $\Lambda_{\alpha\beta}$ is interpreted as a $1$-form gauge transformation of the Kalb-Ramond field, we have $B_\beta = B_\alpha + \mathrm{d}\Lambda_{\alpha\beta}$. The fact that $\Lambda_{\alpha\beta}$ is a coboundary poses a problem because, if we perform a gauge transformation $B_\alpha':=B_\alpha+\mathrm{d}\eta_\alpha$ on each patch, we get $B_\alpha' = B_\beta'$. This means that the curvature $H=\mathrm{d}B_\alpha$ is globally exact and then the $H$-flux determined by its cohomology class $[H]\in H^3(M,\mathbb{Z})$ is trivial. This problem has been indeed pointed out by \cite{Pap13}.

\vspace{0.25cm}
\noindent One of the most recent and promising approach to give a global formalization to DFT is the para-Hermitian framework.
Firstly the use of para-K\"aler geometry was proposed by \cite{Vais12}, then generalized to para-Hermitian geometry by \cite{Vai13} and further developed by \cite{Svo17}. Finally the full proposal of identifying a DFT with a para-Hermitian manifold equipped with a generalized metric by \cite{Svo18} and \cite{Svo19}. This ideas were further expanded by \cite{MarSza18} and \cite{MarSza19}, in particular by clarifying the para-Hermitian geometry of non-abelian and Lie-Poisson T-duality. However para-Hermitian geometry does not solve Papadopoulos' puzzle. To understand why we need to look at the geometric theory which formalize the Kalb-Ramond field: \textit{higher geometry}.

\vspace{0.25cm}
\begin{table}[!ht]\begin{center}
\begin{tabular}{ c | c c }
 & \textbf{Kaluza-Klein Theory} & \textbf{DFT (para-Hermitian)} \\[0.5ex] \hline
 Space & Circle bundle & Para-Hermitian fibration\\
 & $P=\bigsqcup_\alpha U_\alpha\times S^1 /\sim$ & $N=\bigsqcup_\alpha T^\ast U_\alpha/\sim$ \\ 
  & where & where \\ [1ex]
  & $ (x_\alpha,\theta_\alpha) \sim (x_\beta,\theta_\beta+f_{\alpha\beta})$ & $ (x_\alpha,\tilde{x}_\alpha) \sim (x_\beta,\tilde{x}_\beta+\Lambda_{\alpha\beta})$\\  [5ex] 
 Local symmetry & Atiyah algebroid & Courant algebroid \\
 & $TP=\bigsqcup_\alpha TU_\alpha\oplus\mathbb{R} /\sim$ & $TN\!=\bigsqcup_\alpha TU_\alpha\oplus T^\ast U_\alpha /\sim$ \\
  & with sections & with sections \\ [1ex]
  & $ \begin{pmatrix}X \\ g_\alpha \end{pmatrix} \sim \begin{pmatrix}1 & 0 \\ \mathrm{d}f_{\alpha\beta} & 1 \end{pmatrix}\begin{pmatrix}X \\ g_\beta \end{pmatrix} $ & $ \begin{pmatrix}X \\ \xi_\alpha \end{pmatrix} \sim \begin{pmatrix}1 & 0 \\ \mathrm{d}\Lambda_{\alpha\beta} & 1 \end{pmatrix}\begin{pmatrix}X \\  \xi_\beta \end{pmatrix} $\\  [5ex]
  Field & Electromagnetic field & Kalb-Ramond field \\
 & $A_{\alpha}-A_{\beta}=\mathrm{d}f_{\alpha\beta}$ & $B_{\alpha}-B_{\beta}=\mathrm{d}\Lambda_{\alpha\beta}$ \\
\end{tabular}\caption{A brief comparison of Kaluza-Klein geometry and para-Hermitian geometry for DFT.}\end{center}\end{table}

\vspace{-0.5cm}
\subsection{Higher Geometry in String Theory}
Over the last years a geometrical picture of the Kalb-Ramond field has been clarified, by generalizing the notion of principal bundle. The Kalb-Ramond field has been interpreted as the connection of a \textit{bundle gerbe} (or principal circle $2$-bundle), a concept originally introduced by \cite{Murray}. The proper definition of morphism (i.e. gauge transformation) of bundle gerbes was later introduced by \cite{Murray2}. See \cite{Murray3} for an intruductory review. Bundle gerbe was reformulated by \cite{Hit99} and recently generalized by \cite{Principal1} by the idea of $n$-bundle. A principal $n$-bundle is morally speaking a principal bundle where the ordinary gauge group is replaced with a $n$-group: an object that takes into account not only symmetries, but also symmetries of symmetries. Hence the Kalb-Ramond field is a connection of a circle $2$-bundle. This means that the local $2$-forms $B_{\alpha}\in\Omega^2(U_\alpha)$ are patched by local $1$-form gauge transformations $\Lambda_{\alpha\beta}\in\Omega^1(U_\alpha\cap U_\beta)$ which are themselves patched by scalar gauge transformations $G_{\alpha\beta\gamma}\in\Coo(U_\alpha\cap U_\beta \cap U_\gamma)$ satisfying the cocycle condition on four-fold overlaps of patches. Therefore the patching conditions of the differential local data of the Kalb-Ramond field can be summed up by 
\begin{equation}
    \begin{aligned}
    H \,&=\, \mathrm{d}B_\alpha \\
    B_\beta - B_\alpha \,&=\, \mathrm{d}\Lambda_{\alpha\beta} \\
    \Lambda_{\alpha\beta}+\Lambda_{\beta\gamma}+\Lambda_{\gamma\alpha} \,&=\, \mathrm{d}G_{\alpha\beta\gamma} \\
    G_{\alpha\beta\gamma}-G_{\beta\gamma\delta}+G_{\gamma\delta\alpha}-G_{\delta\alpha\beta} \,&\in\, 2\pi\mathbb{Z}
    \end{aligned}
\end{equation}
We can immediately see why the na\"{i}ve patching conditions \eqref{eq:papa} for DFT do not work: they do not take into account the intrinsic higher nature of the Kalb-Ramond field. In other words we have no way to geometrize gauge transformations $G_{\alpha\beta\gamma}$ on three-fold overlaps of patches.

\vspace{0.25cm}
\noindent This picture of the circle $2$-bundle yields a natural generalization of parallel transport which is along surfaces, instead of curves. This corresponds exactly to the Wess-Zumino term controlling the coupling of string worldsheets with the Kalb-Ramond field. Hence principal $n$-bundles are a natural framework to deal with \textit{higher gauge theories} (see \cite{Baez11}), which are a global formalization of $n$-form fields in physics. For instance higher gauge theory has been used by \cite{Saem17}, \cite{Saem19}, \cite{Saem19x} to formulate a $6d$ superconformal field theory which must be regarded as a very promising step in the direction of developing the desired M5-brane worldvolume theory.

\vspace{0.25cm}
\noindent Moreover higher geometry has been widely used in the direction of generalizing geometric prequantization from ordinary particles to string and branes. This proposal has been named \textit{Higher Prequantization}. This field of research started from the idea of quantization of $n$-plectic manifolds by \cite{Rog11}, \cite{SaSza11}, \cite{Rog13} and of loop spaces by \cite{SaSza11x}. These descriptions are now unified in terms of higher stacks and the theory has been further generalized by the research of \cite{SaSza13}, \cite{FSS13}, \cite{Sch16}, \cite{FSS16}, \cite{BSS16} and \cite{Sza19}.

\vspace{0.25cm}
\noindent Higher geometry is being also successfully applied to the fundamental understanding of M-theory by \cite{FSS12}, \cite{FSS15x}, \cite{FSS19x}, \cite{FSS19xxx}, \cite{BSS19} and \cite{HSS19}. In these references the topological and differential structure of M-theory is investigated, until remarkably a proposal for the generalized cohomology theory that charge-quantizes the supergravity $C$-field is made (\textit{Hypothesis H}) by \cite{FSS19coho}. This idea was further explored by \cite{BSS18}, \cite{SS19} and \cite{FSS19xx}.

\vspace{0.25cm}
\noindent It is very interesting that the Lie $n$-algebroids appearing in higher geometry have been revealed to naturally encompass BV–BRST formalism for quantization of field theories in an infinitesimal fashion. In this regard see \cite{Pau14}, \cite{Saem18bv} and \cite{Saem19bv}. Other properties of field theories related to higher Lie algebras have been explored by \cite{Hohm17}, \cite{Hohm17x}, \cite{Hohm19} and \cite{Hohm19x}.

\vspace{0.25cm}
\noindent Significantly the research in nonassociative physics, which emerges from open String Theory, has been linked not only to non-geometric fluxes (see \cite{SN1}, \cite{SN2}, and \cite{SN4}), but also to higher geometry. In this regard see \cite{SN3}, \cite{SN5}, \cite{SN6}, \cite{SN7}, and \cite{NS8}.

\subsection{Higher geometry in DFT}
History of DFT and higher geometry have been long entwined, even if not always explicitly. Indeed, as we explained, the Kalb-Ramond field is mathematically a higher gauge theory and this means that its geometrization is a circle $2$-bundle. Let us now recap their fundamental points of contacts from the literature.

\vspace{0.25cm}
\noindent Fist of all the Courant algebroid appearing in Generalized Geometry (see \cite{Gua11}) has been understood as an higher analogue of the ordinary Atiyah algebroid for a circle $2$-bundle. In other words the Courant algebroid is the algebroid of the (horizontal) gauge transformations of a circle $2$-bundle. It is possible to find details about this relation in \cite{Col11} and \cite{Rog13}, but the idea can be traced back to the beginning of the theory. In physics Type II supergravity has been revealed to be naturally formulated in terms of Generalized Geometry by \cite{Hull07}, \cite{Wald08E}, \cite{Wald08}, \cite{Wald11} and \cite{Wald12}. Moreover Generalized Geometry is a natural framework to explicitly embody T-duality in the theory, as shown by \cite{CavGua11}.

\vspace{0.25cm}
\noindent Remarkably higher geometry has been identified as the natural framework in which is possible to deal with T-duality. This has been formalized in terms of an isomorphism of gerbes which geometrically encode the one the Kalb-Ramond field and the other its T-dual by \cite{BunNik13}, \cite{FSS16x}, \cite{FSS17x}, \cite{FSS18}, \cite{FSS18x} and \cite{NikWal18}. Assume that we have two $T^n$-bundle spacetimes $M\xrightarrow{\pi}M_0$ and $\widetilde{M}\xrightarrow{\tilde{\pi}}M_0$ over a common base manifold $M_0$. In the references a couple of circle $2$-bundles $P\xrightarrow{\Pi}M$ and $\widetilde{P}\xrightarrow{\tilde{\Pi}}\widetilde{M}$, formalizing two Kalb-Ramond fields respectively on $M$ and $\widetilde{M}$, are geometric T-dual if the following isomorphism exists  
\begin{equation}
    \begin{tikzcd}[row sep={11ex,between origins}, column sep={11ex,between origins}]
    & P\times_{M_0} \widetilde{M}\arrow[rr, "\cong"', "\text{T-duality}"]\arrow[dr, "\Pi"']\arrow[dl, "\tilde{\pi}"] & & M\times_{M_0}\widetilde{P}\arrow[dr, "\pi"']\arrow[dl, "\tilde{\Pi}"] \\
    P\arrow[dr, "\Pi"'] & & M\times_{M_0}\widetilde{M}\arrow[dr, "\pi"']\arrow[dl, "\tilde{\pi}"] & & [-2.5em]\widetilde{P}\arrow[dl, "\tilde{\Pi}"] \\
    & M\arrow[dr, "\pi"'] & & \widetilde{M}\arrow[dl, "\tilde{\pi}"] & \\
    & & M_0 & &
    \end{tikzcd}
\end{equation}
Notice that this picture can be seen as the finite version of the one appearing in \cite{CavGua11} for Courant algebroids. In the references it is shown that this induces an isomorphism between the twisted cohomology theory of D-branes of Type IIA and of Type IIB String Theory.

\vspace{0.25cm}
\noindent It is not so surprising that research in DFT has been affected by these new geometric ideas. It was noticed by \cite{BCM14} that the doubled metric of DFT, to actually geometrize the Kalb-Ramond field, must carry a gerbe structure and have non-trivial local data three-fold overlaps. These arguments lead to the idea that a finite well-defined DFT geometry must be constructed in the context of higher geometry. Indeed a $2$-algebroid formalism was proposed by \cite{DesSae18}, generalized to Heterotic DFT by \cite{DesSae18x} and then applied to the particular case of nilmanifolds by \cite{DesSae19}. This successful idea was also translated to Exceptional Field Theory (ExFT) by \cite{Arv18}. Independently \cite{Hohm19DFT} showed that the gauge structure of the infinitesimal generalized diffeomorphisms of DFT has an higher algebra structure.

\subsection{Wishlist for a DFT geometry}
To formulate a finite geometry of Double Field Theory in the context of higher geometry we will require the following intuitive points:
\begin{itemize}
    \item DFT geometry must consist of a couple $(\mathcal{M},\mathcal{H})$ where
    \begin{itemize}
        \item $\M$ is some geometric object generalizing the notion of smooth manifold,
        \item $\mathcal{H}$ is some field over $\mathcal{M}$ generalizing the notion of Riemannian metric,
    \end{itemize}
    \item the strong constraint must have a global geometric interpretation (similarly to cylindricity condition in Kaluza-Klein theories),
    \item under strong constraint DFT geometry must reduce
    \begin{itemize}
        \item globally to an abelian gerbe structure on a Riemannian manifold,
        \item infinitesimally to Generalized Geometry,
    \end{itemize}
    \item T-duality must have a global geometric interpretation,
    \item DFT geometry should be deducible from a small set of simple assumptions.
\end{itemize}

\section{Background: fundamentals of higher geometry}
This section will be devoted to a basic summary of main ideas in higher geometry. The following introduction is meant to make the next discussion as self-contained as possible for the reader.

\vspace{0.2cm}
\noindent Higher geometry is essentially differential geometry where the notion of equality has been replaced by the weaker one of \textit{equivalence}. This is a natural framework in physics, since equivalences can be interpreted as gauge transformations (or even dualities). Indeed the question of whether two fields configurations are gauge-equivalent is physically more natural than the question whether they are equal.
We will mostly follow the notation presented by \cite{DCCTv2}, but the translation to the one used for instance in \cite{BSS16} is immediate.

\subsection{Higher smooth stacks}

In this subsection we will deal in a quite colloquial non-formal way with the fundamental geometric object in higher geometry: the \textit{higher smooth stack}. Morally speaking an higher smooth stack $\mathbf{S}$ is a sheaf of $n$-groupoids over manifolds. This means that for a manifold $M$ with good cover $\mathcal{U}=\{U_\alpha\}$, the higher groupoid $\mathbf{S}(M)$ can be described in local data by a collection of higher groupoids $\mathbf{S}(U_{\alpha_1}\cap\cdots\cap U_{\alpha_k})$ on any $k$-fold overlap of patches, which are glued by their groupoid morphisms. For a formal exposition see \cite{Principal2}.

\begin{example}[\v{C}ech groupoid]
Given a good open cover $\mathcal{U}:=\{U_\alpha\}$ of a smooth manifold $M$, its \textit{\v{C}ech groupoid} is defined by the following $1$-groupoid
\begin{equation}\label{eq:chechgroupoidex}
    \check{C}(\mathcal{U}) := \bigg(
    \begin{tikzcd}[row sep=scriptsize, column sep=3ex] \underset{\alpha\beta}{\bigsqcup}\; U_{\alpha}\cap U_\beta  \arrow[r, yshift=0.7ex, "s"] \arrow[r, yshift=-0.7ex, "t"'] & \; \underset{\alpha}{\bigsqcup}\; U_{\alpha}
    \end{tikzcd}
    \bigg)
\end{equation}
where source and target are the natural embeddings $s:U_{\alpha}\cap U_\beta\hookrightarrow U_\alpha$ and $t:U_{\alpha}\cap U_\beta\hookrightarrow U_\beta$. 
\end{example}

\noindent In the \v{C}ech groupoid the gluing conditions of the manifold $M$ between its patches $\{U_\alpha\}$ have been promoted to morphisms.

\begin{remark}[Higher smooth stacks]
The notion of $n$-groupoid can be modelled by a simplicial set which is in particular a \textit{Kan complex}. Morally speaking this means that a higher Lie groupoid is a \textit{functor from the category of simplices to the category of manifolds}, which maps $k$-simplices to the space of $k$-morphisms of the groupoid. They also must satisfy some subtler conditions which are explained in \cite{Principal2}. Analogously \textit{higher smooth stacks} can be modelled by simplicial sheaves over manifolds.
\end{remark}

\noindent Let us give a concrete simple example of a groupoid seen as a simplicial set.

\begin{example}[\v{C}ech groupoid as simplicial object]
In simplicial terms the \eqref{eq:chechgroupoidex} is given by
\begin{equation}
    \begin{tikzcd}[row sep=scriptsize, column sep=3ex]\; \cdots\; \arrow[r, yshift=1.8ex]\arrow[r, yshift=0.6ex]\arrow[r, yshift=-1.8ex]\arrow[r, yshift=-0.6ex]& \underset{\alpha\beta\gamma}{\bigsqcup} U_{\alpha}\cap U_\beta\cap U_\gamma \;
    \arrow[r, yshift=1.4ex] \arrow[r] \arrow[r, yshift=-1.4ex] & \; \underset{\alpha\beta}{\bigsqcup}\; U_{\alpha}\cap U_\beta  \arrow[r, yshift=0.7ex] \arrow[r, yshift=-0.7ex] & \; \underset{\alpha}{\bigsqcup}\; U_{\alpha}\; \arrow[r] &\; \check{C}(\mathcal{U})
    \end{tikzcd}
\end{equation}
\end{example}

\noindent Let us now give simple examples of $0$-stacks, which are just ordinary sheaves.

\begin{example}[Some useful $0$-stacks]
Interestingly $\Diff(M)$ can be thought as a $0$-stack sending a manifold $M$ to its group of diffeomorphisms, while $\Omega^{n}(M)$ is a $0$-stack sending $M$ to the vector space of its $n$-forms. Analogously $\Omega^{n}_{\mathrm{cl}}(M)$ is the $0$-stack of closed $n$-forms. However we remark that a $0$-stack of exact forms $\Omega^{n}_{\mathrm{ex}}(M)$ \textit{does not} exist, because it would not satisfy the gluing conditions on overlaps of patches $\Omega^n_{\mathrm{ex}}(U_\alpha\cap U_\beta)$. Given any smooth manifold $N$ there is also a natural $0$-stack $\Coo(-,N)$ which sends manifolds $M$ to the space of smooth maps $\Coo(M,N)$.
\end{example}

\noindent Let us now give an example of how the stack formalism can be powerful in generalizing ordinary moduli spaces of geometric structures to \textit{moduli stacks}. 

\begin{example}[Orthogonal structure moduli stack]\label{ex:orth}
An \textit{orthogonal structure moduli stack} is the stack which encodes a Riemannian metric structure on a manifold $M$. Let the transition functions of the tangent bundle $TM$ be functions $N_{\alpha\beta}\in\Coo\big(U_\alpha\cap U_\beta,\, GL(d)\big)$. Now a map
\begin{equation}
    M\simeq \check{C}(\mathcal{U}) \xrightarrow{\;\;(e,h)\;\;} \mathbf{Orth}(TM)
\end{equation}
is a collection $(e,h)$ of local $GL(d)$-functions $e_\alpha\in\Coo\big(U_\alpha,\,GL(d)\big)$ on patches and of local $O(d)$-functions $h_{\alpha\beta}\in\Coo\big(U_\alpha\cap U_\beta,\,O(d)\big)$ on overlaps of patches, such that they are patched by
\begin{equation}
    \begin{aligned}
        e_\alpha &= h_{\alpha\beta}\cdot e_\beta\cdot N_{\alpha\beta} \\
        h_{\alpha\gamma} &= h_{\alpha\beta}\cdot h_{\beta\gamma}
    \end{aligned}
\end{equation}
on two-fold and on three-fold overlaps. The morphisms $\eta:(e,h)\Mapsto(e',h')$ between these maps 
\begin{equation}
    \begin{tikzcd}[row sep=3ex, column sep=8ex]
         M\simeq \check{C}(\mathcal{U}) \arrow[r, bend left=50, ""{name=U, below}, "(e\text{,}h)"]
        \arrow[r, bend right=50, "(e'\text{,}h')"', ""{name=D}]
        & \mathbf{Orth}(TM)
        \arrow[Rightarrow, from=U, to=D, "\eta"]
    \end{tikzcd}
\end{equation}
are collections of local $O(d)$-functions $\eta_\alpha\in\Coo\big(U_\alpha,\,O(d)\big)$ on each patch, such that they give
\begin{equation}
    \begin{aligned}
        e'_\alpha &= \eta_\alpha \cdot e_\alpha \\
        h'_{\alpha\beta} &= \eta_\alpha\cdot h_{\alpha\beta}\cdot\eta_\beta^{-1}.
    \end{aligned}
\end{equation}
Notice the moduli space of an orthogonal structure is locally given by $\Coo\big(U_\alpha,\,GL(d)/O(d)\big)$ and globally by non-trivially gluing these spaces by using the transition functions of $TM$. Notice also that the $e_\alpha\in\Coo\big(U_\alpha,\,GL(d)\big)$ are the vielbein matrices of the Riemannian structure.
\end{example}

\noindent Now we will briefly present a correspondence which allows us to write abelian stacks in a very simple and immediate fashion. See \cite{DCCTv2} for a detailed discussion about it.

\begin{remark}[Dold-Kan correspondence]\label{rem:dkc}
\textit{Dold-Kan correspondence} exhibits an equivalence between abelian smooth higher stacks and cochain complexes of abelian sheaves over manifolds. In our notation to such a stack $\mathbf{A}$ will correspond a cochain complex $(\mathcal{A}_\bullet,\mathrm{d}_\bullet)$, i.e. explicitly
\begin{equation}
    \mathbf{A} \cong  \left(
    \begin{tikzcd}[row sep=scriptsize, column sep=3ex]\cdots\arrow[r, "\mathrm{d}_{3}"]& \mathcal{A}_2 \arrow[r, "\mathrm{d}_{2}"]& \mathcal{A}_1\arrow[r, "\mathrm{d}_{1}"]& \mathcal{A}_0
    \end{tikzcd}
    \right),
\end{equation}
where the $\mathcal{A}_i$ are all abelian sheaves.
The stack $\mathbf{BA}$, which is called \textit{delooping} of $\mathbf{A}$, is exactly the stack corresponding to the shifted cochain $\mathcal{A}_\bullet[1]$ of smooth sheaves, i.e. explicitly
\begin{equation}
    \mathbf{BA} \cong  \left(
    \begin{tikzcd}[row sep=scriptsize, column sep=3ex]\cdots \arrow[r, "\mathrm{d}_{2}"]& \mathcal{A}_1\arrow[r, "\mathrm{d}_{1}"]& \mathcal{A}_0 \arrow[r, "0"]& \,0\,
    \end{tikzcd}
    \right)
\end{equation}
Given a smooth manifold $M$, the morphisms of the groupoid $\mathbf{H}(M,\mathbf{A})$ correspond to maps of cochain complexes $\check{C}(\mathcal{U})_{\bullet}\rightarrow(\mathcal{A}_\bullet,\mathrm{d}_\bullet)$, where $\check{C}(\mathcal{U})$ is the \v{C}ech groupoid of the manifold $M$.
\end{remark}

\begin{example}[Abelian $1$-stacks and $2$-stacks]\label{ex:stacks}
The following are the relevant examples of abelian $1$-stacks and $2$-stacks we are going to use in the next discussion. They are presented through Dold-Kan correspondence (remark \ref{rem:dkc}) as cochain complexes of abelian sheaves. Notice that in this form they are Deligne complexes:
\begin{equation}
    \begin{aligned}
    \mathbf{B}U(1) \;&\cong\;  \left(
    \begin{tikzcd}[row sep=scriptsize, column sep=8ex]\Coo\left(-,U(1)\right)\;\arrow[r,"0"]&\; 0
    \end{tikzcd}
    \right) \\
    \mathbf{B}U(1)_{\mathrm{conn}} \;&\cong\;   \left(
    \begin{tikzcd}[row sep=scriptsize, column sep=8ex]\Coo\left(-,U(1)\right)\;\arrow[r,"\frac{1}{2\pi i}\mathrm{d}\cdot\log"]&\;\Omega^1(-)
    \end{tikzcd}
    \right) \\
    \mathbf{B}^2U(1)  \;&\cong\; \left(
    \begin{tikzcd}[row sep=scriptsize, column sep=8ex]\Coo\left(-,U(1)\right)\;\arrow[r,"0"]&\; 0\arrow[r,"0"]&\; 0
    \end{tikzcd}
    \right) \\
    \mathbf{B}(\mathbf{B}U(1)_{\mathrm{conn}})\;&\cong\;   \left(
    \begin{tikzcd}[row sep=scriptsize, column sep=8ex]\Coo\left(-,U(1)\right)\;\arrow[r,"\frac{1}{2\pi i}\mathrm{d}\cdot\log"]&\;\Omega^1(-)\arrow[r,"0"]&\; 0
    \end{tikzcd}
    \right) \\
    \mathbf{B}^2U(1)_{\mathrm{conn}}  \;&\cong\;  \left(
    \begin{tikzcd}[row sep=scriptsize, column sep=8ex]\Coo\left(-,U(1)\right)\;\arrow[r,"\frac{1}{2\pi i}\mathrm{d}\cdot\log"]&\;\Omega^1(-)\arrow[r,"\mathrm{d}"]&\; \Omega^2(-)
    \end{tikzcd}
    \right)
        \end{aligned}
\end{equation}
More generally we can write the following abelian $k$-stack for any $k\in\mathbb{N}$ by using Dold-Kan:
\begin{equation}
    \mathbf{B}^kU(1)_{\mathrm{conn}}  \;\cong\;  \left(
    \begin{tikzcd}[row sep=scriptsize, column sep=8ex]\Coo\left(-,U(1)\right)\arrow[r,"\frac{1}{2\pi i}\mathrm{d}\cdot\log"]&\Omega^1(-)\arrow[r,"\mathrm{d}"]& \cdots\arrow[r,"\mathrm{d}"]& \Omega^k(-)
    \end{tikzcd}\right)
\end{equation}
\end{example}

\begin{remark}[Forgetful functor]
Notice we can naturally introduce a \textit{forgetful functor} which forgets the $1$-degree $1$-form part of the cochain complex and retains only the $0$-degree sheaf for
\begin{equation}
    \begin{tikzcd}[row sep=scriptsize, column sep=7ex]
    \mathbf{B}U(1)_{\mathrm{conn}} \arrow[r, "\mathrm{frgt}"] & \mathbf{B}U(1)
\end{tikzcd}
\end{equation}
Analogously we can define natural forgetful functors for the $2$,$1$-degree sheaves of the cochains
\begin{equation}
    \begin{tikzcd}[row sep=scriptsize, column sep=7ex]
    \mathbf{B}^2U(1)_{\mathrm{conn}} \arrow[r, "\mathrm{frgt}"] & \mathbf{B}(\BU) \arrow[r, "\mathrm{frgt}"] &  \mathbf{B}^2U(1)
\end{tikzcd}
\end{equation}
\end{remark}

\noindent Let us now define some useful categories.

\begin{definition}[$(n,r)$-category]
An $(n,r)$\textit{-category} is an $n$-category such that all $k$-morphisms with $k>r$ are equivalences. For example an $n$-groupoid is nothing but an $(n,0)$-category.
\end{definition}

\begin{definition}[$(\infty,1)$-category of stacks]
We call $\mathbf{H}$ the $(\infty,1)$\textit{-category of stack}, such that
\begin{itemize}
    \item objects are higher smooth stacks,
    \item $k$-morphisms for any $k\in\mathbb{N}^+$ are $k$-morphisms of higher smooth stacks.
\end{itemize} 
\end{definition}

\noindent Notice that the category of smooth manifolds is naturally embedded into the $(\infty,1)$-category of smooth stacks: indeed any smooth manifold $M$ can be regarded as a $0$-stack $\,\Coo(-,M)\in\mathbf{H}$.

\begin{definition}[Hom $\infty$-groupoid]
Given any couple of smooth stacks $\mathbf{S}_1,\mathbf{S}_2\in\mathbf{H}$, according to \cite{DCCTv2} we can define the \textit{hom }$\infty$\textit{-groupoid} $\mathbf{H}(\mathbf{S}_1,\mathbf{S}_2)$ as an higher groupoid such that
\begin{itemize}
    \item objects are $1$-morphisms $f:\mathbf{S}_1\rightarrow\mathbf{S}_2$ in $\mathbf{H}$,
    \item $k$-morphisms are $(k+1)$-morphisms of stacks $f\Mapsto f'$ in $\mathbf{H}$.
\end{itemize}
\end{definition}

\noindent Notice that, given an higher smooth stack $\mathbf{S}$ over smooth manifolds, we have the natural equivalence $\mathbf{H}(M,\,\mathbf{S})\simeq\mathbf{S}(M)$ for any smooth manifold $M$ (regarded here as a smooth $0$-stack).

\begin{definition}[Internal hom $\infty$-stack]\label{def:inthom}
Given any couple of smooth stacks $\mathbf{S}_1,\mathbf{S}_2\in\mathbf{H}$, according to \cite{DCCTv2} we can define the \textit{internal hom} $\infty$\textit{-stack} $[\mathbf{S}_1,\mathbf{S}_2]$ by the equivalence
\begin{equation}
    \mathbf{H}\big(M,\,[\mathbf{S}_1,\mathbf{S}_2]\big) \,\simeq\, \mathbf{H}(M\times\mathbf{S}_1,\, \mathbf{S}_2).
\end{equation}
Notice that, if $\,\ast\,$ is a point, we have the natural equivalence $[\,\ast\,,\,\mathbf{S}\,]\,\simeq\,\mathbf{S}\,$ for any stack $\,\mathbf{S}\in\mathbf{H}$.
\end{definition}

\begin{example}[Loop space of a manifold]\label{ex:loopspace}
The \textit{loop space} of a manifold $M$ is $LM:=[S^1,M]$.
\end{example}

\begin{definition}[Slice $\infty$-category]\label{def:slice}
For any given object $\mathbf{S}\in\mathbf{H}$, according to \cite{DCCTv2} we can define the \textit{slice }$\infty$\textit{-category} $\mathbf{H}_{/\mathbf{S}}$ as the $\infty$-category such that
\begin{itemize}
    \item objects are $1$-morphisms $f:\mathbf{X}\rightarrow\mathbf{S}$ in $\mathbf{H}$, 
    \item $1$-morphisms $F:f_1\mapsto f_2$ are homotopy commutative diagrams of the following form
    \begin{equation}
        \begin{tikzcd}[row sep=scriptsize, column sep=8ex]
        \mathbf{X}_1 \arrow[rr, "F"{name=D}]\arrow[rd, "f_1"'] & & \mathbf{X}_2\arrow[ld, "f_2"] \\
        & \mathbf{S}\arrow[Rightarrow, to=D] &
    \end{tikzcd}
    \end{equation}
    \item and so on for $k$-morphisms with $k>1$.
\end{itemize}
\end{definition}

\begin{definition}[Loop space object of an $\infty$-category]
For any given object $X\in\mathbf{C}$ in an $\infty$-category $\mathbf{C}$, we can define the \textit{loop space object} $\Omega_{X}\mathbf{C}$ as the $\infty$-category such that
\begin{itemize}
    \item objects are $1$-morphisms $f:X\rightarrow X$ in $\mathbf{C}$, 
    \item $k$-morphisms are $(k+1)$-morphisms $f_1\Mapsto f_2$ in $\mathbf{C}$.
\end{itemize}
Notice this category must not be confused with the loop space of a manifold from example \ref{ex:loopspace}.
\end{definition}

\subsection{Principal n-bundles and gerbes}

In this subsection we will give a simple introduction to the theory of principal $n$-bundles developed by \cite{Principal1} and \cite{Principal2}. Moreover, from the general theory, we will recover the local differential data of abelian gerbes as presented by \cite{GERBE03}.

\vspace{0.2cm}
\noindent In ordinary differential geometry a principal $G$-bundle on a manifold $M$ is defined by an element of the first non-abelian $G$-cohomology group $H^1(M,G) \,\simeq\, G\mathrm{Bund}(M)_{/\cong}$. These are equivalence classes $\left[f_{\alpha\beta}\right]$ where the representatives are given by \v{C}ech $G$-cocycles $f_{\alpha\beta}\in\Coo(U_\alpha\cap U_\beta,G)$ on $M$ and the equivalence relation is given by \v{C}ech coboundaries $\eta_{\alpha}\in\Coo(U_\alpha,G)$ by $f_{\alpha\beta}\,\cong\,\eta_\alpha f_{\alpha\beta}\eta_\beta^{-1}$. We would like to refine this formalism to a stack description, where we consider $G$-bundles without slashing out gauge transformations.

\begin{remark}[Principal $1$-bundle]\label{modulirem}
The groupoid $\mathbf{H}(M,\mathbf{B}G)$ for a given manifold $M$ and Lie $1$-group $G$ has for objects all the nonabelian \v{C}ech $G$-cocycles $f_{\alpha\beta}$ on $M$ and for morphisms all the couboundaries $f_{\alpha\beta}\mapsto\eta_\alpha f_{\alpha\beta}\eta_\beta^{-1}$ between them. Schematically we have:
\begin{equation}
    \mathbf{H}(M,\mathbf{B}G)\,\simeq\,\left\{\begin{tikzcd}[row sep=scriptsize, column sep=12ex]
    M \arrow[r, bend left=50, ""{name=U, below}, "f_{\alpha\beta}"]
    \arrow[r, bend right=50, "\eta_\alpha f_{\alpha\beta}\eta_\beta^{-1}"', ""{name=D}]
    & \mathbf{B}G
    \arrow[Rightarrow, from=U, to=D, "\eta_\alpha"]
\end{tikzcd} \right\}
\end{equation}
In geometric terms the objects are all the principal $G$-bundles over $M$ and the morphisms are all the isomorphisms (i.e. gauge transformations) between them. Thus we will operatively define a principal $G$-bundle as just an object of groupoid $\mathbf{H}(M,\mathbf{B}G)$.
\end{remark}

\noindent To recover the previous ordinary picture we only need to take the set of path-connected components of the groupoid of principal $1$-bundles:
\begin{equation}\label{eq:pathconnected}
    H^1(M,G) \,=\, \pi_0\mathbf{H}(M,\mathbf{B}G).
\end{equation}

\begin{remark}[Principal $n$-bundles]The fundamental idea for defining principal $n$-bundles is letting the formalism \eqref{modulirem} work for $n$-groups with $n\neq 1$ too. Hence we define a principal $n$-bundle as an object of $\mathbf{H}(M,\mathbf{B}G)$ where $G$ is any $n$-group object in $\mathbf{H}$.
\end{remark}

\begin{example}[Gerbe]
An abelian gerbe is a principal $\mathbf{B}U(1)$-bundle (i.e. a circle $2$-bundle).
\end{example}

\begin{remark}[Dixmier-Douady class]
By taking the group of path-connected components of the groupoid $\mathbf{H}(M,\mathbf{B}^2U(1))$ of the abelian gerbes we obtain the $3$rd cohomology group
\begin{equation}
    \pi_0\mathbf{H}(M,\mathbf{B}^2U(1)) = H^2(M,U(1)) \cong H^3(M,\mathbb{Z}).
\end{equation}
Hence circle $2$-bundles $P\rightarrow M$ over a base manifold $M$ are topologically classified by their Dixmier-Douady class, i.e. by an element $\mathrm{dd}(P)\in H^3(M,\mathbb{Z})$ of the third integer cohomology group of the base manifold. This is totally analogous to how first Chern class $\mathrm{c}_1(P)\in H^2(M,\mathbb{Z})$ classifies ordinary circle bundles $P\rightarrow M$. In general we have a sequence of circle $n$-bundles:
\begin{equation}
    H^1(M,\mathbb{Z})\simeq \Coo(M,S^1), \;\; H^2(M,\mathbb{Z})\simeq S^1\mathrm{Bund}(M)_{/\cong}, \;\; H^3(M,\mathbb{Z})\simeq \mathrm{Gerb}(M)_{/\cong}, \;\;\dots
\end{equation}
where $S^1\mathrm{Bund}(M)_{/\cong}$ and $\mathrm{Gerb}(M)_{/\cong}$ are respectively the group of isomorphism classes of circle bundles and abelian gerbes over the base manifold $M$. Therefore, in this context, a global map in $\Coo(M,S^1)$ can be seen as a circle $0$-bundle.
\end{remark}

\begin{remark}[Gerbe in \v{C}ech picture]
An object of $\mathbf{H}(M,\mathbf{B}^2U(1))$ is given in \v{C}ech data for a good cover $\mathcal{U}=\{U_\alpha\}$ of $M$ by a collection $(G_{\alpha\beta\gamma})$ of local scalars on $U_\alpha\cap U_\beta\cap U_\gamma$ satisfying
\begin{equation}
    \begin{aligned}
        G_{\alpha\beta\gamma}-G_{\beta\gamma\delta}+G_{\gamma\delta\alpha}-G_{\delta\alpha\beta}\in 2\pi\mathbb{Z},
    \end{aligned}
\end{equation}
i.e. an abelian gerbe in \v{C}ech data. The $1$-morphisms between these objects are \v{C}ech coboundaries (in physical words the gauge transformations of the gerbe) given by collections $(\eta_{\alpha\beta})$ of local scalars on overlaps $U_\alpha\cap U_\beta$ so that
\begin{equation}
    \begin{aligned}
        G_{\alpha\beta\gamma} \mapsto G_{\alpha\beta\gamma} + \eta_{\alpha\beta}+\eta_{\beta\gamma}+\eta_{\gamma\alpha}
    \end{aligned}
\end{equation}
The $2$-morphisms between $1$-morphisms (in physical words the gauge-of-gauge transformations of the gerbe) are given by collections $(\epsilon_\alpha)$ of local scalars on each $U_\alpha$ so that
\begin{equation}
    \begin{aligned}
        \eta_{\alpha\beta} \Mapsto \eta_{\alpha\beta}+\epsilon_{\alpha}-\epsilon_{\beta} .
    \end{aligned}
\end{equation}
In terms of diagrams we can write this $2$-groupoid of abelian gerbes as it follows:
\begin{equation}
    \mathbf{H}\big(M,\mathbf{B}^2U(1)\big) \,\simeq\, \left\{\; \begin{tikzcd}[row sep=scriptsize, column sep=26ex]
    \;\;\; M \arrow[r, bend left=60, ""{name=U, below}, "(G_{\alpha\beta\gamma})"]
    \arrow[r, bend right=60, "(G'_{\alpha\beta\gamma})"', ""{name=D}]
    & \qquad\,\mathbf{B}^2U(1)
    \arrow[Rightarrow, from=U, to=D, bend left=55, "(\eta'_{\alpha\beta})", ""{name=R, below}] \arrow[Rightarrow, from=U, to=D, bend right=55, "(\eta_{\alpha\beta})"', ""{name=L}] \tarrow[from=L, to=R, end anchor={[yshift=0.6ex]}]{r} \arrow["(\epsilon_\alpha)", phantom, from=L, to=R, end anchor={[yshift=0.6ex]}, bend left=34]
\end{tikzcd}\right\}
\end{equation}
\end{remark}

\begin{remark}[Gerbe in Chatterjee-Hitchin picture]
There is an alternative but equivalent way to geometrically describe a gerbe: the \textit{Chatterjee-Hitchin} description by \cite{Hit99}.
A gerbe will be given by a circle bundle $P_{\alpha\beta}\in\mathbf{H}(U_\alpha\cap U_\beta,\,\mathbf{B}U(1))$ on each overlap of patches and an isomorphism between each tensor product $P_{\alpha\beta}\otimes P_{\beta\gamma}$ and $P_{\alpha\gamma}$ on every three-fold overlap of patches. The latter is a gauge transformation $G_{\alpha\beta\gamma}\in\Coo(U_\alpha\cap U_\beta\cap U_\gamma)$, so that
\begin{equation}
    P_{\alpha\beta}\otimes P_{\beta\gamma} \xrightarrow[\;G_{\alpha\beta\gamma}\;]{\cong} P_{\alpha\gamma}
\end{equation}
and which satisfies the cocycle condition on four-fold overlaps of patches. This notation is reminiscent of the transition functions $(G_{\alpha\beta})$ of an ordinary circle bundle $P\rightarrow M$, which are indeed $0$-gerbes $G_{\alpha\beta}\in\Coo(U_\alpha\cap U_\beta,\,U(1))$ and which satisfy exactly $G_{\alpha\beta}\cdot G_{\beta\gamma} = G_{\alpha\gamma}$.
\end{remark}

\begin{remark}[Gerbe with connection]
An abelian \textit{gerbe with connection} is given by a cocycle $M\rightarrow\mathbf{B}^2U(1)_{\mathrm{conn}}$ where we defined the stack $\mathbf{B}^2U(1)_{\mathrm{conn}}$ in example \ref{ex:stacks}.
\end{remark}

\begin{remark}[Gerbe with connection in \v{C}ech picture]
An object of $\mathbf{H}(M,\mathbf{B}^2U(1)_{\mathrm{conn}})$ is given in \v{C}ech data for a good cover $\mathcal{U}=\{U_\alpha\}$ of $M$ by a collection $(B_\alpha,\Lambda_{\alpha\beta},G_{\alpha\beta\gamma})$ of $2$-forms $B_\alpha\in\Omega^2(U_\alpha)$, $1$-forms $\Lambda_{\alpha\beta}\in\Omega^2(U_\alpha\cap U_\beta)$ and scalars $G_{\alpha\beta\gamma}\in\Coo(U_\alpha\cap U_\beta\cap U_\gamma)$, patched by
\begin{equation}
    \begin{aligned}
        B_\beta-B_\alpha &= \mathrm{d}\Lambda_{\alpha\beta}, \\
        \Lambda_{\alpha\beta}+\Lambda_{\beta\gamma}+\Lambda_{\gamma\alpha} &= \mathrm{d}G_{\alpha\beta\gamma} \\
        G_{\alpha\beta\gamma}-G_{\beta\gamma\delta}+G_{\gamma\delta\alpha}-G_{\delta\alpha\beta}&\in 2\pi\mathbb{Z}
    \end{aligned}
\end{equation}
i.e. an abelian gerbe with connection in \v{C}ech data. The $1$-morphisms between these objects are \v{C}ech coboundaries (in physical words the gauge transformations of the gerbe), given by collections $(\eta_{\alpha},\eta_{\alpha\beta})$ of local $1$-forms $\eta_\alpha\in\Omega^1(U_\alpha)$ and local scalars $\eta_{\alpha\beta}\in\Coo(U_\alpha\cap U_\beta)$, so that
\begin{equation}\label{eq:coboundaries}
    \begin{aligned}
        B_\alpha &\mapsto B_\alpha + \mathrm{d}\eta_{\alpha}, \\
        \Lambda_{\alpha\beta} &\mapsto \Lambda_{\alpha\beta}+\eta_{\alpha}-\eta_{\beta}+\mathrm{d}\eta_{\alpha\beta} \\
        G_{\alpha\beta\gamma} &\mapsto G_{\alpha\beta\gamma} + \eta_{\alpha\beta}+\eta_{\beta\gamma}+\eta_{\gamma\alpha}
    \end{aligned}
\end{equation}
The $2$-morphisms between $1$-morphisms (in physical words the gauge-of-gauge transformations of the gerbe) are given by collections $(\epsilon_\alpha)$ of local scalars on each $U_\alpha$ so that
\begin{equation}
    \begin{aligned}
        \eta_\alpha &\Mapsto \eta_\alpha + \mathrm{d}\epsilon_{\alpha}, \\
        \eta_{\alpha\beta} &\Mapsto \eta_{\alpha\beta}+\epsilon_{\alpha}-\epsilon_{\beta} .
    \end{aligned}
\end{equation}
In terms of diagrams we can write this $2$-groupoid of abelian gerbes with connection as it follows:
\begin{equation}
    \mathbf{H}\big(M,\mathbf{B}^2U(1)_{\mathrm{conn}}\big) \,\simeq\, \left\{\; \begin{tikzcd}[row sep=scriptsize, column sep=26ex]
    \;\;\; M \arrow[r, bend left=60, ""{name=U, below}, "(B_\alpha\text{,}\,\Lambda_{\alpha\beta}\text{,}\,G_{\alpha\beta\gamma})"]
    \arrow[r, bend right=60, "(B'_\alpha\text{,}\,\Lambda'_{\alpha\beta}\text{,}\,G'_{\alpha\beta\gamma})"', ""{name=D}]
    & \qquad\,\mathbf{B}^2U(1)_{\mathrm{conn}}
    \arrow[Rightarrow, from=U, to=D, bend left=55, "(\eta'_\alpha\text{,}\,\eta'_{\alpha\beta})", ""{name=R, below}] \arrow[Rightarrow, from=U, to=D, bend right=55, "(\eta_\alpha\text{,}\,\eta_{\alpha\beta})"', ""{name=L}] \tarrow[from=L, to=R, end anchor={[yshift=0.6ex]}]{r} \arrow["(\epsilon_\alpha)", phantom, from=L, to=R, end anchor={[yshift=0.6ex]}, bend left=34]
\end{tikzcd}\right\}
\end{equation}
\end{remark}

\begin{definition}[Flat and trivial gerbe]
A \textit{flat gerbe} is defined as a gerbe $(B_\alpha,\Lambda_{\alpha\beta},G_{\alpha\beta\gamma})$ with vanishing curvature $\mathrm{d}B_\alpha=0$. We use the symbol $\flat\mathbf{B}^2U(1)_{\mathrm{conn}}$ for the moduli stack of flat gerbes with connection. A \textit{trivial gerbe} is defined as a gerbe with trivial Dixmier-Douady class.
\end{definition}

\begin{remark}[Flat and trivial gerbe in \v{C}ech picture]
Let us express in local data a flat gerbe $(B_\alpha,\Lambda_{\alpha\beta},G_{\alpha\beta\gamma})\in\mathbf{H}(M,\flat\mathbf{B}^2U(1)_{\mathrm{conn}})$. Since $B_\alpha$ is closed on each patch $U_\alpha$ we can rewrite
\begin{equation}\label{eq:flatgerbe}
    \begin{aligned}
        B_\alpha \,=&\; \mathrm{d}\eta_\alpha, \\
        \Lambda_{\alpha\beta} \,=&\; \eta_\alpha-\eta_\alpha + \mathrm{d}\eta_{\alpha\beta}, \\
        G_{\alpha\beta\gamma} \,=&\; \eta_{\alpha\beta} + \eta_{\beta\gamma} + \eta_{\gamma\alpha} +c_{\alpha\beta\gamma}, \\
        &\; c_{\alpha\beta\gamma} - c_{\beta\gamma\delta} + c_{\gamma\delta\alpha} - c_{\delta\alpha\beta} \in 2\pi\mathbb{Z}
    \end{aligned}
\end{equation}
Hence flat gerbes are classified by \textit{holonomy classes} $[c_{\alpha\beta\gamma}]\in H^2(M,U(1)_{\mathrm{discr}})$. The \v{C}ech local data of a trivial gerbe will be exactly \eqref{eq:flatgerbe}, but with trivial constants $c_{\alpha\beta\gamma}=0$.
\end{remark}

\begin{definition}[Flat holonomy class]\label{eq:holonomyclass}
Flat gerbes are classified by elements of the cohomology group $H^2(M,U(1)_{\mathrm{discr}})\cong \mathrm{Hom}\big(H_2(M),U(1)_{\mathrm{discr}}\big)$, where $U(1)_{\mathrm{discr}}$ is the circle equipped with discrete topology. Such class is called \textit{flat holonomy class} of the gerbe.
\end{definition}

\noindent Hence a class $[c_{\alpha\beta\gamma}]\in H^2(M,U(1)_{\mathrm{discr}})$ encode the holonomy of the gerbe, meaning that to any surface $[\Sigma]\in H_2(M)$ of the base manifold will be associated an angle $\mathrm{hol}(\Sigma,B_\alpha)\in U(1)$.

\begin{remark}[Flat gerbe has torsion Dixmier-Douady class]
There exists a natural map $H^2(M,U(1)_{\mathrm{discr}})\rightarrow H^2(M,U(1))\cong H^3(M,\mathbb{Z})$ sending a flat gerbe to its Dixmier-Douady class. The Dixmier-Douady class of the flat gerbe has not to be zero, but its image in the de Rham cohomology $H^3(M,\mathbb{Z})\rightarrow H^3(M,\mathbb{R})\cong H^3_{\mathrm{dR}}(M)$ must be, since $\mathrm{d}B_\alpha=0$. This implies that the Dixmier-Douady class is, in general, torsion.
\end{remark}

\begin{definition}[Sections of a $n$-bundle]\label{def:secgroupoid}
Given any bundle $\pi:P\rightarrow M$ with $P,M\in\mathbf{H}$, according to \cite{Principal1} we can define its $n$-groupoid of its \textit{sections} on $M$ by 
\begin{equation}
    \Gamma(M,P) \,:=\, \mathbf{H}_{/M}(\mathrm{id}_M,\pi)
\end{equation}
where $\mathbf{H}_{/M}(-,-)$ is the internal hom $n$-groupoid (definition \ref{def:inthom}) of the slice $n$-category $\mathbf{H}_{/M}$ (definition \ref{def:slice}).
\end{definition}

\subsection{Finite symmetries of n-bundles}

In this subsection we will explain and apply some definitions from \cite{FSS16} to obtain the $n$-group of finite symmetries of a principal $n$-bundle.

\begin{definition}[Automorphism $n$-groupoid]
Given any stack $X\in\mathbf{H}$ we define its \textit{automorphism groupoid} $\Aut(X)$ as the subgroupoid of $\mathbf{H}(X,X)$ of invertible morphisms. For a given morphism $f:X\rightarrow Y$ the automorphism groupoid $\Aut_{/}(f)$ is analogously defined as the subgroupoid of $\mathbf{H}_{/}(f,f)$ of invertible morphisms.
\end{definition}

\begin{example}[Automorphisms of principal $n$-bundles]
Let $P\rightarrow M$ be a principal $n$-bundle given by $f:M\rightarrow\mathbf{B}G$. The automorphism $n$-group of $f$ (i.e. the $n$-group of automorphisms of $P$ preserving the principal structure) will sit at the center of a short exact sequence of $n$-groups
\begin{equation}\label{eq:autdef}
    1\longrightarrow\Omega_f\mathbf{H}(M,\mathbf{B}G)\longrightarrow \Aut_{/}(f) \longrightarrow \mathrm{Diff}(M)\longrightarrow 1.
\end{equation}
We will also equivalently use the semidirect product notation $\Aut_{/}(f) = \mathrm{Diff}(M)\ltimes\Omega_f\mathbf{H}(M,\mathbf{B}G)$.
\end{example}

\begin{example}[Automorphisms of ordinary $G$-bundles]\label{ex:nonabelianaut}
Let $G$ be an ordinary Lie group and let $P$ be an ordinary principal $G$-bundle given by the cocycle $f:M\rightarrow\mathbf{B}G$. Hence we have the isomorphism $\Omega_f\mathbf{H}(M,\mathbf{B}G)\cong \Gamma\big(M,\mathrm{Ad}(P)\big)$, where the associated bundle $\mathrm{Ad}(P) := P\times_G G$ with the adjoint action $\mathrm{Ad}:G\times G\rightarrow G$ is just the non-linear \textit{adjoint bundle} of $P$. So we have the usual automorphism group of a principal $G$-bundle
\begin{equation}
    1\longrightarrow\Gamma\big(M,\mathrm{Ad}(P)\big)\longrightarrow \Aut_{/}(f) \longrightarrow \mathrm{Diff}(M)\longrightarrow 1.
\end{equation}
\end{example}

\begin{example}[Automorphisms of circle bundles]
For the ordinary case $G=U(1)$ we have $\Omega_f\mathbf{H}(M,\mathbf{B}U(1))\cong \Coo(M,U(1))$ and hence the usual automorphism $1$-group of a circle bundle
\begin{equation}
    1\longrightarrow\Coo(M,U(1))\longrightarrow \Aut_{/}(f) \longrightarrow \mathrm{Diff}(M)\longrightarrow 1.
\end{equation}
\end{example}

\begin{example}[Automorphisms of gerbes]\label{ex:aut}
It is possible to prove there exists an equivalence of $2$-groups $\Omega_f\mathbf{H}\big(M,\mathbf{B}(\BU)\big) \cong \mathbf{H}\big(M,\BU\big)$ for any gerbe $f:M\rightarrow\mathbf{B}(\BU)$. Therefore global gauge transformations of this gerbe are global circle bundles with connection on $M$. Thus the $2$-group of automorphisms will sit at the center of the exact sequence of $2$-groups 
\begin{equation}\label{eq:finiteautomorphismseq}
    1\longrightarrow\mathbf{H}\big(M,\BU\big)\longrightarrow \Aut_{/}(f) \longrightarrow \mathrm{Diff}(M)\longrightarrow 1.
\end{equation}
Let us introduce the curvature map of stacks
\begin{equation}
    \mathrm{curv}:\BU\longrightarrow\Omega^2_{\mathrm{cl}}
\end{equation}
which maps a circle bundle $(\eta_\alpha,\eta_{\alpha\beta})$ over a manifold $M$ into a global closed $2$-form $b\in\Omega_{\mathrm{cl}}^2(M)$ such that $b|_{U_\alpha} = \mathrm{d}\eta_\alpha$. Then gauge transformations can be expressed as global $B$-shifts of the form $B_\alpha\mapsto B_\alpha+b$.
Notice $\mathrm{Diff}(M)\ltimes\Omega_{\mathrm{cl}}^2(M)$ is the gauge group proposed by \cite{Hull14} for DFT.
\end{example}

\noindent From remark \ref{ex:aut} we know $1$-morphisms between gerbes over $M$ are circle bundles over $M$ and $2$-morphisms are gauge transformations between these circle bundles. This corresponds, in general, to the idea that global gauge transformations of $n$-gerbes are $(n-1)$-gerbes and so on. This is a clear categorical feature of these geometrical objects.

\subsection{Infinitesimal symmetries of n-bundles and Generalized Geometry}

In this subsection we will deal with the infinitesimal automorphisms of a principal $n$-bundles and we will show how they are related to the more familiar Generalized Geometry (see \cite{Gua11}).

\begin{definition}[Atiyah $n$-algebroids]
Let $P\rightarrow M$ be a principal $n$-bundle corresponding to a map $f:M\rightarrow\mathbf{B}G$. The Atiyah $n$-algebroid of this principal $n$-bundle was defined in \cite{FSS16} as the Lie differentiation of its automorphism $n$-groupoid
\begin{equation}
    \mathfrak{at}(P) \,:=\, \mathrm{Lie}\big(\Aut_/(f)\big).
\end{equation}
\end{definition}

\noindent This $n$-algebra encodes the infinitesimal symmetries of the principal structure. By differentiating sequence \eqref{eq:autdef} we have that it will sit at the center of the short exact sequence of $n$-algebras
\begin{equation}
    0\longrightarrow\mathrm{Lie}\big(\Omega_f\mathbf{H}(M,\mathbf{B}G)\big)\longrightarrow \mathfrak{at}(P) \longrightarrow \mathfrak{X}(M)\longrightarrow 0.
\end{equation}

\begin{example}[Ordinary Atiyah algebroid]
If $P\rightarrow M$ is a principal $G$-bundle for some ordinary Lie group $G$ we get the short exact sequence of ordinary algebras
\begin{equation}
    0\longrightarrow\Gamma\big(M,\,\mathrm{ad}(P)\big)\longrightarrow \mathfrak{at}(P) \longrightarrow \mathfrak{X}(M)\longrightarrow 0.
\end{equation}
where $\mathrm{ad}(P):=P\times_G\mathfrak{g}$ with adjoint action $\mathrm{ad}:G\rightarrow\mathfrak{g}$ is the linear \textit{adjoint bundle} of $P$.
\end{example}

\begin{example}[Ordinary Atiyah algebroid of a circle bundle]
If $P\rightarrow M$ is a circle bundle we get the familiar short exact sequence of ordinary algebras
\begin{equation}
    0\longrightarrow\Coo(M,\mathbb{R})\longrightarrow \mathfrak{at}(P) \longrightarrow \mathfrak{X}(M)\longrightarrow 0.
\end{equation}
Locally, on any patch $U\subset M$, this reduces to the familiar algebra $\mathfrak{at}(P)|_U = \mathfrak{X}(U)\oplus\Coo(U)$ of infinitesimal gauge transformation of an abelian gauge field.
\end{example}

\begin{example}[Courant $2$-algebroid]
If $P\rightarrow M$ is a gerbe with connection data corresponding to a map $M\rightarrow\mathbf{B}(\BU)$, as explained in \cite{Col11}, we get that the Atiyah $2$-algebroid is the so-called \textit{Courant }$2$\textit{-algebra} sitting in the short exact sequence of $2$-algebras
\begin{equation}
    0\longrightarrow\mathbf{H}(M,\mathbf{B}\mathbb{R}_{\mathrm{conn}})\longrightarrow \mathfrak{at}(P) \longrightarrow \mathfrak{X}(M)\longrightarrow 0.
\end{equation}
Locally, on a patch $U\subset M$, this reduces the familiar Courant $2$-algebra of infinitesimal gauge transformations of the gerbe, whose underlying complex is just
\begin{equation}
    \mathfrak{at}(P)|_U \,\simeq\, \Big( \Coo(U)\,\xrightarrow{\;\mathrm{d}\;}\,\mathfrak{X}(U)\oplus\Omega^1(U) \Big).
\end{equation}
\end{example}

\section{Proposal: DFT is Higher Kaluza-Klein Theory}
In this section we will give a formal definition of Higher Kaluza-Klein Theory and we will explain how this can be interpreted as a global version of Double Field Theory (DFT). 

\vspace{0.2cm}

\noindent In the following, \textbf{Digression} paragraphs will be entirely devoted to discuss and clarify how main existing proposals of DFT geometry in literature are actually Higher Kaluza-Klein in disguise.

\subsection{Doubled space is the total space of a 2-bundle}
This subsection will devoted to explain how a globally defined version of doubled space can be defined as the total space of a $2$-bundle and why this solves the gluing problem of DFT. Moreover in our formulation we will recover many previous geometrical ideas for DFT, such as Papadopoulos' C-spaces, para-Hermitian geometry and Park's geometry.

\begin{post}[Doubled space]\label{post1}
A doubled space $\M$ is defined as the total space of a principal $\BU$-bundle $\M\xrightarrow{\bbpi}M$ over a Riemannian manifold $M$.
\end{post}

\begin{remark}[$\BU$ is a group-stack]\label{rem:gs}
The stack $\BU$ of circle bundles with connection is a \textit{group-stack}, which means that it satisfies the ordinary defining properties of a group up to an isomorphism. First of all $\BU$ is naturally equipped with a tensor product
\begin{equation}
    \otimes:\;\BU \,\times\, \BU\; \longrightarrow\; \BU
\end{equation}
which maps a couple of circle bundles $P_1\rightarrow M$ and $P_2\rightarrow M$ to a new one $P_1\otimes P_2\rightarrow M$. Moreover the dual bundle $P^\ast\rightarrow M$ of any circle bundle $P\rightarrow M$ plays the role of its \textit{inverse element}, while the trivial circle bundle $M\times U(1)\rightarrow M$ with trivial connection plays the role of the \textit{identity element} $\mathrm{id}$. It is easy to verify that ordinary group properties
\begin{equation}
    \begin{aligned}
    P^\ast\otimes P\cong\mathrm{id}, \quad P\otimes P^\ast\cong \mathrm{id}, \\
    P_1\otimes(P_2\otimes P_3)\cong (P_1\otimes P_2)\otimes P_3
    \end{aligned}
\end{equation}
are satisfied only up to gauge transformation of circle bundles. In local \v{C}ech data on the manifold $M$ we have  $(\eta_\alpha,\eta_{\alpha\beta})\otimes(\eta'_\alpha,\eta'_{\alpha\beta})= (\eta_\alpha+\eta'_\alpha,\eta_{\alpha\beta}+\eta'_{\alpha\beta})$ and  $(\eta_\alpha,\eta_{\alpha\beta})^\ast:=(-\eta_\alpha,-\eta_{\alpha\beta})$.
\end{remark}

\begin{remark}[Doubled space as a principal $2$-bundle in \v{C}ech picture]
Let us now give a concrete description of this geometrical object. The stack $\mathbf{B}(\BU)$ is nothing but the classifying $2$-stack of gerbes with with connective structure, but without "curving": we will explain this explicitly. If we call $M$ the smooth base manifold that we identify with usual spacetime, we can write the doubled space through the following pullback diagram in the category of $\infty$-stacks
\begin{equation}\begin{tikzcd}[row sep=7ex, column sep=6ex]
\M \arrow[d, "\bbpi"']\arrow[r] &\ast \arrow[d]  \\
M\arrow[r, "f"] &\mathbf{B}(\mathbf{B}U(1)_{\mathrm{conn}})
\end{tikzcd}\end{equation}
We can choose any good cover $\mathcal{U}:=\{U_\alpha\}$ for the base manifold $M$ and immediately write its \v{C}ech groupoid $\check{C}(\mathcal{U})$ as the following simplicial object
\begin{equation}\label{eq:cechgroupoidsimplicial}
    \begin{tikzcd}[row sep=scriptsize, column sep=3ex] \cdots\arrow[r, yshift=1.8ex]\arrow[r, yshift=0.6ex]\arrow[r, yshift=-0.6ex]\arrow[r, yshift=-1.8ex] & \bigsqcup_{\alpha\beta\gamma}U_{\alpha}\cap U_\beta\cap U_\gamma\arrow[r, yshift=1.4ex]\arrow[r]\arrow[r, yshift=-1.4ex]& \bigsqcup_{\alpha\beta}U_{\alpha}\cap U_\beta  \arrow[r, yshift=0.7ex] \arrow[r, yshift=-0.7ex] & \; \bigsqcup_{\alpha} U_{\alpha} \arrow[r] & \check{C}(\mathcal{U}).
    \end{tikzcd}
\end{equation}
Now, by using the natural equivalence of $\check{C}(\mathcal{U})$ and the manifold $M$ in the category of $\infty$-groupoids, we can express the functor between $M$ and the moduli stack as a map of the form
\begin{equation}
\begin{tikzcd}[row sep=scriptsize, column sep=8ex] 
    M \;\simeq\; \check{C}(\mathcal{U}) \arrow[r, "f"] & \mathbf{B}(\mathbf{B}U(1)_{\mathrm{conn}}).
\end{tikzcd}
\end{equation}
According to remark \ref{rem:dkc} we can use Dold-Kan correspondence to rewrite this map of stacks as map $f=(G,\Lambda,0)$ from the complex \eqref{eq:cechgroupoidsimplicial} to the following cochain complex of abelian sheaves
\begin{equation}
\begin{tikzcd}[row sep=scriptsize, column sep=8ex] 
    \check{C}(\mathcal{U})_\bullet \,\arrow[r, "(G\text{,}\Lambda\text{,}0)\;"] &\, \bigg(\Coo(-)\xrightarrow{\;\mathrm{d}\;}\Omega^1(-)\xrightarrow{\;0\;}0\bigg).
\end{tikzcd}
\end{equation}
These maps from the \v{C}ech groupoid \eqref{eq:cechgroupoidsimplicial} to the moduli stack are therefore of the form
\begin{equation}
    \Lambda \in\Omega^1\bigg(\bigsqcup_{\alpha\beta} U_{\alpha}\cap U_{\beta}\bigg), \;\quad
    G \in\Coo\bigg(\bigsqcup_{\alpha\beta\gamma} U_{\alpha}\cap U_{\beta}\cap U_\gamma\bigg),
\end{equation}
or equivalently a collection of $1$-forms $\Lambda_{\alpha\beta}\in\Omega^1(U_\alpha \cap U_\beta)$ and scalars $G_{\alpha\beta\gamma}\in\mathcal{C}^\infty(U_\alpha \cap U_\beta\cap U_\gamma)$, such that on three-fold and four-fold overlaps they are subject to the patching conditions
\begin{equation}\label{eq:localdatagerbe}
    \begin{aligned}
        \Lambda_{\alpha\beta}+\Lambda_{\beta\gamma}+\Lambda_{\gamma\alpha} &= \mathrm{d}G_{\alpha\beta\gamma},\\
        G_{\alpha\beta\gamma}-G_{\beta\gamma\delta}-G_{\gamma\delta\alpha}+G_{\delta\alpha\beta}&\in 2\pi\mathbb{Z}.
    \end{aligned}
\end{equation}
Now we can construct the groupoid $\Gamma(M,\mathcal{M}):=\mathbf{H}_{/M}(\mathrm{id}_M,\bbpi)$ of sections of the bundle $\mathcal{M}\xrightarrow{\bbpi}M$ according to definition \ref{def:secgroupoid}. They will be given by a collection $(\tilde{x}_\alpha,\phi_{\alpha\beta})\in\Gamma(M,\mathcal{M})$ where
\begin{equation}
    \tilde{x}_\alpha\in\Omega^1(U_\alpha), \;\quad
    \phi_{\alpha\beta}\in\Coo(U_\alpha\cap U_\beta)
\end{equation}
are local $1$-forms and scalars, such that they are patched on two-fold and three-fold overlaps by using $(\Lambda_{\alpha\beta},G_{\alpha\beta\gamma})$ as transition functions by
\begin{equation}\label{eq:coords}
    \begin{aligned}
        \tilde{x}_\alpha - \tilde{x}_\beta  \,&=\, -\Lambda_{\alpha\beta} + \mathrm{d}\phi_{\alpha\beta},\\
        \phi_{\alpha\beta}+\phi_{\beta\gamma}+\phi_{\gamma\alpha} \,&=\, G_{\alpha\beta\gamma} \quad\! \mathrm{mod}\,2\pi\mathbb{Z}.
    \end{aligned}
\end{equation}
Gauge transformations between global sections are given by a collection of local functions on each patch $\epsilon_\alpha\in\Coo(U_\alpha)$ so that $(\tilde{x}_\alpha,\phi_{\alpha\beta})\mapsto(\tilde{x}_\alpha+\mathrm{d}\epsilon_\alpha,\, \phi_{\alpha\beta}+\epsilon_\alpha-\epsilon_\beta)$.
Global sections $(\tilde{x}_\alpha,\phi_{\alpha\beta})$ and gauge transformations $(\epsilon_\alpha)$ are respectively the objects and the morphisms of the groupoid $\Gamma(M,\mathcal{M})$ of sections of the doubled space.
\end{remark}

\begin{remark}[Topological classification of doubled spaces]\label{rem:ddds}
Since doubled spaces $\mathcal{M}$ on a base manifold $M$ are defined by cocycles $(\Lambda,G):M\rightarrow\BU$, this immediately implies that they are topologically classified by their Dixmier-Douady classes $[G_{\alpha\beta\gamma}]\in H^3(M,\mathbb{Z})$.
\end{remark}

\begin{remark}[Sections of the doubled space are twisted circle bundles]
As explained by \cite{Principal1}, sections of a principal $2$-bundle $f:M\rightarrow\mathbf{B}G$ can be interpreted as ordinary principal bundles on $M$ twisted by the cocycle $f$. Coherently with this, in the case of the doubled space we can immediately interpret sections $(\tilde{x}_\alpha,\phi_{\alpha\beta})\in\Gamma(M,\M)$, which are patched according to \eqref{eq:coords}, as $U(1)$-bundles with connection on $M$ twisted by the \v{C}ech cocycle $(\Lambda_{\alpha\beta},G_{\alpha\beta\gamma})$.
\end{remark}

\noindent Let us now look at the example of doubled space with trivial Dixmier-Douady class in detail.

\begin{example}[Trivial doubled space]
Let us assume that the doubled space $\mathcal{M}_{\mathrm{triv}}$ is flat and trivial. Then its sections $(\tilde{x}_\alpha,\phi_{\alpha\beta})\in\Gamma(M,\mathcal{M}_{\mathrm{triv}})$ can be written accordingly to \eqref{eq:coords} as
\begin{equation}
    \begin{aligned}
        \tilde{x}_\alpha - \tilde{x}_\beta &= \mathrm{d}\phi_{\alpha\beta},\\
        \phi_{\alpha\beta}+\phi_{\beta\gamma}+\phi_{\gamma\alpha}&\in2\pi\mathbb{Z},
    \end{aligned}
\end{equation}
which is the local data of a globally defined circle bundle on $M$. The groupoid of sections of a trivial doubled space $\Gamma(M,\mathcal{M}_{\mathrm{triv}})\cong \mathbf{H}(M,\mathbf{B}U(1)_{\mathrm{conn}})$ is then the groupoid of circle bundles with connection on $M$. This is exactly the analogue of a trivial circle $1$-bundle on $M$, which admits globally defined sections $\Coo\big(M,U(1)\big)$. Notice that $\mathcal{M}_{\mathrm{triv}}\neq T^\ast M$, as one could think.
\end{example}

\begin{remark}[Groupoid interpretation of doubled space]
The doubled space $\mathcal{M}$ can be interpreted as a groupoid, locally made up of patches $T^\ast U_\alpha$ with $U_\alpha\subset M$. Indeed points $(x_\alpha,\tilde{x}_\alpha)\in T^\ast U_\alpha$ are glued together by morphisms $(x_\alpha,\tilde{x}_\alpha)\mapsto(x_\beta,\tilde{x}_\beta-\Lambda_{\alpha\beta}(x_\beta)+\mathrm{d}\phi_{\alpha\beta}(x_\beta))$ which must satisfy the composition rule $(\Lambda_{\alpha\beta},\,\phi_{\alpha\beta}) \circ (\Lambda_{\beta\gamma},\,\phi_{\beta\gamma}) := (\Lambda_{\alpha\gamma},\,\phi_{\alpha\gamma})$ on three-fold overlaps, according with \eqref{eq:localdatagerbe} and \eqref{eq:coords}. We will have the following picture
\begin{equation}
    \mathcal{M}\simeq \left\{ \begin{tikzcd}[row sep=12ex, column sep=5ex]
    & (x_\beta,\tilde{x}_\beta)\arrow[rd, "(\Lambda_{\beta\gamma}\text{,}\phi_{\beta\gamma})"] \arrow[Rightarrow, from=1-2, to=D, end anchor={[xshift=3.3ex, yshift=2.2ex]}, "\,G_{\alpha\beta\gamma}"] & \\
   (x_\alpha,\tilde{x}_\alpha)\arrow[rr, ""{name=D}, "(\Lambda_{\alpha\gamma}\text{,}\phi_{\alpha\gamma})"']{} \arrow[ru, "(\Lambda_{\alpha\beta}\text{,}\phi_{\alpha\beta})"] & & (x_\gamma,\tilde{x}_\gamma)
    \end{tikzcd} \right\}
\end{equation}
Patching conditions \eqref{eq:coords} do not uniquely identify two points in two different charts $T^\ast U_\alpha$ and $T^\ast U_\beta$, like we are used for manifolds, but they are given up to a choice of gauge transformation $\mathrm{d}\phi_{\alpha\beta}$ on overlaps. This idea that patches are glued together not by bare identities, but by gauge transformations is a totally new feature of stringy geometry.
\end{remark}

\noindent We just constructed a principal $\BU$-bundle without adding any extra structure nor condition. We obtained something which may look a unfamiliar at first sight, but actually we can yet immediately recognize some important properties from the existing literature which are supposed to appear in geometry of DFT. In this regard see digressions \ref{diga}, \ref{digb}, \ref{digc} and \ref{digd}.

\begin{digression}[Recovering Papadopolous' C-space]\label{diga}
Recall that sections $(\tilde{x}_\alpha,\phi_{\alpha\beta})\in\Gamma(M,\mathcal{M})$ of the doubled space are patched by condition \eqref{eq:coords}. Notice that these are exactly of the same form of the coordinates of a \textit{C-space}, defined by Papadopolous in \cite{Pap13}, \cite{Pap14} and further developed by \cite{HowPap17}, which was prescribed in the references to accommodate DFT geometry. Hence our construction of $\M$ may be seen also as a formalization of that intuition.
\end{digression}

\begin{digression}[Recovering Vaisman's para-K\"ahler geometry]\label{digb}
The doubled space can be seen as locally given by a collection of cotangent bundles $T^\ast U_\alpha\rightarrow U_\alpha$. In Darboux coordinates $\{x^\mu,\tilde{x}_\mu\}$ on each $T^\ast U_\alpha$, we can define tautological $1$-forms $\tilde{x}_\alpha:=\tilde{x}_\mu\mathrm{d}x^\mu$ and symplectic forms 
\begin{equation}
 \omega_{0\alpha} = \mathrm{d}\tilde{x}_\mu\wedge\mathrm{d}x^\mu, \qquad \omega_{0\beta}-\omega_{0\alpha} = \mathrm{d}\Lambda_{\alpha\beta}.
\end{equation}
By rearranging the projector $\bbpi_\alpha:T^\ast U_\alpha\rightarrow U_\alpha$ locally given by $(x^\mu,\tilde{x}_\mu)\mapsto (x^\mu)$ of the bundle we can canonically construct a local para-complex structure $J_{0\alpha} := 2(\bbpi_\alpha)_\ast-\mathrm{id}$ and thus a natural local para-K\"{a}hler metric $\eta_{0\alpha}(-,-):=\omega_{0\alpha}(J_{0\alpha}-,-)$. Hence we recover locally the para-K\"aler formalism which was presented in \cite{Vais12}. We could say that the doubled space $\mathcal{M}$ is locally para-K\"ahler, meaning that it can be locally thought as a collection of para-K\"ahler patches $(T^\ast U_\alpha,\, J_{0\alpha}, \,\eta_{0\alpha})$. However we remark the doubled space cannot globally be a para-K\"ahler manifold. Let $N\rightarrow M$ be a fibration patch-wise given by the projections $\bbpi_\alpha:T^\ast U_\alpha \rightarrow U_\alpha$. This manifold is then glued by a collection of $1$-forms $\Lambda_{\alpha\beta}\in\Omega^1(U_\alpha\cap U_\beta)$, but this time it satisfies the condition $\Lambda_{\alpha\beta}+\Lambda_{\beta\gamma}+\Lambda_{\gamma\alpha}=0$ on three-fold overlaps. Therefore it cannot be interpreted as \v{C}ech data of a gerbe \eqref{eq:localdatagerbe} and it does not solve Papadopoulos' puzzle \eqref{eq:papa}. Hence a full geometrization of the gerbe cannot be given by a manifold. However Vaisman's approach has been perfected by para-Hermitian geometry, which we deal with in digression \ref{digd}.
\end{digression}

\begin{remark}[Principal action on the doubled space]\label{principalaction}
By definition (see postulate \ref{post1}) the doubled space $\M$ is a principal $2$-bundle, therefore it will be canonically equipped with a higher principal action $\rho:\BU\times\M\longrightarrow\M$. This means that we will have not only transformations, but also isomorphisms between them. On the base manifold this is a functor
\begin{equation}
    \mathbf{H}(M,\BU)\,\times\, \Gamma(M,\mathcal{M}) \;\longrightarrow\; \Gamma(M,\mathcal{M}).
\end{equation}
Since local sections of $\mathcal{M}$ are local circle bundles $P_\alpha$, for any $Q\in\mathbf{H}(M,\BU)$, this action is locally given by the tensor product of $\BU$ from remark \ref{rem:gs}, i.e. by $(Q,\,P_\alpha)\mapsto Q\otimes P_\alpha$.
\end{remark}

\noindent Notice this generalizes the principal circle action of ordinary Kaluza-Klein Theory. Indeed a section of a circle bundle is in local data a collection of $U(1)$-functions $\theta_\alpha$ and the $U(1)$-action is given by a global shift $(g,\,\theta_\alpha)\mapsto g\cdot\theta_\alpha$, where $g$ is a global $U(1)$-function.

\begin{remark}[Principal action on the doubled space in \v{C}ech picture]
This principal action  on the doubled space is described in local \v{C}ech data on sections $(\tilde{x}_\alpha,\phi_{\alpha\beta})\in\Gamma(M,\mathcal{M})$ as follows:
\begin{itemize}
    \item a circle bundle on $M$, given by the \v{C}ech cocycle $(\eta_\alpha,\eta_{\alpha\beta})\in\mathbf{H}(M,\BU)$ acting by
    \begin{equation}\label{eqgauge}
        \begin{aligned}
            x_\alpha \,&\mapsto\, x_\alpha \\
            \tilde{x}_\alpha \,&\mapsto\, \tilde{x}_\alpha + \eta_\alpha \\
            \phi_{\alpha\beta} \,&\mapsto\, \phi_{\alpha\beta} + \eta_{\alpha\beta},
        \end{aligned}
    \end{equation}
    \item gauge transformations $(\eta_\alpha,\eta_{\alpha\beta})\Mapsto(\eta'_\alpha,\eta'_{\alpha\beta})$ are just gauge transformations between circle bundles and are given in local data by a collection of functions $\epsilon_\alpha\in\Coo(U_\alpha)$ acting by
    \begin{equation}\label{eqgaugeofgauge}
        \begin{aligned}
            \eta_\alpha \,&\Mapsto\, \eta_\alpha + \mathrm{d}\epsilon_\alpha \\
            \eta_{\alpha\beta} \,&\Mapsto\, \eta_{\alpha\beta}+\epsilon_\alpha-\epsilon_\beta.
        \end{aligned}
    \end{equation}
\end{itemize}
In terms of diagrams we can rewrite the principal action on the doubled space as the groupoid
\begin{equation}
   \mathbf{H}(M,\,\BU) \;\simeq\; \left\{\begin{tikzcd}[row sep=scriptsize, column sep=15ex]
    \M \arrow[r, bend left=50, ""{name=U, below}, "(\eta_\alpha\text{, }\eta_{\alpha\beta})"]
    \arrow[r, bend right=50, "(\eta_\alpha + \mathrm{d}\epsilon_\alpha\text{, }\eta_{\alpha\beta}+\epsilon_\alpha-\epsilon_\beta)"', ""{name=D}]
    & \M
    \arrow[Rightarrow, from=U, to=D, "(\epsilon_\alpha)"]
\end{tikzcd}\right\}
\end{equation}
\end{remark}

\begin{remark}[Principal action gives global gauge transformations]\label{rem:globgauge}
Suppose to equip our \v{C}ech cocycle  $(\Lambda_{\alpha\beta},G_{\alpha\beta\gamma})$ with a gerbe connection $B_\alpha\in\Omega^2(U_\alpha)$ satisfying $B_\beta-B_\alpha =\mathrm{d}\Lambda_{\alpha\beta}$. Then principal action $\rho$ (remark \ref{principalaction}) gives global gauge transformations $(\eta_{\alpha},\eta_{\alpha\beta})\in\mathbf{H}\big(M,\BU\big)$ of the gerbe connection. From the expression of coboundaries \eqref{eq:coboundaries} we find
\begin{equation}
    \begin{aligned}
        B_\alpha \;&\mapsto\; B_\alpha + \mathrm{d}\eta_\alpha \\
        \Lambda_{\alpha\beta} \;&\mapsto\; \Lambda_{\alpha\beta} + \eta_\alpha- \eta_\beta +\mathrm{d}\eta_{\alpha\beta} \;\;\,\,=  \,\;\Lambda_{\alpha\beta}\\
        G_{\alpha\beta\gamma} \;&\mapsto\; G_{\alpha\beta\gamma} + \eta_{\alpha\beta}  + \eta_{\beta\gamma}  + \eta_{\gamma\alpha} = \,\;G_{\alpha\beta\gamma}.
    \end{aligned}
\end{equation}
and the \eqref{eqgaugeofgauge} are the gauge transformations of these gauge transformations. The gerbe curvature $H=\mathrm{d}B_\alpha$ is clearly unaffected.
Transformation \eqref{eqgauge} and \eqref{eqgaugeofgauge} can be also understood as a change of local trivialization (see \cite{Hit99}) for the gerbe cocycle $(\Lambda_{\alpha\beta},G_{\alpha\beta\gamma})$: where for a circle bundle this is a global $U(1)$-valued function, for a gerbe this is indeed a global circle bundle.
By using the curvature functor $\mathrm{curv}:\BU\longrightarrow\Omega^2_{\mathrm{cl}}$, which sends a circle bundle $(\eta_{\alpha},\eta_{\alpha\beta})\in\mathbf{H}\big(M,\BU\big)$ in the closed global $2$-form $b\in\Omega^2_{\mathrm{cl}}(M)$ defined by $b|_{U_\alpha} = \mathrm{d}\eta_\alpha$, we can rewrite this transformation as a global $B$-shift $B_\alpha\mapsto B_\alpha + b$.
\end{remark}

\noindent This generalizes ordinary Kaluza-Klein, where the principal $U(1)$-action on the circle bundle is given by global shifts in the angular coordinates $\theta_\alpha\mapsto\theta_\alpha+\eta_\alpha$ with $\eta_\alpha=\eta_\beta$ and it encodes global gauge transformations $A_\alpha\mapsto A_\alpha+\mathrm{d}\eta_\alpha$ and $G_{\alpha\beta}\mapsto G_{\alpha\beta} + \eta_\alpha - \eta_\beta = G_{\alpha\beta} $.

\begin{digression}[Recovering coordinate gauge symmetry]\label{digc}
The idea of \textit{coordinate gauge symmetry} was proposed for the doubled space by \cite{Park13}, then further explored by \cite{Par13x}, \cite{Par16x}, \cite{Par17x} and \cite{Par18x}. The reference noticed that, under a local gauge transformation on the doubled space $(x^\mu,\,\tilde{x}_\mu)\mapsto(x^\mu,\,\tilde{x}_\mu+\eta_\mu)$ with $\eta=\varphi\,\mathrm{d}\phi$ where $\phi$ and $\varphi$ are scalars, the Kalb-Ramond field transforms by $B\mapsto B+\mathrm{d}\eta$. Since these transformations are abelian in nature, we can generally have sums $\eta=\varphi^i\,\mathrm{d}\phi_i$. These local transformations agree with remarks \ref{rem:globgauge} and \ref{principalaction}. Moreover their geometric meaning is clarified in the Higher Kaluza-Klein framework: they are the local data of the principal action on the doubled space.
Notice that for $\varphi=1$ these reduce to shifts of the form $(x^\mu,\,\tilde{x}_\mu)\mapsto(x^\mu,\,\tilde{x}_\mu+\partial_\mu\phi)$, which do not change the Kalb-Ramond field $B\mapsto B$. Hence the gauge transformations $\mathrm{d}\phi_{\alpha\beta}$ which appears on overlaps of patches \eqref{eq:coords} are exactly of this form and they do not affect the Kalb-Ramond field. See also digression \ref{dig:park2} for further discussion of the implications of coordinate gauge symmetry.
\end{digression}

\begin{definition}[Global differential forms on the doubled space]\label{def:globalformM}
We define a notion of global differential forms on the doubled space $\mathcal{M}$ as it follows
\begin{equation}
    \Omega^\bullet(\mathcal{M})_{\mathrm{glob}}\,:=\,\big\{\xi_\alpha\in\Omega^\bullet(T^\ast U_\alpha)\,|\;\xi_\alpha=\xi_\beta \text{  on  } T^\ast(U_\alpha\cap U_\beta) \big\}
\end{equation}
\end{definition}

\begin{theorem}[Gerbe connection]\label{thm:gerbeconn}
Let $B_\alpha$ be a local $2$-form on $M$ such that $H=\mathrm{d}B_\alpha$ and patched by $B_\beta-B_\alpha = -\mathrm{d}\Lambda_{\alpha\beta}$ on $U_\alpha \cap U_\beta$. Thus, given the tautological $1$-form $\tilde{x}_\alpha$ on each $T^\ast U_\alpha$, the $2$-form $\omega_B:=\mathrm{d}\tilde{x}_\alpha - B_\alpha\in\Omega^2(\mathcal{M})_{\mathrm{glob}}$ is a global $2$-form on the doubled space $\mathcal{M}$.
\end{theorem}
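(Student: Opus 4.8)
\emph{Proof proposal.} The plan is to verify directly the defining condition of $\Omega^2(\mathcal{M})_{\mathrm{glob}}$ from Definition~\ref{def:globalformM}: that the locally defined $2$-forms $\mathrm{d}\tilde{x}_\alpha - B_\alpha\in\Omega^2(T^\ast U_\alpha)$ agree, after the gluing identification of $T^\ast U_\alpha$ with $T^\ast U_\beta$, on every two-fold overlap $T^\ast(U_\alpha\cap U_\beta)$.

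First I would recall, from the \v{C}ech description of the principal $\BU$-bundle $\mathcal{M}\xrightarrow{\bbpi}M$, how the coordinate charts are glued: over $U_\alpha\cap U_\beta$ the patch $T^\ast U_\alpha$ is identified with $T^\ast U_\beta$ by the base transition of $M$ followed by the fibrewise translation by the local $1$-form $\Lambda_{\alpha\beta}$, up to the exact gauge shift $\mathrm{d}\phi_{\alpha\beta}$ belonging to the $2$-morphism data (cf.\ the groupoid interpretation of $\mathcal{M}$ and \eqref{eq:coords}). The only geometric input needed is the standard behaviour of the tautological (Liouville) $1$-form $\tilde{x}_\alpha$ under these two moves: it is natural with respect to the cotangent lift of base diffeomorphisms, and under a fibrewise translation by a $1$-form $\mu$ it transforms by $\tilde{x}\mapsto\tilde{x}+\bbpi^\ast\mu$. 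Putting these together yields exactly $\tilde{x}_\alpha-\tilde{x}_\beta = -\Lambda_{\alpha\beta}+\mathrm{d}\phi_{\alpha\beta}$ on $T^\ast(U_\alpha\cap U_\beta)$, now read as an identity of genuine $1$-forms on the overlap.

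It then only remains to apply the exterior derivative: since $\mathrm{d}\,\mathrm{d}\phi_{\alpha\beta}=0$ the gauge term drops and $\mathrm{d}\tilde{x}_\alpha-\mathrm{d}\tilde{x}_\beta = -\mathrm{d}\Lambda_{\alpha\beta}$, which combined with the hypothesis $B_\beta-B_\alpha = -\mathrm{d}\Lambda_{\alpha\beta}$ gives $\mathrm{d}\tilde{x}_\alpha-B_\alpha = \mathrm{d}\tilde{x}_\beta-B_\beta$ on $T^\ast(U_\alpha\cap U_\beta)$. By Definition~\ref{def:globalformM} this is precisely the assertion $\omega_B\in\Omega^2(\mathcal{M})_{\mathrm{glob}}$. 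As consistency checks I would note that $\mathrm{d}\omega_B = -\bbpi^\ast H$ is globally defined because $H=\mathrm{d}B_\alpha$ is, and that $\omega_B$ differs from the canonical symplectic form $\mathrm{d}\tilde{x}_\alpha$ only by the pulled-back $2$-form $\bbpi^\ast B_\alpha$, hence is again non-degenerate — it is the $B$-twisted (almost) symplectic structure on the doubled space.

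The step requiring care — rather than genuine difficulty — is the bookkeeping of the coordinate transition: one must use the gluing law of $\mathcal{M}$ itself, not that of the para-Hermitian manifold $N$ of digression~\ref{digb} whose $\Lambda_{\alpha\beta}$ obey the strict cocycle condition, and keep the sign of the fibrewise translation consistent with the convention $B_\beta-B_\alpha=-\mathrm{d}\Lambda_{\alpha\beta}$ fixed in the statement. The conceptual point that makes the gluing succeed is that only the exact form $\mathrm{d}\Lambda_{\alpha\beta}$ ever enters — neither $\Lambda_{\alpha\beta}$ itself nor the scalars $G_{\alpha\beta\gamma}$ — so the higher \v{C}ech data of the gerbe never obstructs the existence of $\omega_B$.
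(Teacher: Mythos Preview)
Your proposal is correct and follows essentially the same route as the paper: differentiate the patching relation $\tilde{x}_\alpha-\tilde{x}_\beta=-\Lambda_{\alpha\beta}+\mathrm{d}\phi_{\alpha\beta}$ from \eqref{eq:coords} to obtain $\mathrm{d}\tilde{x}_\alpha-\mathrm{d}\tilde{x}_\beta=-\mathrm{d}\Lambda_{\alpha\beta}$, then cancel against $B_\beta-B_\alpha=-\mathrm{d}\Lambda_{\alpha\beta}$. Your additional explanation of why the tautological $1$-form obeys that gluing law (naturality under cotangent lifts plus the fibrewise-translation rule) is a welcome elaboration, but the argument itself matches the paper's proof.
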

\begin{proof}
By differentiating the first equation $\tilde{x}_\alpha - \tilde{x}_\beta  = -\Lambda_{\alpha\beta} + \mathrm{d}\phi_{\alpha\beta}$ of \eqref{eq:coords} we obtain that two local $2$-forms $\mathrm{d}\tilde{x}_\alpha$ are patched on overlaps $T^\ast(U_\alpha \cap U_\beta)$ by
\begin{equation}
    \mathrm{d}\tilde{x}_\alpha - \mathrm{d}\tilde{x}_\beta = -\mathrm{d}\Lambda_{\alpha\beta}.
\end{equation}
We can lift $B_\alpha\in\Omega^2(U_\alpha)$ to a $2$-form on $T^\ast U_\alpha$ trough the bundle projector $\bbpi:T^\ast U_\alpha\rightarrow U_\alpha$. Therefore, by calculating their difference on overlaps $T^\ast(U_\alpha\cap U_\beta)$, we have
\begin{equation}\label{2conn}
    (\mathrm{d}\tilde{x}_\alpha - B_\alpha) - (\mathrm{d}\tilde{x}_\beta - B_\beta) = -\mathrm{d}\Lambda_{\alpha\beta} -B_\alpha + B_\beta = 0,
\end{equation}
so that the $2$-form $\omega_B = \mathrm{d}\tilde{x}_\alpha - B_\alpha$ is patched trivially on overlaps and hence the conclusion.
\end{proof}

\begin{remark}[Gerbe curvature]
If we take the differential of this global $2$-form we get a globally defined $3$-form on the base manifold $M$ that can be identified with the gerbe curvature
\begin{equation}
    H = -\mathrm{d}\omega_B.
\end{equation}
This global $2$-form $\omega_B=\mathrm{d}\tilde{x}_\alpha - B_\alpha$ is therefore for the doubled space $\M$ the higher analogue of the global $1$-form connection $\omega_A=\mathrm{d}\theta_\alpha+A_\alpha$ of a circle bundle.
\end{remark}

\begin{digression}[Recovering para-Hermitian geometry]\label{digd}
Let us go back to para-Hermitian geometry to improve the argument of digression \ref{digb}. We can write the connection $2$-form $\omega_B = \mathrm{d}\tilde{x}_\alpha - B_\alpha$ of lemma \ref{thm:gerbeconn} on each patch $T^\ast U_\alpha$ in Darboux coordinates to obtain
\begin{equation}\label{eq:omegaB}
 \omega_B = (\mathrm{d}\tilde{x}_\mu + B_{\mu\nu}\mathrm{d}x^\nu)\wedge\mathrm{d}x^\mu
\end{equation}
with curvature $H=-\mathrm{d}\omega_B$. Therefore the \textit{fundamental }$2$\textit{-form} $\omega_B$ of the para-Hermitian framework (see \cite{MarSza18} and \cite{Svo18}) is nothing but the higher analogue of a circle bundle connection. The success of the para-Hermitian framework is then explained by $\omega_B\in\Omega_{\mathrm{glob}}^2(\mathcal{M})$ being globally defined on the doubled space.
We can naturally construct a natural para-complex structure $J_{B}$ and a natural $O(d,d)$-tensor $\eta_{B}(-,-):=\omega_{B}(J_{B}-,-)$ on the patch $T^\ast U_\alpha$ as it follows:
\begin{equation}
    \eta_B = (\mathrm{d}\tilde{x}_\mu + B_{\mu\nu}\mathrm{d}x^\nu)\odot\mathrm{d}x^\mu, \quad J_B = \frac{\partial}{\partial\tilde{x}_\mu}\otimes(\mathrm{d}\tilde{x}_\mu + B_{\mu\nu}\mathrm{d}x^\nu)+\frac{\partial}{\partial x^\mu}\otimes \mathrm{d}x^\mu.
\end{equation}
We can also interpret these structure as the $B_\alpha$-shifted versions of the local para-K\"{a}hler structure $(\omega_{0\alpha},\eta_{0\alpha},J_{0\alpha})$ from remark \ref{digb}.
However in the para-Hermitian approach to DFT the doubled space is a global manifold $N$ equipped with para-Hermitian structure $(\omega_B,\eta_B,J_B)$, on the contrary in our stack formalism this is true only locally. 
\end{digression}

\noindent The subgroup of local diffeomorphisms $\Diff(T^\ast U_\alpha)$ preserving the para-complex structure $J_{B \alpha}$ is given by couples of diffeomorphisms of the base $U_\alpha$ and diffeomorphisms of the fibre $T^\ast_x U_\alpha$ smoothly depending on $x\in U_\alpha$. Infinitesimally this is $GL(d)\times GL(d)$. The subgroup preserving the connection $\omega_{B}$ is linearly $Sp(2d,\mathbb{R})$, while the one preserving local metric $\eta_{B \alpha}$ is linearly $O(d,d)$. The subgroup of local diffeomorphisms $\Diff(T^\ast U_\alpha)$ preserving the whole almost para-Hermitian structure $(J_B,\omega_B,\eta_B)$ is just the group of local diffeomorphisms of the base $\mathrm{Diff}(U_\alpha)$, which is linearly $GL(d)$. Hence we have the usual identities
\begin{equation}\begin{aligned}
    O(d,d) \,\cap\, Sp(2d,\mathbb{R}) \,&=\, GL(d) \\
    Sp(2d,\mathbb{R}) \,\cap\, \big(GL(d)\times GL(d)\big) \,&=\, GL(d) \\
    \big(GL(d)\times GL(d)\big) \,\cap\, O(d,d) \,&=\, GL(d)
\end{aligned}\end{equation}

\noindent Let us now remark why our doubled space is more general than a para-Hermitian manifold.

\begin{digression}[The problem with ordinary Para-Hermitian geometry]
Let us remark why doubled space cannot globally be an (almost) para-Hermitian manifold. Let $(N,J_B,\omega_B)$ be a global almost para-Hermitian manifold such that $N\rightarrow M$ is a fibration, locally given by the projectors $\bbpi_\alpha:T^\ast U_\alpha\rightarrow U_\alpha$. An (almost) para-Hermitian manifold does not need to be a fibration on a spacetime in general, but it needs it if it wants to recover at least one geometric background. The transition functions of the fibration will be $\Lambda_{\alpha\beta}+\Lambda_{\beta\gamma}+\Lambda_{\gamma\alpha}=0$. Now assume that $\omega_B = (\mathrm{d}\tilde{x}_\mu + B_{\mu\nu}\mathrm{d}x^\nu)\wedge\mathrm{d}x^\mu$, where $B_\alpha$ is a collection of local $2$-forms, is globally defined. Hence they will have to satisfy $B_\beta-B_\alpha = \mathrm{d}\Lambda_{\alpha\beta}$. But this implies the gerbe is trivial with $G_{\alpha\beta\gamma}=0$. Secondly in para-Hermitian geometry the topology of the fibered manifold $N$ is not determined by the Dixmier-Douady class $[H]\in H^3(M,\mathbb{Z})$ of the Kalb-Ramond field, differently from Higher Kaluza-Klein geometry. In some sense $[H]$ is not an intrinsic property of the geometry of para-Hermitian manifold, but it is something which can only be introduced by hand. Hence we cannot use a bare manifold to describe the doubled space: we need the stack formalism of Higher Kaluza-Klein Theory, which reduces to an ordinary para-Hermitian manifold only in a local sense.
\end{digression}

\noindent Let us now conclude this section by looking at a very different topic. Indeed Higher Kaluza-Klein Theory is also able to explicitly link DFT with a distinct field of research: prequantum geometry for String Theory. See the following digression for a brief discussion.

\begin{digression}[Recovering global Higher Geometric Prequantization]\label{dig:preq}
Notice our Higher Kaluza-Klein is as closely related to Higher Geometric Prequantization as ordinary Kaluza-Klein is to ordinary geometric prequantization. The parallel transport of a section $(\theta_\alpha)\in\Gamma(M,P)$ along a vector flow $\ell(t,x)$ with $\ell(0,x)=x$ of some Hamiltonian vector field $X$ is given by
\begin{equation}\label{eq:circlepreq}
    \theta_\alpha\big(\ell(t,x)\big) \,=\, \exp 2\pi i\Bigg(\sum_{\ell_\alpha}\int_{\ell_\alpha} \!A_\alpha + \sum_{x_{\alpha\beta}}f_{\alpha\beta}(x_{\alpha\beta})\Bigg)\cdot\theta_\alpha(x)
\end{equation}
which is a global gauge transformation in $\Coo(M,U(1))$ at any $t\in\mathbb{R}$. Recall that the underlying vector space of an ordinary prequantization Hilbert space is $\Gamma(M,\,P\times_{U(1)}\mathbb{C})$, i.e. the space of sections of the associated bundle $P\times_{U(1)}\mathbb{C}$. Hence parallel transport \eqref{eq:circlepreq} can be immediately generalized to prequantum states $(\psi_{\alpha})\in\Gamma(M,\,P\times_{U(1)}\mathbb{C})$. Analogously in Higher Geometric Prequantization we can define a parallel transport of a section $(\tilde{x}_\alpha,\phi_{\alpha\beta})\in\Gamma(M,\mathcal{M})$ of the doubled space along a vector flow $\ell(\tau,x)$ with $\ell(0,x)=x$ of a Hamiltonian vector field $X$ by
\begin{equation}\label{eq:higherpreq}
    \ell(\tau,-)_\ast\big(\tilde{x}_\alpha,\phi_{\alpha\beta}\big) \,=\, \Bigg(\sum_{l_\alpha}\int_{l_\alpha} \! B_\alpha + \sum_{x_{\alpha\beta}}\Lambda_{\alpha\beta}(x_{\alpha\beta}), \; \sum_{l_\alpha}\int_{l_\alpha} \! \Lambda_{\alpha\beta} + \!\!\!\sum_{x_{\alpha\beta\gamma}}\!G_{\alpha\beta\gamma}(x_{\alpha\beta\gamma}) \Bigg)\otimes \big(\tilde{x}_\alpha,\phi_{\alpha\beta}\big)
\end{equation}
which is a global gauge transformation in $\mathbf{H}(M,\BU)$ at any $\tau\in\mathbb{R}$. In \cite{FSS13}, \cite{FSS16} and \cite{Sza19} it is explained that the underlying groupoid of a \textit{prequantization }$2$\textit{-Hilbert space} is $\Gamma(M,\,\mathcal{M}\times_{\mathbf{B}U(1)_{\mathrm{conn}}}\!\mathbf{B}U_{\mathrm{conn}})$, which is the groupoid of sections of the associated $2$-bundle $\mathcal{M}\times_{\mathbf{B}U(1)_{\mathrm{conn}}}\!\mathbf{B}U_{\mathrm{conn}}$ where the fiber stack is the direct limit $\mathbf{B}U_{\mathrm{conn}} := \Lim{ N \to \infty}\mathbf{B}U(N)_{\mathrm{conn}}$. Hence parallel transport \eqref{eq:higherpreq} can be immediately generalized to \textit{prequantum }$2$\textit{-states} of the form $(\psi_{\alpha},\psi_{\alpha\beta})\in\Gamma(M,\,\mathcal{M}\times_{\mathbf{B}U(1)_{\mathrm{conn}}}\!\mathbf{B}U_{\mathrm{conn}})$. These are principal $U(N)$-bundles on the base manifold $M$ for any $N\in\mathbb{N}^+$, twisted by the doubled space $\mathcal{M}\xrightarrow{\bbpi}M$ with transition functions $(\Lambda_{\alpha\beta},G_{\alpha\beta\gamma})$. They are given in \v{C}ech data by $\psi_\alpha\in\Omega^1\big(U_\alpha,\mathfrak{u}(N)\big)$ and $\psi_{\alpha\beta}\in\Coo\big(U_\alpha\cap U_\beta,U(N)\big)$ patched by
\begin{equation}
    \begin{aligned}
    \psi_\alpha - \psi_{\alpha\beta}^{-1}\big(\psi_\beta + \mathrm{d} \big)\psi_{\alpha\beta} \,&=\, - \Lambda_{\alpha\beta} \\
    \psi_{\alpha\beta}\cdot\psi_{\beta\gamma}\cdot\psi_{\gamma\alpha} \,&=\, \exp{i2\pi G_{\alpha\beta\gamma}}
    \end{aligned}
\end{equation}
and they can be interpreted as states of $N$ coincident D-branes in a Kalb-Ramond field background.
In \cite{Sza19} this was explicitly calculated, in an infinitesimal fashion, for $M=\mathbb{R}^d$ and its non-associative behaviour was pointed out.
As explained by \cite{Rog13} and \cite{FSS13} we recover an ordinary prequantization on the loop space $LM:=[S^1,M]$. Indeed, given any loop $S_0^1\subset M$, its evolution $S^1_\tau := \ell(\tau,\,S^1_0)$ with $\tau\in\mathbb{R}$ can be seen both as a surface $\Sigma\subset M$ with $\partial\Sigma=S^1_0\sqcup S^1_\tau$ or as a path in the loop space $LM$. By integrating along surface $\Sigma$ and taking the trace we get
\begin{equation*}
\begin{aligned}
    \mathrm{Hol}_{\ell(\tau,-)^\ast(\psi_{\alpha},\psi_{\alpha\beta})}(S^1_\tau)\,=\, \mathrm{Hol}_{(B_{\alpha},\Lambda_{\alpha\beta},G_{\alpha\beta\gamma})}(\Sigma)\,\cdot\, \mathrm{Hol}_{(\psi_{\alpha},\psi_{\alpha\beta})}(S^1_0) \\
\end{aligned}
\end{equation*}
where the holonomy of a twisted $U(N)$-bundle $(\psi_{\alpha},\psi_{\alpha\beta})$ along a loop $S^1\subset M$ and the holonomy of a gerbe $(B_{\alpha},\Lambda_{\alpha\beta},G_{\alpha\beta\gamma})$ along a surface $\Sigma\subset M$ are respectively given by the usual expressions
\begin{equation}\begin{aligned}
    \mathrm{Hol}_{(\psi_{\alpha},\psi_{\alpha\beta})}(S^1) \,&:=\, \mathrm{Tr}\,\mathcal{P}\bigg(\exp{ 2\pi i\bigg(\int_{l_\alpha}\!\psi_\alpha}\bigg)\cdot\prod_{x_{\alpha\beta}}\psi_{\alpha\beta}(x_{\alpha\beta})\bigg), \\
    \mathrm{Hol}_{(B_{\alpha},\Lambda_{\alpha\beta},G_{\alpha\beta\gamma})}(\Sigma) \,&:=\, \exp 2\pi i\Bigg(\!\sum_\alpha\int_{\Sigma_\alpha} \!\!\!B_\alpha + \sum_{l_{\alpha\beta}}\int_{l_{\alpha\beta}} \!\!\!\Lambda_{\alpha\beta} + \!\!\sum_{x_{\alpha\beta\gamma}}\!G_{\alpha\beta\gamma}(x_{\alpha\beta\gamma})\Bigg).
\end{aligned}\end{equation}
More generally, for an open surface $\Sigma\subset M$ such that the boundary $\partial\Sigma=\sqcup_iS_i^1$ is a disjoint union of loops of any orientation, we recover Bohr-Sommerfeld condition for a surface with boundary 
\begin{equation}
    \mathrm{Hol}_{(B_{\alpha},\Lambda_{\alpha\beta},G_{\alpha\beta\gamma})}(\Sigma) \,\cdot\, \prod_i \mathrm{Hol}_{(\psi_{\alpha},\psi_{\alpha\beta})}(S^1_i) \,=\,1
\end{equation}
Notice that this is nothing but the bosonic part of \textit{Freed-Witten anomaly} cancellation of a string.

\begin{figure}[!ht]\centering
\includegraphics[scale=0.35]{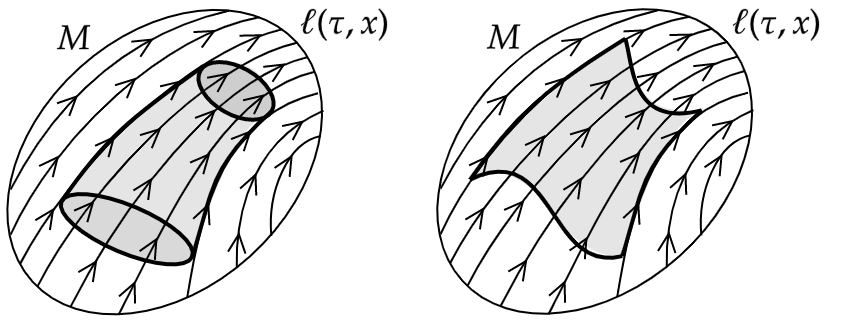}
\caption{Parallel transport along surfaces $\Sigma$.}\label{FIGstring}\end{figure}
\end{digression}

\begin{remark}[A prequantum interpretation of doubled coordinates]
It is suggestive to notice that in Higher Prequantization (see digression \ref{dig:preq}) a section of the doubled space $(\tilde{x}_\alpha,\phi_{\alpha\beta})$ on the world-volume of $N$ coincident D-branes plays the role of a "higher phase" for the $U(N)$-field $(\psi_\alpha,\psi_{\alpha\beta})$. This is analogous to the well-known fact that a section $(\theta_\alpha)$ of the prequantum circle bundle plays the role of the phase of a wave-function $(\psi_\alpha)$. This provides an evocative alternative interpretation of the extra coordinates of DFT in the context of prequantization.
\end{remark}

\begin{table}[ht!]\begin{center}\vspace{0.5cm}\begin{tabular}{ c | c c }
 & $\;\;\qquad$Prequantization$\qquad\quad$ & $\qquad\qquad$Higher Prequantization$\qquad\quad$ \\[0.8ex] \hline \\[-1.5ex]
Phase$\,$ & $\theta_\alpha -\theta_\beta = G_{\alpha\beta} \;\mathrm{mod}\,2\pi\mathbb{Z}$ & \makecell{$\tilde{x}_\alpha - \tilde{x}_\beta -\mathrm{d}\phi_{\alpha\beta}  \,=\, -\Lambda_{\alpha\beta}$, \\ $\phi_{\alpha\beta}+\phi_{\beta\gamma}+\phi_{\gamma\alpha} \,=\, G_{\alpha\beta\gamma} \quad\!\! \mathrm{mod}\,2\pi\mathbb{Z}$} \\[6.2ex]
Matter$\,$ & $\psi_\alpha\cdot\psi_\beta^{-1} = \exp i2\pi G_{\alpha\beta}$ & \makecell{$\psi_\alpha - \psi_{\alpha\beta}^{-1}\big(\psi_\beta + \mathrm{d} \big)\psi_{\alpha\beta} \,=\, - \Lambda_{\alpha\beta}$ \\ $\psi_{\alpha\beta}\cdot\psi_{\beta\gamma}\cdot\psi_{\gamma\alpha} \,=\, \exp{i2\pi G_{\alpha\beta\gamma}}$ }
\end{tabular}\end{center}\caption{A comparison of phases and states between ordinary and Higher Prequantization.}\end{table}

\subsection{Finite symmetries constitute the gauge 2-group}

In this subsection we will deal with finite symmetries of the doubled space and we will prove they are exactly the gauge transformations we expect for DFT.

\begin{remark}[$2$-group of gauge transformations of DFT]\label{subgroupoidgauge}
The automorphisms $2$-group of the principal structure (example \ref{ex:aut}) of the doubled space $\mathcal{M}\xrightarrow{\bbpi}M$ is exactly the $2$-group extending the diffeomorphisms of the base $M$ through gauge transformations of the gerbe
\begin{equation}\label{eq:DFTgroup}
    \begin{tikzcd}[row sep=scriptsize, column sep=6ex]
    \mathbf{Aut}_{/}(\Lambda,G) \,=\, \Diff(M)\ltimes\mathbf{H}(M,\BU).
    \end{tikzcd}
\end{equation}
This $2$-group is the stack refined version of the gauge group of DFT proposed by \cite{Hull14}, i.e.
\begin{equation}\label{eq:hull}
   G_{\mathrm{NS}} \,=\, \Diff(M)\ltimes\Omega^2_{\mathrm{cl}}(M),
\end{equation}
which is obtained by taking the curvature of the circle bundle, as explained in example \ref{ex:aut}.
\end{remark}

\noindent Notice \eqref{eq:DFTgroup} is exactly the analogous to the familiar $\Diff(M)\ltimes\Coo(M,U(1))\subset\Diff(P)$ of gauge transformations in ordinary Kaluza-Klein theory, where $P$ is a circle bundle.
\vspace{0.2cm}

\noindent The map between \eqref{eq:DFTgroup} and \eqref{eq:hull} is just the curvature $\mathrm{curv}:\mathbf{H}(M,\BU)\rightarrow\Omega^2_{\mathrm{cl}}(M)$, which maps $(\eta_\alpha,\eta_{\alpha\beta})\mapsto b$ where $b|_{U_\alpha}:=\mathrm{d}\eta_\alpha$ on each patch, in accord with remark \ref{rem:globgauge}. This global closed $2$-form $b\in\Omega^2_{\mathrm{cl}}(M)$ is usually called $B$-shift in DFT literature.

\begin{remark}[$2$-group of gauge transformations of DFT in \v{C}ech data]\label{diffgaugeconc}
The $2$-group of gauge transformations $\mathbf{Aut}_/(\Lambda,G)=\Diff(M)\ltimes\mathbf{H}(M,\BU)$ from remark \ref{subgroupoidgauge} will naturally define an action on the groupoid $\Gamma(M,\mathcal{M})$ of sections of the doubled space by the functor
\begin{equation}
    \mathbf{Aut}_/(\Lambda,G)\,\times\, \Gamma(M,\mathcal{M}) \;\longrightarrow\; \Gamma(M,\mathcal{M}).
\end{equation}
In local \v{C}ech data on the base manifold $M$ this action will be given by the following.
\begin{itemize}
\item objects of $\mathbf{Aut}_/(\Lambda,G)$ are triples $(f,\eta_\alpha,\eta_{\alpha\beta})$, consisting of a diffeomorphism $f\in\Diff(M)$ and a circle bundle $(\eta_\alpha,\eta_{\alpha\beta})\in\mathbf{H}(M,\BU)$ and acting on sections $\Gamma(M,\mathcal{M})$ by
\begin{equation}
\begin{aligned}
    (f,\eta_\alpha,\,\eta_{\alpha\beta})\, :\, (\tilde{x}_\alpha,\phi_{\alpha\beta}) \,&\mapsto\, \big(f_\ast\tilde{x}_\alpha+\eta_\alpha, \,\phi_{\alpha\beta}+\eta_{\alpha\beta}\big) \\
\end{aligned}
\end{equation}
\item isomorphisms of $\mathbf{Aut}_/(\Lambda,G)$ between these objects are just ordinary gauge transformations of circle bundles, consisting of local functions $\epsilon_\alpha\in\Coo(U_\alpha)$ and acting by 
\begin{equation}
   (\epsilon_\alpha):\, (f,\eta_\alpha,\,\eta_{\alpha\beta}) \,\Mapsto\, (f,\eta_\alpha+\mathrm{d}\epsilon_\alpha,\,\eta_{\alpha\beta}+\epsilon_\alpha-\epsilon_\beta).
\end{equation}
\end{itemize}
In terms of diagrams we can rewrite the $2$-group of automorphisms of the doubled space $\M$ as
\begin{equation}
    \mathbf{Aut}_/(\Lambda,G) \,\simeq\, \left\{ \begin{tikzcd}[row sep=scriptsize, column sep=18ex]
    \M \arrow[r, bend left=50, ""{name=U, below}, "(f\text{,}\eta_\alpha\text{,}\,\eta_{\alpha\beta})"]
    \arrow[r, bend right=50, "(f\text{,}\eta_\alpha+\mathrm{d}\epsilon_\alpha\text{,}\,\eta_{\alpha\beta}+\epsilon_\alpha-\epsilon_\beta)"', ""{name=D}]
    & \M
    \arrow[Rightarrow, from=U, to=D, "\,(\epsilon_\alpha)"]
\end{tikzcd} \right\}
\end{equation}
Hence the action of the sub-$2$-group $\mathbf{H}(M,\BU)\subset\mathbf{Aut}_/(\Lambda,G)$ of automorphisms which cover the identity $\mathrm{id}_M\in\mathrm{Diff}(M)$ is exactly the principal action of the doubled space $\mathcal{M}$ from remark \ref{principalaction}. This is directly analogous to Kaluza-Klein Theory, where the translation along the compactified dimension coincides with the principal circle action.
\end{remark}

\begin{digression}[Generalized diffeomorphisms]
In the DFT literature the automorphisms $\mathbf{Aut}_/(\Lambda,G)$ from remark \ref{diffgaugeconc} are usually called \textit{generalized diffeomorphisms} (or \textit{large gauge transformations}). Notice that the research by \cite{DesSae18} and by \cite{Hohm19DFT} are independently already pointing in the direction of generalized diffeomorphisms having a higher group structure. 
\end{digression}

\noindent Now we can explain how automorphisms of definition \ref{subgroupoidgauge} can be used to glue the doubled space in a way that is not affected by Papadopolous' puzzle \eqref{eq:papa}.

\begin{remark}[Doubled space must be glued in a $(2,1)$-category]
Gluing local patches of the doubled space has always been a puzzle in DFT (which becomes even more problematic in Exceptional Field Theory). In our proposal the solution to this puzzle is given by the fact that our doubled space $\mathcal{M}$ is not a manifold, but a stack: therefore it is glued not in the category of smooth manifolds, but in the $(2,1)$-category of stacks.
Let us call $\mathcal{M}_\alpha:= \mathcal{M}|_{U_\alpha}$. These are trivial principal $\BU$-bundles on each patch $U_\alpha$ and their groupoid of sections $\Gamma(U_\alpha,\mathcal{M}_\alpha)=\mathbf{H}(U_\alpha,\BU)$ are just the groupoids of local circle bundles $P_\alpha$ equipped with connection $\tilde{x}_\alpha\in\Omega^1(U_\alpha)$. Hence we can glue together local sections of $\mathcal{M}$ as follows:
\begin{equation}
    \Gamma(M,\mathcal{M}) \,=\, \left\{
    \begin{tikzcd}[row sep=7ex, column sep=0ex]
    & P_\beta|_{U_\alpha\cap U_\beta\cap U_\gamma}\arrow[rd, "(\Lambda_{\beta\gamma}\text{,}\phi_{\beta\gamma})"] \arrow[Rightarrow, from=1-2, to=D, "\,G_{\alpha\beta\gamma}"] & \\
    P_\alpha|_{U_\alpha\cap U_\beta\cap U_\gamma}\arrow[rr, ""{name=D}, "(\Lambda_{\alpha\gamma}\text{,}\phi_{\alpha\gamma})"']{} \arrow[ru, "(\Lambda_{\alpha\beta}\text{,}\phi_{\alpha\beta})"] & & P_\gamma|_{U_\alpha\cap U_\beta\cap U_\gamma}
    \end{tikzcd}
    \right\}
\end{equation}
where a $1$-morphism $(\Lambda_{\alpha\beta},\phi_{\alpha\beta}):P_\alpha|_{U_\alpha\cap U_\beta} \rightarrow P_\beta|_{U_\alpha\cap U_\beta}$ is given not only by any gauge transformation $\phi_{\alpha\beta}$, but also by a shift $\Lambda_{\alpha\beta}$ in the connection, so that  $\tilde{x}_\alpha\mapsto\tilde{x}_\beta-\Lambda_{\alpha\beta}+\mathrm{d}\phi_{\alpha\beta}$.
The existence of the $2$-morphism $G_{\alpha\beta\gamma}$ implies that $1$-morphisms must satisfy the patching conditions $\phi_{\alpha\beta}+\phi_{\beta\gamma}+\phi_{\gamma\alpha}=G_{\alpha\beta\gamma}$ and $\Lambda_{\alpha\beta}+\Lambda_{\beta\gamma}+\Lambda_{\gamma\alpha}=\mathrm{d}G_{\alpha\beta\gamma}$ on three-fold overlaps. Now we can look at the sub-spaces $\mathcal{M}_\alpha$, $\mathcal{M}_\beta$ and $\mathcal{M}_\gamma$. They are then patched by automorphisms $e_{\alpha\beta}\in\mathrm{Diff}(U_\alpha\cap U_\beta)\ltimes\mathbf{H}(U_\alpha\cap U_\beta,\,\BU)$ on two-fold overlaps $\mathcal{M}_\alpha|_{U_\alpha\cap U_\beta}\simeq\mathcal{M}_\beta|_{U_\alpha\cap U_\beta}$. On three-fold overlaps we have gauge transformations of automorphisms (see remark \ref{diffgaugeconc}) which are given by $e_{\alpha\beta}\circ e_{\beta\gamma}\xRightarrow{G_{\alpha\beta\gamma}\;} e_{\alpha\gamma}$, or equivalently by the $2$-commuting diagram
\begin{equation}
    \begin{tikzcd}[row sep=7ex, column sep=0ex]
    & \mathcal{M}_\beta|_{U_\alpha\cap U_\beta\cap U_\gamma}\arrow[rd, "\text{}e_{\beta\gamma}"] \arrow[Rightarrow, from=1-2, to=D,  "\,G_{\alpha\beta\gamma} "] & \\
    \mathcal{M}_\alpha|_{U_\alpha\cap U_\beta\cap U_\gamma}\arrow[rr, ""{name=D}, "\text{}e_{\alpha\gamma}"']{} \arrow[ru, "\text{}e_{\alpha\beta}"] & & \mathcal{M}_\gamma|_{U_\alpha\cap U_\beta\cap U_\gamma}
    \end{tikzcd}
\end{equation}
Therefore we can glue the local $\{\mathcal{M}_\alpha\}$ by automorphisms to get the global doubled space $\mathcal{M}$.
\end{remark}

\noindent This idea of higher gluing the doubled space in a $(2,1)$-category, contrary to appearances, is not totally unprecedented. Let us mention some of its relevant progenitors in the literature.

\begin{digression}[Precedents of higher gluing]
Notice that \cite{BCM14} proposed for the first time non-trivial patching conditions on three-fold overlaps of patches of the doubled space. More recently the local differential-graded patches $T^\ast[2]T[1]U_\alpha$ with $U_\alpha\subset M$ proposed by \cite{DesSae18} would need to be glued together in the $2$-category of derived spaces to give a global picture. This is because differential-graded manifolds are exactly a simple model for derived spaces (see \cite{dman}). But this is consistent with our formalism: we will see in the next subsection that the formalism by \cite{DesSae18} can be indeed seen as an infinitesimal geometry of our doubled space.
\end{digression}

\subsection{Infinitesimal symmetries constitute the Courant 2-algebra}

In this subsection we will deal with infinitesimal symmetries of the doubled space and we will prove they locally reduce to the one expected from DFT. Indeed we will show that the Courant $2$-algebroid formalism can be recovered as infinitesimal descriptions of the geometry of the doubled space. Finally, from the Courant $2$-algebroid, we will explicitly recover ordinary Generalized Geometry. Recall that Generalized Geometry has been revealed by \cite{Wald08}, \cite{Wald11} and \cite{Wald12} to be the natural language to express Type II supergravity.

\begin{definition}[Infinitesimally thickened point]
An \textit{infinitesimally thickened point} is defined (see \cite{DCCTv2}) as the locally ringed space given by the spectrum of the ring of dual numbers
\begin{equation}
    \mathbb{D}^1 \;:=\; \mathrm{Spec}\left(\frac{\mathbb{R}[\epsilon]}{\langle\epsilon^2\rangle}\right).
\end{equation}
Hence its underlying topological space is just a single point $\{\ast\}$, but its smooth algebra of functions is $\mathcal{O}_{\mathbb{D}^1}(\{\ast\}) = \mathbb{R}[\epsilon]/\langle\epsilon^2\rangle \cong \mathbb{R}\oplus\epsilon\mathbb{R}$ with $\epsilon^2=0$, i.e. it is the ring of dual numbers.
\end{definition}

\begin{example}[Tangent bundle of a manifold]
This idea is widely used in algebraic geometry to define tangent spaces. For example a map $\mathbb{D}^1\xrightarrow{X} M$ is given on the underlying topological spaces by sending $\{\ast\}$ to a point $x$ in a manifold $M$ and on the algebras of smooth functions by a map $\Coo(M)\xrightarrow{X^\sharp}\mathbb{R}\oplus\epsilon\mathbb{R}$, which is given by $f \mapsto f(x)+\epsilon X^\mu\partial_\mu f(x)$ for some $X\in T_xM$. Thus vectors $X$ on $M$ can equivalently be seen as maps $\mathbb{D}^1\xrightarrow{X} M$ and therefore $TM\simeq[\mathbb{D}^1,\,M]$.
\end{example}

\noindent This motivates the following definition for the tangent stack of the doubled space $\M$.

\begin{definition}[Doubled tangent bundle]\label{def:doubledtangentbundle}
We define the tangent bundle of a doubled space $\mathcal{M}$ (defined in postulate \ref{post1}) as the internal hom stack (definition \ref{def:inthom}) of maps from $\mathbb{D}^1$ to $\mathcal{M}$
\begin{equation}
    T\mathcal{M} \;:=\; \big[\mathbb{D}^1,\, \mathcal{M}\big]
\end{equation}
\end{definition}

\begin{theorem}[Doubled tangent bundle in local data]\label{thm:dtbild}
On a patch $U\subset M$ of the base manifold
\begin{equation}\label{eq:DTBs}
    \Gamma(U,\; T\mathcal{M}) \;\simeq\; \Gamma(U,\,\mathcal{M})\times\Big(\mathfrak{X}(U)\ltimes\mathbf{H}(U,\mathbf{B}\mathbb{R}_{\mathrm{conn}})\Big)
\end{equation}
These sections will be non-trivially patched on the whole smooth base manifold $M$ as it follows:
\begin{itemize}
    \item local sections $(\tilde{x}_\alpha,\phi_{\alpha\beta})\in\Gamma(U,\,\mathcal{M})$ will be patched as usual by \eqref{eq:coords},
    \item local vectors $X\in\mathfrak{X}(U)$ will be patched to a global vector on $M$,
    \item local line bundles $(\xi_\alpha,\eta_{\alpha\beta})\in\mathbf{H}(U,\mathbf{B}\mathbb{R}_{\mathrm{conn}})$ will be patched by the Lie derivatives of the transition functions of the doubled space $(\mathcal{L}_{X}\Lambda_{\alpha\beta},\,\mathcal{L}_{X}G_{\alpha\beta\gamma})$, i.e. by
    \begin{equation}\label{eq:localsymdata}
        \begin{aligned}
        \xi_\alpha - \xi_\beta \;&=\; -\mathcal{L}_{X}\Lambda_{\alpha\beta} +\mathrm{d}\eta_{\alpha\beta}, \\
       \eta_{\alpha\beta}+ \eta_{\beta\gamma} + \eta_{\gamma\alpha} \;&=\; \mathcal{L}_{X}G_{\alpha\beta\gamma}.
        \end{aligned}
    \end{equation}
\end{itemize}
Notice that without loss of generality we can reparametrize the scalars $f_{\alpha\beta} :=  \eta_{\alpha\beta} - \iota_X\Lambda_{\alpha\beta}$ in \eqref{eq:localsymdata} and obtain sections $\mathbbvar{X}:=(X+\xi_\alpha,f_{\alpha\beta})$, which are now patched by the familiar condition
    \begin{equation}
        \begin{aligned}
        \xi_\alpha - \xi_\beta \;&=\; -\iota_{X}\mathrm{d}\Lambda_{\alpha\beta} +\mathrm{d}f_{\alpha\beta}, \\
        f_{\alpha\beta}+f_{\beta\gamma} + f_{\gamma\alpha} \;&=\; 0,
        \end{aligned}
    \end{equation}
\end{theorem}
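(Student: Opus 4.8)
The plan is to compute the tangent stack $T\mathcal{M}=[\mathbb{D}^1,\mathcal{M}]$ of definition~\ref{def:doubledtangentbundle} first on a patch, where $\mathcal{M}$ trivialises, and then to recover the global patching by pushing the transition data of $\mathcal{M}$ through the tangent functor $[\mathbb{D}^1,-]$. Over a patch $U\subset M$ the principal $\BU$-bundle is trivial, $\mathcal{M}|_U\simeq U\times\BU$, and since $[\mathbb{D}^1,-]$ is a right adjoint and hence preserves products,
\[
    \Gamma(U,\,T\mathcal{M}) \;\simeq\; \Gamma\big(U,\,[\mathbb{D}^1,\,U\times\BU]\big) \;\simeq\; \Gamma\big(U,\,[\mathbb{D}^1,\,U]\big)\,\times\,\Gamma\big(U,\,[\mathbb{D}^1,\,\BU]\big).
\]
The first factor is the ordinary tangent bundle $TU$, whose sections over $U$ are the vector fields $\mathfrak{X}(U)$, exactly as in the example of the tangent bundle of a manifold, where a vector is a map $\mathbb{D}^1\to U$. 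For the second factor I would use that $\BU$ is a group-stack (remark~\ref{rem:gs}): left translation trivialises $[\mathbb{D}^1,\BU]\simeq\BU\times\mathrm{Lie}(\BU)$, and Dold-Kan (remark~\ref{rem:dkc}) identifies the Lie differentiation of $\BU$ as the complex obtained by replacing the sheaf $\Coo(-,U(1))$ in the Deligne complex of $\BU$ by its Lie algebra $\Coo(-,\mathbb{R})$, i.e.\ $\mathrm{Lie}(\BU)\simeq\mathbf{B}\mathbb{R}_{\mathrm{conn}}$. Concretely, a map $U\times\mathbb{D}^1\to\BU$ is a Deligne cocycle whose value at $\epsilon=0$ is a section of $\mathcal{M}|_U$ and whose first-order part in $\epsilon$ is a section of $\mathbf{B}\mathbb{R}_{\mathrm{conn}}$ over $U$, so that $\Gamma(U,T\mathcal{M})\simeq\Gamma(U,\mathcal{M})\times\big(\mathfrak{X}(U)\ltimes\mathbf{H}(U,\mathbf{B}\mathbb{R}_{\mathrm{conn}})\big)$, the semidirect factor encoding that an infinitesimal base motion $X$ drags the fibre datum along its flow by $\mathcal{L}_X$.

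For the global patching I would apply $[\mathbb{D}^1,-]$ to the automorphisms $e_{\alpha\beta}$ (with \v{C}ech cocycle $(\Lambda_{\alpha\beta},G_{\alpha\beta\gamma})$) that glue $\mathcal{M}$ and act on sections by \eqref{eq:coords}: the tangent functor turns $e_{\alpha\beta}$ into its derivative $[\mathbb{D}^1,e_{\alpha\beta}]$. On the base factor the derivatives of the transition diffeomorphisms assemble the local $X$'s into one global vector field on $M$. On the fibre factor, differentiating the gluing relation $\tilde{x}_\alpha=\tilde{x}_\beta-\Lambda_{\alpha\beta}+\mathrm{d}\phi_{\alpha\beta}$ in the direction $X$ — the tangent functor acting on the base-dependent form $\Lambda_{\alpha\beta}$ by $\mathcal{L}_X$ — gives $\xi_\alpha-\xi_\beta=-\mathcal{L}_X\Lambda_{\alpha\beta}+\mathrm{d}\eta_{\alpha\beta}$ with $\eta_{\alpha\beta}$ the derivative of $\phi_{\alpha\beta}$, and differentiating $\phi_{\alpha\beta}+\phi_{\beta\gamma}+\phi_{\gamma\alpha}=G_{\alpha\beta\gamma}$ gives $\eta_{\alpha\beta}+\eta_{\beta\gamma}+\eta_{\gamma\alpha}=\mathcal{L}_XG_{\alpha\beta\gamma}$, which is \eqref{eq:localsymdata}. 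The reparametrisation is then pure Cartan calculus: with $f_{\alpha\beta}:=\eta_{\alpha\beta}-\iota_X\Lambda_{\alpha\beta}$ and $\mathcal{L}_X=\mathrm{d}\iota_X+\iota_X\mathrm{d}$ the first equation becomes $\xi_\alpha-\xi_\beta=-\iota_X\mathrm{d}\Lambda_{\alpha\beta}+\mathrm{d}f_{\alpha\beta}$, while summing the second over the triple and using $\Lambda_{\alpha\beta}+\Lambda_{\beta\gamma}+\Lambda_{\gamma\alpha}=\mathrm{d}G_{\alpha\beta\gamma}$ from \eqref{eq:localdatagerbe} together with $\iota_XG_{\alpha\beta\gamma}=0$ (a $0$-form) yields $f_{\alpha\beta}+f_{\beta\gamma}+f_{\gamma\alpha}=\mathcal{L}_XG_{\alpha\beta\gamma}-\iota_X\mathrm{d}G_{\alpha\beta\gamma}=0$, so that $\mathbbvar{X}:=(X+\xi_\alpha,f_{\alpha\beta})$ obeys the stated condition.

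The main obstacle is the local identification of the fibre tangent stack: one must check, through the internal-hom and Dold-Kan bookkeeping, that $[\mathbb{D}^1,\BU]$ retains the \emph{full} groupoid structure of $\mathbf{B}\mathbb{R}_{\mathrm{conn}}$ — so that the gauge parameters $\eta_{\alpha\beta}$ and their gauge-of-gauge $\epsilon_\alpha$ are genuinely present and not collapsed to the bare set of $1$-forms $\xi_\alpha$ — and that $[\mathbb{D}^1,-]$ intertwines the $\BU$-valued transition cocycle $(\Lambda,G)$ with precisely its Lie derivative $(\mathcal{L}_X\Lambda,\mathcal{L}_XG)$ rather than some other linearisation. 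Everything else, namely the base gluing and the Cartan manipulations, is routine.
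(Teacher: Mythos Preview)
Your proposal is correct and follows essentially the same route as the paper: compute $[\mathbb{D}^1,\mathcal{M}]$ locally via the product decomposition $\mathcal{M}|_U\simeq U\times\BU$ to obtain \eqref{eq:DTBs}, then read off the global patching from the linearisation of the transition data. The one mild difference is that the paper phrases the patching step more abstractly, arguing that since $\mathcal{M}$ is glued by automorphisms in $\mathrm{Diff}(U)\ltimes\mathbf{H}(U,\BU)$ the tangent stack is glued by their infinitesimal version $\Coo(U,GL(d))\ltimes\mathbf{H}(U,\mathbf{B}\mathbb{R}_{\mathrm{conn}})$, whereas you obtain the same result by directly differentiating the section gluing relations \eqref{eq:coords}; your Cartan-calculus verification of the reparametrisation is also more explicit than what the paper records.
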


\begin{proof}
Since local sections of $\mathcal{M}$ are local circle bundles $\Gamma(U,\mathcal{M})\cong\mathbf{H}(U,\BU)$, we can now find the groupoid of sections of the stack $T\mathcal{M}|_U \simeq \left[\mathbb{D}^1,\,U\times\BU\right]$ by calculating 
\begin{equation}
    \begin{aligned}
    \mathbf{H}(U\times\mathbb{D}^1, U) \,&\simeq\, \mathrm{Diff}(U)\times\mathfrak{X}(U), \\
    \mathbf{H}(U\times\mathbb{D}^1, \BU) \,&\simeq\, \mathbf{H}(U, \BU) \times \mathbf{H}(U, \mathbf{B}\mathbb{R}_{\mathrm{conn}})
    \end{aligned}
\end{equation}
and then by considering only the subgroupoid of maps covering the identity $\mathrm{id}_U\in\mathrm{Diff}(U)$. Hence a local section in $\Gamma(U,\, T\mathcal{M}) \simeq \Gamma(U\times\mathbb{D}^1,\, \mathcal{M})$ must be given by \eqref{eq:DTBs}. The transition functions of the doubled tangent bundle $T\mathcal{M}\rightarrow\mathcal{M}$ are actually functions on the base manifold $M$, because the transition functions $(\Lambda_{\alpha\beta},G_{\alpha\beta\gamma})$ of $\mathcal{M}$ are functions on $M$. Since the doubled space is patched by bundle automorphisms $\mathrm{Diff}(U)\ltimes\mathbf{H}(U,\BU)$, its doubled tangent bundle will be patched by their infinitesimal version $\Coo\big(U,\,GL(d)\big)\ltimes\mathbf{H}(U,\,\mathbf{B}\mathbb{R}_{\mathrm{conn}})$. Hence $T\mathcal{M}$ naturally induces a bundle with transition functions $M\rightarrow\mathbf{B}(GL(d)\ltimes\mathbf{B}\mathbb{R}_{\mathrm{conn}})$. The transition functions $M\rightarrow\mathbf{B}GL(d)$ are just the ones of $TM$, while the map $M\rightarrow\mathbf{B}\mathbb{R}_{\mathrm{conn}}$ is just the collection $(\mathcal{L}_{X}\Lambda_{\alpha\beta},\,\mathcal{L}_{X}G_{\alpha\beta\gamma})$. Therefore sections are patched by condition \eqref{eq:localsymdata}.
\end{proof}

\noindent Notice \eqref{eq:DTBs} is analogous to the familiar idea that the tangent bundle $TP$ of a circle bundle $P$ is locally of the form $TU\times U(1)\times\mathbb{R}$ and patched by using the transition functions of $P$.

\vspace{0.2cm}

\noindent Hence the \v{C}ech data of a doubled vector $\mathbbvar{X}=(X+\xi_\alpha,f_{\alpha\beta})$ are the data of an infinitesimal gauge transformation of the form $x\mapsto x+\epsilon X$ and $(\tilde{x}_\alpha,\,\phi_{\alpha\beta})\mapsto(\tilde{x}_\alpha+\epsilon\xi_\alpha,\,\phi_{\alpha\beta}+\epsilon f_{\alpha\beta}+\epsilon\iota_X\Lambda_{\alpha\beta})$. 

\vspace{0.2cm}
\noindent Indeed the only doubled vectors we are actually interested in are the ones which are invariant under the principal action of the doubled space $\mathcal{M}$, i.e. the doubled vectors which satisfies the strong constraint. These are also the ones called \textit{strongly foliated} by \cite{Vais12}.
The following argument can be seen as formalization Papadopoulos' one in \cite{Pap14}.

\begin{definition}[Courant $2$-algebroid]\label{def:courantalg}
The $2$-algebra of sections of the \textit{Courant }$2$\textit{-algebroid} over the base manifold $M$ sits in the center of the following short exact sequence of $2$-algebras:
\begin{equation}\label{eq:courasection}
\begin{tikzcd}[row sep=scriptsize, column sep=8ex]
   0 \arrow[r] & \mathbf{H}\big(M,\mathbf{B}\mathbb{R}_{\mathrm{conn}}\big)\; \arrow[r, hook, "\mathrm{injection}"] & \;\mathfrak{at}(\mathcal{M})\; \arrow[r, two heads, "\mathrm{anchor}"] & \; \mathfrak{X}(M) \arrow[r] & 0,
\end{tikzcd}
\end{equation}
where $\mathfrak{X}(M)$ is the algebra of vector fields on $M$ and $\mathbf{H}\big(M,\mathbf{B}\mathbb{R}_{\mathrm{conn}}\big)$ is the abelian $2$-algebra of line bundles with connection on $M$, i.e. of infinitesimal gauge transformations of the gerbe. This is obtained by differentiating the finite automorphisms sequence \eqref{eq:finiteautomorphismseq}.
\end{definition}

\begin{remark}[Analogy with Atiyah $1$-algebroid]
Definition \ref{def:courantalg} is analogous to Atiyah algebroid in Kaluza-Klein, which encodes vectors on a circle bundle $P$ invariant under principal action. As explained in \cite{Col11} and \cite{Rog13} exact sequence \eqref{courantseq} is the higher version of the ordinary Atiyah sequence
\begin{equation}
\begin{tikzcd}[row sep=scriptsize, column sep=9ex]
   0 \arrow[r] & \Coo(M,\mathbb{R})\; \arrow[r, hook,"\mathrm{injection}"] & \;\mathfrak{at}(P)\; \arrow[r, two heads, "\mathrm{anchor}"] & \; \mathfrak{X}(M) \arrow[r] & 0.
\end{tikzcd}
\end{equation}
\end{remark}

\begin{definition}[Standard Courant $2$-algebroid]
We define the \textit{standard Courant }$2$\textit{-algebroid} by the semidirect sum of $2$-algebras $\mathfrak{X}(M)\oplus\mathbf{H}\big(M,\mathbf{B}\mathbb{R}_{\mathrm{conn}}\big)$. This means its sections will be of the form $(X+\xi_\alpha, f_{\alpha\beta})$ where $X\in\mathfrak{X}(M)$ is a global vector and $(\xi_\alpha, f_{\alpha\beta})$ is a \v{C}ech cocycle
\begin{equation}
\begin{aligned}
    \begin{aligned}
        \xi_\alpha - \xi_\beta \;&=\; \mathrm{d}f_{\alpha\beta}, \\
        f_{\alpha\beta}+f_{\beta\gamma}+f_{\gamma\alpha} \;&=\; 0.
    \end{aligned}
\end{aligned}
\end{equation}
The morphisms will be gauge transformations $(\varepsilon_\alpha):(X+\xi_\alpha, f_{\alpha\beta})\mapsto(X+\xi_\alpha+\mathrm{d}\varepsilon_\alpha,\, f_{\alpha\beta}+\varepsilon_\alpha-\varepsilon_\beta)$ between line bundles, as usual. By slightly extending \cite{Col11}, the $2$-algebra structure of this semidirect sum is isomorphic to the bracket structure given by the following bracket structure
\begin{gather}\label{eq:brac}
\begin{aligned}
    \big\llbracket (\varepsilon_\alpha) \big\rrbracket_{\mathrm{std}} &= (\mathrm{d}\varepsilon_\alpha, \, \varepsilon_\alpha-\varepsilon_\beta) \\
    \big\llbracket (X+\xi_\alpha, f_{\alpha\beta}), (Y+\eta_\alpha, g_{\alpha\beta})\big\rrbracket_{\mathrm{std}}  &= \bigg([X,Y]+\mathcal{L}_X\eta_\alpha-\mathcal{L}_Y\xi_\alpha - \frac{1}{2}\mathrm{d}(\iota_X\eta_\alpha-\iota_Y\xi_\alpha), \\
    & \hspace{4.0cm} \frac{1}{2}X(g_{\alpha\beta})-\frac{1}{2}Y(f_{\alpha\beta})\bigg) \\
    \big\llbracket (X+\xi_\alpha, \, f_{\alpha\beta}), (\varepsilon_\alpha)\big\rrbracket_{\mathrm{std}}  &= (\mathcal{L}_X\varepsilon_\alpha) \\
    \big\llbracket (X+\xi_\alpha, f_{\alpha\beta}), (Y+\eta_\alpha, g_{\alpha\beta}), (Z+\zeta_\alpha, h_{\alpha\beta})\big\rrbracket_{\mathrm{std}}  &= \frac{1}{3!}\bigg( \iota_X\iota_Y\mathrm{d}\zeta_\alpha + \frac{3}{2}\iota_X\mathrm{d}\iota_Y\zeta_\alpha + \text{perm.}\bigg)
\end{aligned}\raisetag{4.1cm}
\end{gather}
Let us notice that the underlying groupoid of sections $\mathfrak{X}(M)\oplus\mathbf{H}\big(M,\mathbf{B}\mathbb{R}_{\mathrm{conn}}\big)$ of the standard Courant $2$-algebroid is nothing but the stackification (in other words the globalization) of the familiar local $\Coo(U,\mathbb{R})\,\xrightarrow{\;\mathrm{d}\;}\,\mathfrak{X}(U)\oplus\Omega^1(U)$ on patches $U\subset M$ of the base manifold.
\end{definition}

\begin{theorem}[General Courant $2$-algebroid]\label{gengeom}
Given the higher Atiyah sequence \eqref{eq:courasection}, for any choice of splitting homomorphism
\begin{equation}\label{courantseq}
\begin{tikzcd}[row sep=scriptsize, column sep=8ex]
   0 \arrow[r] &\mathbf{H}\big(M,\mathbf{B}\mathbb{R}_{\mathrm{conn}}\big)\; \arrow[r, hook, "\mathrm{injection}"] & \;\mathfrak{at}(\mathcal{M})\; \arrow[r, two heads, "\mathrm{anchor}"] & \; \mathfrak{X}(M) \arrow[l, bend right=50, hook', "\mathrm{splitting}"'] \arrow[r] & 0.
\end{tikzcd}
\end{equation}
we get the following results.
\begin{itemize}
\item A section $\mathbbvar{X}:=(X+\xi_\alpha, \, f_{\alpha\beta})$ consists a global vector field $X\in\mathfrak{X}(M)$, a collection of $1$-forms $\xi_\alpha\in\Omega^1(U_\alpha)$ on each patch $U_\alpha$ of $M$ and a collection of functions $f_{\alpha\beta}\in\Coo(U_\alpha\cap U_\beta)$ on each overlap $U_\alpha\cap U_\beta$ of $M$,
such that they are glued according to
\begin{equation}\label{eq:pathcthevector}
\begin{aligned}
    \begin{aligned}
        \xi_\alpha - \xi_\beta \;&=\; -\iota_{X}\mathrm{d}\Lambda_{\alpha\beta}+\mathrm{d}f_{\alpha\beta}, \\
        f_{\alpha\beta}+f_{\beta\gamma}+f_{\gamma\alpha} \;&=\; 0.
    \end{aligned}
\end{aligned}
\end{equation}
\item A morphism between two sections is a gauge transformation $(\varepsilon_\alpha)$ of line bundles, given in local data by a collection of local functions $\varepsilon_\alpha\in\Coo(U_\alpha)$ so that
\begin{equation}
(\varepsilon_\alpha):\,(X+\xi_\alpha,\, f_{\alpha\beta}) \;\;\mapsto\;\; (X + \xi_\alpha+\mathrm{d}\varepsilon_{\alpha},\, f_{\alpha\beta}+\varepsilon_{\alpha}-\varepsilon_{\beta})
\end{equation}
\item The bracket structure on $\mathfrak{at}(\mathcal{M})$ is given by the one of the standard Courant algebroid by
\begin{equation}\label{eq:bracketfromstd}
\big\llbracket (s(X)+\xi_\alpha,\,f_{\alpha\beta}),\,(s(Y)+\eta_\alpha,\,f_{\alpha\beta})\big\rrbracket \,:=\, \big\llbracket (X+\xi_\alpha,\,f_{\alpha\beta}),\,(Y+\eta_\alpha,\,f_{\alpha\beta})\big\rrbracket_{\mathrm{std}}
\end{equation}
and analogously for the other brackets by using the remaining \eqref{eq:brac}.
\end{itemize}
\end{theorem}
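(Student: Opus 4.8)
The plan is to read $\mathfrak{at}(\mathcal{M})$ directly off the doubled tangent bundle of Theorem~\ref{thm:dtbild}, using that $\mathfrak{at}(\mathcal{M})=\mathrm{Lie}\big(\mathbf{Aut}_{/}(\Lambda,G)\big)$ and that infinitesimal automorphisms of a principal bundle are precisely the vector fields on its total space invariant under the principal action of remark~\ref{principalaction}. In the local description of Theorem~\ref{thm:dtbild} the principal action moves only the $\Gamma(U,\mathcal{M})$ factor (it tensors the local circle bundle), so the invariant sections are exactly the factor $\mathfrak{X}(U)\ltimes\mathbf{H}(U,\mathbf{B}\mathbb{R}_{\mathrm{conn}})$, glued on all of $M$ by \eqref{eq:localsymdata}. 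The splitting homomorphism $s:\mathfrak{X}(M)\to\mathfrak{at}(\mathcal{M})$ enters only to present an arbitrary such section as $\mathbbvar{X}=s(X)+(\xi_\alpha,f_{\alpha\beta})$ with the kernel part a genuine $\mathbf{B}\mathbb{R}_{\mathrm{conn}}$-cocycle; after the reparametrisation $f_{\alpha\beta}:=\eta_{\alpha\beta}-\iota_X\Lambda_{\alpha\beta}$ already introduced in Theorem~\ref{thm:dtbild}, Cartan's identity $\mathcal{L}_X=\mathrm{d}\iota_X+\iota_X\mathrm{d}$ converts \eqref{eq:localsymdata} into the gluing \eqref{eq:pathcthevector}, the normalised cocycle condition $f_{\alpha\beta}+f_{\beta\gamma}+f_{\gamma\alpha}=0$ following because $\iota_X$ kills the $0$-form $G_{\alpha\beta\gamma}$ while $\Lambda_{\alpha\beta}+\Lambda_{\beta\gamma}+\Lambda_{\gamma\alpha}=\mathrm{d}G_{\alpha\beta\gamma}$. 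This is the first bullet.

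For the second bullet I would Lie-differentiate the $2$-morphisms of $\mathbf{Aut}_{/}(\Lambda,G)$ from remark~\ref{diffgaugeconc}: a gauge-of-gauge transformation is a family of local scalars $\epsilon_\alpha$ sending $\eta_\alpha\mapsto\eta_\alpha+\mathrm{d}\epsilon_\alpha$ and $\eta_{\alpha\beta}\mapsto\eta_{\alpha\beta}+\epsilon_\alpha-\epsilon_\beta$; it fixes the diffeomorphism component, hence fixes $\iota_X\Lambda_{\alpha\beta}$, so infinitesimally it acts on $(X+\xi_\alpha,f_{\alpha\beta})$ by $\varepsilon_\alpha$ with $\xi_\alpha\mapsto\xi_\alpha+\mathrm{d}\varepsilon_\alpha$ and $f_{\alpha\beta}\mapsto f_{\alpha\beta}+\varepsilon_\alpha-\varepsilon_\beta$, exactly as claimed and in agreement with the morphisms of $\mathbf{H}(M,\mathbf{B}\mathbb{R}_{\mathrm{conn}})$.

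The bracket is the infinitesimal shadow of composition and conjugation in the $2$-group $\mathbf{Aut}_{/}(\Lambda,G)$, i.e.\ the graded commutator of infinitesimal automorphisms. On a trivialising patch $U$ one has $\mathfrak{at}(\mathcal{M})|_U\simeq\big(\Coo(U)\xrightarrow{\mathrm{d}}\mathfrak{X}(U)\oplus\Omega^1(U)\big)$ and this commutator reproduces the standard Courant/Dorfman $2$-bracket \eqref{eq:brac} — the local computation of \cite{Col11} and \cite{Rog13} that the statement invokes. It then remains to verify that pulling \eqref{eq:brac} back along $s$, as in \eqref{eq:bracketfromstd}, defines a bracket on \emph{globally} glued sections: if $\mathbbvar{X}$ and $\mathbbvar{Y}$ satisfy \eqref{eq:pathcthevector}, so must $\llbracket\mathbbvar{X},\mathbbvar{Y}\rrbracket$. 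This is a Cartan-calculus check: using $\xi_\alpha-\xi_\beta=-\iota_X\mathrm{d}\Lambda_{\alpha\beta}+\mathrm{d}f_{\alpha\beta}$ and its analogue for $\eta$, together with $[\mathcal{L}_X,\mathrm{d}]=0$ and $\mathcal{L}_X\iota_Y-\iota_Y\mathcal{L}_X=\iota_{[X,Y]}$, the transition defect of the first component of \eqref{eq:brac} collapses to $-\iota_{[X,Y]}\mathrm{d}\Lambda_{\alpha\beta}$ plus an exact $1$-form, the scalar component $\tfrac12 X(g_{\alpha\beta})-\tfrac12 Y(f_{\alpha\beta})$ is manifestly a $\mathbf{B}\mathbb{R}_{\mathrm{conn}}$-cocycle, and the remaining brackets of \eqref{eq:brac} glue for the same reason since they are assembled only from $\iota$, $\mathrm{d}$ and $\mathcal{L}$ of local representatives.

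I expect the one genuine obstacle to be this global consistency of the bracket and, tied to it, the role of the splitting: because $s$ need not be a morphism of Lie $2$-algebras, $\llbracket s(X),s(Y)\rrbracket$ differs from $s([X,Y])$ by a globally defined $1$-form — the "curvature" of the splitting, which is where the $H$-flux sits — so \eqref{eq:bracketfromstd} must be read with the output's anchor equal to $[X,Y]$ and its kernel part absorbing this curvature, and one has to confirm that the $\iota\,\mathrm{d}\Lambda_{\alpha\beta}$ twist terms produced by the gluing cancel in each of the four brackets, including the homotopy $3$-bracket. That verification is long but purely mechanical, best done by computing the transition-function defect of each output component in turn; everything else in the statement follows by directly specialising Theorem~\ref{thm:dtbild} and remark~\ref{diffgaugeconc}.
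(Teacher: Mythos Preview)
Your argument is correct, but the paper's proof is considerably shorter and more conceptual. The paper does not work from Theorem~\ref{thm:dtbild} and the reparametrisation $f_{\alpha\beta}=\eta_{\alpha\beta}-\iota_X\Lambda_{\alpha\beta}$; instead it makes the splitting explicit: for any choice of gerbe connection $B_\alpha$ with $B_\beta-B_\alpha=\mathrm{d}\Lambda_{\alpha\beta}$, the map $s\oplus i:\mathfrak{X}(M)\oplus\mathbf{H}(M,\mathbf{B}\mathbb{R}_{\mathrm{conn}})\to\mathfrak{at}(\mathcal{M})$ sending $(X+\xi'_\alpha,f_{\alpha\beta})\mapsto(X+\iota_XB_\alpha+\xi'_\alpha,f_{\alpha\beta})$ is an isomorphism of underlying groupoids, and the redefinition $\xi_\alpha:=\iota_XB_\alpha+\xi'_\alpha$ immediately yields $\xi_\alpha-\xi_\beta=\iota_X(B_\alpha-B_\beta)+\mathrm{d}f_{\alpha\beta}=-\iota_X\mathrm{d}\Lambda_{\alpha\beta}+\mathrm{d}f_{\alpha\beta}$, which is the first bullet in one line.

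The real payoff of this route is for the brackets. You flag global consistency as the ``one genuine obstacle'' and propose a patch-by-patch Cartan-calculus verification that the output of each bracket in \eqref{eq:brac} again satisfies \eqref{eq:pathcthevector}. The paper bypasses this entirely: since $s\oplus i$ is an isomorphism and the standard brackets \eqref{eq:brac} are already well-defined on the standard Courant $2$-algebroid, the transported brackets \eqref{eq:bracketfromstd} are automatically well-defined on $\mathfrak{at}(\mathcal{M})$ with no further check. Your curvature-of-the-splitting remark is exactly right --- that is the $\iota_X\iota_YH$ twist appearing in the subsequent remark on the twisted bracket --- but it is a consequence of the transport, not something one needs to control by hand in the proof. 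What your approach buys is independence from an explicit choice of $B_\alpha$ and a more hands-on feel for why the gluing works; what the paper's buys is that the entire argument fits in four lines.
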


\begin{proof}
By using the splitting $s$ and the injection $i$ in \eqref{courantseq} we can construct an isomorphism of $2$-algebras $s\oplus i:\,\mathfrak{X}(M)\oplus\mathbf{H}\big(M,\mathbf{B}\mathbb{R}_{\mathrm{conn}}\big)\xrightarrow{\;\cong\;}\mathfrak{at}(\mathcal{M})$. As explained in \cite{Rog13}, we have a splitting for any connection local data $B_\alpha\in\Omega^2(U_\alpha)$ which satisfiy $B_\beta - B_\alpha = \mathrm{d}\Lambda_{\alpha\beta}$.
The isomorphism $s\oplus i$ is then given by the map $(X+\xi'_\alpha,\,f_{\alpha\beta}) \mapsto (X+ \iota_XB_\alpha +\xi'_\alpha,\, f_{\alpha\beta})$ for objects and by the identity for gauge transformations. Recall that $(\xi'_\alpha,f_{\alpha\beta})$ is patched by $\xi'_\alpha-\xi'_\beta = \mathrm{d}f_{\alpha\beta}$ and $f_{\alpha\beta}+f_{\beta\gamma}+f_{\gamma\alpha}=0$. Now we only have to redefine $\xi_\alpha := \iota_XB_\alpha+\xi'_\alpha$ to get the wanted patching conditions.
\end{proof}

\begin{digression}[Twisted and untwisted generalized vectors]
The isomorphism of $2$-algebras
\begin{equation}\label{eq:isotw}
    \underbrace{\mathfrak{X}(M)\oplus\mathbf{H}\big(M,\mathbf{B}\mathbb{R}_{\mathrm{conn}}\big)}_\text{untwisted gen. \!vectors}\;\;\cong\underbrace{\mathfrak{at}(\mathcal{M})}_\text{twisted gen. \!vectors}
\end{equation}
we used in lemma \ref{gengeom} is exactly the isomorphism locally presented by \cite{Hull14} between \textit{twisted} and \textit{untwisted} generalized vectors, but globally defined. Indeed in the reference, on a patch $U_\alpha\subset M$, if $X+\xi'_\alpha$ is an untwisted generalized vector, then $X+\iota_XB_\alpha+\xi'_\alpha$ is its twisted form. This also gives to this notion a precise geometrical interpretation: the connection $B_\alpha$ splits the Courant $2$-algebroid in a \textit{horizontal bundle} $\mathfrak{X}(M)$ and \textit{vertical bundle} $\mathbf{H}(M,\mathbf{B}\mathbb{R}_{\mathrm{conn}})$.
\end{digression}

\noindent This is analogous to how the tangent bundle of an ordinary circle bundle $P\rightarrow M$ is split in horizontal and vertical bundle $TP \cong TM\oplus\mathbb{R}$ by a connection $A_\alpha$.

\begin{digression}[Relation with para-complex $\pm 1$-eigenbundles]\label{dig:paraherbundle}
The tangent bundle of a para-complex manifold $(N,J)$ naturally separates in $TN = L_+ \oplus L_-$, where the subbundles $L_{\pm}$ are the $\pm 1$-eigenbundles of the para-complex structure $J$. Now we will show that these eigenbundles give a wrong globalization of the tangent stack $T\mathcal{M}$ of lemma \ref{thm:dtbild}. In the case of a fiber bundle $N\xrightarrow{}M$, as explained in \cite{MarSza19}, we get that its tangent bundle naturally separates as
\begin{equation}\label{eq:parahermtangbund}
    TN \; = \; TM \,\oplus\, \bigsqcup_\alpha T^\ast U_\alpha /\sim
\end{equation}
where the equivalence $\sim$ is given on sections by patching $\xi_\alpha - \xi_\beta = - \iota_X\mathrm{d}\Lambda_{\alpha\beta}$. Notice that this patching condition is reminiscent of \eqref{eq:pathcthevector}. We then have an isomorphism $TM \oplus T^\ast M \cong TN$ for any local $2$-form $B_\alpha\in\Omega^2(U_\alpha)$ which satisfies $B_\beta-B_\alpha=\mathrm{d}\Lambda_{\alpha\beta}$. Such isomorphism is given by mapping $X+\xi\in\Gamma(M,TM \oplus T^\ast M)$ to $(X+\iota_XB_\alpha+\xi)\in\Gamma(M,TN)$ and this is reminiscent of lemma \ref{gengeom}. However, since $N$ is a manifold, transition functions will satisfy $\Lambda_{\alpha\beta}+\Lambda_{\beta\gamma}+\Lambda_{\gamma\alpha}=0$. This means that $B_\alpha$ is not a full gerbe connection and $TN$ cannot actually recover a non-trivial ordinary Courant algebroid appearing in Generalized Geometry (see \cite{Rog13}). On the other hand Higher Kaluza-Klein framework can clarify the geometrical meaning of these structures and give them a working globalization.
\end{digression}

\begin{remark}[Twisted bracket]
By explicitly writing \eqref{eq:bracketfromstd} we recover the \textit{twisted bracket} 
\begin{equation*}
\begin{aligned}
    \big\llbracket (X+\xi_\alpha, f_{\alpha\beta}), (Y+\eta_\alpha, g_{\alpha\beta})\big\rrbracket &= \bigg([X,Y]+\mathcal{L}_X\eta_\alpha-\mathcal{L}_Y\xi_\alpha - \frac{1}{2}\mathrm{d}(\iota_X\eta_\alpha-\iota_Y\xi_\alpha) + \iota_X\iota_YH, \\
    & \hspace{5.5cm} \frac{1}{2}X(g_{\alpha\beta})-\frac{1}{2}Y(f_{\alpha\beta})\bigg)
\end{aligned}
\end{equation*}
\end{remark}

\begin{remark}[Pushforward of automorphisms as local $O(d,d)$ transformations]
We will see now see how the Courant $2$-algebroid transforms under automorphisms of the doubled space. Given an automorphism $(\varphi,\eta_{\alpha},\eta_{\alpha\beta})\in\mathrm{Diff}(M)\ltimes\mathbf{H}(M,\BU)$, any section $(X+\xi_\alpha,f_{\alpha\beta})\in\mathfrak{at}(\mathcal{M})$ will transform under its pushforward, which is its infinitesimal version, by
\begin{equation}
    \big(X+\xi_\alpha,\;f_{\alpha\beta} \big) \;\,\mapsto\;\, \big(\varphi_\ast X+(\varphi^\ast)^{-1}\xi_\alpha +\iota_X\mathrm{d}\eta_\alpha,\;f_{\alpha\beta}\circ\varphi\big).
\end{equation}
Therefore on each local patch $U_\alpha\subset M$ we recover the following transformations
\begin{equation}
    \begin{pmatrix}X\\\xi_\alpha\end{pmatrix} \;\,\mapsto\;\, \begin{pmatrix}\varphi_\ast & 0 \\\mathrm{d}\eta_\alpha & (\varphi^\ast)^{-1} \end{pmatrix}\begin{pmatrix}X\\\xi_\alpha\end{pmatrix}
\end{equation}
Since $\varphi_\ast\in\Coo(M,GL(d))$ and $\mathrm{d}\eta_\alpha\in\Omega_{\mathrm{cl}}^2(M)$, we can see this as a local $GL(d)\ltimes\wedge^2\mathbb{R}^d\subset O(d,d)$ transformation. Hence we recover the local geometric $O(d,d)$ transformations of DFT.
\end{remark}

\begin{remark}[Recovering ordinary Generalized Geometry]\label{rem:horizontalsec}
The $2$-algebra $\mathfrak{at}(\mathcal{M})$ immediately reduces to the one appearing in \cite{DesSae18} and \cite{Col11} for sections of the form $(X+\xi_\alpha):=(X+\xi_\alpha,0)$ with $f_{\alpha\beta}=0$. Therefore these sections satisfy the patching condition $\xi_\alpha - \xi_\beta =- \iota_{X}\mathrm{d}\Lambda_{\alpha\beta}$ on overlaps of patches and their morphisms are gauge transformations given by global functions $\varepsilon\in\Coo(M)$. These are exactly the sections which \cite{Col11} calls "horizontal lifts" of a vector $X$. Moreover their brackets \eqref{eq:brac} reduce to the $2$-algebra structure which appears in \cite{DesSae18}:
\begin{gather}\label{eq:generalized geometry}
\begin{aligned}
    \big\llbracket \,\varepsilon\, \big\rrbracket &= \,\mathrm{d}\varepsilon \\
    \big\llbracket (X+\xi_\alpha), (Y+\eta_\alpha)\big\rrbracket &= \bigg([X,Y]+\mathcal{L}_X\eta_\alpha-\mathcal{L}_Y\xi_\alpha - \frac{1}{2}\mathrm{d}(\iota_X\eta_\alpha-\iota_Y\xi_\alpha)+\iota_X\iota_YH\bigg) \\
    \big\llbracket (X+\xi_\alpha), \,\varepsilon\, \big\rrbracket &= \,\mathcal{L}_X\varepsilon  \\
    \big\llbracket (X+\xi_\alpha), (Y+\eta_\alpha), (Z+\zeta_\alpha)\big\rrbracket &= \frac{1}{3!}\bigg( \iota_X\iota_Y\mathrm{d}\zeta_\alpha + \frac{3}{2}\iota_X\mathrm{d}\iota_Y\zeta_\alpha + \text{perm.}\bigg)
\end{aligned}\raisetag{20pt}
\end{gather}
For this $2$-algebra we will use the symbol $\mathfrak{at}(\mathcal{M})_{\mathrm{hor}}$. Notice that these horizontal sections $(X+\xi_\alpha)\in\mathfrak{at}(\mathcal{M})_{\mathrm{hor}}$ equipped with the bracket $\llbracket -,-\rrbracket$ from \eqref{eq:generalized geometry} can be also seen as sections $(X+\xi_\alpha)\in\Gamma(M,C)$ of an ordinary Courant algebroid $C$ appearing in Generalized Geometry at the center of a short exact sequence $T^\ast M\rightarrow C \rightarrow TM$. In other words the underlying chain complex of the $2$-algebra $\mathfrak{at}(\mathcal{M})_{\mathrm{hor}}$ will be $\Coo(M)\xrightarrow{\mathrm{d}}\Gamma(M,C)$. Hence if we restrict to horizontal sections we recover explicitly ordinary Generalized Geometry (see \cite{Gua11} for details).
\end{remark}

\noindent However notice that the horizontal sections $(X+\xi_\alpha)\in\mathfrak{at}(\mathcal{M})_{\mathrm{hor}}$ cannot be seen as sections of the para-Hermitian tangent bundle \eqref{eq:parahermtangbund} from digression \ref{dig:paraherbundle}, even if they look similar. This is because the transition functions $\Lambda_{\alpha\beta}$ of the para-Hermitian manifold satisfy $\Lambda_{\alpha\beta}+\Lambda_{\beta\gamma}+\Lambda_{\gamma\alpha}=0$, which is not an actual gerbe cocycle like the one patching the horizontal sections in remark \ref{rem:horizontalsec}.

\vspace{0.2cm}
\noindent Notice from remark \ref{rem:horizontalsec} that $\mathrm{d}\Lambda_{\alpha\beta}$ satisfies the cocycle condition, even if $\Lambda_{\alpha\beta}$ does not. This is why transition functions of the Courant $2$-algebroid define a global vector bundle: the one underlying the ordinary Courant algebroid. Now the reader may wonder why we considered sections from lemma \ref{gengeom} with non-zero gauge transformations on two-fold overlaps of patches instead of horizontal ones. The answer is that there are applications where these data cannot be neglected, such as in geometry of T-duality in the next section.

\vspace{0.2cm}
\noindent Finally, let us very briefly explain how the formalism of $NQ$-manifolds, widely used in literature to deal with $n$-algebroids, can be easily rlated with our formalism.

\begin{digression}[Recovering Extended Riemannian Geometry by \cite{DesSae18}]
Let us recall that a differential-graded manifold (or $NQ$-manifold) is a locally ringed space $(N,\mathcal{O}_{N})$ where $N$ is a topological space and $\mathcal{O}_{N}$ a sheaf of differential-graded algebras on $N$ satisfying some extra properties. The differential graded manifold $T^\ast[2]T[1]U$ with $U\subset M$ and local coordinates $\{x^\mu,\zeta_\mu,\chi^\mu,p_\mu\}$ respectively in degree $0$, $1$, $1$, $2$
and a sheaf of functions $\Coo(-)$ with differential $Q:=\mathrm{d}_{TU}$. As shown by \cite{Roy02}, remarkably, its differential-graded algebra of functions is exactly the Chevalley-Eilenberg algebra of the $2$-algebra of local sections $\mathfrak{at}(\mathcal{M}|_U)$, i.e.
\begin{equation}
    \Coo(T^\ast[2]T[1]U) \,=\, \mathrm{CE}\big(\mathfrak{at}(\mathcal{M}|_U)\big).
\end{equation}
In other words $T^\ast[2]T[1]U$ is just an alternative way to write $\mathfrak{at}(\mathcal{M}|_U)$. Notice that this recovers Extended Riemannian Geometry by \cite{DesSae18} in the simple case of Generalized Geometry. We will explain how we can to recover their geometry of doubled torus bundles in the section 4.
\end{digression}

\subsection{Doubled metric is an orthogonal structure}

In this subsection we will give a global definition of a doubled metric on our doubled space. This will clarify the fundamental intuition by \cite{BCM14}, who discusses the intrinsic higher geometric nature of a doubled metric structure. However it will presented as a structure reduction of the doubled tangent space, in the spirit of \cite{DCCTv2}.

\begin{definition}[Doubled metric]\label{defdoubledmetric}
Let us define the \textit{doubled metric} as a globally defined Riemannian metric on $\mathcal{M}$ in terms of global forms (definition \ref{def:globalformM}) as
\begin{equation}
    \mathcal{H} \, \in \, \Omega^1(\mathcal{M})_{\mathrm{glob}}^{\odot 2}
\end{equation}
\end{definition}

\begin{remark}[Doubled orthogonal structure]
The doubled metric (definition \ref{defdoubledmetric}) defines a structure on the doubled space, which we will write as
\begin{equation}
    \mathcal{M} \longrightarrow \mathbf{Orth}(T\mathcal{M}),
\end{equation}
generalizing the example \ref{ex:orth} of Riemannian geometry. This is obtained by a reduction of the structure group $GL(d)\ltimes \mathbf{B}\mathbb{R}_{\mathrm{conn}}$ of $T\mathcal{M}$ (see lemma \ref{thm:dtbild}) to $O(2d)$, i.e. a diagram
\begin{equation}
   \begin{tikzcd}[row sep=7ex, column sep=5ex]
    & \mathbf{B}O(2d) \arrow[d]\\
    M \arrow[ru]\arrow[r] & \mathbf{B}\big(GL(d)\ltimes \mathbf{B}\mathbb{R}_{\mathrm{conn}}\big)
    \end{tikzcd}
\end{equation}
This can be seen as locally given by local matrices $\mathcal{E}_\alpha\in\Coo\big(T^\ast U_\alpha,\,GL(2d)\big)$ patched by 
\begin{equation}
    \begin{aligned}
    \mathcal{E}_\alpha &= \mathcal{O}_{\alpha\beta}\cdot\mathcal{E}_\beta\cdot\mathcal{N}_{\alpha\beta}, \\
    \mathcal{O}_{\alpha\gamma}  &= \mathcal{O}_{\alpha\beta}\cdot \mathcal{O}_{\beta\gamma},
    \end{aligned}
\end{equation}
where $\mathcal{O}_{\alpha\beta}$ is a $O(2d)$-cocycle and $\mathcal{N}_{\alpha\beta}$ are the transition functions of the doubled tangent bundle $T\mathcal{M}$ (definition \ref{def:doubledtangentbundle}).
The doubled metric from definition \ref{defdoubledmetric} can be recovered by imposing $\mathcal{H}_\alpha := \mathcal{E}_\alpha^\mathrm{T}\mathcal{E}_\alpha$ and patching by $\mathcal{H}_\alpha=\mathcal{N}_{\alpha\beta}^{\mathrm{T}}\mathcal{H}_\alpha\mathcal{N}_{\alpha\beta}$.
\end{remark}

\noindent Hence moduli the space of a general doubled metric is locally $\Coo\big(T^\ast U_\alpha,\,GL(2d)/O(2d)\big)$, which is what we expect from a globally defined metric on $\mathcal{M}$, in the spirit of definition \ref{defdoubledmetric}.

\begin{definition}[Metric doubled space]\label{def:metricdb}
We call \textit{metric doubled space} a couple $(\M,\mathcal{H})$ of a doubled space $\mathcal{M}$ (postulate \ref{post1}) and a doubled metric $\mathcal{H}$ (definition \ref{defdoubledmetric}) on it. 
\end{definition}

\subsection{Global strong constraint is higher cylindricity condition}

In this subsection we will give a global definition of the strong constraint and we will explain how it can be used to formulate a Higher Kaluza-Klein reduction of the doubled metric.

\begin{post}[Global strong constraint]\label{post3}
The background $(\M,\mathcal{H})$ is a metric doubled space (definition \ref{def:metricdb}) such that the doubled metric structure $\mathcal{M}\xrightarrow{\mathcal{H}}\mathbf{Orth}(T\mathcal{M})$ is equivariant under the principal action of $\mathcal{M}$ (remark \ref{principalaction}).
\end{post}

\noindent This is exactly an higher version of the \textit{cylindricity condition} in Kaluza-Klein Theory, which forbids the dependence of the bundle metric on the extra coordinate by asking it is invariant under the principal $U(1)$-action. Since for a principal circle bundle $P$ we have $P/U(1)\cong M$, the bundle metric will be actually a structure on the base manifold $M$. Analogously for the doubled space we have the following.

\begin{remark}[Geometry of the global strong constraint]\label{rem:geomsc}
Postulate \ref{post3}, combined with postulate \ref{post1}, gives locally the strong constraint we know in DFT. Recall the principal action $\rho$ from remark \ref{principalaction} on the doubled space. The equivariance implies that the doubled metric is actually not a structure on $\M$, but on the (homotopy) quotient induced by the principal action
\begin{equation}
    \M /\!/\!_\rho \,\BU \,\cong\, M,
\end{equation}
i.e. on the base manifold. Thus doubled metric locally depends only on physical $x$ coordinates.
\end{remark}

\begin{digression}[Doubled-yet-gauged spacetime of \cite{Park13}]\label{dig:park2}
The global strong constraint of postulate \ref{post3} is exactly a global version of the strong constraint in Park's formulation. Indeed locally the doubled space is given by $T^\ast U_\alpha$ and the principal action $\rho$ by $\tilde{x}_\alpha\mapsto\tilde{x}_\alpha+\eta_\alpha$ for any circle bundle $(\eta_\alpha,\eta_{\alpha\beta})\in\mathbf{H}(M,\BU)$. If we slash out these gauge transformations we end up with the base manifold patch $U_\alpha$. Under strong constraint the dual coordinates are non-physical and in particular physical points correspond to gauge orbits of the doubled space.
\end{digression}

\begin{definition}[Higher Kaluza-Klein reduction]\label{def:dimred}
In \cite{DCCTv2} it is shown that, for any principal bundle $P\xrightarrow{\;\pi\;\,}M$ defined by the map $M\xrightarrow{\;f\;\,}\mathbf{B}G$ and any stack $\mathbf{S}\in\mathbf{H}$, there is an equivalence
\begin{equation}
    \begin{tikzcd}[row sep=5ex, column sep=2.7ex]
    \mathbf{H}\big(P,\, \mathbf{S}\big) \arrow[rr, "\text{reduction}", yshift=1.5ex] & \simeq & \mathbf{H}_{/\mathbf{B}G}\big(M,\, [G,\mathbf{S}]/G\big) \arrow[ll, "\text{oxidation}", yshift=-1.5ex]
    \end{tikzcd}
\end{equation}
where the \textit{reduction} is given by the following map
\begin{equation}
    \left(\begin{tikzcd}[row sep=5ex, column sep=4ex]
    P \arrow[r, "s"] & \mathbf{S}
    \end{tikzcd}\right) \quad\mapsto\quad
    \left(\begin{tikzcd}[row sep=7ex, column sep=5ex]
    & \left[G,\mathbf{S}\right]/G \arrow[d]\\
    M \arrow[ru, "\text{[}G\text{,}s\text{]}/G"]\arrow[r, "f"] & \mathbf{B}G
    \end{tikzcd}\right)
\end{equation}
This equivalence is called \textit{double dimensional reduction} in the reference.
\end{definition}

\begin{theorem}[Higher Kaluza-Klein reduction of equivariant structure]
When the structure $P\xrightarrow{s}\mathbf{S}$ on a principal $G$-bundle $P$ is $G$-equivariant, its Higher Kaluza-Klein reduction (definition \ref{def:dimred}) will be a structure $M\xrightarrow{s/G}\mathbf{S}$ on the base manifold.
\end{theorem}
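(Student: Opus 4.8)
The plan is to deduce the statement from the defining property of principal $\infty$-bundles --- that the base is the homotopy quotient of the total space by the principal action --- and then to match the resulting map with the reduction of definition~\ref{def:dimred}. First I would recall, following \cite{Principal1,Principal2} and \cite{DCCTv2}, that a principal $G$-bundle $P\xrightarrow{\pi}M$ classified by $f:M\rightarrow\mathbf{B}G$ is equivalently the principal $G$-action $\rho$ on $P$ together with the identification $P/\!/\!_\rho\,G\,\simeq\,M$ of its homotopy quotient with the base, $\pi$ being the quotient map; this is the general form of the identification $\M/\!/\!_\rho\,\BU\simeq M$ invoked in remark~\ref{rem:geomsc}. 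In the situation at hand the relevant $G$-action on the coefficient stack $\mathbf{S}$ is trivial --- this is exactly the higher cylindricity of postulate~\ref{post3}, and it is automatic in the geometric applications since the transition data of $T\mathcal{M}$ descend to $M$ (lemma~\ref{thm:dtbild}) --- so the hypothesis that $s:P\rightarrow\mathbf{S}$ is $G$-equivariant says precisely that $s$ is an object of the $\infty$-groupoid $\mathbf{H}_{G}(P,\mathbf{S})$ of $G$-equivariant maps.

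The core of the argument is the chain of equivalences
\begin{equation*}
    \mathbf{H}_{G}\big(P,\,\mathbf{S}\big)\;\simeq\;\mathbf{H}\big(P/\!/\!_\rho\,G,\,\mathbf{S}\big)\;\simeq\;\mathbf{H}\big(M,\,\mathbf{S}\big),
\end{equation*}
where the first equivalence is the universal property of the homotopy quotient (it is the colimit of the $\mathbf{B}G$-shaped diagram given by $\rho$, so maps out of it land in the $G$-homotopy-fixed points of $\mathbf{H}(P,\mathbf{S})$, i.e.\ in the $G$-equivariant maps) and the second is $P/\!/\!_\rho\,G\simeq M$. I would \emph{define} $s/G$ to be the image of $s$ under this composite, i.e.\ the descent of $s$ along $\pi:P\rightarrow M$.

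It then remains to check that $s/G$ agrees with the Higher Kaluza-Klein reduction of definition~\ref{def:dimred}. The general reduction produces a section $[G,s]/G$ of $[G,\mathbf{S}]/G\rightarrow\mathbf{B}G$ over $f$; the plan is to show that for an equivariant $s$ this section factors through the canonical ``constant maps'' substack $\mathbf{S}/G\hookrightarrow[G,\mathbf{S}]/G$ (induced by the $G$-equivariant inclusion $\mathbf{S}\hookrightarrow[G,\mathbf{S}]$ of constant maps), and that on this substack the datum becomes $(s/G,\,f)$; since $G$ acts trivially on $\mathbf{S}$ we have $\mathbf{S}/G\simeq\mathbf{S}\times\mathbf{B}G$, so a section over $f$ landing there is exactly a plain structure $M\rightarrow\mathbf{S}$. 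Concretely, in a local trivialization $P|_{U}\simeq U\times G$ the map $s$ becomes a $g$-independent map $U\times G\rightarrow\mathbf{S}$, hence its adjunct $U\rightarrow[G,\mathbf{S}]$ lands in the constant maps and equals $u\mapsto s(u,e)$; these local maps glue to $s/G:M\rightarrow\mathbf{S}$. This is precisely the computation already performed for the doubled metric in remark~\ref{rem:geomsc}.

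The step I expect to be the main obstacle is making this last identification coherent rather than merely local: one must track the $G$-action on the internal hom $[G,\mathbf{S}]$ (induced by right translation on $G$), verify that it restricts to the trivial action on the constant-maps substack, and check that the adjunction $\mathbf{H}(P,\mathbf{S})\simeq\mathbf{H}_{/\mathbf{B}G}(M,[G,\mathbf{S}]/G)$ of definition~\ref{def:dimred} is compatible with the forgetful maps to $\mathbf{H}_{/\mathbf{B}G}$. Everything else is formal manipulation of colimits and base change in the $\infty$-topos $\mathbf{H}$, but this coherence bookkeeping is where the genuine work lies.
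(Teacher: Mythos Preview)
Your proposal is correct and follows the same logical skeleton as the paper: equivariance of $s$ is taken to mean that $s$ descends along $\pi:P\rightarrow P/\!/_\rho G\simeq M$ to a map $s/G:M\rightarrow\mathbf{S}$, and the Higher Kaluza-Klein reduction $[G,s]/G$ then coincides with this descended map. The paper's own proof is a two-line affair that simply states the definition of equivariance and asserts $[G,s]/G\cong s/G$ without further argument; you have supplied exactly the justification the paper omits, namely that the adjunct $M\rightarrow[G,\mathbf{S}]/G$ of an equivariant $s$ factors through the constant-maps substack $\mathbf{S}\hookrightarrow[G,\mathbf{S}]$, and your caveat about the coherence bookkeeping is well placed but not something the paper itself addresses.
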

\begin{proof}
Structure $P\xrightarrow{s}\mathbf{S}$ is equivariant if there exists a map $P/G\simeq M\xrightarrow{s/G}\mathbf{S}$ such that $s/G \circ \pi \cong s$, where $\pi:P\rightarrow M$ is the bundle projection. But this means that $[G,s]/G\cong s/G$.
\end{proof}

\noindent In the following discussion we will apply this abstract definition to the concrete cases of ordinary Kaluza-Klein Theory and finally of the metric doubled space.

\begin{example}[Ordinary Kaluza-Klein reduction of the circle bundle]
Before looking at what happens to doubled space let us reformulate ordinary Kaluza-Klein reduction. Let us consider a circle bundle $P\rightarrow M$ and an equivariant Riemannian metric $\mathcal{H}$ on $P$. We have the reduction
\begin{equation}
    \left(\begin{tikzcd}[row sep=5ex, column sep=4ex]
    P \arrow[r, "\mathcal{H}"] & \mathbf{Orth}(TP)
    \end{tikzcd}\right) \quad\mapsto\quad
    \left(\begin{tikzcd}[row sep=5ex, column sep=8ex]
    & \mathbf{Orth}(TM)\times\BU \arrow[d, "\ast\,\times\,\mathrm{frgt}"]\\
    M \arrow[ru, "(g\text{,}A\text{,}f)"]\arrow[r, "f"'] & \mathbf{B}U(1)
    \end{tikzcd}\right)
\end{equation}
Let us recall that $\mathfrak{at}(P)$ is locally given by sections of $TU_\alpha\times\mathbb{R}$ patched on overlaps $U_\alpha\cap U_\beta$ by
\begin{equation}
    \mathcal{N}_{\alpha\beta} =   \begin{pmatrix}
 N_{\alpha\beta} & 0 \\
 \mathrm{d}f_{\alpha\beta} & 1
 \end{pmatrix}.
\end{equation}
Hence to have a $U(1)$-equivariant orthogonal structure we need to restrict the structure group $GL(d)\times\mathbb{R}^d\subset GL(d+1)$. Since $\Omega^1(U_\alpha)\cong\Coo(U_\alpha,\mathbb{R}^d)$ are isomorphic we can write vielbein as
\begin{equation}
    \mathcal{E}_{\alpha} =   \begin{pmatrix}
 e_{\alpha} & 0 \\
 A_{\alpha} & 1
 \end{pmatrix}.
\end{equation}
But also $O(d)\subset O(d+1)$. Therefore $e_\alpha = h_{\alpha\beta}\cdot e_\beta\cdot N_{\alpha\beta}$ and $A_\alpha-A_\beta = \mathrm{d}f_{\alpha\beta}$. This means that the cocycle to $\mathbf{Orth}(TP)$ is reduced to a one to $\mathbf{Orth}(TM)\times\BU$ and the map $\mathrm{frgt}$ is just the forgetful functor $\BU\longrightarrow\mathbf{B}U(1)$ which forgets the connection data. Hence, if we call $g_\alpha:=e_\alpha^{\mathrm{T}}e_\alpha$, we can rewrite our metric on each local patch as
\begin{equation}
    \mathcal{H}_\alpha=\mathcal{E}^\mathrm{T}_{\alpha}\mathcal{E}_{\alpha} =   \begin{pmatrix}
 g_\alpha + A^{\mathrm{T}}_\alpha A_\alpha & A^{\mathrm{T}}_\alpha \\
 A_\alpha & 1
 \end{pmatrix}
\end{equation}
satisfying $\mathcal{H}_\alpha=\mathcal{N}_{\alpha\beta}^{\mathrm{T}}\mathcal{H}_\beta\mathcal{N}_{\alpha\beta}$ on two-fold overlaps of patches $U_\alpha\cap U_\beta$. This assures that $g$ is a global Riemannian metric on $M$ and the $1$-form is a $U(1)$-gauge field patched by $A_\beta-A_\alpha = \mathrm{d}f_{\alpha\beta}$.
\end{example}

\noindent Let us now Higher Kaluza-Klein reduce our doubled metric structure in a totally analogous way.

\begin{theorem}[Higher Kaluza-Klein reduction of the doubled space]\label{bosfieldsfromgeom}
A metric doubled space $(\mathcal{M},\mathcal{H})$ (definition \ref{defdoubledmetric}) which satisfies global strong constraint (definition \ref{post3}) reduces to a bosonic supergravity background $M$ with a Riemannian metric $g$ and a gerbe structure with connection $(B_\alpha,\Lambda_{\alpha\beta},G_{\alpha\beta\gamma})$. In diagrams we have the following reduction
\begin{equation}
    \left(\begin{tikzcd}[row sep=5ex, column sep=4ex]
    \mathcal{M} \arrow[r, "\mathcal{H}"] & \mathbf{Orth}(T\mathcal{M})
    \end{tikzcd}\right) \quad\mapsto\quad
    \left(\begin{tikzcd}[row sep=5ex, column sep=8ex]
    & \mathbf{Orth}(TM)\times\mathbf{B}^2U(1)_{\mathrm{conn}} \arrow[d, "\ast\,\times\,\mathrm{frgt}"]\\
    M \arrow[ru, "(g\text{,}B\text{,}\Lambda\text{,}G)"]\arrow[r, "(\Lambda\text{,}G)"'] & \mathbf{B}\big(\BU\big)
    \end{tikzcd}\right)
\end{equation}
\end{theorem}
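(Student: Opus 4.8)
The plan is to apply the Higher Kaluza-Klein reduction of definition~\ref{def:dimred} to the principal $\BU$-bundle $\mathcal{M}\xrightarrow{\bbpi}M$, whose classifying map is $f=(\Lambda,G):M\rightarrow\mathbf{B}(\BU)$, with target stack $\mathbf{S}=\mathbf{Orth}(T\mathcal{M})$. By the global strong constraint (postulate~\ref{post3}) the structure $\mathcal{M}\xrightarrow{\mathcal{H}}\mathbf{Orth}(T\mathcal{M})$ is $\BU$-equivariant, so — by the Higher Kaluza-Klein reduction of equivariant structures proved just above — the reduction descends to a genuine structure $M\rightarrow\mathbf{Orth}(T\mathcal{M})/\BU$ on the base, and it only remains to identify the reduced moduli stack $[\BU,\mathbf{Orth}(T\mathcal{M})]/\BU$, sitting over $\mathbf{B}(\BU)$, with $\mathbf{Orth}(TM)\times\mathbf{B}^2U(1)_{\mathrm{conn}}$ equipped with the map $\ast\times\mathrm{frgt}$, and to read off the local data $(g,B,\Lambda,G)$.

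First I would analyse the reduction locally. By lemma~\ref{thm:dtbild} the doubled tangent bundle has structure group $GL(d)\ltimes\mathbf{B}\mathbb{R}_{\mathrm{conn}}$, with transition functions $\mathcal{N}_{\alpha\beta}$ assembled from the transition functions $N_{\alpha\beta}$ of $TM$ and the $\BU$-twisting data $(\mathcal{L}_X\Lambda_{\alpha\beta},\mathcal{L}_XG_{\alpha\beta\gamma})$. Exactly as in the ordinary Kaluza-Klein reduction of a circle bundle — where $U(1)$-equivariance of the metric forces the reductions $GL(d)\times\mathbb{R}^d\subset GL(d+1)$ and $O(d)\subset O(d+1)$, producing a block-triangular vielbein $\bigl(\begin{smallmatrix}e_\alpha&0\\ A_\alpha&1\end{smallmatrix}\bigr)$ — here $\BU$-equivariance of $\mathcal{H}$ forces the generalized vielbein on each $T^\ast U_\alpha$ to split into a \emph{geometric vielbein} $e_\alpha\in\Coo(U_\alpha,GL(d))$ and a \emph{$B$-shift} governed by a local $2$-form $B_\alpha\in\Omega^2(U_\alpha)$, both depending only on $x\in U_\alpha$; writing $g_\alpha:=e_\alpha^{\mathrm{T}}e_\alpha$, the restriction of $\mathcal{H}$ to $T^\ast U_\alpha$ is the standard DFT generalized metric parametrized by $(g_\alpha,B_\alpha)$.

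Next I would glue these local data. The geometric vielbeins are patched by $e_\alpha=h_{\alpha\beta}\cdot e_\beta\cdot N_{\alpha\beta}$ with $h_{\alpha\beta}\in\Coo(U_\alpha\cap U_\beta,O(d))$ and $h_{\alpha\gamma}=h_{\alpha\beta}\cdot h_{\beta\gamma}$, which by example~\ref{ex:orth} is precisely a cocycle $M\rightarrow\mathbf{Orth}(TM)$, so the $g_\alpha$ assemble into a global Riemannian metric $g$ on $M$. Compatibility of the $B$-shift block with $\mathcal{N}_{\alpha\beta}$ forces $B_\beta-B_\alpha=-\mathrm{d}\Lambda_{\alpha\beta}$ on two-fold overlaps — equivalently, as in lemma~\ref{thm:gerbeconn}, that $\omega_B=\mathrm{d}\tilde{x}_\alpha-B_\alpha$ is the global connection $2$-form of $\mathcal{M}$. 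Together with the defining relations \eqref{eq:localdatagerbe} of the doubled space, $\Lambda_{\alpha\beta}+\Lambda_{\beta\gamma}+\Lambda_{\gamma\alpha}=\mathrm{d}G_{\alpha\beta\gamma}$ and $G_{\alpha\beta\gamma}-G_{\beta\gamma\delta}+G_{\gamma\delta\alpha}-G_{\delta\alpha\beta}\in2\pi\mathbb{Z}$, the triple $(B_\alpha,\Lambda_{\alpha\beta},G_{\alpha\beta\gamma})$ is exactly a cocycle $M\rightarrow\mathbf{B}^2U(1)_{\mathrm{conn}}$ — a gerbe with connection, with curvature the global $3$-form $H=\mathrm{d}B_\alpha=-\mathrm{d}\omega_B$ — while composing with the forgetful functor $\mathrm{frgt}:\mathbf{B}^2U(1)_{\mathrm{conn}}\rightarrow\mathbf{B}(\BU)$ discards the curving $B_\alpha$ and returns precisely the classifying cocycle $(\Lambda,G)$ of $\mathcal{M}$. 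Hence the reduced datum lives over $f$ in the slice $\mathbf{H}_{/\mathbf{B}(\BU)}$ and is exactly the diagram in the statement, with $(g,B,\Lambda,G)$ a Riemannian metric together with a gerbe with connection on the supergravity background $M$. Finally I would check that the morphisms of the reduced groupoid are the expected gauge transformations: diffeomorphisms acting on $g$ together with the global $B$-shifts $B_\alpha\mapsto B_\alpha+b$ with $b\in\Omega^2_{\mathrm{cl}}(M)$, coming from the principal action of $\mathcal{M}$ (remark~\ref{rem:globgauge}).

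I expect the main obstacle to be the internal-hom computation identifying $[\BU,\mathbf{Orth}(T\mathcal{M})]/\BU$ with $\mathbf{Orth}(TM)\times\mathbf{B}^2U(1)_{\mathrm{conn}}$: one must verify that the curving $2$-form $B_\alpha$ — which is \emph{not} part of the defining cocycle $(\Lambda_{\alpha\beta},G_{\alpha\beta\gamma})$ of $\mathcal{M}$ — is precisely the data supplied by the fibre ($\tilde{x}$) components of the generalized vielbein, so that traversing the triangle reproduces the original classifying map $f$ rather than a twisted variant. This relies on the double-dimensional-reduction formalism of \cite{DCCTv2} and on care with the gerbe sign conventions of lemma~\ref{thm:gerbeconn}; the remaining verifications — that $\mathcal{H}_\alpha=\mathcal{N}_{\alpha\beta}^{\mathrm{T}}\mathcal{H}_\beta\mathcal{N}_{\alpha\beta}$ unpacks into the cocycle conditions above — are routine.
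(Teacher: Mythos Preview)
Your proposal is correct and follows essentially the same route as the paper: reduce the structure group of $T\mathcal{M}$, split the generalized vielbein into a block-triangular form parametrized by $(e_\alpha,B_\alpha)$, and unpack $\mathcal{H}_\alpha=\mathcal{N}_{\alpha\beta}^{\mathrm{T}}\mathcal{H}_\beta\mathcal{N}_{\alpha\beta}$ to get a global Riemannian metric $g$ together with the gerbe-connection patching $B_\beta-B_\alpha=\mathrm{d}\Lambda_{\alpha\beta}$. The one step the paper makes more explicit than you do is the group-theoretic justification for the block form: it notes that the transition functions $\mathcal{N}_{\alpha\beta}$ land in the geometric subgroup $GL(d)\ltimes\wedge^2\mathbb{R}^d\subset O(d,d)$, so the vielbein can be taken $O(d,d)$-valued as $\mathcal{E}_\alpha=\bigl(\begin{smallmatrix}e_\alpha&0\\-e_\alpha^{-\mathrm{T}}B_\alpha&e_\alpha^{-\mathrm{T}}\end{smallmatrix}\bigr)$, and then the identity $O(d,d)\cap O(2d)=O(d)\times O(d)$ breaks the residual $O(2d)$ frame symmetry to exactly the coset you need --- whereas you assert this split only ``exactly as in the ordinary Kaluza-Klein reduction''.
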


\begin{proof}
As we explained in lemma \ref{thm:dtbild} the local data of a Courant $2$-algebroid is given by a cocycle $N_{\alpha\beta}:M\longrightarrow\mathbf{B}GL(d)$ which describes the tangent bundle $TM$ and the cocycle $\mathrm{d}\Lambda_{\alpha\beta}:M\longrightarrow\mathbf{B}\wedge^2\mathbb{R}^d$. Hence its sections $\mathfrak{at}(\mathcal{M})\cong\mathfrak{X}(M)\ltimes\mathbf{H}(M,\mathbf{B}\mathbb{R}_{\mathrm{conn}})$ are patched by transition functions in $\Coo\big(U_\alpha\cap U_\beta,\, GL(d)\ltimes\wedge^2\mathbb{R}^d\big)$ where $GL(d)\ltimes\wedge^2\mathbb{R}^d \subset O(d,d)$ is exactly the so-called geometric group. Hence we can rewrite the transition functions as $O(d,d)$-valued functions on overlaps of patches, up to gauge transformations of the cocycle $(\Lambda_{\alpha\beta},G_{\alpha\beta\gamma})$, as
\begin{equation}
    \mathcal{N}_{\alpha\beta} =   \begin{pmatrix}
 N_{\alpha\beta} & 0 \\
 \mathrm{d}\Lambda_{\alpha\beta} & N_{\alpha\beta}^{-\mathrm{T}}
 \end{pmatrix},
\end{equation}
Consequently we can write the vielbein as a local $O(d,d)$-valued function on each patch $U_\alpha$ as
\begin{equation}
    \mathcal{E}_\alpha =   \begin{pmatrix}
 e_\alpha & 0 \\
 -e^{-\mathrm{T}}_\alpha B_\alpha & e^{-\mathrm{T}}_\alpha
 \end{pmatrix}
\end{equation}
where $e_{\alpha}$ and $B_\alpha$ are respectively a $GL(d)$-valued and a $\wedge^2\mathbb{R}^d$-valued function on patches $U_\alpha$. Since $O(d,d)\cap O(2d) = O(d)\times O(d)$, the local $O(2d)$ symmetry of the veilbein breaks to $O(d)\times O(d)$. Hence we can write the doubled metric as
\begin{equation}\label{eq:doubledmetric0}
    \mathcal{H}_\alpha=\mathcal{E}^\mathrm{T}_{\alpha}\mathcal{E}_{\alpha} =   \begin{pmatrix}
 g_\alpha - B^{ }_\alpha g^{-1}_\alpha B_\alpha & B^{ }_\alpha g^{-1}_\alpha \\
 -g^{-1}_\alpha B_\alpha & g^{-1}_\alpha
 \end{pmatrix}
\end{equation}
where we called the symmetric matrix $g_\alpha:=e_\alpha^\mathrm{T}e_\alpha$. On two-fold overlaps of patches $U_\alpha\cap U_\beta$ the doubled metric is hence patched by $\mathcal{H}_\alpha = \mathcal{N}^{\mathrm{T}}_{\alpha\beta}\mathcal{H}_\beta \mathcal{N}_{\alpha\beta}$, which assures that $g$ is a globally defined tensor on $M$ and that $B_\alpha$ is patched by $B_\beta-B_\alpha = \mathrm{d}\Lambda_{\alpha\beta}$ and hence it is a gerbe connection. 
Therefore $\mathbf{Orth}(T\mathcal{M})$ breaks to $\mathbf{Orth}(TM)\times\mathbf{B}^2U(1)_{\mathrm{conn}}$ and the map $\mathrm{frgt}$ is just the forgetful functor $\mathbf{B}^2U(1)_{\mathrm{conn}}\rightarrow\mathbf{B}(\BU)$ which forgets the connection.
\end{proof}

\noindent The moduli space, which is locally given by $\Coo\big(U_\alpha,\,GL(2d)/O(2d)\big)$, is therefore broken to $\Coo\big(U_\alpha,\,O(d,d)/\big(O(d)\times O(d)\big)\big)$ and glued by the transition functions $\mathcal{N}_{\alpha\beta}$ of the Courant $2$-algebroid $\mathfrak{at}(\mathcal{M})$. Hence we recovered Generalized Geometry by higher Kaluza-Klein reduction.

\begin{remark}[Strong constrained doubled metric]
By using gerbe connection $\omega_B\in\Omega^2(\mathcal{M})_{\mathrm{glob}}$ of lemma \ref{thm:gerbeconn} we are able to write explicitly the doubled metric in \eqref{eq:doubledmetric0} as a globally defined tensor $\mathcal{H}\in\Omega^2(\mathcal{M})^{\odot 2}_{\mathrm{glob}}$ on the doubled space
\begin{equation}\label{eq:doubledmetric}
    \mathcal{H} = g \oplus g^{\mu\nu}(\mathrm{d}\tilde{x}_\mu + B_{\mu\lambda}\mathrm{d}x^\lambda)\otimes(\mathrm{d}\tilde{x}_\nu + B_{\nu\lambda}\mathrm{d}x^\lambda),
\end{equation}
which is familiar to para-Hermitian framework. 
Notice the gerbe connection $\omega_B = \mathrm{d}\tilde{x}_{\alpha}-B_{\alpha}$ plays an analogous role of the connection $1$-form of a circle bundle in a Kaluza-Klein metric.
\end{remark}

\begin{remark}[Check: invariance of doubled metric]
For any gauge transformation, i.e. circle bundle $(\eta_\alpha,\eta_{\alpha\beta})\in\mathbf{H}(M,\BU)$, sections transform according to $\tilde{x}_\alpha\mapsto\tilde{x}_\alpha+\eta_\alpha$ on each patch, while the connection to $B_\alpha\mapsto B_\alpha + \eta_\alpha$. Hence doubled metric \eqref{eq:doubledmetric} is invariant.
\end{remark}

\begin{remark}[Isometry $2$-group of the doubled metric]\label{rem:isometry}
Given a metric doubled space $(\mathcal{M},\mathcal{H})$ (definition \ref{defdoubledmetric}) which satisfies global strong constraint (definition \ref{post3}) there is a sub-$2$-group $\mathbf{Iso}(\mathcal{M},\mathcal{H})\subset \mathbf{Aut}_/(\Lambda,G) = \mathrm{Diff}(M)\ltimes\mathbf{H}(M,\,\BU)$ of automorphisms of the doubled space which preserve the doubled metric structure:
\begin{equation}
\begin{tikzcd}[row sep=scriptsize, column sep=5ex]
   1 \arrow[r] & \mathbf{H}(M,\,\BU)\, \arrow[r, hook] & \,\mathbf{Iso}(\mathcal{M},\mathcal{H})\, \arrow[r, two heads] & \, \mathrm{Iso}(M,g)\arrow[r] & 1.
\end{tikzcd}
\end{equation}
i.e. of automorphisms covering the group of isometries of the Riemannian base manifold $(M,g)$.
\end{remark}

\begin{remark}[Generalized vielbein]\label{rem:genvielbein}
There are $2d$ global sections of $\mathfrak{at}(\mathcal{M})_{\mathrm{hor}}$ (see remark \ref{rem:horizontalsec}) giving the generalized veilbein $\mathcal{E}_\alpha$ in terms of horizontal generalized vectors
\begin{equation}
    \begin{aligned}
    \mathbbvar{e}_\mathrm{M} :=\begin{cases}(e_\mu + \iota_{e_\mu}B_\alpha, \, 0), & \mathrm{M} = \mu\\(e^\mu, \, 0), & \mathrm{M} = d+\mu\end{cases}.
    \end{aligned}
\end{equation}
where $\{e_\mu\}\subset\Omega^1(M)$ are the $d$ vielbein $1$-forms of the Riemannian metric $g=\delta_{\mu\nu}\,e^\mu\otimes e^\nu$ on the base manifold $M$, while $\{e_\mu\}\subset\mathfrak{X}(M)$ are their dual veilbein vectors.
\end{remark}

\begin{digression}[Recovering Born geometry]
The doubled metric we defined (see definition \ref{defdoubledmetric}) can be immediately reduced to the one introduced  in para-Hermitian geometry by \cite{Svo18} and further clarified by \cite{MarSza18}, \cite{MarSza19}. Recall the setting of digression \ref{digd}. Our doubled metric $\mathcal{H}$ on each patch $T^\ast U_\alpha$ of the doubled space automatically satisfy
\begin{equation}
    \eta^{-1}_B\mathcal{H} = \mathcal{H}^{-1}\eta_B, \qquad \omega^{-1}_B\mathcal{H} = -\mathcal{H}^{-1}\omega_B.
\end{equation}
It is immediate to check it by using the basis of remark \ref{rem:genvielbein}, so that we obtain
\begin{equation}
    (\eta_B)_{\mathrm{MN}} = \begin{pmatrix}0 & 1 \\1& 0 \end{pmatrix}, \quad (\omega_B)_{\mathrm{MN}} = \begin{pmatrix}0 & 1 \\-1 & 0 \end{pmatrix}, \quad \mathcal{H}_{\mathrm{MN}}= \begin{pmatrix}g & 0 \\0 & g^{-1} \end{pmatrix}.
\end{equation}
A diffeomorphism of $T^\ast U_\alpha$ which preserves both the para-Hermitian structure and the generalized metric is then an ordinary isometry of the Riemannian metric $g$ on $U_\alpha$. In terms of structure groups we have the familiar expression
\begin{equation}
O(d,d) \,\cap\, Sp(2d,\mathbb{R}) \,\cap\, O(2d) \,=\, O(d).
\end{equation}
\end{digression}

\noindent Let us now conclude this subsection by giving a quick example of a relevant Higher Kaluza-Klein reduction of some structure which is not the doubled metric.

\begin{example}[D-branes as objects in the doubled space]
We can define the stack $\mathbf{K}U_{\mathrm{conn}}$ by
\begin{equation}
    \mathbf{K}U_{\mathrm{conn}} \; := \; \prod_{k\in\mathbb{N}}\mathbf{B}^{2k+1}U(1)_{\mathrm{conn}}
\end{equation}
which is a moduli stack of abelian $2k$-gerbes for any $k\in\mathbb{N}$. Now, an equivariant map from the doubled space to this stack is immediately Higher Kaluza-Klein reduced as it follows:
\begin{equation}
    \left(\begin{tikzcd}[row sep=5ex, column sep=3ex]
    \mathcal{M} \arrow[r, " "] & \mathbf{K}U_{\mathrm{conn}}
    \end{tikzcd}\right) \quad\mapsto\quad
    \left(\begin{tikzcd}[row sep=5ex, column sep=8ex]
    & \qquad\;\, \mathbf{K}U_{\mathrm{conn}}/\!/\BU \arrow[d, " "]\\
    M \arrow[ru, "\text{KR, RR fields}"]\arrow[r, "(\Lambda\text{,}G)"'] & \mathbf{B}\big(\BU\big)
    \end{tikzcd}\right)
\end{equation}
The reduced map $M\rightarrow\mathbf{K}U_{\mathrm{conn}}/\!/\BU$ is then the \v{C}ech cocycle of $2k$-gerbes on $M$ twisted by the cocycle $(\Lambda_{\alpha\beta},G_{\alpha\beta})$ of the Kalb-Ramond field. Their curvatures will be then the usual
\begin{equation}
    \begin{aligned}
        \mathrm{d}H =0, \quad \mathrm{d}F_{D0} =0, \quad \mathrm{d}F_{D2} + F_{D0}\wedge H=0, \\
        \mathrm{d}F_{D4} + F_{D2}\wedge H=0, \quad \mathrm{d}F_{D6} + F_{D4}\wedge H=0, \quad \mathrm{d}F_{D8} + F_{D6}\wedge H=0,
\end{aligned}
\end{equation}
for the RR fields for Type IIA String Theory, as explained in a more general setting and formally by \cite[p.22]{FSS18x}. This means that RR fields are simpler if they are thought as fields on the doubled space rather than on the base manifold. This appears consistent with the unified algebra description of spacetime and branes by Lie algebra extensions in \cite{FSS18}.
\end{example}

\section{Application: global geometry of T-duality}

In this section we will derive T-duality by Kaluza-Klein reducing our doubled space, exactly in the spirit of \cite{HohSam13KK}. However, since in our definition the doubled space is a stack, we will retain also global higher gauge data on the base manifold, which is impossible with bare manifolds.
From the reduction of the doubled space we will recover the familiar global geometry of abelian T-duality developed by \cite{Bou03}, \cite{Bou03x}, \cite{Bou03xx}, \cite{Bou04} and \cite{Bou08}.
In the last two subsections we will widen the discussion to non-abelian T-duality (see \cite{Bug19}) and to the general Poisson-Lie T-duality by briefly investigating their global geometry in this context.

\subsection{Cascade of Higher Kaluza-Klein reductions}

In this first subsection we will introduce the concept of cascade of Higher Kaluza-Klein reductions. These become possible when the base manifold $M$ of the doubled space $\mathcal{M}\xrightarrow{\bbpi}M$ is itself a torus bundle and hence we can further perform dimensional reductions. Let us first look at a simple example of cascade of ordinary Kaluza-Klein reductions to gain intuition about it.

\begin{example}[Cascade of ordinary Kaluza-Klein reductions]\label{ex:cascade}
Let $P\rightarrow M$ be a circle bundle over a manifold $M$. If $\mathcal{H}$ is a $U(1)$-invariant Riemannian metric on $P$ it can be Kaluza-Klein reduced to a metric $g$ and a $U(1)$-gauge connection $A_\alpha$ on $M$. Now let $M\rightarrow M_0$ be a torus $T^{n}$-bundle over a base manifold $M_0$. If the starting circle bundle is equivariant under the torus action, then its total space is actually a principal $T^{n+1}$-bundle on $M_0$ and it can be written as
\begin{equation}\label{eq:circbundred}
   P = P_0\times_{M_0}M 
\end{equation}
where $P_0$ is some circle bundle on $M_0$. Because of the equivariance we can further KK reduce the connection $A_\alpha$ of $P$ to the connection $A^{(1)}_\alpha$ of $P_0$ and $n$ global scalar fields $A^{(0)}_{i}$. Analogously we can reduce the metric $g$ on $M$ to a metric $g^{(2)}$ on $M_0$ and $n(n+1)/2$ global scalar fields $g^{(0)}_{ij}$ on $M_0$. This is equivalently the result of a Kaluza-Klein reduction from $P$ to the base $M_0$.
\end{example}

\noindent Let us now investigate what happens for the doubled space.

\begin{remark}[Cascade of Higher Kaluza-Klein reductions]
Let us consider a doubled space $\mathcal{M}\xrightarrow{\bbpi} M$ where the spacetime $M$ is itself a principal $T^n$-bundle on a smooth base manifold ${M_0}$
\begin{equation}
\begin{tikzcd}[row sep=5.5ex, column sep=scriptsize]
   T^n \arrow[r,hook] & M \arrow[d,two heads, "\pi"] \\
   & M_0
\end{tikzcd}
\end{equation}
The doubled space $\M$ is then the total space of a principal $\BU$-bundle over a $T^n$-bundle over the base manifold $M_0$. Generalizing the ordinary example \ref{ex:cascade} we will perform a cascade of Higher Kaluza-Klein reductions of the doubled space $\mathcal{M}$ to the base manifold $M_0$. In particular, we will first have a Higher Kaluza-Klein reduction, as explained in lemma \ref{bosfieldsfromgeom}, and then an ordinary one from the torus bundle $M$ to $M_0$. Schematically:
\begin{equation}
    \begin{tikzcd}[row sep=scriptsize, column sep=14ex]
    \M \arrow[rr, bend left=50, dotted, "\text{total Higher KK reduction}"]
    \arrow[r,"\bbpi", two heads] \arrow[r, bend left=-55, dotted, "\text{Higher KK reduction}"'] & M \arrow[r,"\pi", two heads] \arrow[r, bend left=-55, dotted, "\text{KK reduction}"'] & M_0
\end{tikzcd}
\end{equation}
Hence we need to look at the following pullback diagrams of stacks
\begin{equation}
\begin{tikzcd}[row sep=14ex, column sep=10ex]
\M \arrow[r]\arrow[d, "\bbpi"] & \ast \arrow[d] & \\
M \arrow[r, "(\Lambda\text{, }G)"]\arrow[d, "\pi"] & \mathbf{B}(\mathbf{B}U(1)_{\mathrm{conn}}) \arrow[r]\arrow[d] & \ast \arrow[d]\\
M_0 \arrow[r, "\text{?}"]\arrow[rr, "f", bend right=30]& \left[T^n,\mathbf{B}(\mathbf{B}U(1)_{\mathrm{conn}})\right]/T^n \arrow[r] & \mathbf{B}T^n
\end{tikzcd}
\end{equation}
Since $M$ is a principal $T^n$-bundle over $M_0$, we can choose a good cover $\mathcal{V}=\{V_\alpha\}$ for $M$ such that $\mathcal{U}=\{U_\alpha\}$ with $U_\alpha=\pi(V_\alpha)$ is a good cover for the base $M_0$. Since we are not working with a good cover for $M$, we will consider differential forms with integral periods. We know from \eqref{eq:doubledmetric} that a doubled metric satisfying the global strong constraint is of the form
\begin{equation}\label{torusdmetric}
    \mathcal{H} = g \oplus g^{\mu\nu}(\mathrm{d}\tilde{x}_{\mu}+B_{\mu\lambda}\mathrm{d}x^{\lambda})\otimes(\mathrm{d}\tilde{x}_{\nu}+B_{\nu\lambda}\mathrm{d}x^{\lambda})
\end{equation}
where $B_\alpha$ is the gerbe connection. In adapted coordinates we can naturally define the principal connection $\xi$ on $M$ by $g(\partial_i,-)=g^{(0)}_{ij}\xi^j$ with $\partial_i := \partial/\partial\theta^i$ in adapted coordinates. By using the torus connection we can split the metric in horizontal and vertical part:
\begin{equation}
    g = g^{(2)} + \langle g^{(0)},\xi\odot\xi\rangle
\end{equation}
where $g^{(2)}$ is an horizontal symmetric tensor and the $g^{(0)}_{ij}$ are $n(n+1)/2$ moduli fields on $M$.
We can use definition \ref{def:dimred} to Higher Kaluza-Klein reduce the doubled space to the base $M_0$ by
\begin{equation}
    \begin{aligned}
    \mathbf{H}\big(M,\,\mathbf{B}(\BU)\big) \,&\simeq\, \mathbf{H}\big(M_0,\,\big[T^n,\,\mathbf{B}(\BU)\big]/T^n\big) \\
    (\Lambda_{\alpha\beta},\,G_{\alpha\beta\gamma}) \,&\mapsto\, (\Lambda^{(0)}_{\alpha\beta},\,\Lambda^{(1)}_{\alpha\beta},\,G_{\alpha\beta\gamma}).
    \end{aligned}
\end{equation}
We can split the curvature $H\in\Omega^3(M)$ of the doubled space in horizontal and vertical parts by
\begin{equation}\label{splitH}
    H = H^{(3)} + \langle H^{(2)}\wedge \xi \rangle + \frac{1}{2}\langle H^{(1)}\wedge\xi\wedge\xi\rangle +\frac{1}{3!}\langle H^{(0)},\xi\wedge\xi\wedge\xi\rangle
\end{equation}
 where $H^{(k)}$ are globally defined $\wedge^{3-k}\mathbb{R}^n$-valued differential forms on the base manifold $M_0$. Now on patches $V_\alpha$ and overlaps $V_\alpha \cap V_\beta$ of $M$ we can use the connection of the torus bundle to split the connections of the gerbe in a horizontal and vertical part
\begin{equation}\label{splithv}
\begin{aligned}
    B_\alpha &= B_\alpha^{(2)} + \langle B_\alpha^{(1)}\wedge \xi \rangle + \frac{1}{2}\langle B_\alpha^{(0)}, \xi\wedge\xi \rangle \\
    \Lambda_{\alpha\beta} &= \Lambda_{\alpha\beta}^{(1)} + \langle \Lambda_{\alpha\beta}^{(0)}, \xi\rangle
\end{aligned}
\end{equation}
where $B^{(k)}_\alpha$, and $\Lambda^{(k)}_{\alpha\beta}$ are all local horizontal differential forms on spacetime $M$. For the following calculations we will follow the ones in \cite{BelHulMin07}. The expression of the curvature of the doubled space becomes
\begin{equation}
    \begin{aligned}
    H^{(0)}_{ijk}&= \partial_{[i}B^{(0)}_{\alpha\,jk]} \\
    H^{(1)}_{ij}&= \mathbf{d}B^{(0)}_{\alpha\,ij} - \partial_{[i}B^{(1)}_{\alpha\,j]} \\
    H^{(2)}_i &= \mathbf{d}B^{(1)}_{\alpha\,i} + \mathcal{L}_{\partial_i}B^{(2)}_{\alpha} -\langle B^{(0)}_\alpha, F\rangle_i \\
    H^{(3)} &= \mathbf{d}B_\alpha^{(2)} - \langle B^{(1)}_\alpha\wedge F\rangle
    \end{aligned}
\end{equation}
where $\mathbf{d}:\Omega^p(M_0)\rightarrow\Omega^{p+1}(M_0)$ is just the exterior derivative on the base manfiold $M_0$. The patching conditions of the connection $2$-form now become 
\begin{equation}
    \begin{aligned}
    B_{\beta\,ij}^{(0)} - B_{\alpha\,ij}^{(0)} &= \partial_{[i}\Lambda^{(0)}_{\alpha\beta\,j]} \\
    B_{\beta\,i}^{(1)} - B_{\alpha\,i}^{(1)} &= \mathbf{d}\Lambda^{(0)}_{\alpha\beta\,i} - \mathcal{L}_{\partial_i}\Lambda^{(1)}_{\alpha\beta} \\
    B_\beta^{(2)} - B_\alpha^{(2)} &= \mathbf{d}\Lambda^{(1)}_{\alpha\beta} + \langle \Lambda^{(0)}_{\alpha\beta}, F\rangle
    \end{aligned}
\end{equation}
where $F=\mathrm{d}\xi\in\Omega^2(M_0)$ is the curvature of the torus bundle $M\xrightarrow{\pi}M_0$. Finally on three-fold overlaps we get
\begin{equation}
    \begin{aligned}
       \Lambda_{\alpha\beta\,i}^{(0)} + \Lambda_{\beta\gamma\,i}^{(0)} + \Lambda_{\gamma\alpha\,i}^{(0)} &= \frac{\partial}{\partial\theta^i}G_{\alpha\beta\gamma} \\
    \Lambda_{\alpha\beta}^{(1)} + \Lambda_{\beta\gamma}^{(1)} + \Lambda_{\gamma\alpha}^{(1)} &= \mathbf{d}G_{\alpha\beta\gamma}
    \end{aligned}
\end{equation}
where $\theta_\alpha^i$ are the adapted coordinates on the torus fiber and $\partial_i := \partial/\partial\theta^i$. 
\end{remark}

\subsection{Topology of a doubled space over a torus-bundle}
As seen in remark \ref{rem:ddds}, doubled spaces over a base manifold $M$ are topologically classified by their Dixmier-Douady class $[H]\in H^3(M,\mathbb{Z})$, which physically encodes the $H$-flux. In this subsection we will closely follow \cite{Bou04} to introduce the machinery needed to describe the topological data of a doubled space on a torus bundle. Let us start from the trivial example.

\begin{example}[Doubled space over trivial torus bundle]
For trivial torus bundles $M=M_0\times T^n$ one can just use K\"unneth theorem to rewrite the $3$rd cohomology group as
\begin{equation}\label{eq:kunneth}
    H^3(M_0\times T^n,\mathbb{Z}) \;\cong \bigoplus_{k=0,1,2,3}H^{3-k}(M_0,\mathbb{Z})\otimes_\mathbb{Z} H^{k}(T^n,\mathbb{Z})
\end{equation}
The cohomology ring of the torus is $H^\ast(T^n,\mathbb{Z})\cong\mathbb{Z}[u_1,\dots,u_n]/\langle u^2_1,\dots,u^2_n\rangle$ where $u_i$ are generators of $H^1(S^1,\mathbb{Z})$. Hence we have that $u_i:=[\mathrm{d}\theta^i]$ for $i=1,\dots,n$ are the generators of $H^1(T^n,\mathbb{Z})$, while the cup products $[\mathrm{d}\theta^i]\smile[\mathrm{d}\theta^j]=[\mathrm{d}\theta^i\wedge\mathrm{d}\theta^j]$ are the generators of $H^2(T^n,\mathbb{Z})$ and so on. Therefore we can expand the Dixmier-Douady class as
\begin{equation}\label{eq:topgerbe}
    [H] = [H^{(3)}] + \, \langle [H^{(2)}]\smile[\mathrm{d}\theta] \rangle \, + \, \frac{1}{2}\langle[H^{(1)}]\smile[\mathrm{d}\theta\wedge\mathrm{d}\theta]\rangle \, + \, \frac{1}{3!}\langle[H^{(0)}]\smile[\mathrm{d}\theta\wedge\mathrm{d}\theta\wedge\mathrm{d}\theta]\rangle
\end{equation}
where $H^{(k)}\in\Omega^k(M_0,\wedge^{3-k}\mathbb{R}^n)$ are differential forms on the base manifold $M_0$ and $\langle-,-\rangle$ is just the contraction of the exterior $\wedge\,\mathbb{R}^n$ algebra indices. Hence the topology of this doubled space is encoded by the cohomology classes $[H^{(k)}]\in H^k(M_0,\wedge^{3-k}\mathbb{Z}^n)$ with $k=0,1,2,3$.
\end{example}

\noindent However this construction cannot be immediately extended to non-trivial torus bundles.

\begin{definition}[$T^n$-invariant forms]
We define $\Omega^{p}(M)_{\mathrm{inv}}$ as the subset of $1$-forms $\alpha\in\Omega^{p}(M)$ such that $\mathcal{L}_{\partial_i}\alpha = 0$ in adapted coordinates $\partial_i := \partial/\partial\theta^i$. Let us define the differential forms
\begin{equation}
    \Omega^{p,q}(M_0,\wedge\,\mathbb{R}^n) \,:=\, \bigoplus_{k=0}^{p} \Omega^{k}(M_0,\,\wedge^{p+q-k}\mathbb{R}^n)
\end{equation}
Let $\xi\in\Omega^1(M,\mathbb{R}^n)$ be the connection $1$-form of the torus bundle $M\xrightarrow{\pi}M_0$. There is a natural isomorphism $\Omega^{p}(M)_{\mathrm{inv}} \cong \Omega^{p,0}(M_0,\wedge\,\mathbb{R}^n)$ given by
\begin{equation}\label{eq:isomorphism}
    \mathcal{I}:\,\, \alpha = \sum_{k=0}^p\frac{1}{(p-k)!}\langle\alpha^{(k)}\wedge \xi\wedge\cdots\wedge\xi\rangle \;\,\,\mapsto\;\,\, \big(\alpha^{(p)},\,\alpha^{(p-1)},\,\cdots,\,\alpha^{(0)}\big) 
\end{equation}
and we call $D := \mathcal{I}\circ \mathrm{d} \circ \mathcal{I}^{-1}$ the differential under the isomorphism. The sequence $\Omega^{\bullet,q}(M_0,\wedge\,\mathbb{R}^n)$ for a fixed $q$ equipped with differential $D$ defines a cochain complex, whose integer cohomology we call $H^{\bullet,q}_D(M_0,\wedge\,\mathbb{Z}^n)$. Therefore we have  $H^p(M,\mathbb{Z})_{\mathrm{inv}} \cong H^{p,0}_D(M_0,\wedge\,\mathbb{Z}^n)$.
\end{definition}

\noindent It is also possible to prove that there is isomorphism $\Omega^{2}(M,\mathbb{R}^n)_{\mathrm{inv}} \cong \Omega^{2,1}(M_0,\wedge\,\mathbb{R}^n)$ for $T^n$-invariant $\mathbb{R}^n$-valued $2$-forms, which imply $H^{2}(M,\mathbb{Z}^n)_{\mathrm{inv}} \cong H^{2,1}_D(M_0,\wedge\,\mathbb{Z}^n)$.

\begin{theorem}[$T^n$-invariant representatives]
Any element of the cohomology group $H^p(M,\mathbb{Z})$ of a $T^n$-bundle $M\xrightarrow{\pi}M_0$ can be represented by a closed $T^n$-invariant form, i.e. there is an isomorphism $H^p(M,\mathbb{Z})\cong H^p(M,\mathbb{Z})_{\mathrm{inv}}$.
\end{theorem}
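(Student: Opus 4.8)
The plan is to prove the stronger statement that the inclusion of the subcomplex of $T^n$-invariant forms into $\Omega^\bullet(M)$ is a quasi-isomorphism, using a fibrewise averaging (Reynolds) operator together with a Cartan-type homotopy formula. Integrality is then automatic, since the construction never alters a cohomology class. First I would set up the averaging operator. The principal action of $T^n=\mathbb{R}^n/\mathbb{Z}^n$ on $M$ gives for each $s\in\mathbb{R}^n$ a diffeomorphism $\rho_s:M\to M$ depending only on $s\bmod\mathbb{Z}^n$, and I define $\mathrm{Av}:\Omega^p(M)\to\Omega^p(M)$ by $\mathrm{Av}(\alpha)=\int_{[0,1]^n}\rho_s^\ast\alpha\,\mathrm{d}^ns$ with the normalized Haar measure. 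Then $\mathrm{Av}(\alpha)$ satisfies $\mathcal{L}_{\partial_i}\mathrm{Av}(\alpha)=0$, i.e. it lies in $\Omega^p(M)_{\mathrm{inv}}$; moreover $\mathrm{Av}$ commutes with $\mathrm{d}$ because each $\rho_s^\ast$ does, and $\mathrm{Av}$ restricts to the identity on $\Omega^\bullet(M)_{\mathrm{inv}}$. Hence $\mathrm{Av}$ factors as $\iota\circ r$, where $\iota:\Omega^\bullet(M)_{\mathrm{inv}}\hookrightarrow\Omega^\bullet(M)$ is the inclusion and $r$ is a cochain map with $r\circ\iota=\mathrm{id}$.

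Next I would construct the chain homotopy between $\iota\circ r=\mathrm{Av}$ and $\mathrm{id}_{\Omega^\bullet(M)}$. Writing $\partial_a$ for the fundamental vector field of the $a$-th circle factor (the $\partial/\partial\theta^a$ in adapted coordinates) and $X_s:=\sum_a s^a\partial_a$, one has $\tfrac{\mathrm{d}}{\mathrm{d}t}\,\rho_{ts}^\ast\alpha=\rho_{ts}^\ast\mathcal{L}_{X_s}\alpha=\rho_{ts}^\ast\big(\mathrm{d}\,\iota_{X_s}\alpha+\iota_{X_s}\mathrm{d}\alpha\big)$; integrating this in $t$ over $[0,1]$ and then in $s$ over $[0,1]^n$, and using that $\rho_{ts}^\ast$ commutes with $\mathrm{d}$, yields $\mathrm{Av}-\mathrm{id}=\mathrm{d}K+K\mathrm{d}$ with $K\alpha:=\int_{[0,1]^n}\!\int_0^1\rho_{ts}^\ast\,\iota_{X_s}\alpha\,\mathrm{d}t\,\mathrm{d}^ns$. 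This is the only computation of substance, and it is routine. It follows immediately that $r$ and $\iota$ induce mutually inverse isomorphisms on cohomology, so the de Rham cohomology of the invariant complex maps isomorphically onto $H^\bullet_{\mathrm{dR}}(M)$; in particular every de Rham class of $M$ admits a closed $T^n$-invariant representative.

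Finally I would pass to the integral refinement. If $\alpha$ is closed with integral periods, then $\mathrm{Av}(\alpha)=\alpha+\mathrm{d}K\alpha$ is cohomologous to $\alpha$, hence has the same periods, hence is a closed $T^n$-invariant form with integral periods representing the same class; this gives surjectivity of $H^p(M,\mathbb{Z})_{\mathrm{inv}}\to H^p(M,\mathbb{Z})$. For injectivity, if an invariant closed $\beta$ equals $\mathrm{d}\gamma$ for some $\gamma\in\Omega^{p-1}(M)$, then $\beta=\mathrm{Av}(\beta)=\mathrm{d}\,\mathrm{Av}(\gamma)$ with $\mathrm{Av}(\gamma)$ invariant, so $\beta$ is already exact in the invariant complex; combined with the fact that $\mathrm{Av}$ preserves integral-period forms modulo exact ones, this identifies the two integral cohomology groups. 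The main obstacle, such as it is, is purely bookkeeping: making sure that $\mathrm{Av}$ and the homotopy $K$ interact correctly with the integral-period refinement of de Rham cohomology rather than with plain $H^\bullet_{\mathrm{dR}}(M)$. Once one observes that $K$ never changes a cohomology class and that $\mathrm{Av}$ fixes invariant forms, this is immediate, and the only topological input used is the connectedness of $T^n$.
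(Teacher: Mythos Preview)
The paper does not actually supply a proof of this lemma; it is stated as a known fact, with the surrounding subsection explicitly following \cite{Bou04}. So there is no ``paper's own proof'' to compare against beyond the implicit appeal to that reference.

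Your argument is the standard one and is correct. The averaging operator $\mathrm{Av}$ together with the Cartan homotopy $K$ is exactly how one proves that for a compact connected Lie group acting smoothly on a manifold the inclusion $\Omega^\bullet(M)_{\mathrm{inv}}\hookrightarrow\Omega^\bullet(M)$ is a quasi-isomorphism; specialising to $T^n$ gives the de Rham statement, and your observation that $\mathrm{Av}(\alpha)-\alpha=\mathrm{d}K\alpha$ is exact, hence period-preserving, is precisely what is needed for the integral refinement. One small comment: the notation $H^p(M,\mathbb{Z})$ in the paper has to be read as the integral lattice inside $H^p_{\mathrm{dR}}(M)$ (closed forms with integral periods modulo exact forms), since torsion classes in genuine singular cohomology are invisible to differential forms and cannot be ``represented by a closed form'' at all. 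You implicitly adopt this reading, which is the only one under which the statement makes sense, so there is no gap---but it would be worth saying so explicitly rather than leaving it to the phrase ``integral-period refinement of de Rham cohomology''.
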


\begin{remark}[Dimensionally reduced Gysin sequence]
Any torus bundle $M\xrightarrow{\pi}M_0$ comes with a long exact sequence, which is called \textit{dimensionally reduced Gysin sequence}. This is given by
\begin{equation*}
    \cdots \rightarrow{} H^p(M_0,\mathbb{Z}) \xrightarrow{\pi^\ast} H^{p,0}_D(M_0,\wedge\,\mathbb{Z}^n) \xrightarrow{\pi_\ast} H^{p-1,1}_D(M_0,\wedge\,\mathbb{Z}^n) \xrightarrow{\langle\,-\,\smile\,[F]\,\rangle} H^{p+1}(M_0,\mathbb{Z}) \rightarrow{} \cdots
\end{equation*}
where $[F]$ is the first Chern class of the bundle, while $\pi^\ast$ on a given representative $\alpha$ is just the injection $\alpha \mapsto (\alpha,0,\cdots,0)$, while $\pi_\ast$ on a given representative $\alpha$ is the integration along each circle of the fiber $S^1_i\subset T^n$, which depends only on the homology class $[S^1_i]\in H_1(M,\mathbb{Z})$ of the circle and not on its particular representative. Hence map $\pi_\ast$ will be given on the representative by $(\alpha^{(p)},\alpha^{(p-1)},\cdots,\alpha^{(0)}) \mapsto (\alpha^{(p-1)},\cdots,\alpha^{(0)})$. 
\end{remark}

\noindent The Dixmier-Douady class $[H]\in H^3(M,\mathbb{Z})$ of a doubled space on $M\xrightarrow{\pi}M_0$ then corresponds to a cohomology class $[(H^{(3)},H^{(2)},H^{(1)},H^{(0)})]\in H^{3,0}_D(M_0,\wedge\,\mathbb{Z}^n)$ given by
\begin{equation}
    H = H^{(3)} + \langle H^{(2)}\wedge\xi \rangle + \frac{1}{2} \langle H^{(1)}_{ij}\wedge\xi\wedge\xi \rangle + \frac{1}{6}\langle H^{(0)},\xi\wedge\xi\wedge\xi \rangle
\end{equation}

\begin{remark}[Closedness]
If $H\in\Omega^3(M)_{\mathrm{inv}}$ is closed on $M$ we have $\mathrm{d}H=0$, which is translated under isomorphism \eqref{eq:isomorphism} to $D(H^{(3)},H^{(2)},H^{(1)},H^{(0)})=0$. Hence we get the equations
\begin{equation}\label{eq:closeness}
    \mathbf{d}H^{(p)} +\langle H^{(p-1)}\wedge F \rangle =0,
\end{equation}
where $\mathbf{d}$ is the differential on the base $M_0$ and $F\in\Omega^2(M_0,\mathbb{R}^n)$ is the curvature of the $T^n$-bundle $M\xrightarrow{\pi}M_0$. Notice we recover the trivial case \eqref{eq:kunneth} for $F=0$. Also notice that $H^{(0)}$ is always closed on $M_0$, while $H^{(1)}$ is closed on $M_0$ either if the torus bundle is trivial or if $H^{(0)}=0$.
\end{remark}

\begin{remark}[Exactness]
If $H\in\Omega^3(U)_{\mathrm{inv}}$ is exact on $U\subset M$, there exists a $2$-form $B\in\Omega^2(U)$ such that $H=\mathrm{d}B$ on $U$, which is translated under isomorphism \eqref{eq:isomorphism} to
\begin{equation}\label{eq:exactness}
    \begin{aligned}
    H^{(0)}_{ijk} \;&=\; \mathcal{L}_{\partial_{[i}}B^{(0)}_{jk]} \\
    H^{(1)}_{ij} \;&=\; \mathbf{d}B^{(0)}_{ij} - \mathcal{L}_{\partial_{[i}}B^{(1)}_{j]} \\
    H^{(2)}_i \;&=\; \mathbf{d}B^{(1)}_i + \mathcal{L}_{\partial_{i}}B^{(2)} -\langle B^{(0)}, F\rangle_i \\
    H^{(3)} \;&=\; \mathbf{d}B^{(2)} - \langle B^{(1)}\wedge F\rangle
    \end{aligned}
\end{equation}
Notice that $B$ is not required to be $T^n$-invariant here.
\end{remark}

\subsection{Geometrization of globally geometric T-duality}
It was understood by \cite{BelHulMin07} that a gerbe structure over a principal torus $T^n$-bundle, if it is equivariant under its principal torus action, automatically defines a principal $T^{2n}$-bundle over its base manifold. This bundle is nothing but the \textit{correspondence space} of a T-duality, also known as \textit{doubled torus bundle} in DFT literature. In this subsection we will explain how the correspondence space can be recovered from our doubled space by Higher Kaluza-Klein reduction and how T-duality is naturally encoded.

\vspace{0.2cm}

\noindent We will call \textit{globally geometric T-duality} the following first simpler application.

\begin{theorem}[Globally geometric T-duality]\label{thm:corrspace}
Let $(\M,\mathcal{H})$ be a metric doubled space which satisfies the strong constraint (postulate \ref{post3}) and such that the base manifold of $\M\xrightarrow{\bbpi}M$ is itself a principal $T^n$-bundle $M\xrightarrow{\pi}M_0$. 
Now if the doubled space $\M\xrightarrow{\bbpi}M$ is also an equivariant bundle under the principal $T^n$-action of $M$, by applying Kaluza-Klein we have the following.
\begin{enumerate}
    \item[(a)] The doubled space takes the form
\begin{equation}\label{eq:emcorrspace}
    \M \,\simeq\, \mathcal{M}_0\times_{M_0}K
\end{equation}
where
\begin{itemize}
    \item $\M_0$ is some doubled space (as defined in postulate \ref{post1}) on the base manifold $M_0$,
    \item $K := M\times_{M_0}\widetilde{M}$ is a principal $T^{2n}$-bundle on the base manifold $M_0$ with first Chern classes $[F]$ and $[\pi_\ast H]$, known in the literature as the \textit{correspondence space} of a couple of T-dual spacetimes $M$ and $\widetilde{M}$, i.e.
\end{itemize}
\begin{equation}\label{diag:corrspace}
\begin{tikzcd}[row sep=5.2ex, column sep=2.5ex]
 & & M\times_{M_0}\widetilde{M} \arrow[dl, "1\otimes\tilde{\pi}"', two heads]\arrow[dr, "\pi\otimes 1", two heads] & & \\
 T^n \arrow[r, hook] & M \arrow[dr, "\pi"', two heads] & & \widetilde{M}\arrow[dl, "\tilde{\pi}", two heads] & T^n \arrow[l, hook']  \\
 & & M_0 & &
\end{tikzcd}
\end{equation}
\item[(b)] The doubled metric $\mathcal{H}$ reduces on the base $M_0$ to a metric $g^{(0)}$, a Kalb-Ramond field $B_\alpha^{(2)}$, a $T^{2n}$-connection $(A^i_\alpha, B^{(1)}_{\alpha i})$ for $K$ and a set of global moduli fields $(g^{(0)}_{ij},B^{(0)}_{ij})$.
\end{enumerate}
\end{theorem}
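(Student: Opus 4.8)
The plan is to derive both claims by feeding the equivariant metric doubled space into the double dimensional reduction of Definition~\ref{def:dimred}, taken with structure group $G=T^{n}$ and coefficient stack $\mathbf{S}=\mathbf{B}(\BU)$, and then matching the reduced data with the classical correspondence-space picture of \cite{BelHulMin07,Bou04}. First I would observe that equivariance of $\M\xrightarrow{\bbpi}M$ under the principal $T^{n}$-action is exactly the statement that the classifying cocycle $(\Lambda,G):M\to\mathbf{B}(\BU)$ is $T^{n}$-invariant; by the theorem on $T^{n}$-invariant representatives its Dixmier--Douady class $[H]\in H^{3}(M,\mathbb{Z})$ (remark~\ref{rem:ddds}) may be taken in the split form \eqref{splitH} and its connective data in the form \eqref{splithv}. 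Applying Definition~\ref{def:dimred} then yields a cocycle $M_{0}\to[T^{n},\mathbf{B}(\BU)]/T^{n}$ over $\mathbf{B}T^{n}$: a $T^{n}$-bundle $M\xrightarrow{\pi}M_{0}$ together with the fibrewise-reduced gerbe.

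The core of part~(a) is a decomposition of the internal-hom stack $[T^{n},\mathbf{B}(\BU)]$ (Definition~\ref{def:inthom}). Iterating the free-loop splitting $[S^{1},\mathbf{B}(\BU)]\simeq\mathbf{B}(\BU)\times\BU$, in which the loop factor records the holonomy of the gerbe around the circle, exhibits $[T^{n},\mathbf{B}(\BU)]$ as a product of $2^{n}$ factors indexed by subsets $S\subseteq\{1,\dots,n\}$, the factor of $S$ being $\mathbf{B}(\BU)$ shifted down $|S|$ times. The $|S|=0$ factor carries the cocycle $(\Lambda^{(1)}_{\alpha\beta},G_{\alpha\beta\gamma})$, which by Postulate~\ref{post1} classifies a doubled space $\M_{0}\to M_{0}$ over $M_{0}$ (of curvature $H^{(3)}$); the $n$ factors with $|S|=1$ carry $n$ circle bundles with connection whose first Chern classes are the components of $[H^{(2)}]=[\pi_{\ast}H]\in H^{2}(M_{0},\mathbb{Z}^{n})$, so that forgetting the connection via $\mathrm{frgt}$ assembles them into a principal $T^{n}$-bundle $\widetilde{M}\xrightarrow{\tilde\pi}M_{0}$ with $c_{1}(\widetilde{M})=[\pi_{\ast}H]$, i.e. the T-dual torus bundle of \cite{Bou04}; the factors with $|S|=2,3$ are the moduli-type data $H^{(1)},H^{(0)}$. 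Forming $K:=M\times_{M_{0}}\widetilde{M}$ then gives a principal $T^{2n}$-bundle over $M_{0}$ with first Chern class $[F]\oplus[\pi_{\ast}H]$, which is diagram~\eqref{diag:corrspace}. Finally, pulling the reduced cocycle back along $K\to M_{0}$ trivialises both the $M$- and $\widetilde{M}$-torus directions of its data, so that $\M$ becomes the pullback of $\M_{0}$; this is the equivalence $\M\simeq\M_{0}\times_{M_{0}}K$ of \eqref{eq:emcorrspace}.

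Part~(b) I would obtain by composing this with an ordinary Kaluza--Klein reduction, exactly as in Example~\ref{ex:cascade}. Theorem~\ref{bosfieldsfromgeom} already reduces the strongly-constrained doubled metric $\mathcal{H}$ on $\M$ to a Riemannian metric $g$ and a gerbe connection $(B_{\alpha},\Lambda_{\alpha\beta},G_{\alpha\beta\gamma})$ on $M$; since all of this data is $T^{n}$-equivariant, one may use the principal connection $\xi$ of $M\xrightarrow{\pi}M_{0}$ to split $g=g^{(2)}+\langle g^{(0)},\xi\odot\xi\rangle$ and $B_{\alpha}=B^{(2)}_{\alpha}+\langle B^{(1)}_{\alpha}\wedge\xi\rangle+\tfrac{1}{2}\langle B^{(0)}_{\alpha},\xi\wedge\xi\rangle$ as in \eqref{splithv}. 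The horizontal pieces $g^{(2)}$ and $B^{(2)}_{\alpha}$ are a metric and a Kalb--Ramond field on $M_{0}$, the pair $(A^{i}_{\alpha},B^{(1)}_{\alpha i})$ — where $A^{i}_{\alpha}$ is the local connection form of $M$ — is precisely a principal $T^{2n}$-connection on $K$, and $(g^{(0)}_{ij},B^{(0)}_{ij})$ are global moduli scalars on $M_{0}$; together with part~(a) this is the reduced content of $(\M,\mathcal{H})$.

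The step I expect to be the main obstacle is the honest verification, with connective structure kept track of, that the $|S|=1$ factor of $[T^{n},\mathbf{B}(\BU)]$ reproduces the form-level prescription of \cite{BelHulMin07}: one must check that the integration-along-the-fibre map of the dimensionally-reduced Gysin sequence carries the reduced gerbe cocycle to the transition cocycle of $\widetilde{M}$, giving $c_{1}(\widetilde{M})=[\pi_{\ast}H]$ on the nose, and that $H^{(3)}$ is genuinely closed on $M_{0}$ — which is where the \emph{globally geometric} hypothesis enters, forcing $\langle[\pi_{\ast}H]\smile[F]\rangle=0$ so that $\M_{0}$ is an honest doubled space rather than a non-geometric one.
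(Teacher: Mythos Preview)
Your approach via the cyclification functor $[T^{n},-]/T^{n}$ and the iterated free-loop decomposition $[S^{1},\mathbf{B}(\BU)]\simeq\mathbf{B}(\BU)\times\BU$ is correct in outline but is a genuinely different route from the paper's proof. The paper works entirely in explicit \v{C}ech data: it splits the transition functions $(\Lambda_{\alpha\beta},G_{\alpha\beta\gamma})$ and the connection $B_{\alpha}$ into horizontal and vertical components against the torus connection $\xi$, and then reads off directly from the patching conditions $B_{\beta}-B_{\alpha}=\mathrm{d}\Lambda_{\alpha\beta}$ and $\Lambda_{\alpha\beta}+\Lambda_{\beta\gamma}+\Lambda_{\gamma\alpha}=\mathrm{d}G_{\alpha\beta\gamma}$ that $(B^{(1)}_{\alpha},\tilde{f}_{\alpha\beta})$ is the cocycle of a $T^{n}$-bundle $\widetilde{M}$ on $M_{0}$, that $(\lambda_{\alpha\beta},g_{\alpha\beta\gamma})$ is a genuine gerbe cocycle on $M_{0}$ defining $\mathcal{M}_{0}$, and that $B^{(0)}_{ij}$ is globally defined (so $[H^{(1)}]=0$ automatically). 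Your categorical argument gives a cleaner conceptual explanation of why the correspondence space appears --- the $|S|=1$ summand of $[T^{n},\mathbf{B}(\BU)]$ is literally $(\BU)^{n}$, whose underlying bundle is $\widetilde{M}$ --- and would generalise more readily to other structure groups and higher gerbes. The paper's hands-on approach, by contrast, delivers the Chern-class identity $\mathrm{c}_{1}(\widetilde{M})=[\mathbf{d}B^{(1)}_{\alpha}]=[H^{(2)}+\langle B^{(0)},F\rangle]=[\pi_{\ast}H]$ by direct inspection rather than through the Gysin sequence, and makes the role of the equivariance hypothesis completely transparent: it is precisely what forces $\Lambda^{(0)}_{\alpha\beta}$ and $G_{\alpha\beta\gamma}$ to be pullbacks from $M_{0}$, so that the $|S|\geq 2$ pieces of your decomposition collapse to global scalars and the obstruction $\langle[\pi_{\ast}H]\smile[F]\rangle$ you flag at the end never arises.
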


\begin{digression}[Internal and external space]
Because of lemma \ref{thm:corrspace} the sections $\Gamma(M_0,\mathcal{M})$ of the doubled space on the base manifold will be of the form $(\tilde{x}_\alpha,\phi_{\alpha\beta},\theta_\alpha,\tilde{\theta}_\alpha)$ where $(\tilde{x}_\alpha,\phi_{\alpha\beta})$ are sections of $\mathcal{M}_0$ and $(\theta_\alpha,\tilde{\theta}_\alpha)$ are sections of the correspondence space $K=M\times_{M_0}\widetilde{M}$, i.e.
\begin{equation}
    \Gamma(M_0,\,\mathcal{M}) \;\simeq\; \Gamma(M_0,\,\mathcal{M}_0) \,\times\, \Gamma(M_0,\,K)
\end{equation}
By conforming to usual DFT nomenclature we can call $(\tilde{x}_\alpha,\phi_{\alpha\beta})$ \textit{external coordinates} and $(\theta^i_\alpha,\tilde{\theta}_{\alpha i})$ \textit{internal coordinates} of the doubled space.
\end{digression}

\noindent Notice that expression \eqref{eq:emcorrspace} is exactly the analogue of \eqref{eq:circbundred} for the doubled space.

\begin{proof}[Proof of lemma \ref{thm:corrspace}]
We can immediately split the generalized metric $\mathcal{H}$ of $\mathcal{M}$ in a Riemannian metric $g$ and a gerbe connection $B_\alpha$ on $M$. Now, by using the torus connection $\xi\in\Omega^1(M,\mathbb{R}^n)$, we can split these metric and gerbe connection in horizontal and vertical components
\begin{equation}
    \begin{aligned}
        g &= \pi^\ast g^{(2)} + \langle \pi^\ast g^{(0)},\xi\odot\xi\rangle \\
        B_\alpha &= \pi^\ast B_\alpha^{(2)} + \langle \pi^\ast B_\alpha^{(1)}\wedge \xi \rangle + \frac{1}{2}\langle  
        \pi^\ast B_\alpha^{(0)}, \xi\wedge\xi \rangle
    \end{aligned}
\end{equation}
where $g^{(2)}$ and $g^{(0)}$ are respectively a metric and a set of moduli fields on $M_0$, while $B^{(k)}_\alpha$ are local $k$-forms on patches $U_\alpha$. We can do the same for the transition functions $(\Lambda_{\alpha\beta},G_{\alpha\beta\gamma})$ of the doubled space
\begin{equation}
    \begin{aligned}
        \Lambda_{\alpha\beta} &= \pi^\ast \lambda_{\alpha\beta} + \langle \pi^\ast \tilde{f}_{\alpha\beta}, \xi \rangle \\
        G_{\alpha\beta\gamma} &= \pi^\ast g_{\alpha\beta\gamma}
    \end{aligned}
\end{equation}
where $\lambda_{\alpha\beta}$ is a local $1$-form on two-fold overlaps of patches $U_\alpha\cap U_\beta$ and $\tilde{f}_{\alpha\beta}$ and $g_{\alpha\beta\gamma}$ are local functions respectively on two-fold and three-fold overlaps of patches. The patching condition $B_\beta-B_\alpha = \mathrm{d}\Lambda_{\alpha\beta}$ on two-fold overlaps of patches becomes
\begin{equation}\label{eq:nonobpatched}
    \begin{aligned}
    B_{\beta\,ij}^{(0)} - B_{\alpha\,ij}^{(0)} &= \,0 \\
    B_{\beta\,i}^{(1)} \,-  B_{\alpha\,i}^{(1)} \,\, &= \,\mathbf{d}\tilde{f}_{\alpha\beta\,i} \\
    B_\beta^{(2)} \,- B_\alpha^{(2)} \,\,&=\, \mathbf{d}\lambda_{\alpha\beta} + \langle \tilde{f}_{\alpha\beta}, F \rangle
    \end{aligned}
\end{equation}
where $F=\mathrm{d}\xi$ is the curvature of $M\rightarrow M_0$, while the condition $\Lambda_{\alpha\beta} + \Lambda_{\beta\gamma} + \Lambda_{\gamma\alpha} = \mathrm{d}G_{\alpha\beta\gamma}$ on three-fold overlaps of patches becomes
\begin{equation}\label{eq:nonobpatched2}
    \begin{aligned}
    \tilde{f}_{\alpha\beta\,i}+\tilde{f}_{\beta\gamma\,i}+\tilde{f}_{\gamma\alpha\,i} &=0 \\
    \lambda_{\alpha\beta}+\lambda_{\beta\gamma}+\lambda_{\gamma\alpha} &=\mathbf{d}g_{\alpha\beta\gamma}
    \end{aligned}
\end{equation}
From \eqref{eq:nonobpatched} we get that $B^{(0)}$ are globally defined scalar fields on the base manifold $M_0$ and hence $H^{(1)}=\mathrm{d}B^{(0)}$ globally, which assures $\big[H^{(1)}\big]=0$.
From \eqref{eq:nonobpatched} and \eqref{eq:nonobpatched2} we get that $\big(B^{(1)}_\alpha,\tilde{f}_{\alpha\beta}\big)$ is a cocycle defining a torus $T^n$-bundle $\widetilde{M}$ with connection on the base manifold $M_0$. Together with the torus bundle $M$ given by the cocycle $(A_\alpha,f_{\alpha\beta})$ we have the following two torus bundles (or equivalently a single $T^{2n}$-bundle) on $M_0$ with local data
\begin{equation}
\begin{aligned}
    A_\beta - A_\alpha &= \mathbf{d}f_{\alpha\beta}, & \quad
    B_\beta^{(1)} - B_\alpha^{(1)} &= \mathbf{d}\tilde{f}_{\alpha\beta}, \\
    f_{\alpha\beta}+f_{\beta\gamma} + f_{\gamma\alpha} &= 0, & \quad
    \tilde{f}_{\alpha\beta}+\tilde{f}_{\beta\gamma} + \tilde{f}_{\gamma\alpha} &= 0,
\end{aligned}
\end{equation}
with first Chern classes given by
\begin{equation}
    \mathrm{c}_1(M)=\big[\mathbf{d}A_\alpha\big]=\big[F\big], \qquad \mathrm{c}_1(\widetilde{M})=\big[\mathbf{d}B^{(1)}_\alpha\big]=\big[ H^{(2)}+\langle B^{(0)},F\rangle\big] = \big[\pi_\ast H\big]
\end{equation}
The $T^{2n}$-coordinates are given by $\theta_\beta - \theta_\alpha = f_{\alpha\beta}$ and $\tilde{\theta}_\beta - \tilde{\theta}_\alpha = \tilde{f}_{\alpha\beta}$, while the $(\tilde{x}_\alpha,\phi_{\alpha\beta})$ sections of $\mathcal{M}_0$ are defined as usual by using the cocycle $(\lambda_{\alpha\beta},\,g_{\alpha\beta\gamma})$.
Notice that $(\tilde{x}_\alpha,\tilde{\theta}_\alpha)\in\Omega^{1,0}_D(U_\alpha,\wedge\,\mathbb{R}^n)$ is exactly the image of the local $1$-form $\tilde{X}_\alpha$ on $M$ of an equivariant section $(\tilde{X}_\alpha,\phi_{\alpha\beta})\in\Gamma(M,\M)$. Thus we obtain the conclusion of the lemma.
\end{proof}

\begin{digression}[$H$-flux and $F$-flux]
The Dixmier-Douady class $[H]\in H^3(M,\mathbb{Z})$ and the first Chern class $[F_i]\in H^2(M_0,\mathbb{Z})$ are respectively called $H$-flux and $F$-flux (or geometric flux) in String Theory literature. Moreover the first Chern class $[\pi_\ast H]_i=[H^{(2)}_i+\langle B^{(0)},F\rangle_i]\in H^2(M_0,\mathbb{Z})$, which is given by the integral of $[H]$ on a basis of $1$-cycles $[S_i^1]\in H_1(T^n,\mathbb{Z})$ of the torus fiber, represents a non-trivial flux compactification of the Dixmier-Douady class.
\end{digression}

\begin{remark}[Geometrical interpretation of fluxes]
Notice that in our Higher-Kaluza Klein framework the $H$-flux and the $F$-flux are not something one puts on the doubled space by hand (for instance by defining some $3$-form on a $2d$-dimensional manifold like in the usual approach), but they are natural \textit{topological properties} of the geometry itself of the doubled space $\mathcal{M}$.
\end{remark}

\begin{remark}[Gerbe curvature]\label{rem:globgeomcurv}
The equation for the curvature $H=\mathrm{d}B_\alpha$ under isomorphism \eqref{eq:isomorphism} becomes $(H^{(3)},H^{(2)},H^{(1)},H^{(0)})=D(B^{(2)}_\alpha,B^{(1)}_\alpha,B^{(0)})$, which is equivalent to
\begin{equation}
    \begin{aligned}
    H^{(0)} \;&=\; 0 \\
    H^{(1)} \;&=\; \mathbf{d}B^{(0)} \\
    H^{(2)} \;&=\; \mathbf{d}B^{(1)}_{\alpha} -\langle B^{(0)}, F\rangle \\
    H^{(3)} \;&=\; \mathbf{d}B_\alpha^{(2)} - \langle B^{(1)}_\alpha\wedge F\rangle
    \end{aligned}
\end{equation}
\end{remark}

\noindent We Now explain in digressions \ref{dig2a} and \ref{rem:hull} how we recover for free two relevant geometric ideas which have been developed to encode the geometry of a globally geometric T-duality.

\begin{digression}[Recovering differential T-duality structures]\label{dig2a}
As we explained in lemma \ref{thm:corrspace}, the bundle structure of the doubled space $\mathcal{M}$ is Kaluza-Klein reduced to the base manifold $M_0$ to the \v{C}ech cocycle $(\lambda_{\alpha\beta},\,g_{\alpha\beta\gamma},\,\tilde{f}_{\alpha\beta},\,f_{\alpha\beta})$ on $M_0$ appearing in the diagram
\begin{equation}
\begin{tikzcd}[row sep=6ex, column sep=10ex]
M_0 \arrow[r, "(\lambda\text{,}g\text{,}\tilde{f})"]\arrow[rr, "f", bend right=30]& \mathbf{B}(\mathbf{B}U(1)_{\mathrm{conn}})/T^n \arrow[r] & \mathbf{B}T^n
\end{tikzcd}
\end{equation} 
Moreover, if we Kaluza-Klein reduce the doubled metric structure $\mathcal{H}$ we obtain also the connection data $A_{\alpha},B^{(1)}_{\alpha},B^{(2)}_{\alpha}$ for this cocycle. If we put these local data together we obtain exactly a \textit{differential T-duality structure} as defined by \cite{KahVal09} and further explored by \cite{DCCTv2}. In the reference this is formalized by a $\mathrm{String}(T^{n}\times T^n)$-bundle with connection, where $\mathrm{String}(T^{n}\times T^n)$ is the $2$-group uniquely defined by the following pullback diagram of group-stacks:
\begin{equation}
\begin{tikzcd}[row sep=6ex, column sep=10ex]
\mathbf{B}\mathrm{String}(T^n\times T^n) \arrow[r]\arrow[d]&\ast\arrow[d] \\ 
\mathbf{B}(T^n\times T^n) \arrow[r, "\langle\,\text{[}\mathrm{c}_1\text{]}\,\smile\,\text{[}\mathrm{c}_1\text{]}\,\rangle"] & \mathbf{B}^3U(1)
\end{tikzcd}
\end{equation}
Indeed from remark \ref{rem:globgeomcurv} we can check that $\mathbf{d}H^{(3)}=\langle F\wedge\widetilde{F}\rangle$ in accord with the references.
\end{digression}

\begin{digression}[Recovering doubled torus bundles]\label{rem:hull}
Because of lemma \ref{thm:corrspace}, on the doubled space $\M \simeq \M_0 \times_{M_0} (M\times_{M_0}\widetilde{M})$, as we have seen, we can expand the vielbein $e=e^{(1)}+e^{(0)}_i\xi^i$ and gerbe connection $B_\alpha=B^{(2)}_{\alpha}+B^{(1)}_{\alpha\,i}\xi^i+B^{(0)}_{ij}\xi^i\wedge\xi^j$ and pack them again into the moduli field of the doubled metric as it follows
\begin{equation}
      \Theta_\alpha := \begin{pmatrix}
 \theta_\alpha \\
 \tilde{\theta}_\alpha
 \end{pmatrix}, \quad \mathcal{H}^{(0)} := \begin{pmatrix}
 g^{(0)}_{ij} - B_{ik}^{(0)}g^{(0)kl}B_{lj}^{(0)} & B_{ik}^{(0)}g^{(0)kj} \\
 -g^{(0)ik}B_{kj}^{(0)} & g^{(0)ij}
 \end{pmatrix}, \quad \mathcal{A}_\alpha := \begin{pmatrix}
 A_\alpha \\
 B^{(1)}_\alpha
 \end{pmatrix}
\end{equation}
which are respectively the coordinates of the fibers of the $T^{2n}$-bundle, the moduli field globally defined on the base manifold $M_0$ and the principal $T^{2n}$-connection on $M_0$. Now there is a natural group action $O(n,n;\mathbb{Z})$, whose elements $\mathcal{O}$ act by
\begin{equation}
      \Theta_\alpha\mapsto \mathcal{O}^{-1}\Theta_\alpha, \quad \mathcal{H}^{(0)}\mapsto \mathcal{O}^{\mathrm{T}}\mathcal{H}^{(0)}\mathcal{O}, \quad \mathcal{A}_\alpha\mapsto \mathcal{O}^{-1}\mathcal{A}_\alpha,
\end{equation}
so that we recover exactly the Buscher rules. Notice that the first Chern class of the correspondence space is rotated by $[\mathbf{d}\mathcal{A}_\alpha]\mapsto \mathcal{O}^{-1}[\mathbf{d}\mathcal{A}_\alpha]$, while the component $H^{(3)}$ is invariant. Hence the \textit{doubled torus bundle} introduced by \cite{Hull06} is nothing but the principal $T^{2n}$-bundle $K$ that we obtain by higher Kaluza-Klein reduction of the doubled space $\mathcal{M}$ to the base manifold $M_0$. 
\end{digression}

\noindent Notice that the invariance of $H^{(3)}$ implies that we can write both $H^{(3)}= \mathbf{d}B^{(2)}_\alpha-\langle B^{(1)}_\alpha\wedge F\rangle$ and $H^{(3)}=  \mathbf{d}\tilde{B}^{(2)}_\alpha-\langle A_\alpha\wedge \pi_\ast H\rangle$ where $\tilde{B}^{(2)}_\alpha$ is the T-dual of ${B}^{(2)}_\alpha$ under the transformation $\mathcal{O}=\left(\begin{smallmatrix*}0&1\\1&0\end{smallmatrix*}\right)$. But this means that we recover the familiar relation $\tilde{B}_\alpha^{(2)} = B_\alpha^{(2)}+\langle A_\alpha\wedge B^{(1)}_\alpha \rangle$.

\vspace{0.2cm}
\noindent By putting together what we learned from digressions \ref{dig2a} and \ref{rem:hull}, we can refine the result (b) of lemma \ref{thm:corrspace} by explicitly writing the Higher Kaluza-Klein reduction of the doubled metric.

\begin{theorem}[Higher Kaluza-Klein reduction of doubled metric]
In the hypothesis of lemma \ref{thm:corrspace} the doubled metric structure $\M\xrightarrow{\,\mathcal{H}\,}\mathbf{Orth}(T\M)$ is Higher Kaluza-Klein reduced to a map
\begin{equation}
    M_0 \,\longrightarrow\, \mathbf{Orth}(TM_0)\times\mathbf{B}\mathrm{String}(T^n\times T^n)_\mathrm{conn}\times O(n,n).
\end{equation}
where $M_0\rightarrow\mathbf{Orth}(TM_0)$ is a Riemannian metric, while $M_0\rightarrow\mathbf{B}\mathrm{String}(T^n\times T^n)_\mathrm{conn}$ is a differential T-duality structure and $M_0\rightarrow O(n,n)$ is a moduli field $\mathcal{H}^{(0)}$.
\end{theorem}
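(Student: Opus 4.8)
The plan is to combine three ingredients already established in the paper: the Higher Kaluza-Klein reduction of the doubled metric (Lemma~\ref{bosfieldsfromgeom}), the cascade of reductions over a torus bundle (the ``Cascade'' subsection), and the identifications of digressions~\ref{dig2a} and~\ref{rem:hull}. The strategy is to perform the total Higher Kaluza-Klein reduction of $\M\xrightarrow{\bbpi}M\xrightarrow{\pi}M_0$ in two stages and then to read off the three factors of the target stack from the resulting \v{C}ech data.

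First I would apply Lemma~\ref{bosfieldsfromgeom} to the first stage $\M\xrightarrow{\bbpi}M$: under the global strong constraint the structure $\M\xrightarrow{\mathcal{H}}\mathbf{Orth}(T\M)$ reduces to a Riemannian metric $g$ on $M$ together with a gerbe with connection $(B_\alpha,\Lambda_{\alpha\beta},G_{\alpha\beta\gamma})$, i.e.\ a map $M\to\mathbf{Orth}(TM)\times\mathbf{B}^2U(1)_{\mathrm{conn}}$. Next, invoking the $T^n$-equivariance assumed in Lemma~\ref{thm:corrspace}, I would Kaluza-Klein reduce this data along $\pi:M\to M_0$. The Riemannian part reduces as in Example~\ref{ex:cascade} to a metric $g^{(2)}=:g$ on $M_0$ plus the moduli $g^{(0)}_{ij}$; the gerbe-with-connection part reduces, by the split \eqref{splithv} of the connection $2$-form $B_\alpha = B^{(2)}_\alpha + \langle B^{(1)}_\alpha\wedge\xi\rangle + \tfrac12\langle B^{(0)}_\alpha,\xi\wedge\xi\rangle$ and of the transition data $(\Lambda_{\alpha\beta},G_{\alpha\beta\gamma})$, to the collection $(g_{\alpha\beta\gamma},\tilde f_{\alpha\beta},f_{\alpha\beta})$ of Chatterjee--Hitchin type data together with the connection forms $A_\alpha, B^{(1)}_\alpha, B^{(2)}_\alpha$ and the scalars $B^{(0)}_{ij}$, exactly as computed in the proof of Lemma~\ref{thm:corrspace}.

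Then I would assemble these reduced data into the three advertised factors. For the metric factor: the horizontal symmetric tensor $g^{(2)}$ on $M_0$ is a genuine Riemannian metric, glued by the transition functions of $TM_0$, hence gives the map $M_0\to\mathbf{Orth}(TM_0)$. For the String-bundle factor: the data $(A_\alpha, f_{\alpha\beta})$ and $(B^{(1)}_\alpha,\tilde f_{\alpha\beta})$ define (as shown in the proof of Lemma~\ref{thm:corrspace}) two principal $T^n$-bundles with connection, i.e.\ a $T^n\times T^n$-connection, while the remaining data $(B^{(2)}_\alpha,\lambda_{\alpha\beta},g_{\alpha\beta\gamma},B^{(0)}_{ij})$ provide precisely the higher ``String'' refinement whose obstruction is the class $\langle[F]\smile[\pi_\ast H]\rangle\in H^3(M_0,\mathbb Z)$ appearing in the defining pullback square of $\mathbf{B}\mathrm{String}(T^n\times T^n)$; by Digression~\ref{dig2a} this is exactly a differential T-duality structure in the sense of \cite{KahVal09}, and the consistency equation $\mathbf{d}H^{(3)}=\langle F\wedge\widetilde F\rangle$ (Remark~\ref{rem:globgeomcurv}) is the curvature condition verifying the lift. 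For the $O(n,n)$ factor: the scalars $(g^{(0)}_{ij},B^{(0)}_{ij})$ combine into the moduli matrix $\mathcal{H}^{(0)}$ of Digression~\ref{rem:hull}, which is a global section of $O(n,n)/(O(n)\times O(n))$ but which, packaged with the Buscher action, yields the map $M_0\to O(n,n)$ as claimed. Putting the three together produces the single map $M_0\to\mathbf{Orth}(TM_0)\times\mathbf{B}\mathrm{String}(T^n\times T^n)_{\mathrm{conn}}\times O(n,n)$.

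The main obstacle I expect is the String-bundle factor: one must check that the reduced \v{C}ech data genuinely assemble into a connection on a $\mathrm{String}(T^n\times T^n)$-bundle rather than merely on a $T^n\times T^n$-bundle twisted by an abelian gerbe. This amounts to verifying that the class of the gerbe $(B^{(2)}_\alpha,\lambda_{\alpha\beta},g_{\alpha\beta\gamma})$, after the cascade, coincides with the pullback $\langle[F]\smile[\pi_\ast H]\rangle$ of the universal class along $M_0\to\mathbf{B}(T^n\times T^n)$ --- i.e.\ that the reduction is compatible with the defining homotopy pullback of $\mathbf{B}\mathrm{String}(T^n\times T^n)$. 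This is essentially the content of the relation $\mathbf{d}H^{(3)}=\langle F\wedge\widetilde F\rangle$ together with the cocycle conditions \eqref{eq:nonobpatched}--\eqref{eq:nonobpatched2}, but one must be careful that it holds at the level of stacks (integral classes) and not just at the level of differential forms; here I would lean on the double-dimensional-reduction equivalence of Definition~\ref{def:dimred} applied to $\mathbf{S}=\mathbf{B}(\BU)$, which guarantees that the reduced cocycle $M_0\to[T^n,\mathbf{B}(\BU)]/T^n$ is precisely a map to the moduli stack whose underlying $T^n\times T^n$-data carry exactly this $2$-gerbe twist.
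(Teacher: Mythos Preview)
Your proposal is correct and follows essentially the same route as the paper: the paper does not actually supply a separate proof but simply states the theorem as the result of ``putting together what we learned from digressions~\ref{dig2a} and~\ref{rem:hull}'' to refine part~(b) of Lemma~\ref{thm:corrspace}. Your two-stage reduction (first Lemma~\ref{bosfieldsfromgeom}, then the ordinary torus Kaluza-Klein reduction of the cascade subsection) and your subsequent identification of the three factors via the \v{C}ech data of Lemma~\ref{thm:corrspace} is exactly this synthesis spelled out in detail; the paper leaves the verification of the $\mathrm{String}(T^n\times T^n)$ lift entirely to the cited differential T-duality literature, so your explicit discussion of that obstacle goes beyond what the paper provides.
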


\noindent Recall digression \ref{digd}: the doubled space $\M$ is naturally locally a para-Hermitian manifold, but not globally, since it is a stack. However, as we will explain in the following remark \ref{rem:parahermitianK}, in the special case of a torus compactification the correspondence space $K$ is naturally equipped with a proper global para-Hermitian structure. This means that a para-Hermitian structure on a $T^{2n}$-bundle can equivalently encode all the geometric data of Hull's doubled torus bundle (see remark \ref{rem:hull}), even if it cannot geometrize the whole gerbe. Therefore our framework can explain by using more fundamental higher geometrical arguments why para-Hermitian geometry has been used so effectively in dealing with DFT and T-duality.

\begin{remark}[The correspondence space is a global para-Hermitian bundle]\label{rem:parahermitianK}
We can Kaluza-Klein reduce the global $2$-form connection from \eqref{eq:omegaB} to
\begin{equation}
    \omega_B \,=\, \omega_B^{(2)} + \big(\omega^{(1)}_{B\,i} - \omega^{(0)}_{B\,ij}\xi^j\big)\wedge\xi^i
\end{equation}
Let us then define $\omega_K := (\omega^{(1)}_{B\,i} - \omega^{(0)}_{B\,ij}\xi^j)\wedge\xi^i$.
Since $\xi$, $\tilde{\xi}$ and $B^{(0)}$ are globally defined on the correspondence space $K=M\times_{M_0}\widetilde{M}$ we have the following global $2$-form and tensor on $K$: 
\begin{equation}\begin{aligned}
    \omega_K \,&=\, \left(\tilde{\xi}_i-B^{(0)}_{ij}\xi^j\right)\wedge\xi^i, \\
    \eta_K \,&=\, \left(\tilde{\xi}_i-B^{(0)}_{ij}\xi^j\right)\odot\xi^i,
\end{aligned}\end{equation}
but also the para-complex structure
\begin{equation}
    J_K \,=\, \frac{\partial}{\partial\tilde{\theta}_i} \otimes \left(\tilde{\xi}_i-B^{(0)}_{ij}\xi^j\right) -  \left(\frac{\partial}{\partial \theta^i} - B^{(0)}_{ij}\frac{\partial}{\partial \tilde{\theta}_j}\right) \otimes \xi^i.
\end{equation}
The triple $(K,J_K,\eta_K)$ is a para-Hermitian bundle with para-Hermitian fibre $T^{2n}$. This is nothing but the para-Hermitian structure of \cite[p.~31]{MarSza18} on the fibre $T^{2n}$, but which is non-trivially fibrated over the base manifold $M_0$.
\end{remark}

\noindent Let us now give a brief look to the relation between our proposal and the local geometry for DFT developed by \cite{DesSae18} and \cite{DesSae19}, that is called Extended Riemannian Geometry. 

\begin{digression}[Recovering Extended Riemannian Geometry by \cite{DesSae18}]
Given a patch $U\subset M_0$, we can write the local algebroids of the infinitesimal symmetries of the doubled space $\M$ pulled back to the correspondence space $K$, on the base $M$ and Kaluza-Klein reduced to $M_0$ by
\begin{equation}
\begin{tikzcd}[row sep=9ex,column sep=2ex,nodes={inner sep=4pt}]
   \tilde{\pi}^\ast\M  \ar[dotted]{d} & \M  \ar["\text{reduction}"']{r} \ar[dotted]{d} \ar[leftarrow, "\tilde{\pi}_\ast"]{l} & [5ex]\M_0\times_{M_0}K  \ar[dotted]{d}  \\
       T^\ast[2]T[1](U\times T^{2n})\ar["\Coo(-)/\langle\mathrm{d}\theta^i\rangle"']{d} \ar{r} & T^\ast[2]T[1](U\times T^n)  \ar{r} & [5ex]T^\ast[2]T[1]U\oplus \mathbb{R}^{2n}[1]  \\
      T^\ast[2]T[1]U\oplus (T[1]\oplus T[2]^\ast) T^{2n} \ar["p"]{ru} && 
 \end{tikzcd}
\end{equation}
where the dotted maps send a doubled space $\mathcal{N}$ to the differential-graded manifold $N$ which describes its local symmetries, i.e. which satisfies $\Coo(N)=\mathrm{CE}\big(\mathfrak{at}(\mathcal{N}|_V)\big)$ on any patch $V$ of the considered base manifold. Notice that the submanifold $T^\ast[2]T[1]U\oplus (T[1]\oplus T[2]^\ast) T^{2n}$ over a patch $U\times T^{2n}$ of the correspondence space $K$ is the structure considered by \cite{DesSae18} and \cite{DesSae19}. Now notice that the $0$-degree space of functions on this manifold are sections $\Gamma\big(TU\oplus T^\ast U \oplus TT^{2n}\big)\cong\Gamma\big(T(U\times T^{n}) \oplus T^\ast (U\times T^{n})\big)$. Hence it is isomorphic to the one of $T^\ast[2]T[1](U\times T^n)$ and this isomorphism defines the projection $p$ in the diagram. Of course these local patches can be glued to give a differential-graded fibration on $K$.
What is indeed called "doubled space" in the references is exactly the correspondence space $K=M\times_{M_0}\widetilde{M}$, but equipped with this geometric structure. Hence Extended Riemannian Geometry can be seen as an infinitesimal description of the doubled space $\M$ pulled back to the correspondence space $K$.
\end{digression}

\subsection{Geometrization of globally non-geometric T-duality}

In this subsection we will relax the hypothesis of lemma \ref{thm:corrspace}: this time the doubled space $\mathcal{M}\xrightarrow{\bbpi} M$ will not be necessarily equivariant under the principal $T^n$-action of $M$. In the literature the case where the gerbe connection is only required to satisfy $\mathcal{L}_{\mathbf{k}_i}B_\alpha=0$ on each patch is called \textit{globally non-geometric T-duality}. In terms of transition functions, the differential forms $(\Lambda_{\alpha\beta}, G_{\alpha\beta\gamma})$ are allowed to depend on the coordinates of the fibers as long as $\mathcal{L}_{\mathbf{k}_i}B_\alpha=0$ is satisfied. From \eqref{eq:exactness} we get that the class $[H^{(0)}]$ is still trivial, but $[H^{(1)}]$ generally is not.

\begin{theorem}[Globally non-geometric T-duality]\label{thm:gencorrspace}
Let $(\M,\mathcal{H})$ be a metric doubled space which satisfies the strong constraint (postulate \ref{post3}) and such that the base manifold of $\M\xrightarrow{\bbpi}M$ is itself a principal $T^n$-bundle $M\xrightarrow{\pi}M_0$. 
Now if the automorphisms $\Coo(M_0,T^n)$ of $M\xrightarrow{\pi}M_0$ are isometries of the doubled metric structure, by applying Kaluza-Klein we have the following.
\begin{enumerate}
    \item[(a)] The doubled space has a subbundle $K\subset\mathcal{M}$ (i.e. there exists an inclusion of groupoids $\Gamma(M,K)\subset\Gamma(M,\mathcal{M})$) which is a principal $T^n$-bundle on spacetime $M$ with first Chern class $\mathrm{c}_1(K)=[\pi_\ast H]=[H^{(2)}-\langle H^{(1)}\wedge\xi\rangle]\in H^2(M,\mathbb{Z})^n$ given by the $H$-flux. In diagrams:
\begin{equation}
\begin{tikzcd}[row sep=4.5ex, column sep=2.5ex]
 & & K \arrow[dl, two heads] & & \\
 T^n \arrow[r, hook] & M \arrow[dr, "\pi"', two heads] & & \quad\; & \quad \\
 & & M_0 & &
\end{tikzcd}
\end{equation}
This can be also seen as an affine $T^{2n}$-bundle over the base manifold $M_0$, known in the literature as the \textit{generalized correspondence space} (see digression \ref{rem:tfold}).
\item[(b)] The doubled metric $\mathcal{H}$ reduces on $M_0$ to a metric $g^{(0)}$, a Kalb-Ramond field $B_\alpha^{(2)}$, a $T^{n}$-connection $A^i_\alpha$ for $M$, a $T^{n}$-connection $\widetilde{A}_{\alpha i}$ for $K$ and a set of global moduli fields $g^{(0)}_{ij}$.
\end{enumerate}
\end{theorem}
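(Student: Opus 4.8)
The plan is to mirror the proof of Lemma \ref{thm:corrspace}, relaxing equivariance to the weaker condition $\mathcal{L}_{\mathbf{k}_i}B_\alpha=0$, and to track carefully which cohomological obstructions survive. First I would split the doubled metric $\mathcal{H}$ of $\mathcal{M}$ into the Riemannian metric $g$ and the gerbe connection $B_\alpha$ on $M$ exactly as in the strong-constrained form \eqref{eq:doubledmetric}. Using the torus connection $\xi\in\Omega^1(M,\mathbb{R}^n)$ of $M\xrightarrow{\pi}M_0$ I would decompose $g = \pi^\ast g^{(2)}+\langle\pi^\ast g^{(0)},\xi\odot\xi\rangle$ and $B_\alpha = B_\alpha^{(2)}+\langle B_\alpha^{(1)}\wedge\xi\rangle+\tfrac12\langle B_\alpha^{(0)},\xi\wedge\xi\rangle$, now allowing the horizontal components $B_\alpha^{(k)}$ and the transition functions $(\Lambda_{\alpha\beta},G_{\alpha\beta\gamma})$ to depend on the fiber coordinates $\theta^i$, subject only to $\mathcal{L}_{\mathbf{k}_i}B_\alpha=0$. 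The hypothesis that $\Coo(M_0,T^n)$ acts by isometries means that the base-independent fiber translations preserve $\mathcal{H}$; this is what still lets one form a globally defined object fibered over $M$, even though the full fiberwise equivariance of Lemma \ref{thm:corrspace} fails.

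Next I would examine the patching conditions \eqref{eq:exactness} and the decomposed versions of $B_\beta-B_\alpha=\mathrm{d}\Lambda_{\alpha\beta}$ and $\Lambda_{\alpha\beta}+\Lambda_{\beta\gamma}+\Lambda_{\gamma\alpha}=\mathrm{d}G_{\alpha\beta\gamma}$. The key point for part (a) is that $\tilde f_{\alpha\beta}$ (the $\mathbb{R}^n$-valued piece of $\Lambda_{\alpha\beta}$ along $\xi$) still satisfies a cocycle condition $\tilde f_{\alpha\beta}+\tilde f_{\beta\gamma}+\tilde f_{\gamma\alpha}=0$ and that $B^{(1)}_{\beta}-B^{(1)}_\alpha=\mathbf{d}\tilde f_{\alpha\beta}$, but now $B^{(1)}_\alpha$ and $\tilde f_{\alpha\beta}$ may depend on $\theta^i$, so the pair $(B^{(1)}_\alpha,\tilde f_{\alpha\beta})$ defines a $T^n$-bundle $K$ not over $M_0$ but over $M$ itself, with $\mathrm{c}_1(K)=[\pi_\ast H]=[H^{(2)}-\langle H^{(1)}\wedge\xi\rangle]\in H^2(M,\mathbb{Z})^n$. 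I would then exhibit $K$ as a subbundle of $\mathcal{M}$ via the inclusion $\Gamma(M,K)\subset\Gamma(M,\mathcal{M})$ sending the fiber coordinate $\tilde\theta_\alpha$ into the tautological $\tilde x_\alpha$ direction, and identify the total space as an affine $T^{2n}$-bundle over $M_0$ — affine rather than principal precisely because $[H^{(1)}]$ need not vanish, so there is no global $T^{2n}$ fiber. This recovers the generalized correspondence space of digression \ref{rem:tfold}.

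For part (b) I would carry out the ordinary Kaluza-Klein reduction of $g$ from $M$ to $M_0$: the horizontal metric $g^{(2)}$ descends to a Riemannian metric $g^{(0)}$ on $M_0$, the connection $A^i_\alpha$ of $M\xrightarrow{\pi}M_0$ is retained, the connection $\widetilde A_{\alpha i}$ of $K$ is retained, the moduli $g^{(0)}_{ij}$ become global scalars on $M_0$, and $B^{(2)}_\alpha$ reduces to a Kalb-Ramond field on $M_0$ patched by \eqref{eq:nonobpatched}. The difference from Lemma \ref{thm:corrspace} is that $K$ lives over $M$ and not over $M_0$, so there is no clean $T^{2n}$-connection on $M_0$ — only the separate pair $(A^i_\alpha,\widetilde A_{\alpha i})$. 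I would assemble these data through the decomposed form \eqref{eq:doubledmetric0} of $\mathcal{H}_\alpha$ to confirm the reduced fields are exactly those listed.

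The main obstacle will be part (a): making precise the sense in which $K\subset\mathcal{M}$ is a genuine \emph{sub}bundle when $\mathcal{M}$ is a stack, and distinguishing "affine $T^{2n}$-bundle over $M_0$" from "$T^n$-bundle over $M$" at the level of the groupoid of sections. One must check that the fiberwise isometry hypothesis $\Coo(M_0,T^n)\subset\mathbf{Iso}(\mathcal{M},\mathcal{H})$ is exactly what guarantees the sub-groupoid $\Gamma(M,K)$ is closed under the relevant gluing automorphisms, and that dropping full equivariance is precisely what promotes $[H^{(1)}]$ from trivial (as in the geometric case) to a possible obstruction — which is the non-geometric flux. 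The cohomological bookkeeping via the dimensionally reduced Gysin sequence, applied to $M\xrightarrow{\pi}M_0$ rather than to a would-be $K\to M_0$, is where the argument needs the most care.
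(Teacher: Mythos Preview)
Your outline captures the architecture of the argument correctly, but the central computational step in part (a) is wrong, and this is exactly where the non-geometric case diverges from Lemma \ref{thm:corrspace}. You claim that the pair $(B^{(1)}_\alpha,\tilde f_{\alpha\beta})$ still patches as a $T^n$-bundle, i.e.\ that $B^{(1)}_{\beta}-B^{(1)}_\alpha=\mathbf{d}\tilde f_{\alpha\beta}$ continues to hold. It does not: once $\Lambda_{\alpha\beta}$ and $G_{\alpha\beta\gamma}$ depend on the fibre coordinates, the decomposed patching condition acquires extra terms $\mathcal{L}_{\partial_i}\Lambda^{(1)}_{\alpha\beta}$ and $\partial_{[i}\Lambda^{(0)}_{\alpha\beta\, j]}$, and the three-fold cocycle for $\tilde f_{\alpha\beta}$ picks up $\partial_i G_{\alpha\beta\gamma}$ on the right-hand side. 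So $B^{(1)}_\alpha$ by itself is \emph{not} the connection of any torus bundle on $M$.

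The correct object, which the paper takes from \cite{BelHulMin07}, is the combination $\widetilde A_{\alpha\,i}:=-\iota_{\mathbf{k}_i}B_\alpha = B^{(1)}_{\alpha\,i}-B^{(0)}_{\alpha\,ij}\xi^j$, i.e.\ one must include the vertical shift by the moduli $B^{(0)}_\alpha$. Only this combination satisfies $\widetilde F_i=\mathrm{d}\widetilde A_{\alpha\,i}$ on $M$ and patches as a genuine $T^n$-connection, with transition functions of the form $\mathrm{d}\big(\tilde f_{\alpha\beta\,i}-n_{\alpha\beta\,ij}(\theta^j_\beta+\tfrac12 f^j_{\beta\alpha})\big)$. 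The integers $n_{\alpha\beta\,ij}$ are the jumps $B^{(0)}_{\beta}-B^{(0)}_{\alpha}=n_{\alpha\beta}$, which is precisely the \v{C}ech cocycle of $[H^{(1)}]\in H^1(M_0,\wedge^2\mathbb{Z}^n)$; this is the monodromy that makes $K\to M_0$ affine rather than principal, and it shows up \emph{inside} the transition functions of $K\to M$, not merely as an abstract obstruction. The same $n_{\alpha\beta}$ then propagates into the patching of $B^{(2)}_\alpha$, $\lambda_{\alpha\beta}$ and $g_{\alpha\beta\gamma}$, so your appeal to \eqref{eq:nonobpatched} for the Kalb-Ramond data in part (b) is also too optimistic: those relations hold only when $n_{\alpha\beta}=0$. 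In short, the missing idea is that the $T^n$-bundle $K$ is built from $-\iota_{\mathbf{k}_i}B_\alpha$ rather than from the horizontal piece $B^{(1)}_\alpha$ alone, and tracking the resulting $n_{\alpha\beta}$ through all the patching is the substance of the proof.
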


\begin{digression}[T-fold]\label{rem:tfold}
The generalized correspondence space $K$ is an affine (non principal) $T^{2n}$-bundle over the base manifold $M_0$. Recall that the affine group of the torus is $\mathrm{Aff}(T^{2n})=GL(2n,\mathbb{Z})\ltimes T^{2n}$ and that an affine torus bundle is defined as the associated bundle $K:=Q\times_{\mathrm{Aff}(T^{2n})}T^{2n}$ to some principal $\mathrm{Aff}(T^{2n})$-bundle $Q$. Generalized correspondence spaces are a special class of affine torus bundles where the structure group is restricted to $\wedge^2\mathbb{Z}^n\ltimes T^{2n}\subset \mathrm{Aff}(T^{2n})$, where $\wedge^2\mathbb{Z}^n$ encodes the monodromy. Since $K$ has monodromy there is no well-defined torus subbundle $\widetilde{M}\rightarrow M_0$ which can be seen as the T-dual to the starting $M\rightarrow M_0$. In fact we could perform T-duality on each patch $U_\alpha$ of $M$, but we would obtain a collection of string-background patches which cannot be glued together. In DFT literature this object has been named \textit{T-fold}. Morally speaking we would have a diagram generalizing \eqref{diag:corrspace} of the form
\begin{equation}\label{eq:tfolddiagram}
\begin{tikzcd}[row sep=4.5ex, column sep=2.8ex]
 & & K \arrow[dl, two heads]\arrow[dr, dotted] & \\
 T^n \arrow[r, hook] & M \arrow[dr, "\pi"', two heads] & & [-1ex]\text{T-fold}\arrow[dl, dotted] \\
 & & M_0 &
\end{tikzcd}
\end{equation}
where the dotted arrows are not actual maps between spaces, but only indicative ones. Since there is no well defined dual manifold, this T-duality has \textit{no underlying topological T-duality}. We will explain what are the differential data of this kind of T-duality in remark \ref{rem:tdualityoftfold}.
\end{digression}

\begin{proof}[Proof of lemma \ref{thm:gencorrspace}]
The assumption that the automorphisms of $M\xrightarrow{\pi}M_0$ are the isometries of the doubled metric, i.e.  $\mathbf{Iso}(\M,\mathcal{H})=\Coo(M_0,T^n)$, implies that $\mathcal{L}_{\mathbf{k}_i}B_\alpha=0$ on each $U_\alpha$ where $\{\mathbf{k}^i\}$ are the fundamental vectors. This assures that $\widetilde{F}_i := \iota_{\mathbf{k}_i}H$ is a closed $2$-form on $M$ by
\begin{equation}
\begin{aligned}
    \widetilde{F}_i \;&=\; -\mathrm{d}\iota_{\mathbf{k}_i}B_\alpha \\ 
    \;&=\; \mathrm{d}(B_{\alpha\,i}^{(1)}-B_{\alpha\,ij}^{(0)}\xi^j)
\end{aligned}
\end{equation}
Now we can define on each patch the local connection $1$-forms for this curvature
\begin{equation}
    \widetilde{A}_{\alpha\,i} \,:=\, -\iota_{\mathbf{k}_i}B_\alpha \,=\, B_{\alpha\,i}^{(1)}-B_{\alpha\,ij}^{(0)}\xi^j.
\end{equation}
We will closely follow \cite{BelHulMin07} for the next calculations. In the reference it is proven that these $1$-forms are indeed patched like the connection of a $T^n$-bundle as
\begin{equation}
    \begin{aligned}
        \widetilde{A}_{\beta\,i}-\widetilde{A}_{\alpha\,i} \;&=\; \mathbf{d}\Lambda^{(0)}_{\alpha\beta\,i} - 2(\partial_{i}\Lambda^{(0)}_{\alpha\beta\,j})\xi^j - \partial_{i}\Lambda^{(1)}_{\alpha\beta} \\
        &=\; \mathbf{d}\tilde{f}_{\alpha\beta\,i} - n_{\alpha\beta\,ij}\Big(\mathrm{d}\theta_{\beta}^j+\frac{1}{2}\mathbf{d}f_{\beta\alpha}^j\Big) \\
        &=\; \mathrm{d}\Big(\tilde{f}_{\alpha\beta\,i} - n_{\alpha\beta\,ij}\Big(\theta_{\beta}^j+\frac{1}{2}f_{\beta\alpha}^j\Big)\Big).
    \end{aligned}
\end{equation}
The principal connection $\Xi\in\Omega^1(K,\mathbb{R}^n)$ of the generalized correspondence space $K\rightarrow M$ seen as a $T^n$-bundle over $M$ is then
\begin{equation}
    \Xi_i \,:=\, \mathrm{d}\tilde{\theta}_{\alpha i} + B^{(1)}_{\alpha i}-B^{(0)}_{\alpha ij}\xi^j.
\end{equation}
If in analogy with geometric T-duality we define the local differential $1$-form $\tilde{\xi}_\alpha := \mathrm{d}\tilde{\theta}_\alpha + B^{(1)}_\alpha$, this cannot clearly be globalized. Since we know that $H^{(1)}$ is a closed $1$-form on $M_0$, i.e. $\mathbf{d}H^{(1)}=0$, this will define a \v{C}ech cocycle with patching conditions
\begin{equation}
    \begin{aligned}
        H^{(1)} &= \mathbf{d}B^{(0)}_\alpha \\
        B^{(0)}_\beta - B^{(0)}_\alpha &= n_{\alpha\beta}
    \end{aligned}
\end{equation}
where $n_{\alpha\beta ij}\in 2\pi\mathbb{Z}$.
On the other hand the principal connection $\xi=\mathrm{d}\theta_{\alpha}+A_{\alpha}$ is global on $M$. Therefore from $B_{\beta}^{(0)}-B_{\alpha}^{(0)} = n_{\alpha\beta}$ we get the patching conditions
\begin{equation}\label{eq:hflux}
\begin{pmatrix}
 \xi_\beta \\
 \tilde{\xi}_\beta
 \end{pmatrix} = \begin{pmatrix}
 1 & 0 \\
 n_{\alpha\beta} & 1
 \end{pmatrix} \begin{pmatrix}
 \xi_\alpha \\
 \tilde{\xi}_\alpha
 \end{pmatrix}
\end{equation}
Where $n_{\alpha\beta}$ is the monodromy matrix of the dual torus fibers. Hence $K$ is equivalently an affine torus $T^{2n}$-bundle on the base manifold $M_0$. The affine transitions functions can be written as
\begin{equation}
\begin{pmatrix}
 \theta_\beta - \frac{1}{2}f_{\alpha\beta} \\
 \tilde{\theta}_\beta - \frac{1}{2}\tilde{f}_{\alpha\beta}
 \end{pmatrix} = \begin{pmatrix}
 1 & 0 \\
 n_{\alpha\beta} & 1
 \end{pmatrix} \begin{pmatrix}
 \theta_\alpha - \frac{1}{2}f_{\beta\alpha} \\
 \tilde{\theta}_\alpha - \frac{1}{2}\tilde{f}_{\beta\alpha}
 \end{pmatrix}
\end{equation}
It has been also proven by \cite{BelHulMin07} that the horizontal components of $B_\alpha$ are patched by
\begin{equation}
    B^{(2)}_\beta - B^{(2)}_\alpha = \mathbf{d}\lambda_{\alpha\beta} + \langle \tilde{f}_{\alpha\beta},F \rangle + 
    \frac{1}{2}\left\langle n_{\alpha\beta}, \, \left(A_{\beta}-\frac{1}{2}\mathbf{d}f_{\alpha\beta}\right) \wedge \left(A_{\beta}-\frac{1}{2}\mathbf{d}f_{\alpha\beta}\right) \right\rangle
\end{equation}
on two-fold overlaps, while on three-fold overlaps the $1$-forms $\lambda_{\alpha\beta}\in\Omega^1(U_\alpha\cap U_\beta)$ by
\begin{equation}
\begin{aligned}
    \lambda_{\alpha\beta} + \lambda_{\beta\gamma} + \lambda_{\gamma\alpha} &= \mathbf{d}g_{\alpha\beta\gamma} + f_{\beta\alpha}(n_{\alpha\beta}+n_{\gamma\beta})f_{\beta\gamma} + \\
    &-\frac{1}{8}(f_{\beta\alpha}n_{\gamma\beta}\mathbf{d}f_{\beta\alpha} + f_{\beta\alpha}n_{\gamma\alpha}\mathbf{d}f_{\beta\gamma} + f_{\beta\gamma}n_{\beta\alpha}\mathbf{d}f_{\beta\gamma} + f_{\beta\gamma}n_{\gamma\alpha}\mathbf{d}f_{\beta\alpha})
\end{aligned}
\end{equation}
and on four-fold overlaps the local functions $g_{\alpha\beta\gamma}\in\Coo(U_\alpha\cap U_\beta \cap U_\gamma)$ are patched by
\begin{equation}
\begin{aligned}
    g_{\alpha\beta\gamma}-g_{\beta\gamma\delta}+g_{\gamma\delta\alpha}-g_{\delta\alpha\beta} = -\frac{1}{6}(f_{\delta\gamma}n_{\delta\beta}f_{\alpha\delta} - f_{\beta\gamma}n_{\alpha\delta}f_{\delta\gamma}+ f_{\beta\delta}n_{\gamma\delta}f_{\delta\alpha})
\end{aligned}
\end{equation}
It is clear that $(\lambda_{\alpha\beta},g_{\alpha\beta\gamma})$ is not the local data of a gerbe.
Finally, a global section $\Gamma(M_0,K)$ of the generalized correspondence space will be of the form $(\theta_\alpha,\tilde{\theta}_\alpha)$, patched as follows:
\begin{equation}
    \begin{aligned}
    \theta_\beta^{i} \,-\, \theta_\alpha^{i} \, &= \,f_{\beta\alpha}^i \\
    \tilde{\theta}_{\beta i} - \tilde{\theta}_{\alpha i}  &= \,\tilde{f}_{\beta\alpha i} - n_{\alpha\beta ij}\left(\theta_\beta^j-\frac{1}{2}f_{\beta\alpha}^j\right)
    \end{aligned}
\end{equation}
Hence we have an affine $T^{2n}$-bunde $K\rightarrow M_0$ and we find the conclusion of the theorem.
\end{proof}

\noindent Notice that in the particular case of a trivial class $[H^{(1)}]=0$ we have $n_{\alpha\beta}=0$ and hence we recover exactly the global geometric case discussed in the previous subsection.

\begin{remark}[$H$-flux and $F$-flux]\label{rem:hfflux}
The $H$-flux and the $F$-flux are still respectively the Dixmier-Douady class $[H]$ and the first Chern class $[F]$.
However in this case the $H$-flux compactification on $1$-cycles $[S^1_i]\in H_1(T^n,\mathbb{Z})$ cannot be seen as a first Chern class on the base manifold $M_0$ like in the previous subsection, but it is a cohomology class $[(H^{(2)},H^{(1)},0)]\in H^{2,1}_D(M_0,\,\wedge\,\mathbb{Z}^n)$. Notice the geometric $F$-flux can be also seen as a class $[(F,0,0)]\in H^{2,1}_D(M_0,\,\wedge\,\mathbb{Z}^n)$. Moreover the integration of the Dixmier-Douady class $[H]$ along $2$-cycles $[T^2_{ij}]\in H_2(T^n,\mathbb{Z})$ in the fiber is now non-trivial and it is given by the integral cohomology class $\big[H^{(1)}_{ij}\big]\in H^1(M_0,\mathbb{Z})$.
\end{remark}

\noindent Let us now generalize our discussion to the less simple case (but still abelian) of spacetime being a torus bundle with monodromy.

\begin{remark}[Torus bundle with monodromy]
Let us now generalize our torus bundle spacetime $M\xrightarrow{\pi}M_0$ to a torus bundle with monodromy given by the matrix $n_{\alpha\beta}^F\in\wedge^2\mathbb{Z}^n$. This can be seen as a cohomology class $[F^{(1)}]\in H^1(M_0,\mathbb{Z})$ such that we have the \v{C}ech cocycle
\begin{equation}
    \begin{aligned}
        F^{(1)}\, &=\, \mathbf{d}A^{(0)}_\alpha \\
        A^{(0)}_\beta - A^{(0)}_\alpha \,&=\, n^F_{\alpha\beta}
    \end{aligned}
\end{equation}
Hence we can update the $F$-flux in remark \ref{rem:hfflux} by adding the monodromy class to the curvature to obtain the class $[(F^{(2)},F^{(1)},0)]\in H^{2,1}_D(M_0,\,\wedge\,\mathbb{Z}^n)$.
Now, by looking at \eqref{eq:hflux}, we can generalize the patching conditions of the generalized correspondence space $K\rightarrow M_0$ by 
\begin{equation}\label{eq:monodromieshf}
\begin{pmatrix}
 \xi_\alpha \\
 \tilde{\xi}_\alpha
 \end{pmatrix} = \left(\begin{array}{@{}cc@{}}
 1+n_{\alpha\beta}^{F} &  0 \\ 
 n_{\alpha\beta}^{H} & 1-(n_{\alpha\beta}^{F})^{\mathrm{T}}
 \end{array}\right) \begin{pmatrix}
 \xi_\beta \\
 \tilde{\xi}_\beta
 \end{pmatrix}
\end{equation}
where $n_{\alpha\beta}^H$ is the monodromy matrix given by the $H$-flux $[H^{(1)}]\in H^1(M_0,\wedge^2\mathbb{Z}^n)$ and $n_{\alpha\beta}^F$ is the one given by the $F$-flux $[F^{(1)}]\in H^1(M_0,\wedge^2\mathbb{Z}^n)$. Hence the generalized correspondence space $K\rightarrow M_0$ will be an affine torus bundle patched by transition functions in the subgroup $\big(GL(n;\mathbb{Z})\ltimes\wedge^2\mathbb{Z}\big)\ltimes T^{2n}\subset \mathrm{Aff}(T^{2n})$ on overlaps $U_\alpha\cap U_\beta \subset M_0$, as found by \cite{Hul09}.
\end{remark}

\begin{remark}[T-duality $O(n,n;\mathbb{Z})$-action]\label{rem:tdualityoftfold}
Similarly to the previous subsection, we can still write our moduli fields of the Kaluza-Klein reduction by using the doubled metric as it follows
\begin{equation}
      \Theta_\alpha := \begin{pmatrix}
 \theta_\alpha \\
 \tilde{\theta}_\alpha
 \end{pmatrix}, \quad \mathcal{H}^{(0)}_\alpha := \begin{pmatrix}
 g^{(0)}_{ij} - B_{\alpha\, ik}^{(0)}g^{(0)kl}B_{\alpha\, lj}^{(0)} & B_{\alpha\, ik}^{(0)}g^{(0)kj} \\
 -g^{(0)ik}B_{\alpha\, kj}^{(0)} & g^{(0)ij}
 \end{pmatrix}, \quad \mathcal{A}_\alpha := \begin{pmatrix}
 A_\alpha \\
 B^{(1)}_\alpha
 \end{pmatrix}
\end{equation}
We still have a natural $O(n,n;\mathbb{Z})$ group action, whose elements $\mathcal{O}$ act by
\begin{equation}
      \Theta_\alpha\mapsto \mathcal{O}^{-1}\Theta_\alpha, \quad \mathcal{H}^{(0)}_\alpha\mapsto \mathcal{O}^{\mathrm{T}}\mathcal{H}^{(0)}_\alpha\mathcal{O}, \quad \mathcal{A}_\alpha\mapsto \mathcal{O}^{-1}\mathcal{A}_\alpha.
\end{equation}
However now the $O(n,n)$-moduli field $\mathcal{H}^{(0)}_\alpha$ is not globally defined on the base manifold $M_0$. This means that by applying a general $O(n,n;\mathbb{Z})$ transformation we obtain new differential data $\tilde{A}_\alpha$, $\tilde{B}_\alpha^{(0)}$, $\tilde{B}_\alpha^{(1)}$ which in general cannot be interpreted anymore as a $T^n$-bundle with gerbe connection. Only a transformation belonging to the \textit{geometric subgroup} $\mathcal{O}\in GL(d,\mathbb{Z})\ltimes\wedge^2\mathbb{Z}\subset O(n,n;\mathbb{Z})$ will send a background consisting of global $T^n$-bundle $M$ with gerbe connection to another one consisting of a $T^n$-bundle $\widetilde{M}$ with gerbe connection.
\end{remark}

\noindent Moreover notice that the transition functions \eqref{eq:monodromieshf} are not closed under $O(n,n;\mathbb{Z})$-action on the torus fiber, but only under its geometric subgroup. Hence if we want an interpretation for the non-geometric T-duals need to introduce the $Q$-flux which encodes T-folds.

\begin{remark}[$Q$-flux]
We can perform a general $O(n,n;\mathbb{Z})$ transformation of the transition functions \eqref{eq:hflux} fo the correspondence space $K\rightarrow M_0$ to obtain the following new ones
\begin{equation}
\begin{pmatrix}
 \xi_\alpha' \\
 \tilde{\xi}_\alpha'
 \end{pmatrix} = \left(\begin{array}{@{}cc@{}}
 1+n_{\alpha\beta}^{F\,\prime} &  n_{\alpha\beta}^{Q\,\prime} \\ 
 n_{\alpha\beta}^{H\,\prime} & 1-(n_{\alpha\beta}^{F\,\prime})^{\mathrm{T}}
 \end{array}\right) \begin{pmatrix}
 \xi_\beta' \\
 \tilde{\xi}_\beta'
 \end{pmatrix}
\end{equation}
Notice we have a new monodromy matrix $n^{Q\, ij}_{\alpha\beta}$, which patches the dual $\tilde{B}^{(0)ij}_\beta - \tilde{B}^{(0)ij}_\alpha = n^{Q\, ij}_{\alpha\beta}$
and hence it is a cohomology class.
Therefore the dual background has a new flux, which we call \textit{locally non-geometric flux} or just $Q$-flux, given by the cohomology class $\big[Q^{(1)ij}\big]\in H^{1}(M_0,\mathbb{Z})$. The \v{C}ech cocycle of the $Q$-flux is then by construction
\begin{equation}
    \begin{aligned}
        Q^{(1)}\, &=\, \mathbf{d}\tilde{B}^{(0)}_\alpha \\
        \tilde{B}^{(0)}_\beta - \tilde{B}^{(0)}_\alpha \,&=\, n^Q_{\alpha\beta}
    \end{aligned}
\end{equation}
where the moduli field $\tilde{B}^{(0)}_\alpha$ is the dual of the original moduli field ${B}^{(0)}_\alpha$.
\end{remark}

\noindent Let us resume everything in a familiar example. If we start from a background with only a $H$-flux $[H^{(1)}_{ij}]\in H^1(M_0,\mathbb{Z})$ and we perform a T-duality along the $i$-th circle $S^1_i$ of the torus fibre we get a background with $F$-flux $[F^{(1)\;\,j}_{\quad \,i}]\in H^1(M_0,\mathbb{Z})$ on the dual $[\tilde{S}^1_i]\in H_1(M_0,\mathbb{Z})$ circle. If now we perform another T-duality along the $j$-th circle $S^1_j$ we end up with non-trivial $Q$-flux $[Q^{(1)ij}]\in H^1(M_0,\mathbb{Z})$ on the dual torus $[\tilde{S}^1_i\times\tilde{S}^1_j]\in H_2(M_0,\mathbb{Z})$ of the fibre. This argument can be condensed in the following commuting diagram
\vspace{-1.5cm}\begin{equation}
    \begin{tikzcd}[row sep=13ex, column sep=10ex]
    \left[H^{(1)}_{ij}\right] \arrow[out=180,in=90,loop, leftrightarrow, "\mathcal{O}_B"] \arrow[r, leftrightarrow, "\mathcal{T}_i"] \arrow[d, leftrightarrow, "\mathcal{T}_j"] & \left[F^{(1)\;\,j}_{\quad \,i}\right] \arrow[d, leftrightarrow, "\mathcal{T}_j"']\arrow[out=90,in=0,loop, leftrightarrow, "\mathcal{O}_{N_i}"] \\
    \left[F^{(1)i}_{\quad\;\; j}\right] \arrow[out=270,in=180,loop, leftrightarrow, "\mathcal{O}_{N_j}"] \arrow[r, leftrightarrow, "\mathcal{T}_i"] & \left[Q^{(1)ij}\right] \arrow[out=270,in=360,loop, leftrightarrow, "\mathcal{O}_{\widetilde{B}}"']
\end{tikzcd}\vspace{-1.5cm}
\end{equation}
where $\mathcal{O}_{B}:=\left(\begin{smallmatrix*}1&0\\b&1\end{smallmatrix*}\right)\in O(n,n;\mathbb{Z})$ is any  $B$-shift, while $\mathcal{O}_{N_i} := \mathcal{T}_i^\mathrm{T}\mathcal{O}_B\mathcal{T}_i$, $\mathcal{O}_{N_j} := \mathcal{T}_j^\mathrm{T}\mathcal{O}_B\mathcal{T}_j$ and $\mathcal{O}_{\widetilde{B}} := (\mathcal{T}_j\mathcal{T}_i)^\mathrm{T}\mathcal{O}_B(\mathcal{T}_j\mathcal{T}_i)$ are transformations which preserve the respective fluxes.

\begin{remark}[Geometrical interpretation of fluxes]
We remark again that the $H$-flux, the geometric flux and the locally non geometric flux are not put on the doubled space by hand like in the usual approaches, but they are \textit{topological properties} of the doubled space $\mathcal{M}$ itself.
\end{remark}

\noindent Again we show it is possible to see $K$ as a global para-Hermitian bundle, which is effective in encoding just the data of the generalized correspondence space, but again this formalism cannot geometrize the whole gerbe. Hence from our formalism can derived an explanation of why para-Hermitian geometry is been effectively used to deal with DFT and T-duality.

\begin{remark}[The correspondence space is a global para-Hermitian bundle]
Since $\xi^i$ and $\Xi_i = \tilde{\xi}_{\alpha i} - B^{(0)}_{\alpha ij}\xi^j$ are globally defined $1$-forms on the generalized correspondence space $K$, we can define the same global $2$-form and global tensor of remark \ref{rem:parahermitianK}:
\begin{equation}\begin{aligned}
    \omega_K &:= \left(\tilde{\xi}_{\alpha i}-B^{(0)}_{\alpha ij}\xi^j\right)\wedge\xi^i ,\\
    \eta_K &:= \left(\tilde{\xi}_{\alpha i}-B^{(0)}_{\alpha ij}\xi^j\right)\odot\xi^i,
\end{aligned}\end{equation}
but also a para-complex structure
\begin{equation}
    J_K \,=\,   \frac{\partial}{\partial\tilde{\theta}_i}\otimes\left(\tilde{\xi}_i-B^{(0)}_{\alpha ij}\xi^j\right)  -   \left(\frac{\partial}{\partial \theta^i} - B^{(0)}_{\alpha ij}\frac{\partial}{\partial \tilde{\theta}_j}\right)\otimes\xi^i.
\end{equation}
The only difference with remark \ref{rem:globparaherm} is that $B^{(0)}_\alpha$ is not anymore a globally defined scalar. The triple $(K,J_K,\eta_K)$ is again a para-Hermitian structure which is the global non-trivially fibrated version of the one presented by \cite[p.~31]{MarSza18}.
\end{remark}

\subsection{Geometrization of general abelian T-duality}

Until now we investigated simple examples of T-dualities. In this section we want to give some insight of the general case by starting from a result by \cite{BelHulMin07}. In the previous sections we assumed the invariance of the gerbe connection under the principal torus action. This condition can be immediately relaxed by requiring just $\mathcal{L}_{\mathbf{k}_i}H=0$ and hence that the connection gauge transforms like $\mathcal{L}_{\mathbf{k}_i}B_\alpha = \mathrm{d}\eta_\alpha$ for some local $1$-form $\eta_\alpha$ under it.

\begin{theorem}[Gerbe with $T^n$-invariant curvature]\label{lemma:flatlie}
It was proven by \cite{BelHulMin07} that the principal $T^n$-action on $M$ can be lifted to a $T^n$-action on a gerbe with $T^n$-invariant curvature. The local data of this action on a good cover $\{U_\alpha\}$ for $M$ are given by a collection of $1$-forms $\eta_\alpha\in\Omega^1(U_\alpha,\mathbb{R}^n)$ on patches, of functions $\eta_{\alpha\beta}\in\Coo(U_\alpha\cap U_\beta,\mathbb{R}^n)$ on two-fold overlaps of patches and of constants $c_{\alpha\beta\gamma}\in\mathbb{R}^n$ on three-fold overlaps of patches which satisfy
\begin{equation}\label{eq:hullaction}
    \begin{aligned}
    \mathcal{L}_{\mathbf{k}_i}B_{\alpha} \,&=\, \mathrm{d}\eta_{\alpha\, i} \\
    \mathcal{L}_{\mathbf{k}_i}\Lambda_{\alpha\beta}  \,&=\, \eta_{\beta\, i} - \eta_{\alpha\, i} -\mathrm{d}\eta_{\alpha\beta\, i}\\
    \mathcal{L}_{\mathbf{k}_i}G_{\alpha\beta\gamma} \,&=\, \eta_{\alpha\beta\, i} + \eta_{\beta\gamma\, i} + \eta_{\gamma\alpha\, i} + c_{\alpha\beta\gamma\, i} \\
    & \quad\;\; c_{\alpha\beta\gamma\, i} - c_{\beta\gamma\delta\, i} + c_{\gamma\delta\alpha\, i} - c_{\delta\alpha\beta\, i} \,\in\, 2\pi\mathbb{Z}.
    \end{aligned}
\end{equation}
\end{theorem}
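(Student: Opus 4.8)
The statement is essentially a functorial/descent argument: a $T^n$-action on the base $M$ that preserves the curvature $H$ up to the gauge redundancy of the gerbe must lift to an action on the gerbe cocycle $(B_\alpha,\Lambda_{\alpha\beta},G_{\alpha\beta\gamma})$, and the obstruction to doing so equivariantly is measured precisely by the constants $c_{\alpha\beta\gamma\, i}$. The plan is to work infinitesimally first, i.e.\ with the Lie derivatives $\mathcal{L}_{\mathbf{k}_i}$ along the fundamental vector fields $\mathbf{k}_i$ of the torus action, and then integrate to a genuine $T^n$-action as in \cite{BelHulMin07}. Since the result is explicitly attributed to that reference, the job here is to reconstruct the argument in the language of the present paper, using the \v{C}ech data of a gerbe with connection from the earlier remarks.

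\textbf{Step 1: Differentiating the gerbe cocycle.} Start from the \v{C}ech patching conditions $B_\beta - B_\alpha = \mathrm{d}\Lambda_{\alpha\beta}$, $\Lambda_{\alpha\beta}+\Lambda_{\beta\gamma}+\Lambda_{\gamma\alpha} = \mathrm{d}G_{\alpha\beta\gamma}$, and $G_{\alpha\beta\gamma}-G_{\beta\gamma\delta}+G_{\gamma\delta\alpha}-G_{\delta\alpha\beta}\in 2\pi\mathbb{Z}$ of a gerbe with connection. Apply $\mathcal{L}_{\mathbf{k}_i}$ to each. Because $\mathcal{L}_{\mathbf{k}_i}$ commutes with $\mathrm{d}$ and annihilates constants, this yields: $\mathcal{L}_{\mathbf{k}_i}B_\beta - \mathcal{L}_{\mathbf{k}_i}B_\alpha = \mathrm{d}\,\mathcal{L}_{\mathbf{k}_i}\Lambda_{\alpha\beta}$, and $\mathcal{L}_{\mathbf{k}_i}\Lambda_{\alpha\beta}+\mathcal{L}_{\mathbf{k}_i}\Lambda_{\beta\gamma}+\mathcal{L}_{\mathbf{k}_i}\Lambda_{\gamma\alpha} = \mathrm{d}\,\mathcal{L}_{\mathbf{k}_i}G_{\alpha\beta\gamma}$, and $\mathcal{L}_{\mathbf{k}_i}$ of the four-fold relation vanishes. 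So $(\mathcal{L}_{\mathbf{k}_i}B_\alpha, \mathcal{L}_{\mathbf{k}_i}\Lambda_{\alpha\beta}, \mathcal{L}_{\mathbf{k}_i}G_{\alpha\beta\gamma})$ is itself a \v{C}ech cocycle of the \emph{same} type as a coboundary datum for the gerbe, i.e.\ an infinitesimal gauge transformation — this is exactly the statement that $\mathfrak{at}(\mathcal{M})$ from definition \ref{def:courantalg} contains the vertical part $\mathbf{H}(M,\mathbf{B}\mathbb{R}_{\mathrm{conn}})$.

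\textbf{Step 2: Solving for the action potentials.} Now invoke the hypothesis $\mathcal{L}_{\mathbf{k}_i}H=0$, which by $H = \mathrm{d}B_\alpha$ gives $\mathrm{d}(\mathcal{L}_{\mathbf{k}_i}B_\alpha)=0$, so each $\mathcal{L}_{\mathbf{k}_i}B_\alpha$ is closed on $U_\alpha$. Since $\{U_\alpha\}$ is a good cover, choose primitives $\eta_{\alpha\,i}\in\Omega^1(U_\alpha,\mathbb{R}^n)$ with $\mathcal{L}_{\mathbf{k}_i}B_\alpha = \mathrm{d}\eta_{\alpha\,i}$, which is the first equation of \eqref{eq:hullaction}. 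Substituting into the differentiated relation from Step 1, $\mathrm{d}(\eta_{\beta\,i}-\eta_{\alpha\,i}) = \mathrm{d}\,\mathcal{L}_{\mathbf{k}_i}\Lambda_{\alpha\beta}$ on $U_\alpha\cap U_\beta$, so $\eta_{\beta\,i}-\eta_{\alpha\,i}-\mathcal{L}_{\mathbf{k}_i}\Lambda_{\alpha\beta}$ is closed, hence exact: call a primitive $-\mathrm{d}\eta_{\alpha\beta\,i}$, giving the second equation. Substituting this in turn into the three-fold relation from Step 1 forces $\mathcal{L}_{\mathbf{k}_i}G_{\alpha\beta\gamma} - (\eta_{\alpha\beta\,i}+\eta_{\beta\gamma\,i}+\eta_{\gamma\alpha\,i})$ to be locally constant on each triple overlap; name this constant $c_{\alpha\beta\gamma\,i}\in\mathbb{R}^n$, yielding the third equation. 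Finally, applying the alternating sum on four-fold overlaps to the third equation and using that $\mathcal{L}_{\mathbf{k}_i}$ kills the integer-valued four-fold relation shows $c_{\alpha\beta\gamma\,i}-c_{\beta\gamma\delta\,i}+c_{\gamma\delta\alpha\,i}-c_{\delta\alpha\beta\,i}\in 2\pi\mathbb{Z}$, which is the closedness of $[c_{\alpha\beta\gamma\,i}]$ as a \v{C}ech class in $H^2(M,\mathbb{R}^n_{\mathrm{discr}})$.

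\textbf{Step 3: From infinitesimal to finite action.} The collection $(\eta_{\alpha\,i},\eta_{\alpha\beta\,i},c_{\alpha\beta\gamma\,i})$ defines, for each $i$, an infinitesimal gerbe gauge transformation along $\mathbf{k}_i$, twisted by the discrete class $[c_{\alpha\beta\gamma}]$. To integrate these to an honest $T^n$-action one checks that the $\mathbf{k}_i$ commute (torus), so the induced infinitesimal gerbe automorphisms commute up to the flat holonomy data, and then one exponentiates using that $T^n$ is compact and abelian; this is precisely the lifting construction of \cite{BelHulMin07}, which I would cite rather than redo. The main obstacle — and the reason the $c_{\alpha\beta\gamma\,i}$ are genuinely needed rather than removable — is that the primitives chosen in Step 2 cannot in general be chosen equivariantly: the class $[c_{\alpha\beta\gamma\,i}]\in H^2(M,\mathbb{R}^n)$ is the obstruction to a strictly invariant lift, and it maps to (a multiple of) the first Chern class $[\pi_\ast H]$ of the dual torus direction under the edge map of the dimensionally reduced Gysin sequence. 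One must be careful that the freedom in redefining $\eta_{\alpha\,i}$ by closed forms, $\eta_{\alpha\beta\,i}$ by constants, etc., shifts $c_{\alpha\beta\gamma\,i}$ only within its cohomology class — so it is $[c_{\alpha\beta\gamma\,i}]$, not the cochain, that is the invariant. This is the step I expect to require the most care; everything else is the routine descent bookkeeping of Steps 1–2.
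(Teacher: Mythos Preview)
Your Steps 1--2 are correct and coincide with the paper's argument: the paper compresses them into the single observation that $(\mathcal{L}_{\mathbf{k}_i}B_{\alpha},\,\mathcal{L}_{\mathbf{k}_i}\Lambda_{\alpha\beta},\,\mathcal{L}_{\mathbf{k}_i}G_{\alpha\beta\gamma})$ is the \v{C}ech data of a \emph{flat} gerbe (curvature $\mathcal{L}_{\mathbf{k}_i}H=0$) and then invokes the flat-gerbe characterization \eqref{eq:flatgerbe} already recorded earlier, which is exactly the descent you unpack by hand. Your Step 3 and the obstruction discussion go beyond what the paper proves here (the finite lift is deferred to \cite{BelHulMin07}, and the interpretation of $[c_{\alpha\beta\gamma}]$ appears only in the subsequent lemmas), so you may safely drop them.
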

\begin{proof}
We can see $\mathcal{L}_{\mathbf{k}_i}H=0$ as the curvature of a flat gerbe on $M$ for any $i=1,\dots,n$. Therefore we obtain the gluing conditions for the local data $(\mathcal{L}_{\mathbf{k}_i}B_{\alpha},\,\mathcal{L}_{\mathbf{k}_i}\Lambda_{\alpha\beta},\,\mathcal{L}_{\mathbf{k}_i}G_{\alpha\beta\gamma})$ from \eqref{eq:flatgerbe}.
\end{proof}

\noindent Notice that the cohomology class $[c_{\alpha\beta\gamma}]\in H^2(M,T^n)$ can be interpreted as the flat holonomy class (definition \ref{eq:holonomyclass}) of this flat gerbe. 

\vspace{0.2cm}

\noindent The case in which the gerbe connection simply satisfies $\mathcal{L}_{\mathbf{k}_i}B_{\alpha}=0$ on each patch $U_\alpha$ is clearly a particular case of the general case \eqref{eq:hullaction}.

\begin{remark}[Underlying generalized vector]\label{rem:translation}
If the holonomy class $[c_{\alpha\beta\gamma}]$ is trivial, then the collection $(\mathbf{k}_i,\, \eta_{\alpha\beta i},\, \eta_{\alpha i})$ is exactly the \v{C}ech data of a section of the stack $T\mathcal{M}$ of the form \eqref{eq:localsymdata} from lemma \ref{thm:dtbild}. If we reparametrize the scalars by $\hat{\eta}_{\alpha\beta i} := \eta_{\alpha\beta i} - \iota_{\mathbf{k}_i}\Lambda_{\alpha\beta}$ according to lemma \ref{thm:dtbild} we get the local data of a global generalized vector $\mathbbvar{k}_i:=(\mathbf{k}_i+\eta_{\alpha i},\, \hat{\eta}_{\alpha\beta i})\in\mathfrak{at}(\mathcal{M})$.
\end{remark}

\noindent Notice this is an application where our global definition of doubled vector (see lemma \ref{thm:dtbild}), which is equipped with scalars $\hat{\eta}_{\alpha\beta}^i$ on two-fold overlaps of patches, is indispensable. Indeed global differential T-duality is formalized in terms of our generalized vectors $\mathbbvar{k}_i:=(\mathbf{k}_i+\eta_{\alpha i},\, \hat{\eta}_{\alpha\beta i})$ (see definition \ref{def:courantalg}), but not of the usual generalized vectors from Generalized Geometry.

\begin{definition}[Fundamental generalized vector]
The principal torus action on $M$ induces a Lie algebra homomorphism $\mathfrak{u}(1)^n \rightarrow\mathfrak{X}(M)$ which maps an element of the algebra to a fundamental vector $\mathbf{k}_i$. Thus this can be lifted to a Lie $2$-algebra homomorphism
\begin{equation}\label{eq:alghom}
    \mathfrak{u}(1)^n \longrightarrow \mathfrak{at}(\mathcal{M})
\end{equation}
which maps an element of the algebra in a generalized vector $\mathbbvar{k}_i:=(\mathbf{k}_i+\eta_{\alpha}^i,\, \hat{\eta}_{\alpha\beta}^i)$, where $\mathbf{k}_i$ is the fundamental vector of the action $\mathfrak{u}(1)^n \rightarrow\mathfrak{X}(M)$ and the local data $(\eta_{\alpha}^i,\, \hat{\eta}_{\alpha\beta}^i)$ are defined by conditions \eqref{eq:hullaction} with redefinition $\hat{\eta}_{\alpha\beta i} := \eta_{\alpha\beta i} - \iota_{\mathbf{k}_i}\Lambda_{\alpha\beta}$ of remark \ref{rem:translation}. 
\end{definition}

\noindent It is easy to check that $\llbracket\mathbbvar{k}_i,\mathbbvar{k}_j\rrbracket = 0$ for $i,j=1,\dots,n$.

\begin{definition}[Killing generalized vector]
A generalized vector $\mathbbvar{k}$ is Killing if $\llbracket \mathbbvar{k}, \mathbbvar{e}_\mathrm{I} \rrbracket =0$. We will use the symbol $\mathfrak{iso}(\mathcal{M},\mathcal{H})\subset\mathfrak{at}(\mathcal{M})$ for the sub-$2$-algebra of Killing generalized vectors.
\end{definition}

\begin{theorem}[General T-duality on the doubled space]\label{thm:generaldifftduality}
There exists a $T^n$-bundle $K\rightarrow M$ with first Chern class $\mathrm{c}_1(K)=[\pi_\ast H]=[H^{(2)}-\langle H^{(1)}\wedge\xi\rangle]\in H^2(M,\mathbb{Z})^n$ if and only if the fundamental generalized vectors $\{\mathbbvar{k}_i\}$ of the principal $T^n$-action on $M$ are Killing.
\end{theorem}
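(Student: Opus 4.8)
The plan is to exploit the equivalence, established in the Chatterjee--Hitchin/dimensionally-reduced-Gysin picture, between the data of a $T^n$-bundle $K\to M$ with a prescribed first Chern class and the existence of a closed integral $2$-form representative in that class which, crucially, can be written globally as $\mathrm{d}$ of a connection $1$-form. Concretely, the class $[\pi_\ast H]=[H^{(2)}-\langle H^{(1)}\wedge\xi\rangle]\in H^2(M,\mathbb{Z})^n$ is realized by an honest $T^n$-bundle $K\to M$ precisely when the $2$-form $\widetilde F_i := \iota_{\mathbf{k}_i}H$ is closed on $M$ and admits a \emph{global} primitive up to the integral lattice, i.e. when there exist local connection $1$-forms $\widetilde A_{\alpha\, i}$ with $\widetilde A_{\beta\,i}-\widetilde A_{\alpha\,i}=\mathrm{d}\tilde f_{\alpha\beta\,i}$ and $\tilde f_{\alpha\beta}+\tilde f_{\beta\gamma}+\tilde f_{\gamma\alpha}\in 2\pi\mathbb{Z}^n$. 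So the statement reduces to: \emph{the $\mathbbvar{k}_i$ are Killing $\iff$ such a global connection for $\iota_{\mathbf{k}_i}H$ exists.}

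First I would unpack the Killing condition $\llbracket\mathbbvar{k}_i,\mathbbvar{e}_\mathrm{I}\rrbracket=0$ using the twisted Courant bracket of remark \emph{[Twisted bracket]} and the explicit generalized vielbein $\mathbbvar{e}_\mathrm{M}$ of remark \ref{rem:genvielbein}. Writing $\mathbbvar{k}_i=(\mathbf{k}_i+\eta_{\alpha}^i,\hat\eta_{\alpha\beta}^i)$ and expanding the bracket, the vanishing of $\llbracket\mathbbvar{k}_i,\mathbbvar{e}_\mathrm{M}\rrbracket$ for all $\mathrm{M}$ is equivalent, after using $[\mathbf{k}_i,e_\mu]$-type terms and the closedness relation for $H$, to the pair of conditions $\mathcal{L}_{\mathbf{k}_i}g=0$ and $\mathcal{L}_{\mathbf{k}_i}B_\alpha=\mathrm{d}\eta_{\alpha\,i}$ together with the cocycle relations \eqref{eq:hullaction} for $(\eta_{\alpha i},\hat\eta_{\alpha\beta i})$ with \emph{trivial} holonomy constants $c_{\alpha\beta\gamma i}=0$. (The metric part $\mathcal{L}_{\mathbf{k}_i}g=0$ is automatic from the isometry hypothesis absorbed into postulate \ref{post3} and remark \ref{rem:isometry}; the content is the gerbe part.) This is exactly the statement that the flat gerbe with curvature $\mathcal{L}_{\mathbf{k}_i}H=0$ (lemma \ref{lemma:flatlie}) is not merely flat but \emph{trivial}, i.e. has vanishing flat holonomy class $[c_{\alpha\beta\gamma}]\in H^2(M,U(1)_{\mathrm{discr}})$.

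Next I would close the loop. Given the Killing condition, set $\widetilde A_{\alpha\,i}:=-\iota_{\mathbf{k}_i}B_\alpha = B^{(1)}_{\alpha\,i}-B^{(0)}_{\alpha\,ij}\xi^j$; by Cartan's formula and $\mathcal{L}_{\mathbf{k}_i}H=0$ one gets $\mathrm{d}\widetilde A_{\alpha\,i}=\iota_{\mathbf{k}_i}H=\widetilde F_i$, and the transition computation carried out in the proof of lemma \ref{thm:gencorrspace} (following \cite{BelHulMin07}), combined with $c_{\alpha\beta\gamma i}=0$, shows $\widetilde A_{\beta\,i}-\widetilde A_{\alpha\,i}=\mathrm{d}(\text{integral-shifted }\tilde f_{\alpha\beta\,i})$, so $(\widetilde A_{\alpha i},\tilde f_{\alpha\beta i})$ is a genuine $T^n$-bundle cocycle; this bundle is the wanted $K\to M$, with $\mathrm{c}_1(K)=[\widetilde F]=[\pi_\ast H]$ since $\pi_\ast H = [H^{(2)}-\langle H^{(1)}\wedge\xi\rangle]$ by the Gysin description. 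Conversely, if such a $K\to M$ exists with $\mathrm{c}_1(K)=[\pi_\ast H]$, then its connection $1$-forms furnish local $\eta_{\alpha i}$ solving \eqref{eq:hullaction} with $c_{\alpha\beta\gamma i}=0$, i.e. a global generalized vector $\mathbbvar{k}_i$ with $\mathcal{L}_{\mathbf{k}_i}B_\alpha=\mathrm{d}\eta_{\alpha i}$; running the bracket computation backwards gives $\llbracket\mathbbvar{k}_i,\mathbbvar{e}_\mathrm{I}\rrbracket=0$. The main obstacle is the forward bracket computation: one must be careful that the Courant bracket of the \emph{twisted} algebroid $\mathfrak{at}(\mathcal{M})$ produces exactly the term $\iota_{\mathbf{k}_i}\iota_{e_\mu}H$ that, upon demanding it cancel, forces the flat-gerbe holonomy to vanish — the bookkeeping of the $\frac12\mathrm{d}(\iota_X\eta-\iota_Y\xi)$ and $\iota_X\iota_Y H$ terms against the nontrivial patching $\xi_\alpha-\xi_\beta=-\iota_X\mathrm{d}\Lambda_{\alpha\beta}+\mathrm{d}f_{\alpha\beta}$ is where all the subtlety lives, and it is essential that one works with the full generalized vectors carrying the two-fold-overlap scalars $\hat\eta_{\alpha\beta}^i$ rather than the naive horizontal sections.
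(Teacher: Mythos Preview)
Your approach is the same as the paper's: both reduce the existence of $K$ to the triviality of the flat holonomy class $[c_{\alpha\beta\gamma}]\in H^2(M,U(1)_{\mathrm{discr}})$ of lemma~\ref{lemma:flatlie}, identify this triviality with the Killing condition on the $\mathbbvar{k}_i$, and then read off the $T^n$-bundle cocycle from the resulting \v{C}ech data. You in fact supply more than the paper does (the explicit bracket expansion and the converse direction, which the paper does not write out).

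One slip to correct: in the general case $\mathcal{L}_{\mathbf{k}_i}B_\alpha=\mathrm{d}\eta_{\alpha\,i}\neq 0$, your candidate connection $\widetilde{A}_{\alpha\,i}:=-\iota_{\mathbf{k}_i}B_\alpha$ does \emph{not} satisfy $\mathrm{d}\widetilde{A}_{\alpha\,i}=\widetilde{F}_i$; Cartan's formula gives $\mathrm{d}(-\iota_{\mathbf{k}_i}B_\alpha)=\iota_{\mathbf{k}_i}H-\mathrm{d}\eta_{\alpha\,i}$. The correct local connection (as in the paper's \eqref{eq:torusinthecore} and remark~\ref{rem:gencorrspacegen}) is $\eta_{\alpha\,i}-\iota_{\mathbf{k}_i}B_\alpha$; your formula is the one from the special case of lemma~\ref{thm:gencorrspace} where $\eta_{\alpha\,i}=0$. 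This is a one-line fix and does not affect the strategy.
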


\begin{proof}
From lemma \ref{lemma:flatlie} we get the following patching conditions for the closed form $\widetilde{F}_i := \iota_{\mathbf{k}_i}H$
\begin{equation}\label{eq:torusinthecore}
    \begin{aligned}
        \widetilde{F}_i \,&=\, \mathrm{d}(\eta_{\alpha\,i} - \iota_{\mathbf{k}_i}B_{\alpha}) \\
        (\eta_{\beta\,i} - \iota_{\mathbf{k}_i}B_{\beta}) - (\eta_{\alpha\,i} - \iota_{\mathbf{k}_i}B_{\alpha}) \,&=\, \mathrm{d}(\iota_{\mathbf{k}_i}\Lambda_{\alpha\beta}+\eta_{\alpha\beta\,i}) \\
        (\iota_{\mathbf{k}_i}\Lambda_{\alpha\beta}+\eta_{\alpha\beta\,i}) + (\iota_{\mathbf{k}_i}\Lambda_{\beta\gamma}+\eta_{\beta\gamma\,i}) + (\iota_{\mathbf{k}_i}\Lambda_{\gamma\alpha}+\eta_{\gamma\alpha\,i}) \,&=\, c_{\alpha\beta\gamma\, i}
    \end{aligned}
\end{equation}
These are precisely the \v{C}ech data for a principal $T^n$-bundle on $M$ if and only if $c_{\alpha\beta\gamma \,i}\in 2\pi\mathbb{Z}$.
If the fundamental generalized vectors $\{\mathbbvar{k}_i\}$ are Killing, then ordinary vectors $\{\mathbf{k}_i\}$ are Killing respect to the ordinary metric $g$, i.e. $\mathcal{L}_{\mathbf{k}_i}g=0$. Also the flat gerbes defined by $\mathcal{L}_{\mathbf{k}_i}H=0$ in lemma \ref{lemma:flatlie} are trivial, which is equivalent to the fact that the flat holonomy classes $[c_{\alpha\beta\gamma i}]$ are trivial. This means that the local data \eqref{eq:torusinthecore} define a $T^n$-bundle and hence the conclusion.
\end{proof}

\begin{remark}[Generalized correspondence space]\label{rem:gencorrspacegen}
From the proof of lemma \ref{thm:generaldifftduality} we obtain the \v{C}ech data of a principal $T^n$-bundle $K\rightarrow M$, which we call \textit{generalized correspondence space}. This has first Chern class $\big[\widetilde{F}_i\big]=\big[\iota_{\mathbf{k}_i}H\big]$ and global connection $1$-form $\Xi\in\Omega^1(K,\mathbb{R}^n)$ given by
\begin{equation}
\begin{aligned}
    \Xi_i \,:=\, \mathrm{d}\tilde{\theta}_{\alpha i} + \eta_{\alpha i}^{(1)} + B^{(1)}_{\alpha i} + (\eta_{\alpha ij}^{(0)} - B^{(0)}_{\alpha ij})\xi^j
\end{aligned}
\end{equation}
where we called its local fiber coordinates $(\tilde{\theta}_\alpha)$ and we split $\eta_{\alpha i}=\eta_{\alpha i}^{(1)}+ \eta_{\alpha ij}^{(0)}\xi^j$ in horizontal and vertical components. Again the picture \eqref{eq:tfolddiagram} holds:
\begin{equation}
\begin{tikzcd}[row sep=4.5ex, column sep=2.8ex]
 & & K \arrow[dl, two heads]\arrow[dr, dotted] & \\
 T^n \arrow[r, hook] & M \arrow[dr, "\pi"', two heads] & & \text{T-fold}\arrow[dl, dotted] \\
 & & M_0 &
\end{tikzcd}
\end{equation}
\end{remark}

\noindent Recall that generalized vectors are infinitesimal automorphisms of the doubled space. Notice that Killing generalized vectors $\mathfrak{iso}(\mathcal{M},\mathcal{H})$ are infinitesimal isometries of the doubled space and therefore they can be integrated to finite isometries of the doubled metric structure (remark \ref{rem:isometry}), i.e. to elements of $\mathbf{Iso}(\mathcal{M},\mathcal{H})=\mathrm{Iso}(M,g)\ltimes\mathbf{H}(M,\BU)$. By integrating lemma \ref{thm:generaldifftduality} we get the following statement.

\begin{theorem}[General T-duality on the doubled space, finite formulation]
There exists a $T^n$-bundle $K\rightarrow M$ with first Chern class $\mathrm{c}_1(K)=[\pi_\ast H]=[H^{(2)}-\langle H^{(1)}\wedge\xi\rangle]\in H^2(M,\mathbb{Z})^n$ if and only if any automorphism in $\Coo(M_0,\,T^n)$ of spacetime $M\xrightarrow{\pi}M_0$ is lifted to an isometry in $\mathbf{Iso}(\mathcal{M},\mathcal{H})$ of the doubled space.
\end{theorem}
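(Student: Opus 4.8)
The plan is to reduce this finite statement to its infinitesimal counterpart, lemma \ref{thm:generaldifftduality}, by passing through the Lie differentiation/integration correspondence between the isometry $2$-group $\mathbf{Iso}(\mathcal{M},\mathcal{H})$ and the sub-$2$-algebra $\mathfrak{iso}(\mathcal{M},\mathcal{H})$ of Killing generalized vectors. The only genuinely new input beyond the previous lemmas is the passage between the finite and the infinitesimal pictures of the lifted symmetry.

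First I would note that $\Coo(M_0,T^n)$ is precisely the group of bundle automorphisms of $M\xrightarrow{\pi}M_0$ covering the identity on $M_0$, and that, sitting inside $\mathrm{Diff}(M)$, it is generated by the fundamental vector fields $\mathbf{k}_i$ of the principal $T^n$-action. Asking that every such automorphism lift to an element of $\mathbf{Iso}(\mathcal{M},\mathcal{H})\subset\mathbf{Aut}_{/}(\Lambda,G)=\mathrm{Diff}(M)\ltimes\mathbf{H}(M,\BU)$ is, by remark \ref{rem:isometry}, the requirement that the fibrewise $T^n$-action on $M$ be covered by an action on the doubled space preserving the doubled metric structure. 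Differentiating such a lift with the Lie functor produces exactly the Lie $2$-algebra homomorphism $\mathfrak{u}(1)^n\to\mathfrak{at}(\mathcal{M})$ sending the $i$-th generator to $\mathbbvar{k}_i=(\mathbf{k}_i+\eta_\alpha^i,\,\hat\eta_{\alpha\beta}^i)$, and the hypothesis that the lift lands in $\mathbf{Iso}(\mathcal{M},\mathcal{H})$ becomes precisely the statement that the image lands in $\mathfrak{iso}(\mathcal{M},\mathcal{H})$, i.e. that the fundamental generalized vectors $\{\mathbbvar{k}_i\}$ are Killing.

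Conversely, I would integrate: if the $\{\mathbbvar{k}_i\}$ are Killing, then by the proof of lemma \ref{thm:generaldifftduality} the flat gerbes $\mathcal{L}_{\mathbf{k}_i}H=0$ of lemma \ref{lemma:flatlie} are trivial, so the flat holonomy classes $[c_{\alpha\beta\gamma}]\in H^2(M,T^n)$ vanish and the constants $c_{\alpha\beta\gamma i}$ can be taken in $2\pi\mathbb{Z}$. This is exactly the obstruction to closing the infinitesimal flows of the $\mathbbvar{k}_i$ into a genuine $T^n$-action on the gerbe cocycle $(\Lambda_{\alpha\beta},G_{\alpha\beta\gamma})$, hence on the doubled space $\mathcal{M}$; since $T^n$ is compact the flows are complete and glue into a $T^n$-action by automorphisms which by construction preserve $\mathcal{H}$. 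Allowing $M_0$-dependent parameters then lifts all of $\Coo(M_0,T^n)$ into $\mathbf{Iso}(\mathcal{M},\mathcal{H})$. Chaining this equivalence with lemma \ref{thm:generaldifftduality} — which is exactly the equivalence between $\{\mathbbvar{k}_i\}$ being Killing and the existence of a $T^n$-bundle $K\to M$ with $\mathrm{c}_1(K)=[\pi_\ast H]=[H^{(2)}-\langle H^{(1)}\wedge\xi\rangle]$ — gives the statement.

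The hard part will be the integration step in the ``if'' direction: promoting the infinitesimal lift to an honest finite $T^n$-action by isometries on the stack $\mathcal{M}$ rather than on an ordinary manifold. Concretely this amounts to (i) the periodicity check $c_{\alpha\beta\gamma i}\in 2\pi\mathbb{Z}$, which is genuinely equivalent to triviality of the flat holonomy class and hence to the Killing condition, and (ii) making precise the $2$-group Lie-integration statement $\mathrm{Lie}\big(\mathbf{Iso}(\mathcal{M},\mathcal{H})\big)=\mathfrak{iso}(\mathcal{M},\mathcal{H})$. Everything else is a direct application of remark \ref{rem:isometry} and of lemmas \ref{lemma:flatlie} and \ref{thm:generaldifftduality}.
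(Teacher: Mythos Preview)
Your proposal is correct and follows the same approach as the paper: the paper's argument is simply the sentence immediately preceding the statement, namely that Killing generalized vectors integrate to finite isometries in $\mathbf{Iso}(\mathcal{M},\mathcal{H})$, so the theorem follows ``by integrating lemma \ref{thm:generaldifftduality}''. You have spelled out in considerably more detail what that integration step actually involves (the vanishing of the holonomy obstruction $[c_{\alpha\beta\gamma}]$, the periodicity $c_{\alpha\beta\gamma i}\in 2\pi\mathbb{Z}$, and the $2$-group Lie-integration), which the paper leaves entirely implicit.
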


\noindent More generally this suggests that the \textit{presence of finite isometries of the doubled space implies T-duality}. Therefore again the generalized correspondence space $K$ is inside the total space of the doubled space, and it is well-defined whenever the metric doubled space has an isometry.

\begin{figure}[ht]\centering
\includegraphics[scale=0.18]{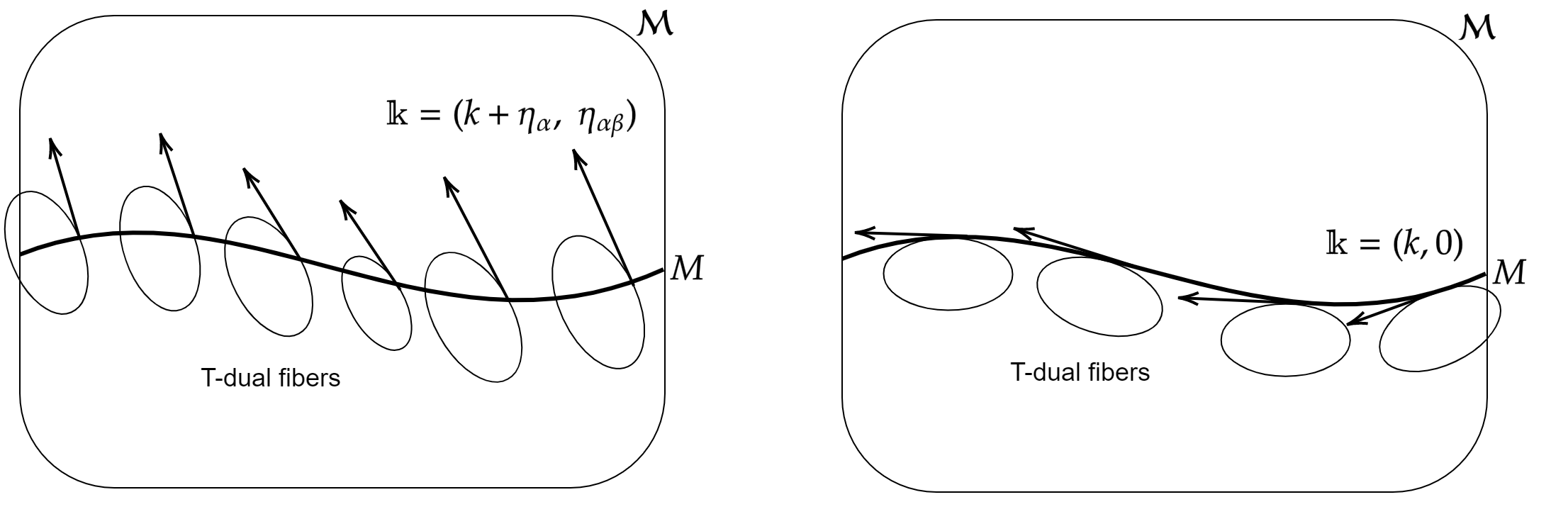}
\caption{Generalized correspondence space $K$ is defined by Killing generalized vectors.}\label{FIGtduality}\end{figure}

\begin{remark}[Physical interpretation of general T-duality conditions]
The holonomy of the gerbe is nothing but a global expression for the Wess-Zumino-Witten action of a world-sheet
\begin{equation}
    \exp 2\pi i S_{\mathrm{WZ}}(\Sigma) := \mathrm{Hol}_{(B_\alpha,\Lambda_{\alpha\beta},G_{\alpha\beta\gamma})}(\Sigma).
\end{equation}
Then the holonomy of the gerbe of lemma \ref{lemma:flatlie} is the global variation of the Wess-Zumino term
\begin{equation}
    \exp 2\pi i \delta_j S_{\mathrm{WZ}}(\Sigma) := \mathrm{Hol}_{(\mathcal{L}_{\mathbf{k}_j}\!B_\alpha,\, \mathcal{L}_{\mathbf{k}_j}\!\Lambda_{\alpha\beta},\, \mathcal{L}_{\mathbf{k}_j}\!G_{\alpha\beta\gamma})}(\Sigma).
\end{equation}
This needs to vanish for any closed surface $\Sigma\subset M$ to make the background T-dualizable.
\end{remark}

\noindent Again, even if para-Hermitian geometry cannot geometrize the whole gerbe data, it is enough to encode the data of the generalized correspondence space. Similarly to the previous subsection the para-Hermitian $T^{2n}$-fiber is non-trivially fibrated over the base manifold $M$.

\begin{remark}[The generalized correspondence space is a global para-Hermitian bundle]\label{rem:globparaherm}
We can use again the fact that the connection $\Xi\in\Omega^1(K,\mathbb{R}^n)$ is a global $1$-form on the generalized correspondence $K\xrightarrow{}M_0$ to define the following global tensors
\begin{equation}
    \begin{aligned}
    \omega_K &= \Xi_i \wedge \xi^i \\
    \eta_K &= \Xi_i \odot \xi^i 
    \end{aligned}
\end{equation}
and also the para-complex structure
\begin{equation}
    J_K \,=\,  \frac{\partial}{\partial\tilde{\theta}_i}\otimes\Xi_i -  \bigg(\mathbf{k}_i + (\eta_{\alpha ij}^{(0)} - B^{(0)}_{\alpha ij}) \frac{\partial}{\partial\tilde{\theta}_j}\bigg) \otimes \xi^i 
\end{equation}
It is easy to check that the correspondence space $K$ comes with a para-Hermitian structure $(K,J_K,\eta_K)$.
\end{remark}

\noindent After digression \ref{dig:preq} we will now highlight another analogy between doubled geometry and Higher Prequantum Geometry, this time regarding the conditions for a general T-duality.

\begin{digression}[A formal similarity with Higher Prequantum Geometry]
In \cite{Rog11} and \cite{Rog13} Higher Prequantization of a $2$-plectic manifold $(M,\omega)$ is presented, where the $2$-plectic form $\omega\in\Omega^3_{\mathrm{cl}}(M)$ is non-degenerate and closed. In the references the $2$-algebra of Hamiltonian forms $\Coo(U)\xrightarrow{\mathrm{d}}\Omega^1_{\mathrm{Ham}}(U)$ on a patch $U\subset M$ is defined. Notice this is a sub-$2$-algebra of $\mathbf{H}(U,\BU)\simeq\big(\Coo(U)\xrightarrow{\mathrm{d}}\Omega^1(U)\big)$. Hence the stackification on $M$ of this $2$-algebra of Hamiltonian forms will be a sub-$2$-algebra of $\mathbf{H}(M,\BU)$ of circle bundles $L$ satisfying
\begin{equation}\label{eq:higherprequantum}
    \iota_{X_L}\omega = \mathrm{curv}(L)
\end{equation}
for some vector $X_L\in\mathfrak{X}(M)$ that we will call Hamiltonian vector field of $L$. If we interpret our background $(M,H)$ as a $2$-plectic manifold we can see that the T-duality condition
\begin{equation}
    \iota_{\mathbf{k}_i}H = \mathrm{curv}(K)_i
\end{equation}
is formally identical to \eqref{eq:higherprequantum}, where $\mathrm{curv}(K)_i=\widetilde{F}_i$ is the curvature of the generalized correspondence space $K\rightarrow M$ from remark \ref{rem:gencorrspacegen}. Therefore we can reformulate the conditions for T-duality in the language of Higher Prequantum Geometry as it follows: to have T-duality the fundamental vector fields $\{\mathbf{k}_i\}$ of the bundle $M\xrightarrow{\pi}M_0$ must be both Killing and Hamiltonian.
\end{digression}

\subsection{Geometrization of non-abelian T-duality}

In this subsection we will briefly deal with non-abelian T-duality, to show that it is encompassed in our formalism. This means that in the following discussion we can drop at once the assumptions that $[H^{(0)}]=0$ and that spacetime is an abelian principal bundle.

\vspace{0.2cm}
\noindent Let us assume that the spacetime $M$ is a principal $G$-bundle over a smooth base manifold, i.e.
\begin{equation}
\begin{tikzcd}[row sep=5ex, column sep=4.5ex]
   G \arrow[r,hook] & M \arrow[d,two heads, "\pi"] \\
   & M_0
\end{tikzcd}
\end{equation}
The doubled space $\M\xrightarrow{\bbpi}M$ must now be reduced by a composition of an Higher Kaluza-Klein reduction from $\M$ to its base $M$ and an ordinary non-abelian Kaluza-Klein reduction from $M$ to its base $M_0$. See the following pullback diagram for the total reduction
\begin{equation}
\begin{tikzcd}[row sep=11ex, column sep=8ex]
\M \arrow[r]\arrow[d, "\bbpi"] & \ast \arrow[d] & \\
M \arrow[r, "(\Lambda\text{,}G)"]\arrow[d, "\pi"] & \mathbf{B}(\mathbf{B}U(1)_{\mathrm{conn}}) \arrow[r]\arrow[d] & \ast \arrow[d]\\
M_0 \arrow[r, "?"]\arrow[rr, "f", bend right=30]& \left[G,\mathbf{B}(\mathbf{B}U(1)_{\mathrm{conn}})\right]/G \arrow[r] & \mathbf{B}G
\end{tikzcd}
\end{equation}
The curvature $3$-form of the doubled space can as usual be expanded in the connection $1$-form $\xi\in\Omega^1(M,\mathfrak{g})$ of the $G$-bundle $M$ by
\begin{equation}
    H = H^{(3)} + H^{(2)}_{i}\wedge\xi^i + \frac{1}{2}H^{(1)}_{ij}\wedge\xi^i\wedge\xi^j + \frac{1}{6}H^{(0)}_{ijk}\xi^i\wedge\xi^j\wedge\xi^k
\end{equation}
Now we can define the usual $2$-form dual curvature by $\widetilde{F}_i := \iota_{\mathbf{k}_i}H$. Therefore we obtain the following $2$-form curvature on spacetime $M$
\begin{equation}
\begin{aligned}
    F^i &= \mathrm{d}\xi^i &+&\, \frac{1}{2} \,F^{(0)\, i}_{\;\;jk}\xi^j\wedge\xi^k \\
    \widetilde{F}_i &= \big(H^{(2)}_{i} - H^{(1)}_{ij}\wedge\xi^j\big) \!\!\!\!&+&\, \frac{1}{2} \, H^{(0)}_{ijk}\,\xi^j\wedge\xi^k
\end{aligned}
\end{equation}
We assume that there is a non-abelian group of isometries of the doubled metric space $\mathbf{Iso}(\M,\mathcal{H})=\Gamma\big(M_0,\,\mathrm{Ad}(M)\big)$ given by the group of automorphisms of the $G$-bundle $M\xrightarrow{\pi} M_0$. We will see that in this case the generalized correspondence space will be a $T^n$-bundle $K\rightarrow M$ on spacetime, where we defined $n:=\mathrm{dim}\,G$.
Let us expand the differential data of the doubled space
\begin{equation}\label{eq:nonabelianb}
    \begin{aligned}
    B_\alpha &= B^{(2)} + \langle B^{(1)}\wedge\xi \rangle + \frac{1}{2}\langle B^{(0)}_{ij},\xi\wedge\xi\rangle \\
    \mathrm{d}\Lambda_{\alpha\beta} &= \mathbf{d}\Lambda_{\alpha\beta}^{(1)} + \langle\Lambda_{\alpha\beta}^{(0)},F\rangle + \langle\mathbf{d}\Lambda_{\alpha\beta}^{(0)},\xi\rangle -\frac{1}{2}\langle \Lambda_{\alpha\beta}^{(0)}, F^{(0)\, }_{\;\;jk}\xi^j\wedge\xi^k \rangle
    \end{aligned}
\end{equation}
where a new final vertical term appears respect to the abelian case.
Now, to hugely simplify the discussion, let us assume that $(\Lambda_{\alpha\beta},\,G_{\alpha\beta\gamma}):M\rightarrow\mathbf{B}(\BU)$ is an equivariant structure under the principal $G$-action on $M$. The patching condition $B_\beta -B_\alpha = \mathrm{d}\Lambda_{\alpha\beta}$ becomes
\begin{equation}
    \begin{aligned}
        B^{(0)}_{\beta ij} - B^{(0)}_{\alpha ij}\, &\,=\, -\varepsilon_{ij}^{\;\;\,k}\tilde{f}_{\alpha\beta k} , \\
        B^{(1)}_{\beta i} - B^{(1)}_{\alpha i} &\,=\,  \mathbf{d}\tilde{f}_{\alpha\beta i} ,\\
        B^{(2)}_\beta - B^{(2)}_\alpha\,\, &\,=\, \mathbf{d}\lambda_{\alpha\beta} + \langle \tilde{f}_{\alpha\beta},F \rangle ,
    \end{aligned}
\end{equation}
where $F$ is the curvature of the principal $G$-bundle and where we called $\Lambda_{\alpha\beta}^{(0)}=:\tilde{f}_{\alpha\beta}$ in analogy with the abelian case. On three-fold overlaps we have the simple patching conditions
\begin{equation}
    \begin{aligned}
    \tilde{f}_{\alpha\beta}+\tilde{f}_{\beta\gamma}+\tilde{f}_{\gamma\alpha} &=0 \\
    \lambda_{\alpha\beta}+\lambda_{\beta\gamma}+\lambda_{\gamma\alpha} &=\mathbf{d}g_{\alpha\beta\gamma}
    \end{aligned}
\end{equation}
We can define the $1$-forms $\widetilde{A}_{\alpha}:=-\iota_{\mathbf{k}_i}B_\alpha$ on spacetime, which must be patched by \eqref{eq:nonabelianb} as
\begin{equation}
    \widetilde{A}_{\beta i} - \widetilde{A}_{\alpha i} = \mathbf{d}\Lambda_{\alpha\beta i}^{(0)} + [\xi,\Lambda_{\alpha\beta}^{(0)}]_i - 2\mathcal{L}_{\mathbf{k}_i}\Lambda_{\alpha\beta j}^{(0)}\xi^j -\mathcal{L}_{\mathbf{k}_i}\Lambda_{\alpha\beta}^{(1)}
\end{equation}
In the simple case where the doubled space is equivariant under the principal $G$-action we only have $\widetilde{A}_{\beta} - \widetilde{A}_{\alpha} = \mathbf{d}\Lambda_{\alpha\beta}^{(0)} + [\xi,\Lambda_{\alpha\beta}^{(0)}]$, which can be written as
\begin{equation}
    \widetilde{A}_{\beta} - \widetilde{A}_{\alpha} = \mathrm{D}_\xi\Lambda_{\alpha\beta}^{(0)}
\end{equation}
where $\mathrm{D}_\xi$ is the covariant derivative defined by the connection $\xi\in\Omega^1(M,\mathfrak{g})$.
The principal connection $\Xi\in\Omega^1(K,\mathbb{R}^3)$ on the non-abelian generalized correspondence space $K\rightarrow M$, seen as a torus $T^n$-bundle on spacetime $M$, will be the usual $1$-form
\begin{equation}
    \Xi \,=\, \mathrm{d}\tilde{\theta}_\alpha + \widetilde{A}_{\alpha} \,=\, \tilde{\xi}_\alpha - B^{(0)}_{\alpha ij}\xi^j
\end{equation}
where we called $\tilde{\xi}_\alpha := \mathrm{d}\tilde{\theta}_\alpha + B^{(1)}_\alpha$.
According to \eqref{eq:exactness} the $[H^{(0)}]$ class of the $H$-flux satisfies
\begin{equation}
    H^{(0)}_{ijk} \,=\, \mathcal{L}_{\mathbf{k}_{[i}} B^{(0)}_{\alpha ij]}
\end{equation}
which can be solved by imposing
\begin{equation}
    B^{(0)}_{\alpha ij} = b_{\alpha ij} + H_{ijk}^{(0)}\theta^k_\alpha + F^{(0)\, k}_{\;\;\,ij}\tilde{\theta}_{\alpha k}
\end{equation}
where $b_{\alpha ij}$ are scalars depending only on the base manifold $U_\alpha$, where $\theta_\alpha$ is a local coordinate of the fiber $G$ and where $\tilde{\theta}_{\alpha}$ must satisfy $\tilde{\theta}_{\alpha}-\tilde{\theta}_{\beta} = \tilde{f}_{\alpha\beta}$. Hence $\tilde{\theta}_\alpha$ are the coordinates of the linear $\mathbb{R}^{n}$ fiber of the local principal bundles $U_\alpha \times \mathbb{R}^{n}$ with local connection $\tilde{\xi}_\alpha := \mathrm{d}\tilde{\theta}_\alpha + B^{(1)}_\alpha$. 
Hence we can define a globally defined $1$-form connection $\Xi\in\Omega^1(K,\mathbb{R}^n)$ on the generalized correspondence space given by
\begin{equation}
    \Xi_i \,=\, \tilde{\xi}_{\alpha i} - \big(H_{ijk}^{(0)}\theta^k_\alpha+F_{\;\;\, ij}^{(0)\,k}\tilde{\theta}_{\alpha k}\big)\xi^j.
\end{equation}

\begin{remark}[Non-abelian T-fold]
The concept of non-abelian T-fold was recently introduced for $S^3$ in \cite{Bug19}. We will directly generalize this idea to a $G$-bundle spacetime, in analogy with the abelian T-fold. The non-abelian T-dual space will necessarily have a non-trivial locally non-geometric $Q$-flux, which implies it will be a T-fold. We then have a picture
\begin{equation}
\begin{tikzcd}[row sep=4.5ex, column sep=5ex]
 & & K \arrow[dl, two heads]\arrow[dr, dotted, end anchor={[yshift=-2ex]}] & \\
 G \arrow[r, hook] & M \arrow[dr, "\pi"', two heads] & & \begin{tabular}{c}non-abelian \\T-fold\end{tabular} \arrow[dl, dotted, start anchor={[yshift=2ex]}] \\
 & & M_0 &
\end{tikzcd}
\end{equation}
Here again, like for abelian T-fold, the arrows on the right side are not actual maps between actual spaces, but they are only indicative. This is because the non-abelian T-fold, similarly to its abelian version, can only be geometrized inside the non-principal $G\ltimes T^n$-bundle $K\rightarrow M_0$.
\end{remark}

\begin{remark}[Recovering the traditional formulation of non-abelian T-duality]\label{rem:natdtraditional}
We use the basis $\xi^I_\alpha:=(\xi^i,\,\tilde{\xi}_{\alpha i})$ of local connections of respectively $U_\alpha\times G$ and $U_\alpha\times \mathbb{R}^n$. In this basis we must then write the moduli field $\mathcal{H}_\alpha^{(0)}$ of the doubled metric by using the moduli field of the Kalb-Ramond field $B^{(0)\prime}_{\alpha ij} := b_{\alpha ij} + H^{(0)}_{ijk}\theta^k_{\alpha}$. Now we can express non-abelian T-duality as the following $O(n,n)$-transformation
\begin{equation}
    \mathcal{T}_{\mathrm{NATD}} :=
    \begin{pmatrix}
 0 & 1 \\
 1 & F_{\;\;\, ij}^{(0)\,k}\tilde{\theta}_{\alpha k}
 \end{pmatrix}
\end{equation}
which encodes the shift $B^{(0)\prime}_{\alpha ij} \mapsto B^{(0)}_{\alpha ij}=B^{(0)\prime}_{\alpha ij}+F_{\;\;\, ij}^{(0)\,k}\tilde{\theta}_{\alpha k}$. Therefore the non-abelian T-dual of the moduli field of the doubled metric in this basis will be $\widetilde{\mathcal{H}}_\alpha^{(0)} = \mathcal{T}_{\mathrm{NATD}}^{\mathrm{T}}\mathcal{H}_\alpha^{(0)}\mathcal{T}_{\mathrm{NATD}}$. Its local components $\tilde{g}^{(0)}_\alpha$ and $\tilde{B}^{(0)}_\alpha$ will be the local data of the non-abelian T-fold.
\end{remark}

\noindent We will now explore a simple example of background which has non-abelian T-duality.

\begin{example}[$3$-sphere bundle]
Let us consider an $S^3$-bundle spacetime given by the sequence of manifolds $S^3\hookrightarrow M \twoheadrightarrow M_0$. Recall that $H^3(S^3,\mathbb{Z})\cong\mathbb{Z}$, so $[H^{(0)}]$ is determined by a single integer. For simplicity let us assume the doubled space is trivial with Dixmier-Douady class $[H]=0$. Even if the gerbe structure is trivial on the $3$-sphere fiber, in contrast with the abelian case, the moduli fields $B_\alpha^{(0)}$ are not globally defined scalars. $B_{\beta ij}^{(0)}-B_{\alpha ij}^{(0)} = \varepsilon_{ij}^{\;\;\, k} (\tilde{\theta}_{\beta k}-\tilde{\theta}_{\alpha k})$ 
Thus the connections will be patched on two-fold overlaps by
\begin{equation}
    \begin{pmatrix}
 \xi_\alpha \\
 \tilde{\xi}_\alpha
 \end{pmatrix} =
    \begin{pmatrix}
 1 & 0 \\
 \varepsilon_{ij}^{\;\;\, k}\tilde{f}_{\alpha\beta k} & 1
 \end{pmatrix}    \begin{pmatrix}
 \xi_\beta \\
 \tilde{\xi}_\beta
 \end{pmatrix}
\end{equation}
where $\tilde{f}_{\alpha\beta}=\tilde{\theta}_{\alpha}-\tilde{\theta}_{\beta}$ are the transition functions for the dual torus coordinates. Therefore the monodromy matrix of these coordinates will be given by
\begin{equation}
    n_{\alpha\beta}^H = \begin{pmatrix}
 0 & \tilde{f}_{\alpha\beta 3} & -\tilde{f}_{\alpha\beta 2} \\
 -\tilde{f}_{\alpha\beta 3} & 0 & \tilde{f}_{\alpha\beta 1} \\
 \tilde{f}_{\alpha\beta 2} & -\tilde{f}_{\alpha\beta 1} & 0
 \end{pmatrix}.
\end{equation}
Hence actually the $\mathbb{R}^3$ fibers are not non-compact, but they are glued as a non-abelian T-fold. Differently from its abelian counterpart, notice that in the non-abelian T-fold the monodromy matrix is in general not constant.
Now let us write the locally defined moduli field $\mathcal{H}^{(0)}_\alpha$ of the reduction of the doubled metric to the base manifold $M_0$.
We will suppress few indices, by defining $b := \star_{S^3}B^{(0)\prime}_\alpha$ (with $B^{(0)\prime}_\alpha$ defined like in remark \ref{rem:natdtraditional}) so we can write
\begin{equation}
    \mathcal{H}^{(0)}_\alpha \,=\, 
    \left(\begin{array}{@{}ccc|ccc@{}}
    1+b_2^2+b_3^2&-b_1b_2&-b_1b_3&0&-b_3&b_2\\
    -b_1b_2&1+b_1^2+b_3^2&-b_2b_3&b_3&0&-b_1\\
    -b_1b_3&-b_2b_3&1+b_1^2+b_2^2&-b_2&b_1&0\\\hline
    0&b_3&-b_2&1&0&0\\
    -b_3&0&b_1&0&1&0\\
    b_2&-b_1&0&0&0&1
    \end{array}\right)
\end{equation}
where the metric is just the round metric of the $3$-sphere.
Let us call $B_i := b_i + \tilde{\theta}_i$ the Hodge dual of the full $B^{(0)}_\alpha$ moduli field. Now we can perform the T-duality transformation $\mathcal{T}_{S^3}=\left(\begin{smallmatrix*}0&1\\1&0\end{smallmatrix*}\right)$ to obtain the non-abelian T-dual doubled metric moduli field
\begin{equation}
    \widetilde{\mathcal{H}}^{(0)}_\alpha \,=\, 
   \left(\begin{array}{@{}ccc|ccc@{}}
     1&0&0&0&B_3&-B_2\\
    0&1&0&-B_3&0&B_1\\
    0&0&1&B_2&-B_1&0 \\\hline
    0&-B_3&B_2&1+B_2^2+B_3^2&-B_1B_2&-B_1B_3\\
    B_3&0&-B_1&-B_1B_2&1+B_1^2+B_3^2&-B_2B_3\\
    -B_2&B_1&0&-B_1B_3&-B_2B_3&1+B_1^2+B_2^2
    \end{array}\right)
\end{equation}
Thus the non-abelian T-dual background takes the following familiar expression
\begin{equation}
    \begin{aligned}
    \tilde{g}^{(0)}_\alpha &\,=\, \frac{1}{1+B_1^2+B^2_2+B_3^2} \begin{pmatrix}
 1+B_1^2 & B_1B_2 & B_1B_3 \\
 B_1B_2 & 1+B_2^2 & B_2B_3\\
 B_1B_3 & B_2B_3 & 1+B_3^2
 \end{pmatrix}, \\
    \tilde{B}^{(0)}_\alpha &\,=\, \frac{1}{1+B_1^2+B^2_2+B_3^2} \begin{pmatrix}
 0 & -B_3 & B_2 \\
 B_3 & 0 & -B_1\\
 -B_2 & B_1 & 0
 \end{pmatrix}.
        \end{aligned}
\end{equation}
The new local metric and Kalb-Ramond field will be respectively $ \tilde{g}^{(0)}_\alpha =  \tilde{g}^{(0)ij}_\alpha\tilde{\xi}_{\alpha i}\odot\tilde{\xi}_{\alpha j}$ and $ \tilde{B}^{(0)}_\alpha =  \tilde{B}^{(0)ij}_\alpha\tilde{\xi}_{\alpha i}\wedge\tilde{\xi}_{\alpha j}$. These are the differential data of the fibers of our non-abelian T-fold and indeed they cannot be patched globally on the base manifold.
\end{example}

\begin{example}[Twisted torus bundle]
In the case of the twisted torus bundle $G\hookrightarrow M \twoheadrightarrow M_0$, where $G$ is a twisted torus with $\mathrm{dim}\,G>2$, we recover the general commuting diagram of T-dualities by \cite{Bug19} along any circle of the fiber, i.e.
\begin{equation}
   \begin{tikzcd}[row sep=38,column sep=28]
    & H^{(0)}_{ijk}  \arrow[rr, "\mathcal{T}_i"] \arrow[dd, "\mathcal{T}_j" near start] \arrow[dl, "\mathcal{T}_k"'] & &   F^{(0)i}_{\quad\;\, jk}  \arrow[dd, "\mathcal{T}_j"] \arrow[dl, "\mathcal{T}_k"] \\
    F^{(0)\;\;k}_{\quad ij} \ar[crossing over, "\mathcal{T}_i" near start]{rr} \arrow[dd, "\mathcal{T}_j"'] & & Q^{(0)i\; k}_{\quad\; j} \\
      &  F^{(0)\;j}_{\quad\; i\; k}  \arrow[rr, "\mathcal{T}_i" near start] \arrow[dl, "\mathcal{T}_k"'] & &  Q^{(0)ij}_{\quad\;\; k}  \arrow[dl, "\mathcal{T}_k"] \\
    Q^{(0)\;jk}_{\quad i} \arrow[rr, "\mathcal{T}_i"'] && R^{(0)ijk} \arrow[from=uu,crossing over, "\mathcal{T}_j" near start]
 \end{tikzcd} 
\end{equation}
\end{example}

\noindent Let us now give a quick final look to the general case.

\begin{remark}[General non-abelian case]
In the general case, where we require just the local $\mathcal{L}_{\mathbf{k}_i}B_\alpha = \mathrm{d}\eta^i_{\alpha}$, analogously to its abelian version, we obtain a connection $1$-form $\Xi\in\Omega^1(K,\mathbb{R}^n)$ of a principal $T^n$-bundle over spacetime $M$ by
\begin{equation}
\begin{aligned}
    \Xi_i \,=\, \mathrm{d}\tilde{\theta}_{\alpha i} + \eta_{\alpha i}^{(1)} + B^{(1)}_{\alpha i} + \big(\eta_{\alpha ij}^{(0)} - B^{(0)}_{\alpha ij}\big) \xi^j
\end{aligned}
\end{equation}
where $\eta_{\alpha i} = \eta_{\alpha i}^{(1)} + \eta_{\alpha ij}^{(0)}\xi^j$ is split in horizontal and vertical part.
Again the generalized vectors $\mathbbvar{k}_i:=(\mathbf{k}_i+\eta_{\alpha i},\,\eta_{\alpha\beta i})$ are Killing and hence therefore can be integrated to finite isometries of the doubled space in $\Gamma\big(M_0, \, \mathrm{Ad}(M) \big)\ltimes\mathbf{H}(M,\BU)\simeq\mathbf{Iso}(\M,\mathcal{H})$.
\end{remark}

\subsection{Geometrization of Poisson-Lie T-duality}

In this subsection we will briefly deal with Poisson-Lie T-duality (see \cite{Has17}) in our framework.

\vspace{0.2cm}
\noindent Let spacetime be a principal $G$-bundle $M\xrightarrow{\pi}M_0$. Let us assume that the fundamental vector fields $\{\mathbf{k}_i\}\subset \mathfrak{X}(M)$ are Killing, i.e. $\mathcal{L}_{\mathbf{k}_i}g=0$, but such that the gerbe connection satisfies
\begin{equation}\label{eq:plcond}
    \mathcal{L}_{\mathbf{k}_i}B_\alpha = \big[\widetilde{A}_\alpha\wedge\widetilde{A}_\alpha\big]_i^{\tilde{\mathfrak{g}}}
\end{equation}
where $\widetilde{A}_{\alpha i}:=-\iota_{\mathbf{k}_i}B_\alpha$ is a local $\mathfrak{g}^\ast$-valued $1$-form on the total spacetime $M$ and $[-,-]^{\tilde{\mathfrak{g}}}$ is the commutator of some Lie algebra $\tilde{\mathfrak{g}}$ whose underlying vector space is $\mathfrak{g}^\ast$. Notice that this implies that the automorphisms fo the $G$-bundle are not lifted to isometries of the doubled space, i.e. $\Gamma\big(M_0,\,\mathrm{Ad}(M)\big)\not\subset\mathbf{Iso}(\M,\mathcal{H})$. However these transformations defy the isometry in a very controlled way. The \eqref{eq:plcond} implies that $\widetilde{F}_i:=\iota_{\mathbf{k}_i}H$ can be seen as the curvature of a principal $\widetilde{G}$-bundle with $\mathrm{Lie}(\widetilde{G})=\tilde{\mathfrak{g}}$, indeed we have
\begin{equation}
    \begin{aligned}
     \widetilde{F}_i&\,:=\,\iota_{\mathbf{k}_i}H\\
     &\;\,=\,\mathcal{L}_{\mathbf{k}_i}B_\alpha - \mathrm{d}\iota_{\mathbf{k}_i}B_\alpha \\
     &\;\,=\, \big[\widetilde{A}_\alpha\wedge\widetilde{A}_\alpha\big]_i^{\tilde{\mathfrak{g}}} + \mathrm{d}\widetilde{A}_{\alpha i}
    \end{aligned}
\end{equation}
Hence the generalized correspondence space $K\rightarrow M$ will be a principal $\widetilde{G}$-bundle on spacetime $M$ with curvature $\widetilde{F}_i$. Notice we have $\widetilde{A}_{\alpha i}: = \iota_{\mathbf{k}_i}B_\alpha = B^{(1)}_{\alpha i} - B^{(0)}_{\alpha ij}\xi^j$ where $\xi\in\Omega^1(M,\mathfrak{g})$ is the connection of the $G$-bundle $M$. Hence it can be interpreted as the pullback $\widetilde{A}_{\alpha}=\sigma^\ast_\alpha\Xi$ of a global $1$-form connection $\Xi\in\Omega^1(K,\tilde{\mathfrak{g}})$ by a choice of local sections $\sigma_\alpha$.

\begin{remark}[$Q$-flux]
Hence we have both the geometric flux
$F^{(0)\,k}_{\;\;\, ij} := \varepsilon^{\;\;\,k}_{ij}$, given by the bracket $[-,-]_{\mathfrak{g}}$, and locally non-geometric flux
$Q^{(0)ij}_{\quad\;\; k} := \tilde{\varepsilon}^{ij}_{\;\; k}$, given by the bracket $[-,-]^{\tilde{\mathfrak{g}}}$. We will see now that this means that we have a Manin triple of Lie algebras $\mathfrak{d}:=\mathfrak{g}\oplus\tilde{\mathfrak{g}}$ which can be integrated to a Drinfel'd double $D=G\bowtie\widetilde{G}$.
\end{remark}

\begin{remark}[Poisson-Lie T-fold]
Now the generalized correspondence space $K$ will be a principal $\widetilde{G}$-bundle on spacetime $M$ with connection $1$-form $\Xi\in\Omega^1(K,\tilde{\mathfrak{g}})$ given by
\begin{equation}
    \Xi_i \,=\,  \tilde{\theta}_\alpha^{\ast}\tilde{\tau}_i + \mathrm{Ad}_{\tilde{\theta}_\alpha^{-1}}\big( B^{(1)}_{\alpha i} - B^{(0)}_{\alpha ij}\xi^j \big)
\end{equation}
where $\tilde{\theta}_\alpha:U_\alpha\times \widetilde{G}\rightarrow \widetilde{G}$ is a canonical local trivialization of the $\widetilde{G}$-bundle given by $\sigma_\alpha$ and where $\tilde{\tau}$ is the left-invariant Maurer-Cartan form on $\widetilde{G}$. Locally the generalized correspondence space $K$ will be given by patches of the form $U_\alpha\times D$ where $U_\alpha$ is a patch of $M_0$ and the group $D:= G\bowtie\widetilde{G}$ is a Drinfel'd double. In the special case where $\widetilde{A}_\alpha$ is the pull-back of a local $1$-form on the base manifold $M_0$, then we have a proper principal $D$-bundle on $M_0$, otherwise the patching conditions will be more complicated. We have the following picture
\begin{equation}
\begin{tikzcd}[row sep=4.5ex, column sep=5.3ex]
 & & K \arrow[dl, two heads]\arrow[dr, dotted, end anchor={[yshift=-2ex]}] & \\
 G \arrow[r, hook] & M \arrow[dr, "\pi"', two heads] & & [-2ex]\begin{tabular}{c}Poisson-Lie \\T-fold\end{tabular} \arrow[dl, dotted, start anchor={[yshift=2ex]}] \\
 & & M_0 &
\end{tikzcd}
\end{equation}
where the dotted arrows are again not actual maps, but they are only indicative.
\end{remark}

\begin{remark}[Recovering abelian and non-abelian T-duality]
Notice that in the particular case of a Drinfel'd double $D = T^\ast T^{n}= T^{2n}$ we recover exactly abelian T-duality.
Moreover in the particular case of a Drinfel'd double $D = T^\ast G \simeq G\ltimes\mathbb{R}^{\mathrm{dim}\,G}$ with dual fiber $\widetilde{G}:=\mathbb{R}^{\mathrm{dim}\,G}$ we recover exactly the non-abelian T-duality of the previous subsection. 
\end{remark}

\noindent Remarkably, as proven by \cite{MarSza18}, Drinfel'd doubles have a natural para-Hermitian structure. We will now see that this is still true for the generalized correspondence space $K$ of the Lie-Poisson T-duality (clearly including non-abelian T-duality of the previous subsection).

\begin{digression}[The generalized correspondence space is a global para-Hermitian bundle]
As usual, we have the following tensors
\begin{equation}
    \begin{aligned}
    \omega_K &= \Xi_i \wedge \xi^i \\
    \eta_K &= \Xi_i \odot \xi^i 
    \end{aligned}
\end{equation}
and also the para-complex structure
\begin{equation}
    J_K \,=\,  \tilde{\mathbf{k}}^i\otimes\Xi_i -  \big(\mathbf{k}_i + (\eta_{\alpha ij}^{(0)} - B^{(0)}_{\alpha ij}) \tilde{\mathbf{k}}^j\big) \otimes \xi^i 
\end{equation}
where $\{\tilde{\mathbf{k}}^i\}$ are the fundamental vector fields of the $\widetilde{G}$-bundle. This is a para-Hermitian structure $(K,J_K,\eta_K)$ on the correspondence space $K$.
\end{digression}

\noindent A more in-depth analysis of the non-abelian and Poisson-Lie T-duality cases would need to develop a complete technology of non-abelian Higher Kaluza-Klein reduction. 

\section{Application: NS5-brane is Higher Kaluza-Klein monopole}
In this section we will present a new, globally defined monopole for Higher Kaluza-Klein Theory, by directly generalizing the ordinary Kaluza-Klein monopole by \cite{GrosPer83}. This can be interpreted as a globally defined monopole for DFT which does not need compactified dimensions to be well-defined. We will show that this monopole is an NS5-brane with non-trivial $H$-charge by Higher Kaluza-Klein reduction. Finally we will prove that by smearing it we recover the familiar Berman-Rudolph DFT monopole.

\subsection{Higher Dirac monopole of the Kalb-Ramond field}

Let us give a quick review of the Dirac monopole in classical electromagnetism in this subsection. Then we will directly generalize this notion to a Kalb-Ramond field monopole.

\begin{definition}[Dirac monopole]
A Dirac monopole is a circle bundle of the form
\begin{equation}
    \mathbb{R}^{1}\times\left(\mathbb{R}^{3}-\{0\}\right) \longrightarrow \mathbf{B}U(1).
\end{equation}
with non-trivial first Chern class $[F]\in H^2\big(\mathbb{R}^{3}-\{0\},\,\mathbb{Z}\big)$ on the transverse space $\mathbb{R}^{3}-\{0\}$ and trivial on the time line $\mathbb{R}^{1}$. Here $\mathbb{R}^{1}$ can be seen as a magnetically charged world-line.
\end{definition}

\begin{remark}[Dirac charge-quantization]
This spacetime can be alternatively written as
\begin{equation}
    \mathbb{R}^{1}\times\left(\mathbb{R}^{3}-\{0\}\right) \,\simeq\, \mathbb{R}^{1} \times \mathbb{R}^+ \times S^2
\end{equation}
where $\mathbb{R}^+$ is the radial direction in the transversal space and $S^2$ embodies the angular directions. Since $\mathbb{R}^+ \times S^2$ is homotopy equivalent to the $2$-sphere, its cohomology groups will be clearly isomorphic to the ones of the $2$-sphere. The underlying topological space of the stack $\mathbf{B}U(1)$ is the classifying space $BU(1)$ of circle bundles, i.e. the second Eilenberg–MacLane space
\begin{equation}
    \left|\mathbf{B}U(1)\right| = BU(1) = K(\mathbb{Z},2),
\end{equation}
where $|-|$ gives the geometric realization of a stack. Circle bundles over the $2$-sphere are then classified by maps $S^2\rightarrow K(\mathbb{Z},2)$ whose group is just $\pi_2\big(K(\mathbb{Z},2)\big) \cong \mathbb{Z}$. Dually the second cohomology group of the $2$-sphere is $H^2(S^2,\mathbb{Z})\cong \mathbb{Z}$. Hence the first Chern number of any such bundle will be an integer
\begin{equation}
    \frac{1}{2\pi}\int_{S^2} F = m\in\mathbb{Z}
\end{equation}
and the curvature of the bundle will be a closed non-exact form $F=m\,\mathrm{Vol}(S^2)/2$. The trivial fibration $S^2\times S^1\rightarrow S^2$ corresponds to $m=0$ and the Hopf fibration $S^3\rightarrow S^2$ to $m=1$, while in general we will have a Lens space fibration $L(1,m)\rightarrow S^2$ for any $m\in\mathbb{Z}$.
\end{remark}

\begin{remark}[Local description of Dirac monopole]\label{rem:dirac}
Let us quickly look at what this means in terms of gauge fields. As very well known we can cover the $2$-sphere with just two open sets $\mathcal{U}=\{U,U'\}$ such that in spherical coordinates $(\phi,\theta)$ they are overlapping spherical caps $U=[0,2\pi)\times[0,\pi/2+u)$ and $U'=[0,2\pi)\times(\pi/2-u,\pi]$ for some $u\ll\pi/2$. The curvature can be explicitly written as $F=\sin{\theta}\,\mathrm{d}\theta\wedge\mathrm{d}\phi$. On the two charts the connection of the bundle $S^2\rightarrow K(\mathbb{Z},2)$ will then be given respectively by
\begin{equation}
    \begin{aligned}
        A = \frac{m}{2}(1-\cos{\theta})\mathrm{d}\phi, \quad A' = -\frac{m}{2}(1+\cos{\theta})\mathrm{d}\phi.
    \end{aligned}
\end{equation}
We can see that on the overlap $U\cap U'=(\pi/2-u,\pi/2+u)\times S^1$ we have $A-A'=m\,\mathrm{d}\phi$, which, integrated along the equator, gives $4\pi m$ and equivalently the Dirac quantization condition
\begin{equation}
    \begin{aligned}
        \frac{1}{2\pi}\int_{S^1}A-A' = m.
    \end{aligned}
\end{equation}
\end{remark}

\begin{definition}[Higher Dirac monopole]\label{def:hdm}
An Higher Dirac monopole is a gerbe of the form
\begin{equation}
    \mathbb{R}^{1,5} \times \left(\mathbb{R}^4-\{0\}\right) \longrightarrow \mathbf{B}^2U(1),
\end{equation}
with non-trivial Dixmier-Douady class $[H]$ on the transverse space $\mathbb{R}^4-\{0\}$ and trivial over $\mathbb{R}^{1,5}$. Here $\mathbb{R}^{1,5}$ can be seen as a magnetically $H$-charged world-volume.
\end{definition}

\begin{remark}[Higher Dirac charge-quantization]
This spacetime can be also written as
\begin{equation}
    \mathbb{R}^{1,5} \times \left(\mathbb{R}^4-\{0\}\right) \,\simeq\, \mathbb{R}^{1,5} \times \mathbb{R}^+ \times S^3,
\end{equation}
where $\mathbb{R}^+$ is the radial direction in the transversal space, while $S^3$ embodies the angular directions. Since $\mathbb{R}^+ \times S^3$ is homotopy equivalent to the $3$-sphere, the cohomology groups of the transversal space will be immediately isomorphic to the ones of the $3$-sphere. The classifying space of abelian gerbes is the third Eilenberg–MacLane space 
\begin{equation}
    \left|\mathbf{B}^2U(1)\right| = B^2U(1) = K(\mathbb{Z},3)
\end{equation}
and the gerbes on the $3$-spheres are given by maps $S^3\rightarrow K(\mathbb{Z},3)$. The group of these maps is just the third homotopy group of the Eilenberg–MacLane space 
\begin{equation}
    \pi_3\big(K(\mathbb{Z},3)\big) \cong \mathbb{Z}
\end{equation}
which is isomorphic to the integers. Dually the third cohomology group of the $3$-sphere is $H^3(S^3,\mathbb{Z})\cong\mathbb{Z}$. Hence the Dixmier-Douady number of any such bundle will be an integer
\begin{equation}
    \frac{1}{4\pi^2}\int_{S^3} H = m\in\mathbb{Z}
\end{equation}
that we may call \textit{higher magnetic charge} or just $H$-charge. Then the curvature of the gerbe will be in general a non-exact $3$-form
\begin{equation}
    H = \frac{m}{2}\,\mathrm{Vol}(S^3),
\end{equation}
in direct analogy with the ordinary Dirac monopole.
\end{remark}

\begin{remark}[Atlas for the $3$-sphere]
A $3$-sphere $S^3$ can be seen as the submanifold of $\mathbb{C}^2$ defined by the condition $w_1^\ast w_1+w_2^\ast w_2=1$ on the complex coordinates $(w_1,w_2)\in\mathbb{C}^2$. This condition can be solved by
\begin{equation}
    w_1 = e^{i(\psi_1+\psi_2)}\sin{\chi}, \quad  w_2 = e^{i(\psi_1-\psi_2)}\cos{\chi},
\end{equation}
where $(\chi,\psi_1,\psi_2)$ with ranges $\chi\in[0,\pi/2],\psi_1\in[0,2\pi)$ and $\psi_2\in[0,\pi)$, are called \textit{Hopf coordinates}. Topologically $S^3-\{\ast\}\simeq \mathbb{R}^3$, thus we can give $S^3$ an open cover $\mathcal{U}=\{U,U'\}$ of just two open sets, for example the ones of $S^3$ deprived respectively of north and south pole.
\end{remark}

\begin{remark}[Local description of higher Dirac monopole]
We can cover the $3$-sphere with two open spherical caps $\mathcal{U}=\{U,U'\}$ and solve the gerbe curvature $H=m\mathrm{Vol}(S^3)/2$ by
\begin{equation}
    \begin{aligned}
        B = \frac{m}{2}(1-\cos{2\chi})\,\mathrm{d}\psi_1\wedge\mathrm{d}\psi_2, \quad B' = -\frac{m}{2}(1+\cos{2\chi})\,\mathrm{d}\psi_1\wedge\mathrm{d}\psi_2.
    \end{aligned}
\end{equation}
The overlap of the patches $U\cap U'\simeq (u,\pi/2-u)\times T^2$ for some $u\ll\pi/2$ is homotopy equivalent to $T^2$, so that $H^2(T^2,\mathbb{Z})\cong\mathbb{Z}$ and hence we have 
\begin{equation}
    \begin{aligned}
        \frac{1}{4\pi^2}\int_{T^2}B-B' = m,
    \end{aligned}
\end{equation}
which is exactly the charge-quantization condition for String Theory with the constant $\alpha'=1$.
\end{remark}

\subsection{Higher Kaluza-Klein monopole in 10d}

In this subsection we will define the Higher Kaluza-Klein monopole and we will look at its properties. Moreover we will show that it physically reduces to the NS5-brane. First of all let us give a quick review of the ordinary Kaluza-Klein monopole that we are going to generalize

\begin{digression}[Kaluza-Klein monopole by \cite{GrosPer83}]
A Kaluza-Klein monopole is a spacetime $(M,g)$ such that $M=\mathbb{R}^{1}\times\mathbb{R}^+\times L(1,m)$, where $L(1,m)$ is a Lens space, and the metric is
\begin{equation}
    \begin{aligned}
        g &= -\mathrm{d}t^2 + h(r)\delta_{ij}\mathrm{d}y^i\mathrm{d}y^j + \frac{1}{h(r)}(\mathrm{d}\tilde{y}+A_i\mathrm{d}y^i)^2
    \end{aligned}
\end{equation}
where we called $r^2:=\delta_{ij}y^iy^j$ the the radius in the transverse space, $t$ the coordinate on $\mathbb{R}^{1}$, $\{y^i\}_{i=1,2,3}$ the coordinate of the transverse space and $\tilde{y}$ the coordinate of the fiber $S^1$. This spacetime encompass a Dirac monopole on the base manifold $\mathbb{R}^{1}\times\left(\mathbb{R}^3-\{0\}\right) \simeq \mathbb{R}^{1}\times\mathbb{R}^+\times S^2$
 The gauge field $A$ is required to satisfy the following conditions
\begin{equation}
    F = \star_{\mathbb{R}^3}\mathrm{d}h, \quad h(r)=1+\frac{m}{r}
\end{equation}
for $m\in\mathbb{Z}$. In other words the curvature of the bundle will be $F=m\mathrm{Vol}(S^2)/2$ with first Chern number $m$, representing the magnetic charge.
If we cover the $2$-sphere with two charts we can rewrite this metric in polar coordinates $(r,\theta,\phi)$ and we obtain on the first one
\begin{equation}
    g = -\mathrm{d}t^2 + h(r)(\mathrm{d}r^2 +r^2\mathrm{d}\theta^2 + r^2 \sin^2{\theta}\,\mathrm{d}\phi^2) + \frac{1}{h(r)}\bigg(\mathrm{d}\tilde{y}+\frac{m}{2}(1-\cos{\theta})\mathrm{d}\phi\bigg)^2  \\
\end{equation}
while on the second one the gauge field is $A'=-m(1+\cos{\theta})\mathrm{d}\phi/2$, in accord with remark \ref{rem:dirac}.
\end{digression}

\noindent Now we can give a precise definition of a new monopole, which directly generalizes the ordinary Kaluza-Klein monopole and which geometrizes the Higher Dirac monopole of definition \ref{def:hdm}.

\begin{definition}[Higher Kaluza-Klein monopole]\label{def:hkkm}
A Higher Kaluza-Klein monopole is a metric doubled space $(\mathcal{M},\mathcal{H})$ such that 
\begin{itemize}
    \item $\mathcal{M}$ is a doubled space on the base manifold $M=\mathbb{R}^{1,5}\times\left(\mathbb{R}^4-\{0\}\right) \simeq \mathbb{R}^{1,5}\times\mathbb{R}^+\times S^3$ with global connection $\omega_B = \mathrm{d}\tilde{y}_\alpha-B_\alpha$ and which is trivial on $\mathbb{R}^{1,5}\times\mathbb{R}^+$, where $\{x^\mu\}$ and $\{y^i\}$ are respectively coordinates for $\mathbb{R}^{1,5}$ and $\mathbb{R}^{+}\times S^3$,
    \item $\mathcal{H}$ is a (global) doubled metric given by
\end{itemize}
\begin{equation}\label{eq:hkkmonopole}
    \begin{aligned}
        \mathcal{H} &= \eta_{\mu\nu}\mathrm{d}x^\mu\mathrm{d}x^\nu + \eta^{\mu\nu}\mathrm{d}\tilde{x}_\mu\mathrm{d}\tilde{x}_\nu + h(r)\delta_{ij}\mathrm{d}y^i\mathrm{d}y^j + \frac{\delta^{ij}}{h(r)}(\mathrm{d}\tilde{y}_i+B_{ik}\mathrm{d}y^k)(\mathrm{d}\tilde{y}_j+B_{jk}\mathrm{d}y^k)
    \end{aligned}
\end{equation}
where the curvature of the gerbe and the harmonic function are respectively constraint to
\begin{equation}\label{eq:transversalcond}
    H = \star_{\mathbb{R}^4}\mathrm{d}h, \quad h(r)=1+\frac{m}{r^2}
\end{equation}
for any $m\in\mathbb{Z}$ and where $r^2:=\delta_{ij}y^iy^j$ is the radius in the four dimensional transverse space. 
\end{definition}

\noindent This doubled metric encompasses a Higher Dirac monopole from definition \ref{def:hdm} on the base manifold, just as the Kaluza-Klein monopole does with an ordinary Dirac monopole. In other words the curvature of the gerbe will be $H=m\mathrm{Vol}(S^3)/2$ with non-trivial $H$-charge $m$.

\vspace{0.2cm}
\noindent Let us spend a couple of words to remark that the Higher Kaluza-Klein monopole from definition \ref{def:hkkm} in is globally well-defined. Given an open good cover $\{U_\alpha\}$ of the transverse space $\mathbb{R}^4-\{0\}$ we can write $\tilde{y}_\alpha - \tilde{y}_\beta = - \Lambda_{\alpha\beta}+\mathrm{d}\phi_{\alpha\beta}$ on each $U_\alpha\cap U_\beta$. Thus $\mathrm{d}\tilde{y}_\alpha - \mathrm{d}\tilde{y}_\beta = \mathrm{d}\Lambda_{\alpha\beta}$, which combined with $B_\beta-B_\alpha = \mathrm{d}\Lambda_{\alpha\beta}$ assures that $\omega_B=\mathrm{d}\tilde{y}_\alpha-B_\alpha$ is globally defined, by lemma \ref{thm:gerbeconn}.

\begin{remark}[Doubled space of Higher Kaluza-Klein monopole]
By using the fact that $S^3$ is a circle bundle (Hopf fibration) on $S^2$ we can apply the the result of lemma \ref{thm:corrspace} and write
\begin{equation}
\mathcal{M} \,\simeq\, \mathcal{M}_{0} \times_{S^2} \big(S^3\times_{S^2}L(1,m)\big)
\end{equation}
where $\mathcal{M}_0$ is a doubled space on $M_0:=\mathbb{R}^{1,5}\times\mathbb{R}^+\times S^2$. Since $\mathcal{M}_0$ is still trivial on $\mathbb{R}^{1,5}\times\mathbb{R}^+$ we can rewrite it as $\mathcal{M}_0 \simeq \mathcal{M}_{\mathrm{triv}}\times \mathcal{M}_{S^2}$, where $\mathcal{M}_{S^2}$ is the non-trivial doubled space on $S^2$ with non-trivial connection $B_\alpha^{(2)}$.
The correspondence space is the $T^2$-bundle $K=S^3\times_{S^2}L(1,m)$.
\end{remark}

\begin{remark}[NS5-brane is Higher Kaluza-Klein monopole]
By Higher Kaluza-Klein reduction of \eqref{eq:hkkmonopole} to $M=\mathbb{R}^{1,5}\times\mathbb{R}^+\times S^3$ we get the following metric and gerbe connection:
\begin{equation}
    \begin{aligned}
        g &= \eta_{\mu\nu}\mathrm{d}x^\mu\mathrm{d}x^\nu + h(r)\delta_{ij}\mathrm{d}y^i\mathrm{d}y^j,\\
        B &= B_{ij}\,\mathrm{d}y^i\wedge\mathrm{d}y^j
    \end{aligned}
\end{equation}
which satisfy the conditions \eqref{eq:transversalcond} on the transversal space.
If we rewrite this metric and $B$-field on a chart in spherical coordinates $(r,\chi,\psi_1,\psi_2)$ we obtain
\begin{equation}
    \begin{aligned}
         g &=   \eta_{\mu\nu}\mathrm{d}x^\mu\mathrm{d}x^\nu + h(r)\mathrm{d}r^2 + h(r)r^2\left(\mathrm{d}\chi^2+\mathrm{d}\psi_1^2+\mathrm{d}\psi_2^2-2\cos{2\chi}\,\mathrm{d}\psi_1\mathrm{d}\psi_2\right) \\
         B &= -\frac{m}{2}\cos{2\chi}\,\mathrm{d}\psi_1\wedge\mathrm{d}\psi_2
    \end{aligned}
\end{equation}
These are exactly the metric and Kalb-Ramond field of an NS5-brane with non-trivial $H$-charge $m$ in $10d$ spacetime $M$. Hence in our Higher Kaluza-Klein framework encompassing NS5-branes is as natural as considering the direct higher version of a Kaluza-Klein monopole.
\end{remark}

\begin{remark}[Geometric interpretation of the NS5-brane]
We know that the Kaluza-Klein brane appears when spacetime $P\rightarrow M$ is a non-trivial circle bundle with some first Chern class $[F]\in H^2(M,\mathbb{Z})$. Perfectly analogously the NS5-brane appears when the doubled space $\mathcal{M}\rightarrow M$ is a non-trivial circle $2$-bundle with Dixmier-Douady class $[H]\in H^3(M,\mathbb{Z})$.
\end{remark}

\begin{remark}[Angular T-dual of the NS5-brane]
The $3$-sphere is nothing but the Lens space $L(1,1)=S^3$ corresponding to the Hopf fibration. As we have seen the transverse space of the NS5 brane with $H$-charge $m$ is $\mathbb{R}^+\times S^3$. Let us perform a T-duality along the $\psi_1$ circle fiber
\begin{equation*}
    \begin{aligned}
         \tilde{g} &=  \eta_{\mu\nu}\mathrm{d}x^\mu\mathrm{d}x^\nu + h(r)\mathrm{d}r^2 + h(r)r^2\left(\mathrm{d}\chi^2+\sin^2{2\chi}\,\mathrm{d}\psi_2^2\right) +\frac{1}{h(r)r^2}\bigg(\frac{\mathrm{d}\tilde{\psi}_1}{2}-\frac{m}{2}\cos{2\chi}\,\mathrm{d}\psi_2\bigg)^2 \\
         \tilde{B} &= -\frac{1}{2}\cos{2\chi}\,\mathrm{d}\tilde{\psi}_1\wedge\mathrm{d}\psi_2
    \end{aligned}
\end{equation*}
This is again a supergravity solution, but it is not asymptotically flat: the $\tilde{\psi}_1$ circle is not Hopf-fibered over the $2$-sphere, but it is be fibered with first Chern number $m$, generally making the whole bundle a Lens space $L(1,m)$. In other words the $H$-charge of the NS5-brane is mapped to a NUT charge $m$ under T-duality. However this background is not actually Taub-NUT, because the harmonic function is $h(r)=1+1/r^2$. 
\end{remark}

\begin{digression}[Recovering angular T-dualities]
The previous is the Plauschinn-Camell solution appearing in \cite{Pla18}. The authors perform angular T-dualities of NS5-brane backgrounds and speculate about implementing them in DFT. In our formulation this is totally natural as long as the angular directions are isometries of the doubled metric $\mathcal{H}$.
\end{digression}

\begin{remark}[Array of higher Kaluza-Klein monopoles]
Since Higher Kaluza-Klein monopoles do not interact, we can construct a multi-monopole solution. This will be a metric doubled space $(\mathcal{M},\mathcal{H})$ such that $\mathcal{M}$ is a doubled space on the base manifold $\mathbb{R}^{1,5}\times\left(\mathbb{R}^4-\{y_p\}\right)$ with global connection $\mathrm{d}\tilde{y}_\alpha-B_\alpha$ and  $\mathcal{H}$ is the (global) doubled metric given by
\begin{equation}
    \begin{aligned}
        \mathcal{H} &= \eta_{\mu\nu}\mathrm{d}x^\mu\mathrm{d}x^\nu + \eta^{\mu\nu}\mathrm{d}\tilde{x}_\mu\mathrm{d}\tilde{x}_\nu + h(r)\delta_{ij}\mathrm{d}y^i\mathrm{d}y^j + \frac{\delta^{ij}}{h(r)}(\mathrm{d}\tilde{y}_i+B_{ik}\mathrm{d}y^k)(\mathrm{d}\tilde{y}_j+B_{jk}\mathrm{d}y^k)
    \end{aligned}
\end{equation}
which satisfies the conditions
\begin{equation}
    H = \star_{\mathbb{R}^4}\mathrm{d}h, \quad h(y)=1+\sum_p\frac{m_p}{|y-y_p|^2}
\end{equation}
where $y_p$ are the positions of the monopoles in the transverse space and $m_p$ are their $H$-charges.
\end{remark}

\subsection{Berman-Rudolph DFT monopole in 9d}
In this subsection we will give a global definition of the usual DFT monopole in our formalism. See \cite{BR14} for its original definition and \cite{Jen11} for seminal work. Then we will show that it is immediately related to our Higher Kaluza-Klein monopole.

\begin{definition}[DFT monopole]\label{def:dftm}
The Berman-Rudolph DFT monopole \cite{BR14} is a metric doubled space $(\mathcal{M},\mathcal{H})$ such that 
\begin{itemize}
    \item $\mathcal{M}$ is a doubled space on the base $M=\mathbb{R}^{1,5}\times\left(\mathbb{R}^3-\{0\}\right)\times S^1 \,\simeq\, \mathbb{R}^{1,5}\times\mathbb{R}^+\times S^2\times S^1$ which is trivial on $\mathbb{R}^{1,5}\times\mathbb{R}^+\times S^1$ and which has global connection $\omega_B = (\mathrm{d}\tilde{z}_\alpha+A_\alpha)\wedge\mathrm{d}z$, where $\{x^\mu\}$, $\{y^i\}$ and $\{z\}$ are respectively coordinates for $\mathbb{R}^{1,5}$, $\mathbb{R}^{+}\times S^2$ and $S^1$,
    \item $\mathcal{H}$ is the (global) doubled metric given by
\end{itemize}
\begin{equation}\label{eq:smearedhkkmonopole}
    \begin{aligned}
        \mathcal{H} \,&=\, \eta_{\mu\nu}\mathrm{d}x^\mu\mathrm{d}x^\nu + \eta^{\mu\nu}\mathrm{d}\tilde{x}_\mu\mathrm{d}\tilde{x}_\nu + h(r)\delta_{ij}\mathrm{d}y^i\mathrm{d}y^j + \frac{\delta^{ij}}{h(r)}\mathrm{d}\tilde{y}_i\mathrm{d}\tilde{y}_j\\
        &\hspace{4cm} +h(r)\mathrm{d}z^{2} + \frac{1}{h(r)}(\mathrm{d}\tilde{z}+{A}_{k}\mathrm{d}y^k)^{2}
    \end{aligned}
\end{equation}
where the curvature of the connection and the harmonic function are respectively constraint to
\begin{equation}\label{eq:transversalcond2}
    \mathrm{d}A = \star_{\mathbb{R}^3}\mathrm{d}h, \quad h(r)=1+\frac{m'}{r}
\end{equation}
\end{definition}

\begin{remark}[Doubled space of the DFT monopole]
The global doubled space of the DFT monopole of definition \ref{def:dftm} is similar, but simpler respect to the one of a general Higher Kaluza-Klein monopole. The circle coordinate $z\in\mathbb{R}/\mathbb{Z}$ is trivially fibered over the $2$-sphere, while its dual $\tilde{z}_{\alpha}$ is in general non-trivially fibered by a connection $A_{\alpha}$ over it. Hence, by lemma \ref{thm:corrspace},
\begin{equation}
    \mathcal{M} \,\simeq\, \mathcal{M}_0 \times_{S^2} L(1,m') \times S^1
\end{equation}
where $\mathcal{M}_0$ is a doubled space on the base $M_0 := \mathbb{R}^{1,5}\times\mathbb{R}^+\times S^2$. In this case $\mathcal{M}_0$ is trivial since the original gerbe connection $\omega_B$ has no $\omega_B^{(2)}$ component. On the other hand the component $\omega_B^{(1)} = \mathrm{d}\tilde{z}_\alpha+A_\alpha$ is exactly the connection of $L(1,m')$ on $S^2$.
\end{remark}

\begin{theorem}[Recovering the DFT monopole]
A Berman-Rudolph DFT monopole (definition \ref{def:dftm}) is a smeared Higher Kaluza-Klein monopole (definition \ref{def:hkkm}).
\end{theorem}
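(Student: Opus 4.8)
The plan is to exhibit an explicit smearing operation that turns the Higher Kaluza-Klein monopole of definition \ref{def:hkkm} into the Berman-Rudolph DFT monopole of definition \ref{def:dftm}, and to check that every piece of data — the doubled space $\mathcal{M}$, its global connection $\omega_B$, and the doubled metric $\mathcal{H}$ — transforms accordingly. Smearing means isometrically averaging the transverse geometry along one of the four transverse directions of $\mathbb{R}^4-\{0\}$, i.e. making the solution independent of that coordinate and periodically identifying it. Concretely, I would split the transverse $\mathbb{R}^4$ as $(\mathbb{R}^3-\{0\})\times\mathbb{R}$, with coordinates $\{y^i\}_{i=1,2,3}$ and $z$, and replace the harmonic function $h(r)=1+m/r^2$ on $\mathbb{R}^4-\{0\}$ by its smeared version along $z$: summing image charges $h(r,z)\mapsto 1+\sum_{n\in\mathbb{Z}} m/(r^2+(z-nL)^2)$ and taking the continuum limit, which is the standard computation giving $h(r)=1+m'/r$ with $m'$ proportional to $m/L$, now a harmonic function on $\mathbb{R}^3-\{0\}$. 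This is exactly the passage $H=\star_{\mathbb{R}^4}\mathrm{d}h \;\leadsto\; \mathrm{d}A=\star_{\mathbb{R}^3}\mathrm{d}h$ required in \eqref{eq:transversalcond} versus \eqref{eq:transversalcond2}.

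Next I would track the higher-geometric data. Under smearing the $3$-sphere $S^3$ linking the codimension-$4$ locus degenerates: the transverse space becomes $\mathbb{R}^+\times S^2\times S^1$, and the gerbe on $\mathbb{R}^4-\{0\}$, whose Dixmier-Douady class generated $H^3(S^3,\mathbb{Z})\cong\mathbb{Z}$, descends to a gerbe on $\mathbb{R}^{1,5}\times(\mathbb{R}^3-\{0\})\times S^1$ whose nontrivial data is a circle bundle over $S^2$ tensored with the $S^1$ factor — precisely by K\"unneth, $H^3(S^2\times S^1,\mathbb{Z})\supseteq H^2(S^2,\mathbb{Z})\otimes H^1(S^1,\mathbb{Z})$. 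I would make this explicit using remark \ref{rem:parahermitianK}/lemma \ref{thm:corrspace}: for the Higher Kaluza-Klein monopole $\mathcal{M}\simeq\mathcal{M}_{\mathrm{triv}}\times\mathcal{M}_{S^2}\times_{S^2}(S^3\times_{S^2}L(1,m))$, while smearing replaces the Hopf fibration $S^3\to S^2$ (which carried one unit of the curvature) by the trivial bundle $S^2\times S^1\to S^2$, so that $\mathcal{M}\simeq\mathcal{M}_0\times_{S^2}L(1,m')\times S^1$ with $\mathcal{M}_0$ now trivial — matching the remark after definition \ref{def:dftm}. Correspondingly the global connection $\omega_B=\mathrm{d}\tilde y_\alpha-B_\alpha$, written in Hopf coordinates, has $B=-\tfrac{m}{2}\cos 2\chi\,\mathrm{d}\psi_1\wedge\mathrm{d}\psi_2$; smearing along $\psi_1$ (rescaled to $z$) trades the $2$-form gerbe connection for the $1$-form connection $A_\alpha$ of $L(1,m')$, i.e. $\omega_B=(\mathrm{d}\tilde z_\alpha+A_\alpha)\wedge\mathrm{d}z$, exactly as required.

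Then I would substitute directly into the doubled metric. Starting from \eqref{eq:hkkmonopole}, set the $S^3$ coordinates so one of them becomes the flat $z$ direction; the term $h(r)\delta_{ij}\mathrm{d}y^i\mathrm{d}y^j$ on $\mathbb{R}^4-\{0\}$ splits as $h(r)\delta_{ij}\mathrm{d}y^i\mathrm{d}y^j + h(r)\mathrm{d}z^2$ on $(\mathbb{R}^3-\{0\})\times S^1$, and the dual-fiber term $\tfrac{\delta^{ij}}{h(r)}(\mathrm{d}\tilde y_i+B_{ik}\mathrm{d}y^k)(\mathrm{d}\tilde y_j+B_{jk}\mathrm{d}y^k)$ splits into $\tfrac{\delta^{ij}}{h(r)}\mathrm{d}\tilde y_i\mathrm{d}\tilde y_j + \tfrac{1}{h(r)}(\mathrm{d}\tilde z+A_k\mathrm{d}y^k)^2$ once $B$ is reduced to $A$ along the smeared leg. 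This reproduces \eqref{eq:smearedhkkmonopole} verbatim; one checks that $\omega_B$ after smearing is globally defined on the new $\mathcal{M}$ by the same argument as after definition \ref{def:hkkm}, using lemma \ref{thm:gerbeconn}, since $\mathrm{d}\tilde z_\alpha-\mathrm{d}\tilde z_\beta=-\mathrm{d}A_{\alpha\beta}$ combined with $A_\beta-A_\alpha=\mathrm{d}A_{\alpha\beta}$. The main obstacle I anticipate is not any single calculation but making the smearing operation itself precise at the level of the stack $\mathcal{M}$ rather than just the metric: one must show that the $U(1)$-isometry being gauged (translation along the smeared direction, which by remark \ref{principalaction} and remark \ref{diffgaugeconc} is genuinely a finite symmetry of the doubled space) acts on the gerbe data so that the quotient gerbe over $S^2\times S^1$ is the one claimed, i.e. that the image-charge sum on $h$ is compatible with the descent of the Dixmier-Douady class. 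Once that compatibility is established the identification of metrics and connections is the routine bookkeeping sketched above.
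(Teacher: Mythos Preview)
Your proposal is correct and follows essentially the same route as the paper: both arguments (i) periodically array the higher monopole along one transverse direction and pass to the large-$r$ limit of the image-charge sum to turn $h=1+m/r^2$ into $h=1+m'/r$, (ii) decompose the gerbe connection along the compactified leg so that only the $B^{(1)}$ piece survives as a $U(1)$ connection $A_\alpha$, and (iii) observe $\star_{\mathbb{R}^4}\mathrm{d}h=(\star_{\mathbb{R}^3}\mathrm{d}h)\wedge\mathrm{d}z$ to recover \eqref{eq:transversalcond2}. The paper's version is terser: rather than invoking K\"unneth or the Hopf-fibration picture, it simply notes that $H^3(S^2,\mathbb{Z})=0$ forces $B_\alpha^{(2)}$ to be pure gauge on $\mathbb{R}^3-\{0\}$, so the entire gerbe reduces to $A_\alpha:=B_\alpha^{(1)}$. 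Your extra stack-level bookkeeping and the K\"unneth argument are consistent with this but not needed.

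One small imprecision to clean up: you identify the smeared direction with the Hopf coordinate $\psi_1$ on $S^3$, writing ``smearing along $\psi_1$ (rescaled to $z$)''. That conflates an angular Hopf coordinate with a Cartesian transverse direction; they are not the same, and you cannot choose $S^3$ coordinates so that one of them is literally a flat $z$ on $\mathbb{R}^4$. The paper avoids this by working directly with the split $\mathbb{R}^4-\{0\}\to(\mathbb{R}^3-\{0\})\times S^1$ in Cartesian form and the decomposition $B_\alpha=B_\alpha^{(2)}+B_\alpha^{(1)}\wedge\mathrm{d}z$. Your worry about making smearing precise at the level of the stack is legitimate but is not addressed in the paper either; both treatments leave this at the level of the reduced connection data.
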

\begin{proof}
The transverse space is a trivial circle fibration $(\mathbb{R}^3-\{0\})\times S^1$ with trivial connection $\mathrm{d}z$. As usual can decompose $B_\alpha = B_\alpha^{(2)} + B_\alpha^{(1)} \wedge \mathrm{d}z$, where $B_\alpha^{(2)}$ and $B_\alpha^{(1)}$ are respectively a gerbe connection and a circle connection on $\mathbb{R}^3-\{0\}$. Now we can gauge away the component $B_\alpha^{(2)}$ on $\mathbb{R}^3-\{0\}$ since this space is homotopy equivalent to a $2$-sphere and $H^3(S^2,\mathbb{Z})=0$ implies that any gerbe over $S^2$ is trivial. Thus the only non-trivial contribution to the $H$-flux will come from the connection ${A}_\alpha:=B_\alpha^{(1)}$ of the dual circle bundle and the gerbe curvature is $\mathrm{d}B_\alpha = \mathrm{d}A_\alpha\wedge\mathrm{d}z$.
\begin{equation}
    h(r,z)=1+\sum_{p\in\mathbb{Z}}\frac{m}{r^2 + (z-2\pi p)^2} \;\,\xrightarrow{\;\;\; r\gg 1 \;\;\;}\;\, 1 + \frac{m'}{r}
\end{equation}
with modified charge $m':=m/2$. Moreover we have $\star_{\mathbb{R}^4}\mathrm{d}h = (\star_{\mathbb{R}^3}\mathrm{d}h)\wedge\mathrm{d}z$ and thus the condition $\mathrm{d}B_\alpha=\star_{\mathbb{R}^4}\mathrm{d}h$ becomes the equation \eqref{eq:transversalcond2}.
\end{proof}

\begin{remark}[DFT monopole is smeared NS5-brane]
By Higher Kaluza-Klein reduction of \eqref{eq:smearedhkkmonopole} to $M=\mathbb{R}^{1,5}\times \mathbb{R}^+\times S^2\times S^1$ we get the metric and gerbe connection
\begin{equation}
    \begin{aligned}
        g &= \eta_{\mu\nu}\mathrm{d}x^\mu\mathrm{d}x^\nu + h(r)\big(\delta_{ij}\mathrm{d}y^i\mathrm{d}y^j+\mathrm{d}z^{2}\big), \\
        B &= {A}_{k}\mathrm{d}y^k\wedge\mathrm{d}z.
    \end{aligned}
\end{equation}
This solution is unsurprisingly the smeared NS5-brane background. Notice the asymptotic geometry is the trivial Lens space $L(1,0)=S^2\times S^1$.
\end{remark}

\begin{remark}[KK5-brane is the T-dual of NS5-brane]
By Higher Kaluza-Klein reduction of \eqref{eq:smearedhkkmonopole} to the dual bundle $\widetilde{M}=\mathbb{R}^{1,5}\times\mathbb{R}^+\times L(1,m)$ we get the metric and gerbe connection
\begin{equation}
    \begin{aligned}
        \tilde{g} &=  \eta_{\mu\nu}\mathrm{d}x^\mu\mathrm{d}x^\nu + h(r)\delta_{ij}\mathrm{d}y^i\mathrm{d}y^j + \frac{1}{h(r)}(\mathrm{d}\tilde{z}+{A}_{k}\mathrm{d}y^k)^{2} \\
        \tilde{B} &= 0.
    \end{aligned}
\end{equation}
The transverse space is a Taub-NUT space with asymptotic geometry $L(1,m)$ and it has zero $H$-charge. This solution is exactly the KK5-brane with isometry along the $\tilde{z}_\alpha$ circle.
\end{remark}

\begin{digression}[Localization of DFT monopole in the previous literature]
In \cite{BR14} we have a (local) definition of the DFT monopole and then a bottom-up generalization to its non-smeared version on $\mathbb{R}^{1,5}\times(\mathbb{R}^3-\{0\})\times S^1$. The winding mode corrections of this process are studied in \cite{Kim13}. The resulting generalized metric is a (locally defined) version of Higher Kaluza-Klein monopole on this particular background. In \cite{Ber19} it is argued that, in the case of an torus compactified spacetime we can write Higher Dirac monopole in terms of an ordinary Dirac monopole by $H^3(S^2\times S^1,\mathbb{Z}) \cong H^2(S^2,\mathbb{Z})\otimes_\mathbb{Z} H^1(S^1,\mathbb{Z})$. But also that a full DFT monopole should require a geometrization of the gerbe which is impossible to achieve with just manifolds: Higher Kaluza-Klein geometry is hopefully an answer to this.
\end{digression}

\subsection{Berman-Rudolph DFT monopole in 8d}

In this subsection we will take a quick look to a further dimensional reduction of our monopole.

\begin{remark}[Reduction to $8d$ and $5_2^2$-brane]
If we compactify again spacetime to a trivial torus bundle $M=\mathbb{R}^{1,5}\times(\mathbb{R}^2-\{0\})\times T^2$, we can further smear and Kaluza-Klein reduce our Higher Kaluza-Klein monopole to recover the zoo of exotic branes by \cite{Bak16}. By explicitly writing the T-dualities along the two directions we have the following commutative diagram
\begin{equation}
    \begin{tikzcd}[row sep=21ex, column sep=21ex]
    \mathrm{NS5}_{12} \arrow[r, leftrightarrow, "\mathcal{T}_1"] \arrow[d, leftrightarrow, "\mathcal{T}_2"] & \mathrm{KK5}^{\;2}_{1} \arrow[d, leftrightarrow, "\mathcal{T}_2"'] \\
    \mathrm{NS5}^1_{\;2} \arrow[r, leftrightarrow, "\mathcal{T}_1"] & (5^2_2)^{12} 
\end{tikzcd}
\end{equation}
where $\mathrm{NS5}_{12}$ is the NS5-brane smeared along both the directions of the $T^2$ fiber, while $\mathrm{NS5}^a_{\;b}$ is the KK-brane with isometry along the $a$-th direction and smeared along the $b$-th direction, while $(5^2_2)^{12}$ is the $5^2_2$-brane with isometry along both the directions. 
\end{remark}

\noindent Let $\{z^a\}$ be coordinates of $T^2$ with $a=1,2$ and $\{y^i\}$ be coordinates of $\mathbb{R}^2-\{0\}$ with $i=3,4$. The generalized correspondence space $K$ of the transverse space will be a non-trivial $T^2$-bundle on $(\mathbb{R}^2-\{0\})\times T^2$ with curvature $\widetilde{F}_a = H^{(1)}_{ab}\wedge\mathrm{d}z^b$. 

\begin{remark}[Geometric interpretation of $5_2^2$-brane]
The NS5-brane is associated to a non-trivial Dixmier-Douady class $[H]\in H^3(\mathbb{R}^4-\{0\},\mathbb{Z})$ on the transverse space. If spacetime is a trivial $T^2$-fibration and the NS5-brane is smeared along these directions, there will be a non-trivial flux compactification $[H^{(1)}_{12}]\in H^1(\mathbb{R}^2-\{0\},\mathbb{Z})\cong\mathbb{Z}$ on the reduced transverse space. A T-duality along the $1$st direction gives a $[F^{(1)1}_{\quad \;\;2}]\in H^1(\mathbb{R}^2-\{0\},\mathbb{Z})$. Indeed notice that $H^{>1}(\mathbb{R}^2-\{0\},\mathbb{Z})=0$ so there are no non-trivial abelian gauge field on the transverse space. Now by performing a T-duality in the $2$nd direction we get a flux $[Q^{(1)12}]\in H^1(\mathbb{R}^2-\{0\},\mathbb{Z})$.
Just like the Kaluza-Klein brane generally appears when spacetime is a circle fibration with non-trivial first Chern class, a $5_2^2$ brane appears when spacetime is $T^2$-fibration with nontrivial $Q$-flux given by the cohomology class $[Q^{(1)12}]\in H^1(\mathbb{R}^2-\{0\},\mathbb{Z})\cong\mathbb{Z}$. Notice that this class corresponds to an integer number which we can name $Q$-charge of the $5^2_2$-brane.
\end{remark}

\begin{remark}[Reduction to $8d$ and $5_2^3$-brane]
Now recall that our spacetime manifold is $M=\mathbb{R}^{1,5}\times(\mathbb{R}^2-\{0\})\times T^2$. Now translations along the two circles of the torus are legit isometries, while translations along any fixed direction in $\langle y^3\rangle\subset\mathbb{R}^2-\{0\}$ are not. As explained by \cite{Bak16}, we can still perform a local $\mathcal{T}_3\in O(10,10)$ transformation along this direction on the patches $T^\ast U_\alpha$ of the doubled space $\mathcal{M}$, even if it will not be a global T-duality. Hence we will recover the usual picture of exotic branes by the diagram
\begin{equation}
   \begin{tikzcd}[row sep=38,column sep=28]
    & \mathrm{NS5}_{123}  \arrow[rr, "\mathcal{T}_1"] \arrow[dd, "\mathcal{T}_2" near start] \arrow[dl, "\mathcal{T}_3"'] & &   \mathrm{KK5}^{1}_{\;23}  \arrow[dd, "\mathcal{T}_2"] \arrow[dl, "\mathcal{T}_3"] \\
    \mathrm{KK5}^{\;\;\,3}_{12} \ar[crossing over, "\mathcal{T}_1" near start]{rr} \arrow[dd, "\mathcal{T}_2"'] & & (5^2_2)^{1\; 3}_{\; 2} \\
      &  \mathrm{KK5}^{\;2}_{1\; 3}  \arrow[rr, "\mathcal{T}_1" near start] \arrow[dl, "\mathcal{T}_3"'] & &  (5^2_2)^{12}_{\;\;\, 3}  \arrow[dl, "\mathcal{T}_3"] \\
    (5^2_2)^{\;23}_{1} \arrow[rr, "\mathcal{T}_1"'] && (5_2^3)^{123} \arrow[from=uu,crossing over, "\mathcal{T}_2" near start]
 \end{tikzcd} 
\end{equation}
where the superscript $3$ and the subscript $3$ this time do not mean isometry/smearing like for $1$ and $2$, but they respectively mean dependence either on the coordinate $y^3$ or its dual $\tilde{y}_3$.
\end{remark}

\begin{remark}[Geometric interpretation of $5_2^3$-brane]
A geometric interpretation can be adopted also for the hypothetical $5_2^3$-brane, which will correspond to a $T^2$-compactification carrying a non-trivial $R$-flux class $[R\,]\in H^1(\mathbb{R}^2-\{0\},\mathbb{Z})\cong\mathbb{Z}$ which we may call $R$-charge.
\end{remark}

\section{Outlook}

In this section we will sum up the results of our construction and we will take a quick look to the future natural directions for this proposal.

\subsection{Discussion}
From a small set of assumptions we developed a global formulation for DFT geometry by directly generalizing Kaluza-Klein proposal to higher gauge fields. From our formalism we recovered many previous relevant proposals of a finite DFT geometry and we clarified how to recover them in a global picture. In particular we got a globally defined version of Park's coordinate gauge from \cite{Park13}, a formalization for Papadopoulos' C-spaces from \cite{Pap14} and, on local patches, para-K\"{a}hler geometry by \cite{Vais12}. On torus bundle base manifolds we recovered the original Hull's doubled torus bundles from \cite{Hull06}. Para-Hermitian geometry by \cite{Svo18} can be recovered locally on the total doubled space, but globally on the correspondence space. The local symmetries of the theory give us sections of the higher algebroids, which can be described by differential-graded manifolds, meaning that Extended Riemannian Geometry by \cite{DesSae18} can be seen as an infinitesimal (but still globally defined) counterpart of Higher Kaluza-Klein theory.

\begin{table}[ht!]\begin{center}
\begin{tabular}{m{1em}|c c|}
    & $\quad$ Infinitesimal $\quad$ & $\qquad\quad$ Finite $\qquad\quad$ \\ 
    \hline &&\\
    \rotatebox{90}{Local}& \makecell{original\\DFT\\} & \makecell{para-Hermitian\\geometry} \\
    && \\
    \rotatebox{90}{Global}& \makecell{Extended\\Riemannian\\geometry} & \makecell{Higher\\Kaluza-Klein\\geometry} \\
\end{tabular}\end{center}\caption{A brief moral comparison of DFT geometries.}\end{table}

\vspace{0.25cm}
\noindent Notice also that Higher Kaluza-Klein geometry has been proved to be strictly linked with other field of research such as Higher Prequantum Geometry, which provides an alternative interpretation of the extra coordinates as a generalization of the usual phase of wavefunctions for strings, and such as non-associative physics.

\vspace{0.25cm}
\noindent In this paper we did only deal with geometry of DFT and not with its field equations. However there are encouraging hints, such as the discussion in \cite{Par18x} and \cite{Park19stringy}, that the field equations of DFT must be regarded as Einstein-like equations for the doubled metric on the doubled space. This seems very coherent with the spirit and fundamental ideas of our Higher Kaluza-Klein proposal.

\subsection{Towards a geometrized M-Theory?}
Higher Kaluza-Klein does not only provides a theoretical explanation for the existing geometric features of DFT (such as para-Hermitian geometry), but also it can overcome their difficulty in being directly generalized to M-theory. Indeed the Higher Kaluza-Klein Theory we presented in this paper has a large number of immediate natural generalizations:
\begin{itemize}
    \item \textbf{The $n$-bundle is with $n>2$:} this could allow us to formulate global ExFT.
    \item \textbf{The $n$-bundle is non-abelian:} this could not only allow us to formulate global Heterotic DFT, but also to go slightly beyond ExFT to embody the global spin-twisted structures by \cite{Sat09} and thus to geometrize the complicated interplay of gravity and gerbe. 
    \item \textbf{The base manifold is a super-manifold:} this could immediately allow us to generalize everything we mentioned to their global super-space formulation.
\end{itemize}
We are intrigued by the possibility that a \textbf{super non-abelian higher Kaluza-Klein Theory} on the total space of the (twisted) M2-M5-brane gerbe over the $11d$ super-spacetime of \cite{FSS18} can be something closer to a geometrized M-theory than what previously allowed. In the next paragraph we will give a little preview of these generalizations.

\subsection{A quick preview of global Heterotic Double Field Theory}
A non-abelian $2$-group leads immediately to Heterotic Double Field Theory by \cite{Hohm11}. The \textit{heterotic doubled space} $\mathcal{M}_{\mathrm{het}}\xrightarrow{\,\bbpi\,}M$ will be the total space of a principal $\mathrm{String}(d)_{\mathrm{conn}}$-bundle on spacetime $M$, where the \textit{string }$2$\textit{-group} $\mathrm{String}(d)$ is defined by the commuting diagram
\begin{equation}\begin{tikzcd}[row sep=6ex, column sep=5ex]
\mathbf{B}\mathrm{String}(d) \arrow[d]\arrow[r] &\ast \arrow[d]  \\
\mathbf{B}\mathrm{Spin}(d)\arrow[r, "\frac{1}{2}\mathbf{p}_1"] &\mathbf{B}^3U(1)
\end{tikzcd}\end{equation}
and where we called $\mathbf{B}(\mathrm{String}(d)_{\mathrm{conn}})$ the moduli-stack of $\mathrm{String}(d)$-bundles with connection data up to two-fold overlaps. This means that the doubled space $\mathcal{M}_{\mathrm{het}}$ is locally given by a collection of $T^\ast U_\alpha \times \mathrm{Spin}(d)$. These are glued such that the subbundles $U_\alpha\times \mathrm{Spin}(d)$ give a global $\mathrm{Spin}(d)$-bundle $P_{\mathrm{Spin}}\rightarrow M$, while the $T^\ast U_\alpha$ are patched with gerbe $2$-gauge transformations twisted by non-abelian $\mathrm{Spin}(d)$-gauge transformations.Consequently the automorphisms of this heterotic doubled space will extend not only the group of diffeomorphisms of the base $M$, but also the group of automorphisms of the $\mathrm{Spin}(d)$-bundle $P_{\mathrm{Spin}}\rightarrow M$ (see example \ref{ex:nonabelianaut}) by
\begin{equation}
    \Aut_/(f,\Lambda,G) \,=\, \Big(\Diff(M)\ltimes\Gamma\big(M,\,\mathrm{Ad}(P_{\mathrm{Spin}})\big)\Big)\ltimes\mathbf{H}(M,\,\BU)
\end{equation}
From the definition we will have that heterotic doubled spaces are topologically classified by $P_{\mathrm{Spin}}$-twisted cohomology classes. Notice that we will have a \textit{non-abelian global strong constraint}
\begin{equation}
    \M_{\mathrm{het}} /\!/\!_\rho \,\mathrm{String}(d)_{\mathrm{conn}} \,\cong\, M,
\end{equation}
which remarkably combines the abelian strong constraint with the cylindricity condition on the $\mathrm{Spin}(d)$-bundle.
Thus doubled metric will depend only on physical coordinates of spacetime $M$.

\subsection{A quick preview of global Exceptional Field Theory}
It sounds reasonable to use our Higher Kaluza-Klein geometry to formalize ExFT (see \cite{HohSam13}).
Let us consider the M2-M5-brane gerbe from \cite{FSS15x} over the $11d$ spacetime manifold $M$. This will be encoded by a particular \v{C}ech cocycle, which includes a collection of $2$-forms $\Lambda_{\alpha\beta}^{\mathrm{M2}}$ and $5$-forms $\Lambda_{\alpha\beta}^{\mathrm{M5}}$ on two-fold overlaps of patches $U_\alpha\cap U_\beta$. This cocycle will be the transition functions for the \textit{extended space} $\mathcal{M}_{\mathrm{ex}}\xrightarrow{\;\bbpi\;}M$ in analogy with doubled space from postulate \ref{post1}. Noticed these extended spaces will be classified by twisted cohomology classes.

\vspace{0.2cm}
\noindent The $5$-group $\Aut_{/}\big(\text{M2-M5}\big)$ of automorphisms of the extended space $\mathcal{M}_{\mathrm{ex}}\xrightarrow{\;\bbpi\;}M$ will be therefore defined according to higher geometry by the following short exact sequences 
\begin{equation}\begin{aligned}
    1\longrightarrow\mathbf{H}\big(M,\mathbf{B}^2U(1)_{\mathrm{conn}}\big)\longrightarrow \;\;\;\Aut_{/}\big(\text{M2}\big)\;\;\; \longrightarrow \;\;\mathrm{Diff}(M)\;\;\longrightarrow 1 \\
    1\longrightarrow\mathbf{H}\big(M,\mathbf{B}^5U(1)_{\mathrm{conn}}\big)\longrightarrow \Aut_{/}\big(\text{M2-M5}\big) \longrightarrow \Aut_{/}(\text{M2})\longrightarrow 1 
\end{aligned}
\end{equation}
which can be immediately recognized as the finite version of the short exact sequences defining the algebroid $E_{\text{M2-M5}}\rightarrow M$ appearing in exceptional generalized geometry (see \cite{Wald08E})
\begin{equation}\begin{aligned}
    0\longrightarrow\wedge^2T^\ast M\longrightarrow \;\;\,E_{\text{M2}}\;\; \longrightarrow \, TM\longrightarrow 0 \\
    0\longrightarrow\wedge^5T^\ast M\longrightarrow E_{\text{M2-M5}} \longrightarrow E_{\text{M2}}\longrightarrow 0  
\end{aligned}
\end{equation}

\begin{example}[ExFT in $5d$]
Now assume that spacetime $M$ is itself a torus $T^6$-bundle over a $5d$ base manifold $M_0$ with transition functions $f_{\alpha\beta}^i$ and that the gerbe is equivariant under its torus action. Analogously to the DFT case the $2$-forms will dimensionally reduce to a collection of $2,1,0$-forms $\Lambda_{\alpha\beta}^{\mathrm{M2}(2)}$, $\Lambda_{\alpha\beta i}^{\mathrm{M2}(1)}$ and $\Lambda_{\alpha\beta ij}^{\mathrm{M2}(0)}$ on the base $M_0$ describing the winding modes of the M2-brane on the base. Similarly the $5$-forms on $M$ will split in a collection of $5,4,3,2,1,0$-forms on $M_0$ describing the winded M5-brane on the base. Let us simplify the assumptions by requiring that the $5$-gerbe is not twisted by the $2$-gerbe. Hence transition functions $f_{\alpha\beta}^i$ will describe a $T^{6}$-bundle, while $\Lambda_{\alpha\beta ij}^{(0)\mathrm{M2}}$ a $T^{15}$-bundle and $\Lambda_{\alpha\beta ijkln}^{(0)\mathrm{M5}}$ a $T^6$-bundle on the base $M_0$. In total this is a $T^{27}$-bundle over the $5d$ manifold $M_0$ and can be interpreted as the \textit{extended manifold} of ExFT in $5d$ (see \cite{HohSam14}) where the $T^{27}$ fiber is the so-called \textit{internal space}. We can then equip it with a $E_{6(6)}(\mathbb{Z})$-action which mixes, topologically, the first Chern classes of the $T^{27}$-bundle and, differentially, the components of the moduli fields $g_{ij}^{(0)}$, $C^{(0)}_{ijk}$ and $\star C^{(0)}_{ijklnm}$. Hence the extended manifold (and generally a U-fold) will be possible to be understood through Higher Kaluza-Klein Theory. Moreover the $1,2,3,4,5,6$-form components of the dimensional reduction of the connection of the gerbe and its $n$-gauge transformations can describe ExFT tensor hierarchy. See \cite{Hohm19DFT} for remarkable hints in this direction, in infinitesimal fashion. If we drop the simplifying unphysical assumption that the $5$-gerbe is not twisted, the extended manifold will be a more complicated globalization of $U_\alpha\times T^{27}$.
\end{example}

\begin{example}[ExFT in $7d$]
Let us make another simpler example. If we choose $M$ to be a $T^4$-bundle over a $7d$ base manifold $M_0$ with transition functions $f_{\alpha\beta}^i$, the local data $\Lambda_{\alpha\beta}^{\mathrm{M2}}$ will include the transition functions $\Lambda_{\alpha\beta ij}^{\mathrm{M2}(0)}$ of a $T^6$-bundle over $M_0$. On the other hand the local data $\Lambda_{\alpha\beta}^{\mathrm{M5}}$ will not. Hence, coherently with ExFT in $7d$ (see \cite{BlaMal15}), the transition functions $f_{\alpha\beta}^i$ and $\Lambda_{\alpha\beta ij}^{\mathrm{M2}(0)}$ will define a just a $T^{10}$-bundle over $M_0$, which can be equipped with a $SL(5,\mathbb{Z})$-action mixing the components of the moduli fields $g_{ij}^{(0)},C^{(0)}_{ijk}$.
\end{example}

\vspace{0.15cm}
\noindent We will fully apply Higher Kaluza-Klein framework to HetDFT and ExFT in papers to appear.


\section*{Acknowledgement}
I would like to thank my supervisor prof.$\,\,$David Berman for essential discussion about String Theory, M-theory and duality-covariant theories.
I would also like to thank the organizers and participants of the Durham Symposium 2018 \textit{Higher Structures in M-Theory} and of the workshop \textit{Geometry and Duality} 2019 in Potsdam.
Thanks to David Svoboda for helpful discussion about para-Hermitian geometry.
I am grateful to Queen Mary University of London (QMUL) for its partial support.

\medskip
\addcontentsline{toc}{section}{References}
\bibliographystyle{fredrickson}
\bibliography{sample}

\end{document}